\documentclass[sn-mathphys,Numbered]{sn-jnl}% Math and Physical Sciences Reference Style
\usepackage{graphicx}%
\usepackage{multirow}%
\usepackage{amsmath,amssymb,amsfonts}%
\usepackage{amsthm}%
\usepackage{mathrsfs}%
\usepackage[title]{appendix}%
\usepackage{xcolor}%
\usepackage{textcomp}%
\usepackage{manyfoot}%
\usepackage{booktabs}%
\usepackage{algorithm}%
\usepackage{algorithmicx}%
\usepackage{algpseudocode}%
\usepackage{enumitem}
\usepackage{listings}%
\usepackage{bbm}

\newcommand{\Tr}{\operatorname{Tr}}
\newcommand{\dv}{\operatorname{div}}
\newcommand{\Ker}{\operatorname{Ker}}
\newtheorem{theorem}{Theorem}[section]
\newtheorem{proposition}[theorem]{Proposition}
\newtheorem{lemma}[theorem]{Lemma}
\newtheorem{corollary}[theorem]{Corollary}
\newtheorem{definition}[theorem]{Definition}
\newtheorem{remark}[theorem]{Remark}

\newtheorem{hypo}{Hypothesis}

\begin{document}
	
	\title[Stochastic thermodynamics of SPDE]{Nonequilibrium Thermodynamics of Semilinear Stochastic Partial Differential Equations}
	
	%%=============================================================%%
	%% Prefix	-> \pfx{Dr}
	%% GivenName	-> \fnm{Joergen W.}
	%% Particle	-> \spfx{van der} -> surname prefix
	%% FamilyName	-> \sur{Ploeg}
	%% Suffix	-> \sfx{IV}
	%% NatureName	-> \tanm{Poet Laureate} -> Title after name
	%% Degrees	-> \dgr{MSc, PhD}
	%% \author*[1,2]{\pfx{Dr} \fnm{Joergen W.} \spfx{van der} \sur{Ploeg} \sfx{IV} \tanm{Poet Laureate} 
		%%                 \dgr{MSc, PhD}}\email{iauthor@gmail.com}
	%%=============================================================%%
	
	\author[1]{\fnm{Zelin} \sur{Liu}}\email{liu\_zelin@pku.edu.cn}
	
	\author*[1,2]{\fnm{Hao} \sur{Ge}}\email{haoge@pku.edu.cn}
	%\equalcont{These authors contributed equally to this work.}
	
	%\author[1,2]{\fnm{Third} \sur{Author}}\email{iiiauthor@gmail.com}
	%\equalcont{These authors contributed equally to this work.}
	
	\affil*[1]{\orgdiv{Beijing International Center for Mathematical Research (BICMR)}, \orgname{Peking University}, \city{Beijing}, \postcode{100871}, \country{China}}
	
	\affil[2]{\orgdiv{Biomedical Pioneering Innovation Center (BIOPIC)}, \orgname{Peking University}, \city{Beijing}, \postcode{100871}, \country{China}}
	
	%%==================================%%
	%% sample for unstructured abstract %%
	%%==================================%%
	
	\abstract{
		We develop a rigorous mathematical framework for the nonequilibrium thermodynamics of semilinear stochastic partial differential equations. The invariant Gaussian measure of the associated Ornstein–Uhlenbeck process, which need not be time-reversible, serves as the reference measure. We define thermodynamic quantities in terms of probability densities and derive the entropy balance equation. We obtain explicit expressions for the entropy production rate and prove that its vanishing in the stationary state is equivalent to time reversibility. Sample-path entropy production is identified with the logarithm of the Radon–Nikodym derivative of the forward path measure with respect to the time-reversed path measure and is shown to satisfy fluctuation theorems. For nonstationary systems, sample-path entropy production decomposes into housekeeping heat and free-energy dissipation, each satisfying an integral fluctuation theorem. Both have nonnegative expectations, and the mean free-energy dissipation equals the decrease in relative entropy with respect to the stationary distribution. The theory is applied to semilinear stochastic heat equations with spatially correlated noise, a random polymer model, a nonlinear model with a divergence-free drift, and two coupled stochastic heat equations in contact with heat baths at different temperatures.
	}

	\keywords{stochastic thermodynamics, stochastic PDE, time reversibility, entropy production, fluctuation theorem, housekeeping heat}
	%%\pacs[JEL Classification]{D8, H51}
	
	%%\pacs[MSC Classification]{35A01, 65L10, 65L12, 65L20, 65L70}
	
	\maketitle
	
	\section{Introduction}\label{sec1}
	Equilibrium thermodynamics, developed during the nineteenth century, provides a unified framework for describing the interplay among heat, work, and internal energy in macroscopic systems. Central concepts such as entropy, enthalpy, and free energy were introduced to characterize the thermodynamic state of a system and the constraints governing physical transformations \cite{Adkins83}. The relationships among these quantities form the cornerstone of modern physical chemistry and have played a fundamental role in the quantitative understanding of biochemical and biological processes.
	
	In contrast to equilibrium systems, for which a unified thermodynamic theory has been firmly established, the thermodynamic description of nonequilibrium steady states remains an active area of research. Near equilibrium, the response of a system to a weak perturbation can be characterized by linear response theory \cite{OnM2,OnM1}, providing a quantitative connection between fluctuations and dissipation. For systems maintained far from equilibrium by strong driving forces, linear response theory and other near-equilibrium approximations typically cease to be valid.
	
	To investigate nonequilibrium thermodynamics at microscopic and mesoscopic scales, stochastic dynamical models provide a natural mathematical description. Over the past three decades, stochastic thermodynamics has developed into a systematic framework for quantifying irreversibility and energy dissipation in stochastic systems \cite{JQQ,Seifert2012,GQ_PRE10,QH_PR2012a,QH_PR2012b}. The theory has established trajectory-level notions of heat, entropy, and entropy production, revealed universal fluctuation relations, and uncovered a variety of nonequilibrium thermodynamic inequalities \cite{QQ85,QQG,Seki98,Seifert05,Kur,LeSp,Jar1,Cro2,HS,Esposito10a,OP,Esposito07}. Most of these developments, however, have focused on finite-dimensional stochastic processes, such as Markov jump processes (master-equation models) and diffusion processes (Langevin dynamics).
	
	A comparable thermodynamic framework for SPDEs is still lacking, largely because of the analytical difficulties associated with infinite-dimensional state spaces and unbounded operators. While recent work has provided formal thermodynamic descriptions for certain SPDEs with spatially correlated noise \cite{Niggeman22}, a rigorous mathematical theory for the class of semilinear SPDEs with mild solutions studied here has been unavailable. In this work, we develop a mathematical theory of nonequilibrium thermodynamics for semilinear SPDEs driven by Gaussian noise that is white in time and either white or correlated in space.
	
	A fundamental example is the following stochastic heat equation (SHE):
	\begin{equation}
		\label{SHE}
		\left\{\begin{aligned}
			\frac{\partial u(x, t)}{\partial t}&=\Delta u(x, t)+\xi(x, t), \quad &(x, t) \in[0,1] \times \mathbb{R}_{+}, \\
			u(0,t)&\equiv u(1,t),\quad &t\in\mathbb{R}_{+},\\
			u(\cdot,0)&\text{ is a certain initial condition,}
		\end{aligned}\right.
	\end{equation}
	in which $\xi(x,t)$ is a Gaussian noise that is white in time and either correlated or white in space, i.e.
	$$
	\langle\xi(x, t)\rangle=0,\quad\left\langle\xi(x, t) \xi\left(x^{\prime}, t^{\prime}\right)\right\rangle=K\left(x-x^{\prime}\right) \delta\left(t-t^{\prime}\right).
	$$
	Following \cite{Niggeman22}, it is convenient to represent the noise using the orthonormal eigenbasis $\left\{\gamma_k\right\}_{k \in \mathbb{Z}}$ of $\Delta$ on the space $H=L_{\text{per}}^2[0,1]$ of all periodic and square-integrable functions, namely $\gamma_0=1$, $\gamma_k=\sqrt{2} \sin 2 \pi k x$ and $\gamma_{-k}=\sqrt{2} \cos 2 \pi k x$ for $k \in \mathbb{N}$. Then writing $\widetilde{W}(x, t)=\sum_{k \in \mathbb{Z}} \alpha_k \beta_k(t) \gamma_k(x)$ with independent standard Brownian motions $\left\{\beta_k(t)\right\}$ and real coefficients $\left\{\alpha_k\right\}$ determined by the correlation kernel $K$, i.e.  
	$$
	K(x-x^{\prime})=\alpha_0^2+2 \sum_{k \in \mathbb{N}} \alpha_k^2 \cos 2 \pi k (x-x^{\prime}),
	$$
	a direct computation shows that the desired covariance structure is reproduced, $\partial_t \partial_{t^{\prime}}\langle\widetilde{W}(x,t)\widetilde{W}(x^{\prime},t^{\prime})\rangle=K(x-x^{\prime})\delta(t-t^{\prime})$, provided $\alpha_{-k}=\alpha_k$ \cite{Niggeman22}.
	
	Introducing the operator $B \gamma_k=\alpha_k \gamma_k$ and the cylindrical Wiener process $W$ on $H$ (formally $W=\sum_k \beta_k(t) \gamma_k$), the noise term can be rigorously realized as $\mathrm{d} \widetilde{W}=B \,\mathrm{d} W$. 
	
	With this identification, the SHE \eqref{SHE} becomes a linear stochastic evolution equation in the Hilbert space $H$,
	\begin{equation}
		\label{prototype}
		\mathrm{d} Z(t)=A Z(t)\, \mathrm{d} t+B\, \mathrm{d} W(t),\quad t \geqslant 0,
	\end{equation}
	with $Z(t)=u(\cdot,t)$ and $A=\Delta$, which defines an infinite-dimensional Ornstein--Uhlenbeck (OU) process. Equation \eqref{prototype} is, however, purely linear, and therefore describes only the Gaussian relaxation of the field. To capture genuinely nonequilibrium phenomena---such as non-gradient forcing, reaction terms, or polymer-type potentials---one is led to add a nonlinear drift $F$ and study the \emph{semilinear} SPDE, although, as the two-temperature model of Section~\ref{sec:applications} shows, irreversibility can also be generated by the linear dynamics itself. The semilinear prototype is
	\begin{equation}
		\label{semilinear-prototype}
		\mathrm{d} X(t)=[A X(t)+F(X(t))]\, \mathrm{d} t+B\, \mathrm{d} W(t),\quad t \geqslant 0,
	\end{equation}
	which is the central object of this article. Moreover, beyond the self-adjoint case $A=\Delta$, we allow $A$ to be the generator of a general strongly continuous contraction semigroup. This framework encompasses a broad class of stochastic evolution equations, including reaction--diffusion systems. Equations with a first-order transport term are excluded by the boundedness condition on the irreversible part, see Remark~\ref{rmk:Hstar-scope}.

	In finite dimensions, Qian and collaborators rigorously developed a mathematical framework for nonequilibrium steady states of the stochastic differential equation \cite{JQQ}
	$$
	\mathrm{d} \eta_t=b\left(\eta_t\right) \mathrm{d} t+\Gamma\left(\eta_t\right) \mathrm{d} B_t,
	$$
	where $\Gamma(\cdot)$ is a full-rank matrix and $b(\cdot)$ a vector field on $\mathbb{R}^{d}$: explicit expressions for relative entropy and entropy production rate were derived, fluctuation theorems for entropy production were established, and equivalent conditions for time-reversibility were identified. A crucial ingredient in their approach is the existence of a stationary density $\rho(x)=\frac{\mathrm{d} \mu}{\mathrm{d} \lambda}$, where $\mu$ is the invariant measure of $\eta_t$ and $\lambda$ is the Lebesgue measure on $\mathbb{R}^{d}$.
	
	The main difficulty in extending this theory to infinite dimensions is the absence of a Lebesgue measure and hence of a stationary density with respect to such a reference measure. The key observation is that, although infinite-dimensional spaces do not admit a Lebesgue measure, the invariant Gaussian measure of the associated Ornstein--Uhlenbeck process can serve as a natural reference measure, playing a role analogous to that of Lebesgue measure in finite dimensions. More precisely, we compare the semilinear SPDE \eqref{semilinear-prototype} with its corresponding Ornstein--Uhlenbeck process \eqref{prototype}, whose invariant measure is the Gaussian measure $\mu$.
	
	Under the assumption that the invariant measure $\nu$ of the semilinear equation is absolutely continuous with respect to $\mu$, we introduce the density
	$$\rho^{\text{ss}}=\frac{\mathrm{d} \nu}{\mathrm{d} \mu},$$
	which serves as the infinite-dimensional analogue of the stationary density with respect to Lebesgue measure in finite dimensions. In this way, the Gaussian invariant measure of the linearized dynamics replaces the missing Lebesgue measure as the fundamental reference measure of the theory. This observation provides the foundation for extending the core notions of stochastic thermodynamics to a broad class of infinite-dimensional semilinear SPDEs.
	
	Within this setting, we first establish a steady-state thermodynamic framework for semilinear SPDEs. We derive explicit expressions for the sample-path entropy production and the entropy production rate, establish the corresponding fluctuation theorem, and obtain several equivalent characterizations of time reversibility. These results cover both reversible and irreversible Ornstein–Uhlenbeck reference dynamics.
	
	We then move beyond stationary states and consider initial laws that are absolutely continuous with respect to the reference measure. In this setting, we introduce a time-dependent stochastic entropy and derive the corresponding sample-path entropy production. We show that the entropy production admits a natural decomposition into housekeeping heat and free-energy dissipation. Both components have nonnegative expectations and satisfy their respective fluctuation theorems.
	
	Finally, we illustrate the theory through a collection of physically motivated examples. For each model, we determine the reversibility properties of the dynamics and compute the associated thermodynamic quantities explicitly.
	
	During the completion of this work, we became aware of the independent study by Duan, Zhang and Zimmer \cite{DuanZhangZimmer26}, which develops an entropy-production theory for stochastic evolution equations with a reversible Gaussian reference process. Their results overlap with ours in the symmetric stationary setting, particularly in the entropy-production rate formula and the characterization of reversibility. Our work extends beyond this setting by treating mild rather than strong solutions, allowing Sobolev-level density regularity, establishing transient fluctuation theorems and the housekeeping heat–free-energy dissipation decomposition, and accommodating non-reversible reference processes under a boundedness condition on the irreversible part. These aspects are not addressed in \cite{DuanZhangZimmer26}. A complementary recent study by Li, Zhao and Zhu \cite{LiZhaoZhu26} investigates entropy production for stationary diffusions on Hilbert spaces with state-dependent, possibly degenerate noise, without assuming a density relative to a reference measure. They establish a lower bound for entropy production in terms of a Cameron--Martin-type energy and provide separate sufficient conditions for representing the reversed process as a diffusion. This distinction between entropy-production estimates and time-reversal identification is also reflected in our hypothesis \textbf{(H-rev)}.
	
	The remainder of the paper is organized as follows. Section 2 presents preliminaries on semilinear SPDEs. Sections 3 and 4 develop the steady-state and transient thermodynamic frameworks, respectively. Section 5 illustrates the theory with examples from statistical physics. Section 6 concludes with a discussion of possible future directions.   
	
	\section{Preliminaries}\label{sec:prelim}
	In this section we collect the notation and several standard results from the theory of SPDEs that will be used. We refer to \cite[Chapters 2--3]{Prato2004} and \cite[Chapters 5,7,9--11]{Prato1992} for further details.
	
	Let $H$ be a separable Hilbert space. We denote by $C_b(H)$ and $B_b(H)$ the spaces of bounded continuous and bounded Borel measurable functions on $H$, respectively. For $n\in\mathbb N$, $C_b^n(H)$ (respectively $C_b^n(H;H)$) denotes the space of bounded $n$-times continuously Fr\'echet differentiable functions $\varphi:H\to\mathbb R$ (respectively $\varphi:H\to H$) with bounded derivatives up to order $n$. $D\varphi$ and $D^2\varphi$ denote the first and second Fr\'echet derivatives. Relative to an orthonormal basis $\{e_i\}_{i\ge1}$ of $H$, one may write
	\begin{equation}
		\label{Frechet derivatives}
		D\varphi(x)=\sum_{i=1}^{\infty}D_i\varphi(x)e_i,
		\qquad
		D^2\varphi(x)=\sum_{i,j=1}^{\infty}D_iD_j\varphi(x)e_i\otimes e_j,
	\end{equation}
	where $D_i\varphi$ denotes the directional derivative along $e_i$. Likewise, $C_b^{k,n}([0,T]\times H)$ and $C_b^{k,n}([0,T]\times H;H)$ denote the corresponding spaces of functions that are $k$-times continuously differentiable in time and $n$-times continuously Fr\'echet differentiable in space.
	
	Given separable Hilbert spaces $H$ and $G$, we denote by $L_1(H)$ and $L_2(H;G)$ the spaces of trace-class and Hilbert--Schmidt operators, respectively, endowed with the standard norms $\|\cdot\|_1$ and $\|\cdot\|_{\mathrm{HS}}$. For a nonnegative self-adjoint trace-class operator $T$, one has
	\[
	\|T\|_1=\Tr T.
	\]
	
	Let $Q$ be a trace-class operator on $H$ and $m\in H$. We denote by $N(m,Q)$ (or $N_{m,Q}$) the Gaussian measure on $H$ with mean $m$ and covariance operator $Q$. When $m=0$, we simply write $N_Q$. Let $U$ be another separable Hilbert space. By a cylindrical Wiener process on $U$ we mean a cylindrical Wiener process in the sense of \cite{Prato1992}.
	
	Let $\mu=N_Q$ be a centered Gaussian measure on $H$ with $\Ker Q=\{0\}$, and let $\{e_k\}$ be an orthonormal basis of eigenvectors of $Q$. For $F\in C_b^1(H;H)$, we say that $F$ has \emph{finite divergence} if
	\[
	\dv F(x):=\sum_{k=1}^{\infty}D_k\langle F(x),e_k\rangle
	\]
	converges for every $x\in H$ and defines an element of $C_b(H)$. If, in addition,
	\[
	x\mapsto \langle Q^{-1}x,F(x)\rangle,
	\qquad x\in Q(H),
	\]
	extends uniquely to a bounded uniformly continuous function on $H$, then $F$ is said to have \emph{finite divergence with respect to} $\mu$, and we define
	\begin{equation}
		\label{finite divergence}
		\dv_\mu F(x)
		=
		\dv F(x)-\langle Q^{-1}x,F(x)\rangle,
		\qquad x\in H.
	\end{equation}
	
	For $p>0$, the space $L^p(H,\mu)$ is defined in the usual way. We denote by $W^{k,p}(H,\mu)$ the Sobolev space obtained as the completion of $C_b^k(H)$ under the corresponding Sobolev norm. For a nonnegative self-adjoint $C$ we also use the covariance-weighted space $W^{2,2}_C(H,\mu)$, defined by the finiteness of $\int_H\operatorname{Tr}\big[(CD^2\varphi)^2\big]\,\mathrm d\mu$ together with $\int_H|C^{1/2}D\varphi|^2\,\mathrm d\mu<\infty$.
	
	Finally, for a densely defined closed operator $A$ on $H$, we define
	\[
	\mathscr E_A(H)
	:=
	\operatorname{span}
	\Bigl\{
	\varphi_h(x)=e^{i\langle h,x\rangle}
	:
	h\in D(A^*)
	\Bigr\}.
	\]
	For every $p\in[1,\infty)$ and every Gaussian measure $\mu$, the space $\mathscr E_A(H)$ is dense in $L^p(H,\mu)$.
	
	\subsection{Linear Stochastic PDEs}

	Given two separable Hilbert spaces $H$ and $U$, with inner product $\langle\cdot,\cdot\rangle$ and $\langle\cdot,\cdot\rangle_U$ respectively, we first consider the linear SPDE:
	
	\begin{equation}\label{OU}
		\left\{\begin{aligned}
			\mathrm{d}Z(t)&=AZ(t)\,\mathrm{d}t+B\,\mathrm{d}W(t),~t\geqslant 0,\\
			Z(0)&=\xi\in H,\end{aligned}\right.
	\end{equation}
	in which $A: D(A)\subset H\rightarrow H$ and $B: U\rightarrow H$ are linear operators, and $W$ is a $U$-valued cylindrical Wiener process with underlying filtered probability space $(\Omega,\mathscr{F},\{\mathscr{F}_t\}_{t\geqslant 0},\mathbb{P})$.
	
	\begin{hypo}
		\label{hypo_1}
		\ \par
		\begin{enumerate}[label=(\roman*)]
			\item $A: D(A)\subset H\rightarrow H$ is the infinitesimal generator of a strongly continuous semigroup $e^{tA}$. Moreover, there exists $\omega<0$, such that $\| e^{tA}\|\leqslant e^{\omega t}$ holds for all $t\geqslant 0$;
			\item \label{hypo_1.2}$B\in L(U;H)$ is a bounded linear operator. Since the laws of the processes below depend on $B$ only through $C:=BB^*\in L(H)$, we may without loss of generality identify $U$ with $H$ and take $B=\sqrt{C}$, so that $B$ is nonnegative and self-adjoint;
			\item \label{Q_infty}For any $t>0$, the linear operator $Q_tx=\int_0^t e^{sA}Ce^{sA^*}x\,\mathrm{d}s$, $x\in H$, $t\geqslant 0$, is of trace class. In addition, $\sup_{t> 0}\Tr[Q_t]<\infty$.
			
		\end{enumerate}
	\end{hypo}
	
	The \textit{mild solution} of (\ref{OU}) is defined by
	\begin{equation}
		\label{mild solution(linear)}
		Z(t,\xi)=e^{tA}\xi+W_A(t),
	\end{equation}
	
	with the stochastic convolution 
	\begin{equation}
		\label{stochastic convolution}
		W_A(t)=\int_0^t e^{(t-s)A}B\,\mathrm{d}W(s),~t\geqslant 0.
	\end{equation}
	
	The process $Z(t,\xi)$ is called the Ornstein--Uhlenbeck (OU) process associated with $(A,B)$. It will serve as the linear reference process. Its invariant measure will play the role of the reference measure in the thermodynamic framework developed below.
	
	For each $t>0$, $Z(t,\xi)$ is a Gaussian random variable on $H$ with distribution $N_{e^{tA}\xi,Q_t}$ with mean $e^{tA}\xi$ and covariance operator $Q_t$. Since entropy production will be formulated through Radon--Nikodym derivatives of Gaussian measures, we shall repeatedly use the \textit{Feldman--H\'ajek theorem}, recalled in Appendix \ref{Appendix A}, to determine when two Gaussian measures are mutually absolutely continuous.

	A major technical difficulty in extending stochastic thermodynamics to SPDEs is that the mild solution \eqref{mild solution(linear)} generally does not belong to $D(A)$. When $A$ is self-adjoint and negative with eigenvalues $-\lambda_k$, a direct computation gives \[ \mathbb E\,\big\|(-A)^\gamma W_A(t)\big\|^2 =\frac12\sum_{k\geqslant1}\lambda_k^{2\gamma-1}\big(1-e^{-2t\lambda_k}\big)\,\big\|B^*e_k\big\|_U^2, \] so $W_A(t)$ belongs to $D((-A)^\gamma)$ only when this series converges. Already for the one-dimensional heat equation with white noise, where $\lambda_k\asymp k^2$ and $B=\mathrm{Id}$, this requires $\gamma<\tfrac14$. The mild solution is therefore generally not a \textit{strong solution}, but rather a \textit{weak solution} to the original equation (\ref{OU}) in the following sense (see \cite{Prato1992} Theorem 5.4): for all $z\in D(A^{\ast})$ and all $t\in[0,T]$,
	\begin{equation}
		\label{weak solution}
		\langle Z(t,\xi), z\rangle=\langle \xi, z\rangle+\int_0^t\left\langle Z(s,\xi), A^* z\right\rangle \mathrm{d} s+\langle B W(t), z\rangle, \quad \mathbb{P} \text {-a.s. }.
	\end{equation}

	The corresponding semigroup $R_t$ of this OU process on $C_b(H)$ is given by
	$$R_t\varphi(\cdot)=\mathbb{E}[\varphi(Z(t,\cdot))]=\int_H \varphi(y)N_{e^{tA}\cdot,Q_t}(\mathrm{d}y)=\int_H \varphi(e^{tA}\cdot+y)N_{Q_t}(\mathrm{d}y),\quad\forall\varphi\in C_b(H).$$
	
	Under Hypothesis \ref{hypo_1}, the linear SPDE (\ref{OU}) has a unique invariant measure $\mu=N_{Q_{\infty}}$, Gaussian measure with mean $0$ and covariance operator $Q_\infty$, in which
	$$Q_{\infty}x=\int_0^\infty e^{tA}Ce^{tA^*}x\,\mathrm{d}t,~x\in H$$
	is well defined and of trace class according to Hypothesis \ref{hypo_1}\ref{Q_infty}. $Q_\infty$ also satisfies the \textit{Lyapunov equation} (see \cite{Prato2004} Lemma 2.45):
	$$\langle Q_\infty x,A^*y\rangle+\langle A^*x,Q_\infty y\rangle=-\langle Cx,y\rangle,~\forall~x,y\in D(A^*).$$

	Under Hypothesis \ref{hypo_1}, $R_t$ has a unique extension from $C_b(H)$ to $L^p(H,\mu)$ (still denoted by $R_t$). The thermodynamic quantities introduced in later sections will be expressed in terms of the infinitesimal generator and its adjoint. We denote by $L_p$ the infinitesimal generator of $R_t$ in $L^p(H,\mu)$, $p\geqslant1$, the $L^p(H,\mu)$-realization of the Kolmogorov backward generator.
	
	\begin{theorem}[Infinitesimal generator in $L^p$ {[\cite{Prato2004} Theorem 2.38]}]Under Hypothesis \ref{hypo_1},
		for any $p\geqslant 1$, $\mathscr{E}_A(H)$ is a core for $L_p$. Moreover, for any $\varphi\in \mathscr{E}_A(H)$, we have
		$$L_p\varphi(x)=\frac{1}{2}\Tr[CD^2\varphi(x)]+\langle x,A^* D\varphi(x)\rangle,~x\in H.$$
	\end{theorem}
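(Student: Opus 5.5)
The plan has two parts: first compute $L_p$ explicitly on exponential functions, then show that $\mathscr E_A(H)$ is a core.

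\textbf{Step 1: the formula on exponentials.} Fix $h\in D(A^*)$ and $\varphi_h(x)=e^{i\langle h,x\rangle}$. From the explicit form of $R_t$ and the characteristic function of $N_{Q_t}$,
\[
R_t\varphi_h(x)=e^{i\langle e^{tA^*}h,x\rangle-\frac12\langle Q_th,h\rangle}.
\]
Since $h\in D(A^*)$, the map $t\mapsto e^{tA^*}h$ is $C^1$ with derivative $A^*e^{tA^*}h$. Also $\frac{d}{dt}\langle Q_th,h\rangle=\langle Ce^{tA^*}h,e^{tA^*}h\rangle$. Differentiating at $t=0$ therefore gives the pointwise limit
\[
\Bigl(i\langle A^*h,x\rangle-\tfrac12\langle Ch,h\rangle\Bigr)\varphi_h(x).
\]
This is exactly $\frac12\Tr[CD^2\varphi_h]+\langle x,A^*D\varphi_h\rangle$, because $D\varphi_h=ih\varphi_h$ and $D^2\varphi_h=-h\otimes h\,\varphi_h$.

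To upgrade pointwise convergence to convergence in $L^p(H,\mu)$, I bound the difference quotient. Using $|e^{ia}-e^{ib}|\le|a-b|$ and $|\langle e^{sA^*}h,x\rangle|$-type estimates, the quotient is dominated by $c(1+\|x\|)$ uniformly for $t\in(0,1]$. Every polynomial is in $L^p(\mu)$ by Fernique's theorem, so dominated convergence applies. Hence $\varphi_h\in D(L_p)$ with the stated formula, and by linearity the same holds on all of $\mathscr E_A(H)$.

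\textbf{Step 2: $\mathscr E_A(H)$ is a core.} The key observation is that $R_t$ maps $\mathscr E_A(H)$ into itself, up to scalar factors, because $e^{tA^*}$ maps $D(A^*)$ into $D(A^*)$. Moreover $\mathscr E_A(H)$ is dense in $L^p(H,\mu)$, as recalled in the preliminaries. Since $R_t$ is a strongly continuous contraction semigroup on $L^p(\mu)$ by invariance of $\mu$ and Jensen, the standard core theorem (a dense, $R_t$-invariant subspace of $D(L_p)$ is a core; Engel--Nagel II.1.7) shows that $\mathscr E_A(H)$ is a core for $L_p$.

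\textbf{Expected obstacle.} The subtle point is whether $\mathscr E_A(H)$ is invariant exactly or only after closure. $R_t\varphi_h$ is a constant multiple of $\varphi_{e^{tA^*}h}$, and $\mathscr E_A(H)$ is a linear span, so the invariance is exact. If linearity of the span were an issue (it is not), I would instead use the resolvent formula $(\lambda-L_p)^{-1}=\int_0^\infty e^{-\lambda t}R_t\,dt$ together with a Riemann-sum approximation in the graph norm. The other point needing care is the uniform domination in Step 1, which relies on $\sup_t\Tr Q_t<\infty$ and $\|e^{tA}\|\le1$.
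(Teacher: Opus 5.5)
Your proposal is correct and follows essentially the standard argument behind the cited result; the paper gives no proof of its own and defers to Da Prato's Theorem~2.38. That argument computes $R_t\varphi_h=e^{-\frac12\langle Q_th,h\rangle}\varphi_{e^{tA^*}h}$, uses dominated convergence in $L^p(H,\mu)$ to get $\mathscr E_A(H)\subset D(L_p)$ with the stated formula, and deduces the core property from the density and $R_t$-invariance of $\mathscr E_A(H)$. One small correction to your closing remark: the domination by $c(1+|x|)$ needs only $\|e^{tA^*}\|\leqslant1$, boundedness of $C$ and finite Gaussian moments, not $\sup_t\Tr Q_t<\infty$, which matters only for the existence of $\mu$.
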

	
	The following symmetry criterion will later be shown to be closely related to time reversibility and vanishing entropy production.
	
	\begin{proposition}[Symmetry {[\cite{Prato2004} Proposition 2.47, Remark 2.48]}]
		\label{prop:symmetry}
		Under Hypothesis \ref{hypo_1}, $L_2$ is symmetric if and only if
		$$\langle Q_\infty x,A^*y\rangle=\langle Q_\infty y,A^*x\rangle,$$
		for all $x,y\in D(A^*)$; or equivalently, $AQ_\infty=Q_\infty A^*$, i.e. $Ce^{tA^*}=e^{tA}C$ for all $t\geqslant 0$; or equivalently, $2AQ_\infty=2Q_\infty A^*=-C$.
		
		If $A=A^*$, then $L_2$ is symmetric if and only if $A$ and $C$ commute.
	\end{proposition}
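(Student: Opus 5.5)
We will prove the chain of equivalences by testing $L_2$ on the core $\mathscr E_A(H)$ and then translating the resulting identity for $Q_\infty$ and $A^*$ into operator language via the Lyapunov equation.

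\textbf{Step 1: reduce symmetry to an identity on exponentials.} Since $\mathscr E_A(H)$ is a core for $L_2$, symmetry of $L_2$ is equivalent to
\[
\int_H L_2\varphi_h\,\overline{\varphi_k}\,\mathrm d\mu=\int_H \varphi_h\,\overline{L_2\varphi_k}\,\mathrm d\mu
\quad\text{for all } h,k\in D(A^*).
\]
For $\varphi_h(x)=e^{i\langle h,x\rangle}$ the generator formula gives
\[
L_2\varphi_h(x)=\Bigl(-\tfrac12\langle Ch,h\rangle+i\langle x,A^*h\rangle\Bigr)\varphi_h(x).
\]
Both sides then reduce to Gaussian integrals against $\mu=N_{Q_\infty}$. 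These are computed from the characteristic function $\int e^{i\langle g,x\rangle}\mathrm d\mu=e^{-\frac12\langle Q_\infty g,g\rangle}$ and its derivative in the direction $A^*h$. Explicitly,
\[
\int \langle x,a\rangle e^{i\langle g,x\rangle}\mathrm d\mu=i\langle Q_\infty g,a\rangle e^{-\frac12\langle Q_\infty g,g\rangle},
\qquad g=h-k.
\]

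\textbf{Step 2: simplify using the Lyapunov equation.} We use the Lyapunov equation to rewrite $\langle Ch,h\rangle=-2\langle Q_\infty h,A^*h\rangle$, and similarly for $k$. Cancelling the common Gaussian factor, the symmetric terms in $h$ and in $k$ drop out. What remains is the antisymmetric combination
\[
\langle Q_\infty k,A^*h\rangle-\langle Q_\infty h,A^*k\rangle=0.
\]
This condition is necessary and sufficient, which establishes the first equivalence. For the converse direction one notes that $h,k$ are arbitrary, and replacing $(h,k)$ by $(sh,tk)$ lets us isolate the bilinear term by comparing coefficients.

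\textbf{Step 3: translate into operator identities.} The identity $\langle Q_\infty x,A^*y\rangle=\langle Q_\infty y,A^*x\rangle$ on $D(A^*)$ means that $Q_\infty x\in D(A)$ and $AQ_\infty x=Q_\infty A^*x$. This uses closedness of $A=A^{**}$: the functional $y\mapsto\langle Q_\infty x,A^*y\rangle$ is bounded by $\|Q_\infty A^*x\|\,\|y\|$. We then proceed as follows.
\begin{enumerate}[label=(\alph*)]
\item Combining $AQ_\infty=Q_\infty A^*$ with the Lyapunov equation gives $2\langle Q_\infty A^*x,y\rangle=-\langle Cx,y\rangle$, hence $2Q_\infty A^*=-C$ on $D(A^*)$.
\item By closure the same identity yields $2AQ_\infty=-C$.
\item For the semigroup form we differentiate $t\mapsto e^{tA}Q_\infty e^{-tA^*}$ formally. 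It is cleaner to note that $AQ_\infty=Q_\infty A^*$ implies $e^{tA}Q_\infty=Q_\infty e^{tA^*}$, by uniqueness for the Cauchy problem satisfied by both sides.
\item Multiplying by $-2A$ then gives $Ce^{tA^*}=e^{tA}C$.
\item Conversely, from $Ce^{tA^*}=e^{tA}C$ we get
\[
Q_\infty e^{tA^*}=\int_0^\infty e^{sA}Ce^{(s+t)A^*}\mathrm ds=e^{tA}Q_\infty,
\]
and differentiating at $t=0$ on $D(A^*)$ recovers $AQ_\infty=Q_\infty A^*$.
\item Finally, if $A=A^*$ and $C$ commutes with $A$, hence with $e^{tA}$, the semigroup identity holds trivially. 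Conversely, $Ce^{tA}=e^{tA}C$ for all $t$ implies that $C$ commutes with $A$ by differentiation.
\end{enumerate}

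\textbf{Main obstacle.} We expect the hardest part to be the domain bookkeeping in Step 3. In particular, we must justify that $Q_\infty$ maps $D(A^*)$ into $D(A)$, and we must pass rigorously between the generator identity and the semigroup identity, since $A$ is unbounded. We plan to handle this with the closed-graph and duality argument indicated above together with the strong continuity of $e^{tA}$.
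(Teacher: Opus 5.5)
Your proposal is correct and follows essentially the standard argument behind the result the paper cites without proof (Da Prato, Proposition~2.47 and Remark~2.48). Your direct characteristic-function computation does by hand what the integration-by-parts identity $\int_H L_2\varphi\,\psi\,\mathrm d\mu=\int_H\langle Q_\infty D\psi,A^*D\varphi\rangle\,\mathrm d\mu$ packages: it reduces symmetry on the exponential core to the bilinear identity, after which closedness of $A$, the Lyapunov equation and uniqueness for the Cauchy problem give the operator and semigroup forms. (The rescaling $(h,k)\mapsto(sh,tk)$ is superfluous, since the Gaussian factor is strictly positive and the condition drops out directly.)
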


	\subsection{Semilinear Stochastic PDEs}
	
	Consider the nonlinear SPDE:
	\begin{equation}\label{semilinearSPDE}
		\left\{\begin{aligned}
			\mathrm{d}X(t)&=[AX(t)+F(X(t))]\,\mathrm{d}t+B\,\mathrm{d}W(t),~t\geqslant 0,\\
			X(0)&=\xi\in H,\end{aligned}\right.
	\end{equation}
	with $A,B,W$ as in (\ref{OU}), and a nonlinear drift $F:H\to H$.
	
	\begin{hypo}
		\label{hypo_2}
		\ \par
		\begin{enumerate}[label=(\roman*)]
			\item $F\in C_b(H;H)$, and there exists $K>0$ such that
			$$\left| F(x)-F(y)\right|\leqslant K\left| x-y\right|,~\forall~x,y\in H;$$
			\item $\omega+\kappa<0$, where
			$$\kappa:=\sup\left\{\frac{\langle F(x)-F(y),x-y\rangle}{\left| x-y\right|^2}:x,y\in H,\ x\ne y\right\};$$
			\item $\operatorname{Ker} B=\operatorname{Ker}Q_t=\{0\},~\forall\,t>0$.
			
		\end{enumerate}    
	\end{hypo}

	A predictable $H$-valued process $X(t,\xi),t\in[0,T]$ is said to be a \textit{mild solution} of problem (\ref{semilinearSPDE}) if $
	\mathbb{P}\left(\int_0^T|X(s,\xi)|^2 d s<+\infty\right)=1,
	$
	and, for arbitrary $t\in[0,T]$, we have 
	\begin{equation}
		\label{mild solution(semilinear)}
		X(t,\xi)=e^{t A} \xi+\int_0^t e^{(t-s) A} F(X(s)) \,\mathrm{d} s+W_A(t),~\mathbb{P}\text{-a.s.}.
	\end{equation}
	The transition semigroup on $B_b(H)$ is defined by
	\begin{equation}
		\label{semilinear transition group}
		P_t\varphi(\cdot)=\mathbb{E}[\varphi(X(t,\cdot))], \quad \varphi\in B_b(H), \quad t\geqslant 0.
	\end{equation}
	When $P_t$ admits an invariant probability measure, we denote it by $\nu$.

	\begin{hypo}\ \par
		\label{hypo_3}
		Either alternative {\rm(a)} or alternative {\rm(b)} below holds.
		\begin{enumerate}[label=(\alph*)]
			\item \label{hypo_3.a}\emph{Trace-class noise regime.}
			\begin{enumerate}[label=(\roman*)]
				\item $B\in L_2(U;H)$;
				\item \label{hypo_3.compatibility} $F(H)\subset B(U)$ and $B^{-1}F\in C_b(H;U)$;
				\item \label{hypo_3.density} $\nu\ll\mu$.
			\end{enumerate}
			\item \label{hypo_3.b}\emph{White-noise regime.}
			\begin{enumerate}[label=(\roman*)]
				\item $B(U)$ is dense in $H$ and $B^{-1}\in L(H;U)$. Furthermore, for each $t>0$, $e^{tA}(H) \subset Q_t^{1 / 2}(H)$;
				\item For any $t>0$, $e^{tA}B$ belongs to $L_2(U;H)$, and there exists $\alpha\in(0,\tfrac12)$ such that $\int_0^1 t^{-2\alpha}\,\|e^{tA}B\|_{\operatorname{HS}}^2\,\mathrm dt<\infty$.
			\end{enumerate}
		\end{enumerate}
	\end{hypo}

	\begin{remark}
		When the noise in \eqref{SHE} is correlated with a bounded correlation function $K(x-x')$, Hypothesis~\ref{hypo_3}\ref{hypo_3.a}(i) is readily verified. In the white-noise case $B=\operatorname{Id}\notin L_2(U;H)$, so one must instead verify Hypothesis~\ref{hypo_3}\ref{hypo_3.b}. Here $\|e^{tA}B\|_{\operatorname{HS}}^2=\sum_{k}e^{-2\lambda_k t}$, where $\{-\lambda_k\}$ are the eigenvalues of $A$ (see [\cite{Prato1992} Example~7.6]). In one space dimension $\lambda_k\sim c\,k^2$, so $\|e^{tA}B\|_{\operatorname{HS}}^2=O(t^{-1/2})$ as $t\to0^+$, and hence $\int_0^1 t^{-2\alpha}\,\|e^{tA}B\|_{\operatorname{HS}}^2\,\mathrm dt<\infty$ for every $\alpha\in(0,\tfrac14)$. The integrability condition of Hypothesis~\ref{hypo_3}\ref{hypo_3.b} is therefore satisfied. For the Dirichlet realization of $\Delta$ on $(0,1)$ one has $\lambda_k=\pi^2k^2$, giving $\|e^{tA}B\|_{\operatorname{HS}}^2=\sum_{k\geqslant1}e^{-2\pi^2k^2t}$.
	\end{remark}
	
	\begin{proposition}[mild solution]
		Fix $T>0$. Under Hypotheses~\ref{hypo_1} and~\ref{hypo_2}, for any $\xi\in H$, the SPDE (\ref{semilinearSPDE}) has a unique mild solution $X(t,\xi)$ over $t\in[0,T]$. If additionally either Hypothesis \ref{hypo_3}\ref{hypo_3.a} or \ref{hypo_3}\ref{hypo_3.b} holds, then $\sup_{t\in[0,T]}\mathbb E|X(t)|^p<\infty$ for every $p\geqslant2$, and $X(\cdot,\xi)$ has $\mathbb P$-almost surely $H$-continuous paths, so its law is a Borel probability measure on $C([0,T];H)$.
	\end{proposition}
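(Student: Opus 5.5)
We split the proof into three parts: existence and uniqueness of a mild solution, moment bounds, and path continuity.

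\emph{Existence and uniqueness.} We follow the standard fixed-point argument of \cite[Chapter 7]{Prato1992}. Fix a path of the stochastic convolution $W_A$ and define the map
\[
\mathcal{K}(Y)(t)=e^{tA}\xi+\int_0^t e^{(t-s)A}F(Y(s))\,\mathrm ds+W_A(t).
\]
We study $\mathcal{K}$ on the space of predictable processes with $\sup_{t\le T}\mathbb E|Y(t)|^2<\infty$, or pathwise on $L^2(0,T;H)$. Since $F$ is Lipschitz with constant $K$ and $\|e^{tA}\|\le e^{\omega t}\le 1$, we have
\[
|\mathcal{K}(Y_1)(t)-\mathcal{K}(Y_2)(t)|\le K\int_0^t|Y_1(s)-Y_2(s)|\,\mathrm ds .
\]
Hence $\mathcal{K}$ is a contraction in the weighted norm $\sup_{t\le T} e^{-2Kt}\mathbb E|Y(t)|^2$, or alternatively after iterating on short time intervals. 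Banach's fixed point theorem then gives a unique mild solution. Predictability of the fixed point is inherited from $W_A$, which is predictable and mean-square continuous under Hypothesis~\ref{hypo_1}\ref{Q_infty}, since $\mathbb E|W_A(t)|^2=\Tr Q_t$. Uniqueness within the class $\mathbb P(\int_0^T|X|^2<\infty)=1$ follows pathwise from Gronwall's inequality applied to the difference of two solutions.

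\emph{Moment bounds.} From the mild formula and boundedness of $F$,
\[
|X(t)|\le |\xi|+T\|F\|_\infty+|W_A(t)|.
\]
It therefore suffices to bound $\sup_{t\le T}\mathbb E|W_A(t)|^p$. For each $t$, $W_A(t)$ is a centered Gaussian random variable with covariance $Q_t$. By the Fernique/Gaussian hypercontractivity bound $\mathbb E|W_A(t)|^p\le c_p(\Tr Q_t)^{p/2}$, and this is uniformly bounded because $\sup_t\Tr Q_t<\infty$. In regime (a) this is immediate, since $B$ is Hilbert--Schmidt and $\Tr Q_t\le t\|B\|_{\mathrm{HS}}^2$. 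In regime (b) it is covered by Hypothesis~\ref{hypo_1}\ref{Q_infty} directly.

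\emph{Path continuity.} This is the main step. The first two terms of the mild formula are continuous in $t$: $e^{tA}\xi$ by strong continuity of the semigroup, and the deterministic convolution of a bounded integrand by dominated convergence together with strong continuity. So everything reduces to continuity of $W_A$.

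In regime (a), $B\in L_2$ and $e^{tA}$ is a contraction semigroup. We invoke the Kotelenez/Hausenblas--Seidler maximal inequality for stochastic convolutions with contraction semigroups (via unitary dilation, \cite[Theorem 6.10]{Prato1992}), which yields a continuous version of $W_A$.

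In regime (b), we use the factorization method of \cite[Theorem 5.11]{Prato1992}. Write
\[
W_A(t)=\frac{\sin\pi\alpha}{\pi}\int_0^t(t-s)^{\alpha-1}e^{(t-s)A}Y_\alpha(s)\,\mathrm ds,
\qquad
Y_\alpha(s)=\int_0^s(s-r)^{-\alpha}e^{(s-r)A}B\,\mathrm dW(r).
\]
The assumption $\int_0^1 t^{-2\alpha}\|e^{tA}B\|_{\mathrm{HS}}^2\,\mathrm dt<\infty$ gives $\sup_s\mathbb E|Y_\alpha(s)|^2<\infty$. Since $Y_\alpha(s)$ is Gaussian, this upgrades to $\mathbb E\int_0^T|Y_\alpha|^{p}\,\mathrm ds<\infty$ for all $p$. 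Choosing $p>1/\alpha$, the operator $R_\alpha f=\int_0^t(t-s)^{\alpha-1}e^{(t-s)A}f(s)\,\mathrm ds$ maps $L^p(0,T;H)$ into $C([0,T];H)$ by Hölder's inequality and strong continuity.

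With all three terms continuous, $X(\cdot,\xi)$ has continuous paths. Its law is therefore the pushforward of $\mathbb P$ under a measurable map into the Polish space $C([0,T];H)$, and hence a Borel probability measure on $C([0,T];H)$.
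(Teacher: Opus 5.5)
Your argument is correct and follows the standard Da Prato--Zabczyk route (Chapters 5 and 7 of \cite{Prato1992}), which the paper relies on without writing out a proof: a fixed point for the mild equation, Gaussian moment bounds for $W_A$ from $\sup_t\Tr Q_t<\infty$ (as you note, this needs only Hypothesis~\ref{hypo_1}), and path continuity of $W_A$ via the contraction-semigroup maximal inequality or the factorization method. One small simplification: in regime (a) the bound $\|e^{tA}B\|_{\mathrm{HS}}\leqslant\|B\|_{\mathrm{HS}}$ already gives $\int_0^1t^{-2\alpha}\|e^{tA}B\|_{\mathrm{HS}}^2\,\mathrm dt<\infty$ for every $\alpha<\tfrac12$, so your factorization argument alone covers both regimes.
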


	\begin{theorem}[Invariant measure and density {\cite[Theorems 3.17, 3.33 and Proposition 3.37]{Prato2004} }]
		\label{Invariant measure & density}
		Under Hypothesis \ref{hypo_1}, Hypothesis \ref{hypo_2} and either Hypothesis \ref{hypo_3}\ref{hypo_3.a} or \ref{hypo_3}\ref{hypo_3.b}, there exists a unique invariant measure $\nu$ for $P_t$ with finite $2n$-th moments for any $n\in \mathbb{N}$ [\cite{Prato2004} Theorem~3.17].
		
		Under Hypothesis~\ref{hypo_3}\ref{hypo_3.b}, the measure $\nu$ is absolutely continuous with respect to $\mu$ and, writing $\rho^{ss}=\frac{d\nu}{d\mu}$, one has $\rho^{ss}\in W^{1,2}(H,\mu)$ and $D\log\rho^{ss}\in L^2(H,\nu;H)$ [\cite{Prato2004} Theorem~3.33 and Proposition~3.37].
		
		Under Hypothesis~\ref{hypo_3}\ref{hypo_3.a},
		condition~\ref{hypo_3}\ref{hypo_3.a}(iii) ensures that
		$\rho^{ss}=d\nu/d\mu$ is well defined.
	\end{theorem}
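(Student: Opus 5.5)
The statement collects three claims from \cite{Prato2004}. I would verify each by checking that the standing hypotheses of the paper put us in the setting of the cited results, and not re-derive the analysis.

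\textbf{Existence and uniqueness of $\nu$.} I would argue by dissipativity. Take two mild solutions $X(t,\xi)$ and $X(t,\eta)$ driven by the same noise. Their difference $Y=X(t,\xi)-X(t,\eta)$ solves the deterministic equation $Y'=AY+F(X(t,\xi))-F(X(t,\eta))$ in mild form. Using Yosida approximations $A_n$ of $A$, the contraction bound $\|e^{tA}\|\le e^{\omega t}$ of Hypothesis~\ref{hypo_1}(i) and the one-sided bound $\kappa$ of Hypothesis~\ref{hypo_2}(ii), I get
\[
|X(t,\xi)-X(t,\eta)|\le e^{(\omega+\kappa)t}|\xi-\eta| .
\]
Separately, $F$ is bounded and $\sup_t\operatorname{Tr}Q_t<\infty$. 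Together with the mild formula \eqref{mild solution(semilinear)}, this gives $\sup_{t\ge0}\mathbb E|X(t,0)|^{2n}<\infty$: the Gaussian term $W_A(t)$ has all moments controlled by $\operatorname{Tr}Q_t$, and the drift term is bounded by $\|F\|_\infty/|\omega|$. Since $\omega+\kappa<0$, the laws of $X(t,\xi)$ form a Cauchy family in the Wasserstein distance as $t\to\infty$. The limit is invariant and independent of $\xi$, which gives uniqueness, and it inherits the moment bounds by Fatou. This is exactly the argument of \cite[Theorem 3.17]{Prato2004}, so I would only check that our hypotheses imply its assumptions.

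\textbf{Absolute continuity and regularity in case (b).} In regime~\ref{hypo_3}\ref{hypo_3.b}, the condition $e^{tA}(H)\subset Q_t^{1/2}(H)$ makes $R_t$ strong Feller. The integrability condition on $\|e^{tA}B\|_{\mathrm{HS}}$ gives continuity of the stochastic convolution. I would then follow \cite[Thm.~3.33, Prop.~3.37]{Prato2004} in three steps.
\begin{enumerate}
\item Approximate $F$ by smooth $F_n$ and write the invariance of $\nu$ in Kolmogorov form, $\int (L\varphi+\langle F,D\varphi\rangle)\,d\nu=0$, for $\varphi\in\mathscr E_A(H)$.
\item Use $B^{-1}\in L(H;U)$ to bound $|\int\langle F,D\varphi\rangle d\nu|\le c\|\varphi\|_{L^2(\mu)}$-type estimates through the integration-by-parts formula for $\mu$ along $C^{1/2}$-directions.
\item Deduce that $\nu\ll\mu$ and that $\rho^{ss}\in W^{1,2}(H,\mu)$, with $\int |D\log\rho^{ss}|^2 d\nu\le c\int |B^{-1}F|^2d\nu<\infty$.
\end{enumerate}

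\textbf{Case (a).} Here there is nothing to prove beyond the Radon--Nikodym theorem, because $\nu\ll\mu$ is assumed in (a)(iii).

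\textbf{Main obstacle.} The delicate step is the $W^{1,2}$ estimate in (b), that is, the Fisher-information bound for $\rho^{ss}$. It requires that $C$ be boundedly invertible so that gradients in $H$ are controlled by $C^{1/2}$-gradients. I would rely on the cited proof for this step rather than reproducing it.
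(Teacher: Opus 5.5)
Your plan matches the paper, which gives no separate argument and simply defers to Da Prato's Theorems~3.17, 3.33 and Proposition~3.37. Your synchronous-coupling argument with contraction rate $e^{(\omega+\kappa)t}$, together with the uniform moment bounds from bounded $F$ and $\sup_t\operatorname{Tr}Q_t<\infty$, is the mechanism behind Theorem~3.17, and case~(a) is indeed just the Radon--Nikodym theorem under (a)(iii).

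One correction to your ``main obstacle'' paragraph. The Fisher-information bound $\int_H|D\log\rho^{ss}|^2\,\mathrm d\nu<\infty$ is equivalent to $\sqrt{\rho^{ss}}\in W^{1,2}(H,\mu)$, not to $\rho^{ss}\in W^{1,2}(H,\mu)$; the latter needs in addition $\rho^{ss}\in L^2(H,\mu)$. Also, as the paper notes elsewhere, the cited absolute-continuity and regularity results are for $C=\mathrm{Id}$, so for a general boundedly invertible $C$ the hypothesis check you propose is not automatic.
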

	
	When $C=\mathrm{Id}$, the invariant measure is known to be absolutely continuous with respect to $\mu$: this is classical for bounded Lipschitz drifts [\cite{Prato2004} Theorem~3.33], and, in the self-adjoint compact-resolvent case, Bogachev, Da Prato and R\"ockner \cite{BogachevDaPratoRoeckner96} prove the stronger conclusion that the square root of the density belongs to $W^{1,2}(H,\mu)$, assuming of the drift only a coordinate-integrability condition. The companion work cited therein treats a bounded linear part with general trace-class noise. The combination relevant here---unbounded $A$ with a general trace-class $C$---was left open in \cite{BogachevDaPratoRoeckner96}, and we are not aware of a subsequent general result. Proposition~\ref{prop:Hdensity-check} below gives an explicit sufficient condition covering the model of Section~\ref{subsec:M1}.
	
	\begin{remark}[Full gradient versus $C$-gradient]
		\label{rmk:weighted-gradient}
		For $C=I$ the full-gradient regularity is [\cite{Prato2004} Proposition~3.37]. For a general trace-class $C$ the full gradient can fail, and the $C$-gradient is the sharp object. Take $Ae_n=-n^2e_n$, $Be_n=n^{-1}e_n$ and the constant drift $F(x)\equiv f$ with $f_n=n^{-2}$, which satisfies all standing hypotheses. Then $\nu=N(m,Q_\infty)$ with $m_n=n^{-4}$, the density is $\rho^{ss}(x)=\exp(\langle Q_\infty^{-1}m,x\rangle-\tfrac12|Q_\infty^{-1/2}m|^2)$ with $Q_\infty^{-1}m\equiv2$, and $D_n\rho^{ss}=2\rho^{ss}$ for every $n$, whence
		\[
		\sum_{n\geqslant1}\|D_n\rho^{ss}\|^2_{L^2(\mu)}=\infty,
		\qquad
		\int_H|C^{1/2}D\rho^{ss}|^2\,\mathrm d\mu
		=4\,\mathbb E_\mu[(\rho^{ss})^2]\sum_{n\geqslant1}n^{-2}<\infty.
		\]
		In this example $\nu\sim\mu$, so condition~\ref{hypo_3}\ref{hypo_3.a}(iii) holds while the full-gradient strengthening fails. All gradient statements below are therefore formulated for the $C$-gradient, which reduces to the full gradient when $C^{-1}\in L(H)$.
	\end{remark}
	
	For the operator manipulations of Sections~\ref{sec:ts-nonsym} and~\ref{sec:transient} we shall also need second-order information on $\rho^{ss}$.
	
	\begin{proposition}[Second-order regularity of the stationary density]\label{prop:reg}
		Assume the hypotheses of Theorem~\ref{Invariant measure & density} and that $F\in C_b^1(H;H)$ has finite divergence with respect to $\mu$ (see~\eqref{finite divergence}).
		\begin{enumerate}[label=\rm(\roman*)]
			\item If $C=\mathrm{Id}$ and $L_2$ is symmetric, equivalently $A=A^{*}$ by Proposition~\ref{prop:symmetry}, then
			\[
			\rho^{ss}\in W^{2,2}(H,\mu),
			\qquad
			|(-A)^{1/2}D\rho^{ss}|\in L^2(H,\mu),
			\]
			in particular $\rho^{ss}\in D(L_2)$ and $L_2\rho^{ss}\in L^2(H,\mu)$.
			\item If $L_2$ is symmetric and $A$ and $C$ commute, then
			\[
			\rho^{ss}\in W^{2,2}_C(H,\mu),
			\qquad
			\int_H\big\langle (-AC)D\rho^{ss},D\rho^{ss}\big\rangle\,\mathrm d\mu<\infty,
			\]
			in particular $\rho^{ss}\in D(L_2)$ and $L_2\rho^{ss}\in L^2(H,\mu)$; if moreover $C^{-1}\in L(H)$, then $W^{2,2}_C(H,\mu)\subseteq W^{2,2}(H,\mu)$.
		\end{enumerate}
		For general (non-commuting) $C$ no characterization of $D(L_2)$ is available [\cite{Prato2004Cambridge} Remark~10.2.9].
	\end{proposition}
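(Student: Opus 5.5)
We will prove both parts by working with the stationary Kolmogorov (Fokker--Planck) equation for $\rho^{ss}$ relative to $\mu$ and extracting second-order estimates from it by an energy (integration-by-parts) argument in $L^2(H,\mu)$.

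\textbf{Step 1: the weak stationary equation.} Invariance of $\nu$ for the semilinear dynamics gives $\int_H (L_2\varphi+\langle F,D\varphi\rangle)\rho^{ss}\,\mathrm d\mu=0$ for all $\varphi\in\mathscr E_A(H)$. Under symmetry, $L_2$ is self-adjoint on $L^2(H,\mu)$. Since $F$ has finite divergence with respect to $\mu$, the Gaussian integration-by-parts formula $\int\langle F,D\varphi\rangle\psi\,\mathrm d\mu=-\int\varphi\,(\psi\dv_\mu F+\langle F,D\psi\rangle)\,\mathrm d\mu$ applies. Together these show that $\rho^{ss}$ is a weak solution of
\[
L_2\rho^{ss}=\langle F,D\rho^{ss}\rangle+\rho^{ss}\dv_\mu F=:g .
\]
Here $\rho^{ss}\in W^{1,2}$ in case (i) by Theorem~\ref{Invariant measure & density}; in case (ii) we use the $C$-gradient version of that estimate. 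Consequently $g\in L^2(H,\mu)$: $\dv_\mu F$ is bounded, and $F$ is bounded. In case (ii) we additionally need $\langle F,D\rho\rangle=\langle C^{-1/2}F,C^{1/2}D\rho\rangle$, which is controlled by Hypothesis~\ref{hypo_3}\ref{hypo_3.a}(ii) because $C^{-1/2}F=B^{-1}F$ is bounded.

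\textbf{Step 2: $\rho^{ss}\in D(L_2)$.} A weak solution in this sense, with $g\in L^2$, satisfies $\langle\rho^{ss},L_2\varphi\rangle=\langle g,\varphi\rangle$ on the core $\mathscr E_A(H)$. Self-adjointness of $L_2$ then gives $\rho^{ss}\in D(L_2^*)=D(L_2)$ and $L_2\rho^{ss}=g\in L^2$.

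\textbf{Step 3: identification of $D(L_2)$.} The remaining conclusions follow from the known description of the domain of a symmetric OU generator. In case (i), with $C=\mathrm{Id}$ and $A=A^*$, Da Prato--Zabczyk (Cambridge, Ch.~10) give
\[
D(L_2)=\{\varphi\in W^{2,2}(H,\mu):\ |(-A)^{1/2}D\varphi|\in L^2(\mu)\}.
\]
In case (ii), with $A$ and $C$ commuting, we diagonalize simultaneously in the common eigenbasis $e_k$, with $Ae_k=-\lambda_ke_k$, $Ce_k=c_ke_k$ and $Q_\infty e_k=\tfrac{c_k}{2\lambda_k}e_k$. We then repeat that proof through the identity
\[
\int (L_2\varphi)^2\,\mathrm d\mu=\tfrac14\int\Tr[(CD^2\varphi)^2]\,\mathrm d\mu+\tfrac12\int\langle(-AC)D\varphi,D\varphi\rangle\,\mathrm d\mu ,
\]
checked first on $\mathscr E_A(H)$ and then extended by closure. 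This yields membership in $W^{2,2}_C$ and finiteness of the $(-AC)$-energy. The final inclusion $W^{2,2}_C\subseteq W^{2,2}$ is immediate when $C^{-1}\in L(H)$, since then $\Tr[(D^2\varphi)^2]\le\|C^{-1}\|^2\Tr[(CD^2\varphi)^2]$ and similarly for the gradient.

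\textbf{Main obstacle.} We expect the hardest part to be Step 1 in the trace-class case (ii): justifying the weak equation and the bound $g\in L^2$ when only $C$-gradient regularity of $\rho^{ss}$ is available, as Remark~\ref{rmk:weighted-gradient} warns. We would first try a Galerkin/finite-dimensional approximation of $F$ with uniform $C$-gradient bounds inherited from the dissipativity $\omega+\kappa<0$, and pass to the limit using closedness of $L_2$.
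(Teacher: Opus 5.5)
Your outline follows the paper's route. The paper obtains (i) by citing Da Prato's Proposition~2.50, Remark~2.51 and Corollary~3.40, which is exactly your Steps~1--3: stationary equation, $g\in L^2$, self-adjointness, domain description. For (ii), your identity $\int(L_2\varphi)^2\,\mathrm d\mu=\tfrac14\int\Tr[(CD^2\varphi)^2]\,\mathrm d\mu+\tfrac12\int\langle(-AC)D\varphi,D\varphi\rangle\,\mathrm d\mu$ is the proof of the Da Prato--Zabczyk Theorem~10.2.8 that the paper cites. You should also justify, as the paper does, that in (ii) $A=A^*$ with a common eigenbasis: $2AQ_\infty=-C$ gives $Q_\infty=-\tfrac12A^{-1}C$, which commutes with $C$, so $A=-\tfrac12CQ_\infty^{-1}=A^*$, and the compact injective $Q_\infty$ supplies the basis.

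The genuine gap is Step~1 of (ii) in the trace-class regime, Hypothesis~\ref{hypo_3}\ref{hypo_3.a}. There $g\in L^2(H,\mu)$ needs $\rho^{ss}\in L^2(H,\mu)$, for the term $\rho^{ss}\dv_\mu F$, and $C^{1/2}D\rho^{ss}\in L^2(H,\mu;H)$, for the term $\langle B^{-1}F,C^{1/2}D\rho^{ss}\rangle$. Nothing available gives either. Theorem~\ref{Invariant measure & density} supplies only the existence of $\rho^{ss}$ in that regime, so there is no ``$C$-gradient version'' to quote. In regime~\ref{hypo_3.b}, $C^{-1}\in L(H)$ and the full-gradient statement suffices, so the gap is confined to~(a). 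You also cannot import this from the paper: its proof of (ii) is a bare citation of Theorem~10.2.8, which describes $D(L_2)$ but does not place $\rho^{ss}$ in it.

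Your proposed repair targets the wrong mechanism.
\begin{itemize}
\item What $\omega+\kappa<0$ gives is the contraction $|X(t,x)-X(t,y)|\leqslant e^{(\omega+\kappa)t}|x-y|$. This controls $DP_t\varphi$ only through $D\varphi$ and says nothing about $C^{1/2}D\rho^{ss}$. In regime~(a) there is also no strong Feller smoothing to dualize.
\item Testing the stationary equation with $\log\rho^{ss}$ yields only the Fisher bound $\int|C^{1/2}D\log\rho^{ss}|^2\,\mathrm d\nu\leqslant4\|B^{-1}F\|_\infty^2$. On its own this is too weak for $\langle F,D\rho^{ss}\rangle\in L^2(\mu)$.
\end{itemize}

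What works uses the commuting structure of (ii).
\begin{itemize}
\item Project onto the first $n$ common eigenvectors. Then $\|C^{-1/2}P_nF\|\leqslant M:=\|B^{-1}F\|_\infty$ uniformly in $n$.
\item Test the finite-dimensional stationary equation for $\rho_n$ with $\rho_n^{p-1}$. This gives $\int|C^{1/2}D\rho_n^{p/2}|^2\,\mathrm d\mu\leqslant p^2M^2\int\rho_n^p\,\mathrm d\mu$.
\item Combine this with the Gaussian log-Sobolev inequality $\operatorname{Ent}_\mu(f^2)\leqslant\lambda_1^{-1}\int|C^{1/2}Df|^2\,\mathrm d\mu$, valid because $Q_\infty=\tfrac12(-A)^{-1}C$. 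A Herbst-type argument, $\frac{\mathrm d}{\mathrm dp}\big(p^{-1}\log\int\rho_n^p\,\mathrm d\mu\big)\leqslant M^2/\lambda_1$ integrated from $p=1$, gives $\|\rho_n\|_{L^2(\mu)}^2\leqslant e^{2M^2/\lambda_1}$.
\item The case $p=2$ of the tested bound then gives $\|C^{1/2}D\rho_n\|_{L^2(\mu)}\leqslant2M\|\rho_n\|_{L^2(\mu)}$.
\end{itemize}
Dissipativity is needed only for the convergence $\nu_n\to\nu$. After that, weak compactness and closability of $C^{1/2}D$ transfer both bounds to $\rho^{ss}$, and your Steps~2--3 go through.
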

	
	\begin{proof}
		(i) is [\cite{Prato2004} Proposition~2.50, Remark~2.51 and Corollary~3.40], whose standing assumptions are the symmetry of $L_2$ and $C=\mathrm{Id}$ (Remark~2.51 requires $A=A^{*}$), matching our regime~\ref{hypo_3}\ref{hypo_3.b} under the added symmetry hypothesis. In the present examples the Hilbert--Schmidt integrability condition is equivalent to $A^{-1}\in L_1(H)$, the hypothesis of [\cite{Prato2004Cambridge} Theorem~10.2.6].
		(ii) is [\cite{Prato2004Cambridge} Theorem~10.2.8], whose hypotheses $A=A^{*}$, $Ce^{tA}=e^{tA}C$ and $CA^{-1}\in L_1(H)$ hold under the stated assumptions, $CA^{-1}\in L_1(H)$ following from $CA^{-1}=-2Q_\infty\in L_1(H)$ in the symmetric case, since $\Tr Q_\infty<\infty$. Note that $A=A^{*}$ is automatic under the stated assumptions: $L_2$-symmetry gives $AQ_\infty=Q_\infty A^{*}$, which combined with the Lyapunov equation yields $AQ_\infty=-\tfrac12C$, so $Q_\infty=-\tfrac12A^{-1}C$ commutes with $C$ because $A$ and $C$ do, and then $A=-\tfrac12CQ_\infty^{-1}=Q_\infty^{-1}(-\tfrac12C)=A^{*}$.
	\end{proof}
	
	On the core $\mathscr E_A(H)$ the Kolmogorov operator of the semilinear
	equation reads
	\[
	\mathcal K\varphi=L_2\varphi+\langle F(\cdot),D\varphi\rangle.
	\]
	We denote its closed $L^2(H,\mu)$-realization by $K_{2,\mu}$, with
	adjoint $K_{2,\mu}^*$ (the forward, or Fokker--Planck, operator). The
	adjoint of its $L^2(H,\nu)$-realization $K_2$ is denoted $K_2^*$.
	
	$P_t$ can be uniquely extended to $L^p(H,\nu)$ as a strongly continuous semigroup of contractions, for $p\geqslant 1$ with infinitesimal generator denoted as $K_p$. The next result identifies this Kolmogorov operator, and, in the presence of symmetric $L_2$, characterizes the drift structure corresponding to detailed balance with respect to the invariant measure.
	
	\begin{hypo}
		\label{hypo_4}
		\ \par
		When $L_2$ is not symmetric, we assume in addition, writing $A^r:=-A-CQ_\infty^{-1}$ and $R:=A^r-A$:
		\begin{enumerate}[label=(\roman*)]
			\item \label{hypo_4.i} $\rho^{ss}\in D(L_2)$;
			\item \label{hypo_4.ii} $R\in L(H)$;
			\item \label{hypo_4.iii} $C^{-1/2}R\in L(H)$ and $R(H)\subset B(U)$.
		\end{enumerate}
	\end{hypo}
	
	\begin{remark}
		\label{rmk:AQauto}
		By item~{\rm(ii)} and the Lyapunov equation, $2AQ_\infty=-C-RQ_\infty$. Since $C$ is bounded and $RQ_\infty$ is of trace class, $AQ_\infty\in L(H)$. In the symmetric case $R=0$ and $AQ_\infty=-\tfrac12C$, so items~{\rm(ii)} and~{\rm(iii)} are automatic, while item~{\rm(i)} holds under the conditions of Proposition~\ref{prop:reg}; none of Hypothesis~\ref{hypo_4} is required for the symmetric theory of Section~3.1 below.
	\end{remark}

	\begin{theorem}[Infinitesimal generator]\label{thm_generator}
		Under Hypotheses~\ref{hypo_1} and~\ref{hypo_2}, $\mathscr{E}_A(H)$ is a core for $K_p$, $p\geqslant1$, and for any $\varphi\in\mathscr{E}_A(H)$,
		$$K_p\varphi(\cdot)=\tfrac{1}{2}\Tr[CD^2\varphi]+\langle\,\cdot\,,A^*D\varphi\rangle+\langle F(\cdot),D\varphi\rangle=L_p\varphi+\langle F(\cdot),D\varphi\rangle.$$
		Assume in addition that $F\in C_b^1(H;H)$ has finite divergence with respect to $\mu$, and that $\nu\ll\mu$ with the $C$-gradient regularity of Theorem~\ref{Invariant measure & density}.
		\begin{enumerate}[label=\rm(\roman*)]
			\item If $L_2$ is symmetric, then $L_2^*=L_2$ and the adjoint operator of $K_2$ in $L^2(H;\nu)$ is determined through the weak identity
			$$\int_H (K_2\varphi)\,\psi\,\mathrm d\nu=\int_H \varphi\,\big[L_2\psi+\langle CD\log\rho^{ss}-F,D\psi\rangle\big]\,\mathrm d\nu,\qquad \varphi,\psi\in\mathscr E_A(H).$$
			\item In general, if moreover Hypothesis~\ref{hypo_4}(i)--(ii) holds, then the adjoint is determined through
			$$\int_H (K_2\varphi)\,\psi\,\mathrm d\nu=\int_H \varphi\,\big[L_2^*\psi+\langle CD\log\rho^{ss}-F,D\psi\rangle\big]\,\mathrm d\nu,\qquad \varphi,\psi\in\mathscr E_A(H).$$
		\end{enumerate}
		In both cases $\mathscr E_A(H)$ is a core for $K_2$, so the weak identity determines the adjoint; the operator $L_2^*$ is identified in Section~\ref{sec:ts-nonsym}. In particular, $K_2$ is symmetric if and only if $Ax+F(x)=\frac{1}{2}CD\log\rho^{ss}(x)-\frac{1}{2}CQ_\infty^{-1}x$.
	\end{theorem}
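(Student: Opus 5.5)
The first assertion, that $\mathscr E_A(H)$ is a core for $K_p$ and that $K_p$ acts there by the stated formula, I would not reprove. It is the standard perturbation result for Kolmogorov operators with bounded Lipschitz drift [\cite{Prato2004} Theorem~3.23 and its $L^p(H,\nu)$ analogue]. The formula follows by applying It\^o's formula to $\varphi_h(X(t))$ for $h\in D(A^*)$. This is legitimate through the weak formulation \eqref{weak solution}: $\langle X(t),h\rangle$ is a real semimartingale with drift $\langle X,A^*h\rangle+\langle F(X),h\rangle$ and quadratic variation $|C^{1/2}h|^2t$, so the mild solution never has to lie in $D(A)$. The core property then follows from density of $\mathscr E_A(H)$ in $L^p(H,\nu)$ together with invariance of $\mathscr E_A(H)$ under the approximating semigroups, as in the OU case.

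For the adjoint identities I would work with $\mu$ as the reference measure and pull everything back via $\mathrm d\nu=\rho^{ss}\,\mathrm d\mu$. For $\varphi,\psi\in\mathscr E_A(H)$,
\[
\int (K_2\varphi)\psi\,\mathrm d\nu=\int (L_2\varphi)\,\psi\rho^{ss}\,\mathrm d\mu+\int\langle F,D\varphi\rangle\psi\rho^{ss}\,\mathrm d\mu .
\]
The first term I would rewrite using $\int (L_2\varphi)g\,\mathrm d\mu=\int\varphi\,L_2^*g\,\mathrm d\mu$ with $g=\psi\rho^{ss}$. In case (i) this needs $L_2^*=L_2$, which is Proposition~\ref{prop:symmetry}. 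In case (ii) it needs $\rho^{ss}\in D(L_2)$, which is Hypothesis~\ref{hypo_4}(i), together with a product rule. That product rule comes from the carr\'e du champ of the OU generator:
\[
L_2^*(\psi\rho)=\rho L_2^*\psi+\psi L_2^*\rho+\langle CD\psi,D\rho\rangle .
\]
The diffusion part of $L_2^*$ is again $\tfrac12\Tr[CD^2\cdot]$, and the first-order part involves $R$, which is bounded by Hypothesis~\ref{hypo_4}(ii), so products are controlled.

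For the drift term I would integrate by parts against the Gaussian measure. The formula $\int\langle F,Dg\rangle\,\mathrm d\mu=-\int g\,\dv_\mu F\,\mathrm d\mu$ holds because $F$ has finite divergence with respect to $\mu$, and it extends to $g=\varphi\psi\rho^{ss}$ with $\rho^{ss}$ in the $C$-gradient Sobolev class. Writing $D(\varphi\psi\rho^{ss})$ out produces $\varphi$ times the terms $-\langle F,D\psi\rangle\rho^{ss}$, $-\psi\langle F,D\rho^{ss}\rangle$ and $-\psi\rho^{ss}\dv_\mu F$.

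Collecting, the right-hand side has the form $\int\varphi\,[\rho^{ss}L_2^*\psi+\langle CD\rho^{ss}-F\rho^{ss},D\psi\rangle]\,\mathrm d\mu$ plus a remainder $\int\varphi\psi\,[L_2^*\rho^{ss}-\dv_\mu(F\rho^{ss})]\,\mathrm d\mu$. The remainder vanishes: taking $\psi\equiv1$, invariance of $\nu$ says $\int K_2\varphi\,\mathrm d\nu=0$ for all $\varphi$ in a core. This gives the stationary Fokker--Planck equation $L_2^*\rho^{ss}-\dv_\mu(F\rho^{ss})=0$ weakly in $L^2(\mu)$. Dividing by $\rho^{ss}$ gives the stated identities with $D\log\rho^{ss}$. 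The term $\int\varphi\langle CD\log\rho^{ss},D\psi\rangle\,\mathrm d\nu$ is finite because $C^{1/2}D\log\rho^{ss}\in L^2(\nu)$ and $D\psi$ is bounded. The core property transfers these weak identities to the adjoint.

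For the final symmetry criterion, $K_2$ is symmetric if and only if $K_2^*\psi=K_2\psi$ on $\mathscr E_A$. Comparing first-order parts gives $\langle x,A^*D\psi\rangle+\langle F,D\psi\rangle$ versus $\langle -x,(A+CQ_\infty^{-1})^*D\psi\rangle+\langle CD\log\rho^{ss}-F,D\psi\rangle$, once $L_2^*$ is identified as the OU generator with drift $A^r$. Since the gradients $D\psi(x)=ih\varphi_h(x)$ with $h$ ranging over $D(A^*)$ span a dense set, symmetry is equivalent to $2(Ax+F)=CD\log\rho^{ss}-CQ_\infty^{-1}x$ $\nu$-a.e., interpreted weakly. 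In the symmetric case $-CQ_\infty^{-1}=2A$, so this reduces to $F=\tfrac12CD\log\rho^{ss}$.

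I expect the main obstacle to be justifying the integrations by parts and the product rule for $L_2^*(\psi\rho^{ss})$ when $\rho^{ss}$ has only $C$-gradient regularity. The quantity $\langle Q_\infty^{-1}x,\cdot\rangle$ is only defined on a subspace of $H$. I would handle this by approximating $\rho^{ss}$ in $W^{1,2}_C$ by elements of $\mathscr E_A(H)$ or $C_b^1$ cylindrical functions, proving all identities there, and passing to the limit using boundedness of $F$, $\dv_\mu F$ and $R$.
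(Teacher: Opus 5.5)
Your argument is the paper's: pull back to $\mu$ via $\mathrm d\nu=\rho^{ss}\,\mathrm d\mu$, move the Ornstein--Uhlenbeck part across the pairing with a product (carr\'e-du-champ) rule, and integrate the $F$-term by parts with Lemma~3.38 of \cite{Prato2004}; then kill the zeroth-order remainder by stationarity ($K_2^*\mathbf 1=0$, i.e.\ $\int_H K_2(\varphi\psi)\,\mathrm d\nu=0$ with $\varphi\psi\in\mathscr E_A(H)$) and compare drifts for the symmetry criterion, as in the proof of Theorem~\ref{adjoint_K_2}. The one point to tighten is case~(i): the paper moves $L_2$ at the form level, $\int_H(L_2\varphi)\,g\,\mathrm d\mu=-\tfrac12\int_H\langle CD\varphi,Dg\rangle\,\mathrm d\mu$, so only the first-order $C$-gradient of $\rho^{ss}$ enters, whereas your $\int_H\varphi\,L_2(\psi\rho^{ss})\,\mathrm d\mu$ tacitly needs $\rho^{ss}\in D(L_2)$ (assumed only in case~(ii)), and in your approximation step the second-order term must be kept in the weak form $\int_H K_2(\varphi\psi)\,\rho_n\,\mathrm d\mu\to\int_H K_2(\varphi\psi)\,\mathrm d\nu=0$, since bounds on $F$, $\dv_\mu F$ and $R$ do not control it.
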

	
	\begin{proof}
		(i) In the symmetric case the carr\'e-du-champ identity $\int_H L_2\varphi\,\varphi\,\mathrm d\mu=-\tfrac12\int_H|C^{1/2}D\varphi|^2\,\mathrm d\mu$ ([\cite{Prato2004} Proposition~2.39]), together with the integration-by-parts formula [\cite{Prato2004} Lemma~3.38], yields the stated weak identity directly by moving $K_2$ across the $L^2(H,\nu)$ pairing; the only regularity of $\rho^{ss}$ entering is the first-order $C$-gradient integrability stated in the hypotheses (Theorem~\ref{Invariant measure & density}). The second-order contribution falls in the zeroth-order remainder that vanishes since $K_2^*\mathbf 1=0$.
		(ii) is the weak form of Theorem~\ref{adjoint_K_2}, proved in Section~\ref{sec:ts-nonsym} under exactly the hypotheses stated here.
		The symmetry criterion follows by comparing $K_2$ and $K_2^*$.
	\end{proof}
	
	\paragraph{Interpretation of differential expressions.}
	The differential expressions appearing in the proofs of
	Sections~3 and~4 are used formally, with their rigorous
	interpretation provided by the complete generator identity
	in the parabolic Dynkin--Fukushima formula
	(Theorem~\ref{thm:parabolic-ito},
	Appendix~\ref{Appendix B}).
	In particular, terms involving $AX_t$, second-order
	traces, or decompositions involving $C^{-1}$ need not
	be individually well defined. Instead, the relevant
	calculations are justified through the bounded pairing
	\[
	\int_H
	\langle AQ_\infty D\varphi,D\psi\rangle
	\,\mathrm d\mu,
	\]
	using Gaussian integration by parts
	\cite[Lemma~3.38]{Prato2004}
	and the Kolmogorov forward equations for the
	density curves.
	
	Here $AQ_\infty\in L(H)$: automatically so when $L_2$ is symmetric, in which case $2AQ_\infty=-C$; when $L_2$ is not symmetric, this follows from Hypothesis~\ref{hypo_4}(ii) by Remark~\ref{rmk:AQauto}. When sufficient classical regularity is available,
	the individual terms are well defined, and the resulting
	expressions agree with the classical mild It\^o formula
	\cite{DaPratoJentzenRockner2019}.

	\section{Time-reversed process and steady-state entropy production}

	The steady-state theory of this section rests on the following admissibility condition on the stationary thermodynamic force.
	
	\begin{hypo}
		\label{hypo_5}
		\ \par
		The stationary thermodynamic force has at most linear growth:
		\begin{equation}
			\label{force-growth}
			\big|C^{-1/2}\big(2F-CD\log\rho^{ss}\big)(x)\big|
			\leqslant a+b|x|,
			\qquad \mu\text{-a.e. }x\in H,
		\end{equation}
		for some constants $a,b\geqslant0$, together with the corresponding growth bounds for the forward and reversed drifts:
		\[
		|B^{-1}F(x)|_U\leqslant a_F+b_F|x|,
		\qquad
		\big|B^{-1}\big(CD\log\rho^{ss}-F\big)(x)\big|_U
		\leqslant a_r+b_r|x|,
		\]
		for $\mu$-a.e. $x\in H$, with nonnegative constants $a_F,b_F,a_r,b_r$.
	\end{hypo}
	
	\begin{remark}
		These bounds hold automatically under Hypothesis~\ref{hypo_3}\ref{hypo_3.b}, where $C^{-1}\in L(H)$ and $F$ is bounded, and are satisfied in the examples of Section~\ref{sec:applications}. Hypothesis~\ref{hypo_5} also implies
		\[
		C^{1/2}D\log\rho^{ss}\in L^2(H,\nu;H),
		\]
		since $B^{-1}CD\log\rho^{ss}=C^{1/2}D\log\rho^{ss}$ and $\nu$ has finite second moments. Finally, the Gaussian log-Sobolev inequality yields $\rho^{ss}\in L^p(H,\mu)$ for every finite $p$.
	\end{remark}

	\subsection{When the associated Ornstein--Uhlenbeck process is symmetric}
	
	We recall from \cite{JQQ} that the time-reversal of a Markov process
	is again a Markov process, and that in the symmetric case
	Theorem~\ref{thm_generator} identifies the formal generator of the
	time-reversed process with that of the semilinear SPDE of drift
	$\tilde{F}(x)=CD\log\rho^{ss}-F(x)$. Assume throughout this
	subsection that $L_2$ is symmetric in $L^2(H,\mu)$; The identification of the time-reversed process is formulated as a separate hypothesis \textbf{(H-rev)} below.
	
	The entropy production will be obtained by comparing both the forward and the time-reversed dynamics with the same reference OU process.
	The following time-inhomogeneous Girsanov formula provides the required
	change of measure. It follows from \cite[Theorem~10.14]{Prato1992} by
	the arguments of \cite[Theorems~10.18 and~10.20]{Prato1992}; see also
	\cite{FlandoliLuoRicci22,PieperSethmacher24}.

	\begin{proposition}[Girsanov formula for mild solutions]
		\label{prop:girsanov}
		Assume Hypotheses~\ref{hypo_1}--\ref{hypo_3}, and fix $T>0$.
		Let $G:[0,T]\times H\to H$ be Borel measurable with
		$G(s,H)\subset B(U)$, and set
		\[
		\psi(s,x):=B^{-1}G(s,x).
		\]
		Suppose that the equation
		\[
		\mathrm dX(s)
		=
		[AX(s)+G(s,X(s))]\,\mathrm ds+B\,\mathrm dW(s)
		\]
		admits a mild solution unique in law. Let $P_{[0,t]}$ denote its law
		and let $P^*_{[0,t]}$ denote the law of the reference OU process with
		the same initial distribution. On the canonical path space
		$C([0,T];H)$, let $X$ denote the coordinate process and set
		\[
		W^{*}_t:=\int_0^tB^{-1}\bigl(\mathrm dX(s)-AX(s)\,\mathrm ds\bigr),
		\qquad 0\leqslant t\leqslant T,
		\]
		which is a $U$-valued cylindrical Wiener process under $P^*_{[0,T]}$.
		If
		\[
		\mathcal E_t
		=
		\exp\left\{
		\int_0^t
		\langle\psi(s,X(s)),\mathrm dW^{*}(s)\rangle_U
		-\frac12\int_0^t|\psi(s,X(s))|_U^2\,\mathrm ds
		\right\},
		\qquad 0\leqslant t\leqslant T,
		\]
		is a mean-one martingale under $P^*_{[0,T]}$, then
		\[
		P_{[0,t]}\sim P^*_{[0,t]},
		\qquad
		\frac{\mathrm dP_{[0,t]}}
		{\mathrm dP^*_{[0,t]}}(X)
		=
		\mathcal E_t,
		\qquad 0\leqslant t\leqslant T.
		\]
	\end{proposition}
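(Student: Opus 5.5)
The plan is to reduce the statement to the classical Girsanov theorem for cylindrical Wiener processes, \cite[Theorem~10.14]{Prato1992}, applied under the reference measure $P^*_{[0,T]}$.

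\textbf{Step 1: $W^*$ is a cylindrical Wiener process under $P^*_{[0,T]}$.} Under $P^*$ the coordinate process is the OU mild solution $X(t)=e^{tA}X(0)+W_A(t)$. By the weak formulation \eqref{weak solution}, for $z\in D(A^*)$ the real process $\langle X(t),z\rangle-\langle X(0),z\rangle-\int_0^t\langle X(s),A^*z\rangle\,\mathrm ds$ coincides with $\langle BW(t),z\rangle$. Since $B$ is injective by Hypothesis~\ref{hypo_2}(iii), the expression $\int_0^tB^{-1}(\mathrm dX-AX\,\mathrm ds)$ is interpreted through these weak identities, tested against $B^*$-images. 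Identifying it with $W$ on the canonical space shows that $W^*$ is a $U$-cylindrical Wiener process adapted to the canonical filtration.

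\textbf{Step 2: construct the tilted measure.} Define $\widetilde P:=\mathcal E_T\cdot P^*_{[0,T]}$. By the mean-one martingale assumption, $\widetilde P$ is a probability measure. Its restriction to $\mathscr F_t$ has density $\mathcal E_t$, by the martingale property and the tower rule. Theorem~10.14 of \cite{Prato1992} then shows that
\[
\widetilde W(t):=W^*(t)-\int_0^t\psi(s,X(s))\,\mathrm ds
\]
is a cylindrical Wiener process under $\widetilde P$.

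\textbf{Step 3: the coordinate process solves the $G$-equation under $\widetilde P$.} Substitute $BW^*=B\widetilde W+\int_0^t G(s,X(s))\,\mathrm ds$, which uses $B\psi=G$, into the OU mild representation $X(t)=e^{tA}X(0)+\int_0^te^{(t-s)A}B\,\mathrm dW^*(s)$. The stochastic convolution against $\mathrm dW^*$ splits into a convolution against $\mathrm d\widetilde W$ plus the term $\int_0^te^{(t-s)A}G(s,X(s))\,\mathrm ds$. This splitting is justified, as in \cite[Theorems~10.18, 10.20]{Prato1992}, by first taking $\psi$ elementary and then passing to the limit using the local square integrability of $\psi(s,X(s))$, which is implicit in the martingale assumption. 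Hence the coordinate process, together with $\widetilde W$, is a weak mild solution with initial law equal to that of $P^*$, i.e.\ the prescribed one.

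\textbf{Step 4: identify the laws and conclude equivalence.} Uniqueness in law gives $\widetilde P=P_{[0,T]}$. Restricting to $\mathscr F_t$ yields $\mathrm dP_{[0,t]}/\mathrm dP^*_{[0,t]}=\mathcal E_t$. Equivalence $P\sim P^*$ holds because $\mathcal E_t>0$ $P^*$-a.s.: it is the exponential of an a.s.\ finite quantity, since $\int_0^t|\psi|_U^2\,\mathrm ds<\infty$ a.s. Hence $P^*\ll P$ with density $\mathcal E_t^{-1}$.

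The main obstacle is Step 3. One must rigorously justify the identity $e^{(t-s)A}B\,\mathrm dW^*=e^{(t-s)A}B\,\mathrm d\widetilde W+e^{(t-s)A}G\,\mathrm ds$ at the level of stochastic convolutions. Here $B$ may fail to be Hilbert--Schmidt in the white-noise regime, so the convolution is defined only through Hypothesis~\ref{hypo_3}\ref{hypo_3.b}(ii). I would handle this by testing against $z\in D(A^*)$ and using the weak form. Weak and mild solutions coincide under Hypothesis~\ref{hypo_1}, and pathwise continuity is guaranteed by the proposition on mild solutions, which ensures the laws live on $C([0,T];H)$.
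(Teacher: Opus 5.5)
Your proposal is correct and takes essentially the same route as the paper. The paper gives no separate argument: it invokes the Girsanov theorem \cite[Theorem~10.14]{Prato1992} together with the tilt, identify-the-equation, uniqueness-in-law argument of \cite[Theorems~10.18 and~10.20]{Prato1992}, which is the same structure it writes out in the proof of Lemma~\ref{lem:girsanov-growth}. Your remark that $\mathcal E_t>0$ $P^*$-a.s.\ gives $P^*_{[0,t]}\ll P_{[0,t]}$ correctly supplies the reverse half of the equivalence, which the paper leaves implicit.
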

	
	\begin{lemma}[Mean-one martingale under linear growth]\label{lem:girsanov-growth}
		Let $G_0,G_1:[0,T]\times H\to H$ be Borel measurable with $G_0(s,H)\cup G_1(s,H)\subset B(U)$, write $G_2:=G_0+G_1$, and suppose that the two equations
		\[
		\mathrm dX(s)=\big[AX(s)+G_i(s,X(s))\big]\,\mathrm ds+B\,\mathrm dW(s),
		\qquad i=0,2,
		\]
		admit mild solutions unique in law, with laws $P^0_{[0,T]}$ and $P^2_{[0,T]}$ and a common initial distribution. Assume that $\psi(s,x):=B^{-1}G_1(s,x)$ is $U$-valued and satisfies the linear-growth bound
		\begin{equation}
			|\psi(s,x)|_U\leqslant a+b|x|,
			\qquad (s,x)\in[0,T]\times H,
			\label{eq:psi-growth}
		\end{equation}
		with constants $a,b\geqslant0$. Then, under $P^0_{[0,T]}$, the stochastic exponential
		\[
		\mathcal E_t
		=\exp\Big\{\int_0^t\langle\psi(s,X(s)),\mathrm dW^{0}(s)\rangle_U-\frac12\int_0^t|\psi(s,X(s))|_U^2\,\mathrm ds\Big\},
		\]
		with 
		\[
		W^{0}_t:=\int_0^tB^{-1}\big(\mathrm dX(s)-AX(s)\,\mathrm ds-G_0(s,X(s))\,\mathrm ds\big),
		\]
		is a mean-one martingale, $P^2_{[0,t]}\ll P^0_{[0,t]}$ with density $\mathcal E_t$, and exchanging the roles of the two equations gives the converse, so $P^2_{[0,t]}\sim P^0_{[0,t]}$.
	\end{lemma}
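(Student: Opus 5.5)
The plan is the classical Beneš/Liptser--Shiryaev argument: a localization combined with a moment bound that is uniform in the localization parameter. Linear growth of $\psi$ is what makes such a bound available for mild solutions.

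\emph{Step 1: localization.} For $n\in\mathbb N$ I set $\tau_n:=\inf\{t\ge0:|X(t)|\ge n\}\wedge T$ and $\psi_n(s,x):=\psi(s,x)\mathbbm 1_{\{s\le\tau_n\}}$. Since $|\psi_n|_U\le a+bn$ on the relevant event, Novikov's criterion shows that the stopped exponential $\mathcal E_{t\wedge\tau_n}$ is a mean-one martingale under $P^0_{[0,T]}$. Applying Proposition~\ref{prop:girsanov}, with $P^0$ in place of the OU law, gives a probability measure $Q_n:=\mathcal E_{T\wedge\tau_n}\cdot P^0_{[0,T]}$.

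By the Girsanov theorem for cylindrical Wiener processes, $W^0_t-\int_0^{t\wedge\tau_n}\psi\,\mathrm ds$ is a cylindrical Wiener process under $Q_n$. Hence, on $[0,\tau_n]$, the coordinate process is a mild solution of the equation with drift $G_2$. Uniqueness in law then identifies the law of $X$ stopped at $\tau_n$ under $Q_n$ with that of the $G_2$-solution stopped at its own exit time.

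\emph{Step 2: uniform moment bound.} Since $\mathcal E$ is a nonnegative supermartingale, $\mathbb E^{P^0}\mathcal E_T\le1$, and
\[
\mathbb E^{P^0}[\mathcal E_T]\ \ge\ \mathbb E^{P^0}[\mathcal E_{T\wedge\tau_n}\mathbbm 1_{\{\tau_n=T\}}]=Q_n(\tau_n=T)=P^2(\sup_{t\le T}|X(t)|<n).
\]
The last equality uses Step 1. The right-hand side tends to $1$ as $n\to\infty$, provided $X$ under $P^2$ has continuous paths, which gives $\sup_{t\le T}|X(t)|<\infty$ a.s.

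\emph{Main obstacle.} The difficulty is exactly this path regularity and finiteness of $\sup_t|X(t)|$ under $P^2$, because $G_2$ has only linear growth and need not satisfy the boundedness required by the mild-solution proposition. I would first try to obtain it from the mild formulation. Write $X(t)=e^{tA}\xi+\int_0^te^{(t-s)A}G_2(s,X(s))\,\mathrm ds+W_A(t)$ and bound $|G_2|\le\|B\|\,|B^{-1}G_2|_U$. When $G_0$ also has linear growth, Gronwall gives $\sup_t|X(t)|\le C\,(1+|\xi|+\sup_t|W_A(t)|)e^{CT}$.

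The term $\sup_t|W_A(t)|$ is finite a.s. by the factorization method. Under Hypothesis~\ref{hypo_3}\ref{hypo_3.b} this uses the $\alpha$-integrability condition; under Hypothesis~\ref{hypo_3}\ref{hypo_3.a} it follows from $B\in L_2$. If $G_0$ is only bounded after composition with $B^{-1}$ (as for $G_0=F$), the same estimate applies. If continuity of paths is not directly available, I would replace $\tau_n$ by a measurable exit time on path space. It then suffices that the mild solution is a.s. in $L^\infty(0,T;H)$, which the Gronwall bound provides.

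\emph{Conclusion.} Combining Steps 1 and 2 gives $\mathbb E^{P^0}\mathcal E_T=1$, so $\mathcal E$ is a mean-one martingale. For $A\in\mathscr F_t$, $P^2(A)=\lim_n Q_n(A\cap\{\tau_n=T\})=\mathbb E^{P^0}[\mathcal E_t\mathbbm 1_A]$ by monotone convergence, so $\mathrm dP^2_{[0,t]}/\mathrm dP^0_{[0,t]}=\mathcal E_t$.

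For the converse I swap the roles: take $G_2$ as the base drift and $-G_1$ as the perturbation, with $W^2$ built from $G_2$. The new $\psi$ is $-\psi$, which has the same linear growth bound. The identical argument gives $P^0\ll P^2$, hence $P^2_{[0,t]}\sim P^0_{[0,t]}$.
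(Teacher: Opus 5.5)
Your argument is correct and is essentially the paper's proof. In both, you localize so that Novikov applies to the stopped exponential, identify $\mathcal E_{T\wedge\tau_n}\,P^0$ with $P^2$ up to the exit time via Girsanov and uniqueness in law of the stopped $G_2$-equation, and then remove the localization using boundedness of continuous paths on $[0,T]$. The differences are minor: the paper localizes on $|\psi(s,X(s))|_U$ rather than $|X|$, and it obtains mean one directly from the monotone-convergence identity (taking $A$ to be the whole space) rather than from your supermartingale lower bound. Your ``main obstacle'' does not arise in the paper, which works on the canonical space $C([0,T];H)$ where $\sup_{t\le T}|X(t)|<\infty$ is automatic. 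Your Gronwall fallback would in any case need a growth bound on $B^{-1}G_0$ that the lemma does not impose.
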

	
	\begin{proof}
		Write $X$ for the coordinate process and set $\tau_N:=\inf\{s\geqslant0:|\psi(s,X(s))|_U\geqslant N\}$. By \eqref{eq:psi-growth} and the path continuity, $\tau_N\uparrow\infty$ along $P^0$-almost every path, and in fact along every continuous path whose trajectory obeys \eqref{eq:psi-growth}. The stopped integrand $\psi\,\mathbf 1_{[0,\tau_N]}$ is bounded, so Novikov's condition holds trivially on $[0,T]$, and by [\cite{Prato1992} Theorems 10.14 and 10.18] the stopped exponential $\mathcal E^N_t:=\mathcal E(\psi\,\mathbf 1_{[0,\tau_N]})_t$ is a mean-one martingale under $P^0$, under which the tilted law $\mathrm d\mathbb Q^N=\mathcal E^N_T\,\mathrm dP^0_{[0,T]}$ turns $W^0$ into a cylindrical Wiener process and the coordinate process into a solution of the $G_2$-equation up to time $T\wedge\tau_N$.
		
		Since the $G_2$-equation is unique in law, and uniqueness in law passes to the stopped equations, (a solution of the stopped equation being the concatenation of a solution up to the stopping time with a solution restarted from the stopped state), the law under $\mathbb Q^N$ of the coordinate process stopped at $T\wedge\tau_N$ coincides with $P^2_{[0,T]}$ stopped at $T\wedge\tau_N$. Hence, for every $A$ in the canonical $\sigma$-field $\mathscr F_t$,
		\[
		P^2_{[0,T]}(A\cap\{\tau_N\geqslant t\})
		=\mathbb E^{0}\big[\mathbf 1_A\,\mathcal E^N_t\,\mathbf 1_{\{\tau_N\geqslant t\}}\big]
		=\mathbb E^{0}\big[\mathbf 1_A\,\mathcal E_t\,\mathbf 1_{\{\tau_N\geqslant t\}}\big],
		\]
		using $\mathcal E^N_t=\mathcal E_t$ on $\{\tau_N\geqslant t\}$. By \eqref{eq:psi-growth} the events $\{\tau_N\geqslant t\}$ increase to the whole space under both laws, so monotone convergence gives $P^2_{[0,T]}(A)=\mathbb E^{0}[\mathbf 1_A\,\mathcal E_t]$. Thus $P^2_{[0,t]}\ll P^0_{[0,t]}$ with density $\mathcal E_t$ and $\mathbb E^{0}\mathcal E_t=1$, and since the nonnegative local martingale $\mathcal E$ has constant expectation it is a martingale. Exchanging the roles of the two equations, the perturbing drift $-G_1$ obeying the same bound \eqref{eq:psi-growth}, yields $P^0_{[0,t]}\ll P^2_{[0,t]}$.
	\end{proof}
	
	``$B^{-1}[\mathrm dX-AX\,\mathrm ds]$'' is shorthand for
	$B^{-1}G(s,X(s))\,\mathrm ds+\mathrm dW(s)$, evaluated along
	trajectories of the solution.
	
	The sample-path entropy production $EP^{ss}$ requires Proposition~\ref{prop:girsanov} \emph{twice}: for the forward drift $F$ and for the time-reversed drift $\tilde F=CD\log\rho^{ss}-F$. 
	
	The growth bounds of Hypothesis~\ref{hypo_5} ensure that both drifts are admissible in the sense of Proposition~\ref{prop:girsanov}. Indeed, Lemma~\ref{lem:girsanov-growth} shows that their stochastic exponentials are mean-one martingales under the reference OU law $P^*$.

		\begin{lemma}[Stationary Fokker-Planck equation]\label{lem:rho-ss-stationary}
		Under Hypotheses~\ref{hypo_1}--\ref{hypo_3}, $P_{t,\mu}^*\rho^{ss}=\rho^{ss}$ for every $t\geqslant0$; in particular $K_{2,\mu}^*\rho^{ss}=0$ whenever $\rho^{ss}\in D(K_{2,\mu}^*)$, in which case
		$$\frac{1}{2}\Tr[CD^2\rho^{ss}]+\langle x,A^* D\rho^{ss}\rangle-[\langle F,D\rho^{ss}\rangle+\dv_\mu F\rho^{ss}]=0,$$
		in which $K_{2,\mu}^*$ is the adjoint operator of $K_2$ in the space $L^2(H,\mu)$.
	\end{lemma}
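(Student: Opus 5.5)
The plan is to pass from invariance of $\nu$ under $P_t$ to a duality statement in $L^2(H,\mu)$, and then to read off the pointwise equation from the formal adjoint of the Kolmogorov operator on the core $\mathscr E_A(H)$. Here $P^*_{t,\mu}$ denotes the adjoint of the $L^2(H,\mu)$-extension of $P_t$.

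\emph{Step 1 (the semigroup identity).} Since $\nu$ is invariant, $\int_H P_t\varphi\,\mathrm d\nu=\int_H\varphi\,\mathrm d\nu$ for all $\varphi\in C_b(H)$ (Theorem~\ref{Invariant measure & density}). Writing $\mathrm d\nu=\rho^{ss}\,\mathrm d\mu$, this becomes
\[
\int_H (P_t\varphi)\,\rho^{ss}\,\mathrm d\mu=\int_H\varphi\,\rho^{ss}\,\mathrm d\mu .
\]
To read the left side as $\langle\varphi,P^*_{t,\mu}\rho^{ss}\rangle_{L^2(\mu)}$ I need two facts.
\begin{enumerate}[label=(\alph*)]
\item $P_t$ extends boundedly to $L^2(H,\mu)$. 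I would obtain this from the Girsanov density of Lemma~\ref{lem:girsanov-growth} relative to the OU process, whose semigroup $R_t$ is a contraction on $L^2(\mu)$. Hölder's inequality then applies, since the exponential moments of $\int_0^t|B^{-1}F|^2$ are finite because $B^{-1}F$ is bounded.
\item $\rho^{ss}\in L^2(H,\mu)$. This holds by the log-Sobolev remark following Hypothesis~\ref{hypo_5}, or by Hölder's inequality combined with $\rho^{ss}\in L^p$.
\end{enumerate}
Density of $C_b(H)$ in $L^2(\mu)$ then gives $P^*_{t,\mu}\rho^{ss}=\rho^{ss}$.

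\emph{Step 2 (the generator identity).} Differentiating Step~1 at $t=0$ gives $\rho^{ss}\in D(K^*_{2,\mu})$ exactly when the strong derivative exists. Under the stated assumption $\rho^{ss}\in D(K^*_{2,\mu})$, the fixed-point property yields $K^*_{2,\mu}\rho^{ss}=\lim_{t\to0}t^{-1}(P^*_{t,\mu}\rho^{ss}-\rho^{ss})=0$. Equivalently, for every $\varphi\in\mathscr E_A(H)$, which is a core by Theorem~\ref{thm_generator},
\[
\int_H K_{2,\mu}\varphi\,\rho^{ss}\,\mathrm d\mu=0 .
\]

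\emph{Step 3 (the pointwise form).} I would compute the formal adjoint by splitting $K_{2,\mu}\varphi=L_2\varphi+\langle F,D\varphi\rangle$.
\begin{enumerate}[label=(\alph*)]
\item For the linear part, $\int L_2\varphi\,\rho^{ss}\,\mathrm d\mu=\int\varphi\,L_2^*\rho^{ss}\,\mathrm d\mu$. In the symmetric case $L_2^*=L_2$, whose expression on regular functions is $\tfrac12\Tr[CD^2\rho^{ss}]+\langle x,A^*D\rho^{ss}\rangle$; this is justified via Proposition~\ref{prop:reg}.
\item For the drift part, the Gaussian integration by parts [\cite{Prato2004} Lemma~3.38] applied componentwise in the eigenbasis of $Q_\infty$ gives
\[
\int_H\langle F,D\varphi\rangle\rho^{ss}\,\mathrm d\mu=-\int_H\varphi\,\big[\langle F,D\rho^{ss}\rangle+\dv_\mu F\,\rho^{ss}\big]\,\mathrm d\mu .
\]
This uses that $F\in C^1_b$ has finite divergence with respect to $\mu$ and that $\rho^{ss}\in W^{1,2}$ in the $C$-gradient sense.
\end{enumerate}
Since $\mathscr E_A(H)$ is dense, the displayed equation holds $\mu$-a.e., interpreted in the sense of the paragraph on the interpretation of differential expressions.

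\emph{Main obstacle.} I expect Step~3(a) in the nonsymmetric case to be the hardest. There $\langle x,A^*D\rho^{ss}\rangle$ is not individually defined, and one must instead use the bounded pairing $\int\langle AQ_\infty D\varphi,D\rho^{ss}\rangle\,\mathrm d\mu$ together with Hypothesis~\ref{hypo_4}. My approach would be to state the identity weakly on the core and to declare the displayed formula to be its formal rendering, which is how the paper treats it.
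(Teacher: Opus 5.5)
The one substantive error is in your ``main obstacle'': the nonsymmetric case is not a definability problem, because the displayed equation is false there. Otherwise your Steps~1--3 follow the paper's proof. Invariance of $\nu$ is read as the duality $\int_H P_tf\,\rho^{ss}\,\mathrm d\mu=\int_H f\rho^{ss}\,\mathrm d\mu$, giving the fixed point and $K^*_{2,\mu}\rho^{ss}=0$, and the explicit form comes from Gaussian integration by parts, which the paper imports from [\cite{Prato2004} Proposition~3.39]. Your Girsanov--H\"older bound making $P_t$ bounded on $L^2(H,\mu)$ is a sound addition that the paper leaves implicit.

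\textbf{The nonsymmetric case.} Step~2 gives $\int_H K_{2,\mu}\varphi\,\rho^{ss}\,\mathrm d\mu=0$. Its linear part is $\int_H (L_2\varphi)\,\rho^{ss}\,\mathrm d\mu=\int_H\langle AQ_\infty D\rho^{ss},D\varphi\rangle\,\mathrm d\mu$ by [\cite{Prato2004} Proposition~2.46]. The pairing you propose, $\int_H\langle AQ_\infty D\varphi,D\rho^{ss}\rangle\,\mathrm d\mu$, is instead the weak form of $\int_H\varphi\,L_2\rho^{ss}\,\mathrm d\mu$, i.e.\ of the displayed term.
\begin{itemize}
\item By Remark~\ref{rmk:AQauto} and the skewness of $RQ_\infty$, $AQ_\infty-(AQ_\infty)^*=-RQ_\infty$.
\item Gaussian integration by parts (the third fact in the proof of the nonsymmetric reversibility theorem) then shows the two pairings differ by $\int_H\varphi\,\langle Rx,D\rho^{ss}\rangle\,\mathrm d\mu$. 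This is just Theorem~\ref{adjoint_L_2}: $L_2^*$ has linear drift $A^{r}=A+R$.
\item The extra term is not zero in general. For a constant drift $F\equiv f\neq0$ one has $D\log\rho^{ss}=Q_\infty^{-1}m$ with $m=-A^{-1}f$. Already in two dimensions $RQ_\infty$ is a nonzero skew matrix, so $R$ is invertible and $\langle Rx,Q_\infty^{-1}m\rangle\not\equiv0$.
\end{itemize}
The lemma sits in Section~3.1 under the standing assumption that $L_2$ is symmetric, and that is exactly what lets the $L_2$-part of $K^*_{2,\mu}$ be written as $L_2$. So restrict Step~3(a) to that case. In general, replace $\langle x,A^*D\rho^{ss}\rangle$ by $\langle x,(A^{r})^*D\rho^{ss}\rangle$, as the paper does in the nonsymmetric part of Section~\ref{sec:transient}.

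\textbf{Integrability in Step~1(b).} You rely on the remark after Hypothesis~\ref{hypo_5}, which is not among the lemma's hypotheses. Under Hypothesis~\ref{hypo_3}\ref{hypo_3.b} you do get $\rho^{ss}\in W^{1,2}(H,\mu)$ from Theorem~\ref{Invariant measure & density}. Under~\ref{hypo_3}\ref{hypo_3.a}, however, Hypotheses~\ref{hypo_1}--\ref{hypo_3} only give $\rho^{ss}\in L^1(H,\mu)$. The first assertion needs no more than that:
\begin{itemize}
\item The pointwise bound $|P_t\varphi|\leqslant c_t(R_t|\varphi|^p)^{1/p}$ behind your Step~1(a) gives $P_t\mathbf 1_\Gamma=0$ $\mu$-a.e.\ whenever $\mu(\Gamma)=0$.
\item Hence $P^*_{t,\mu}$ acts on $L^1(H,\mu)$ by $\psi\mapsto\mathrm d[(\psi\mu)P_t]/\mathrm d\mu$, and $(\rho^{ss}\mu)P_t=\nu$ is the claim.
\end{itemize}
For the second assertion, $L^2$-membership is already contained in the hypothesis $\rho^{ss}\in D(K^*_{2,\mu})$.

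\textbf{Step~2.} Once $\rho^{ss}\in L^2(H,\mu)$, the domain hypothesis is automatic. For $\varphi\in\mathscr E_A(H)$, invariance gives $\int_H K_{2,\mu}\varphi\,\rho^{ss}\,\mathrm d\mu=\int_H K_2\varphi\,\mathrm d\nu=0$. Moreover $\mathscr E_A(H)$ is a core for $K_{2,\mu}$ by its definition as a closure; Theorem~\ref{thm_generator}, which you cite, concerns $K_p$ on $L^p(H,\nu)$ instead. Hence $\rho^{ss}\in D(K^*_{2,\mu})$ with $K^*_{2,\mu}\rho^{ss}=0$. This argument replaces your Step~2, which has two problems:
\begin{itemize}
\item Its ``exactly when the strong derivative exists'' is vacuous, since the orbit $t\mapsto P^*_{t,\mu}\rho^{ss}$ is constant.
\item It needs $P_t$ to be strongly continuous on $L^2(H,\mu)$ with generator $K_{2,\mu}$, which is more than the boundedness you prove.
\end{itemize}
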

	\begin{proof}
		We have the identity
		$$\int_H P_tf(x)\rho^{ss}(x)d\mu=\int_H f(x)\rho^{ss}(x)d\mu,$$
		followed by $P_{t,\mu}^*\rho^{ss}(x)=\rho^{ss}(x)$, in which $P_{t,\mu}^*$ is the adjoint operator of $P_t$ in the space $L^2(H,\mu)$. Hence $K_{2,\mu}^*\rho^{ss}=0$.
		
		According to Proposition 3.39 in \cite{Prato2004},
		$$K_{2,\mu}^*\varphi=L_2\varphi-[\langle F,D\varphi\rangle+\dv_\mu F\varphi],$$
		in which $\dv_\mu F(x)=\dv F(x)-\langle Q_\infty^{-1}x,F(x)\rangle$ and recall
		$$L_2\varphi(x)=\frac{1}{2}\Tr[CD^2\varphi(x)]+\langle x,A^* D\varphi(x)\rangle.$$
	\end{proof}
	
	The Crooks construction of reversed dynamics
	\cite{Cro2,Seifert05} identifies the backward process
	through the reversed thermodynamic force, with the same noise coefficient. In our infinite-dimensional setting,
	we formulate this identification as an explicit assumption. Write $\Theta_t$ for the time-reversal map on the path space, $(\Theta_t\omega)(s):=\omega(t-s)$, and $(\Theta_t)_\#P$ for the push-forward of a law $P$ under $\Theta_t$.
	
	\begin{itemize}
		\item[\textbf{(H-rev)}]
		For every initial law $\lambda\ll\mu$ whose density
		curve $\rho^{\lambda}$ satisfies the growth condition
		of Hypothesis~\ref{hypo_5}, with $\rho^{ss}$
		replaced by $\rho^{\lambda}$ uniformly in time,
		and for every $t>0$, the time-reversed law
		$\Theta_t{}_\#P^{\lambda}_{[0,t]}$ coincides with
		the law of a mild solution of
		\[
		\mathrm dY(s)
		=
		\big[A^{r}Y(s)
		+CD\log\rho^{\lambda}_{t-s}(Y(s))
		-F(Y(s))\big]\,\mathrm ds
		+B\,\mathrm dW(s),
		\qquad Y(0)\sim\lambda_t,
		\]
		where $A^{r}=A$ when $L_2$ is symmetric and
		$A^{r}=-A-CQ_\infty^{-1}$ in general.
	\end{itemize}
	
	In finite dimensions, this identification follows
	from the classical time-reversal theorem
	\cite{HaussmannPardoux86,JQQ}.
	Related infinite-dimensional results include
	\cite{FoellmerWakolbinger86} under a locally finite
	specific entropy condition,
	\cite{MilletNualartSanz89} on weighted sequence
	spaces,
	\cite{Pidstrigach24} for the Ornstein--Uhlenbeck
	case $A=-\tfrac12\mathrm{Id}$, $F=0$, with
	arbitrary initial laws, and
	\cite{LiZhaoZhu26} under sufficient conditions
	for state-dependent noise.
	Since these results do not directly cover our
	setting, we retain \textbf{(H-rev)} as an assumption.
	It is verified for linear systems with Gaussian
	initial laws in Proposition~\ref{prop:Hrev-linear}.

	For the derivation below, we distinguish three dynamics:
	
	(i) the original semilinear SPDE $X(t)$ with drift $F$;
	
	(ii) its time-reversed counterpart $X^{rev}(t)$ with drift
	$\tilde F=CD\log\rho^{ss}-F$;
	
	(iii) the associated Ornstein--Uhlenbeck process.
	
	All three processes are assumed to start from the measure
	$\nu$.
	
	We define the sample-path steady-state entropy production in terms of
	the stationary thermodynamic force,
	\begin{equation}
		\label{eq:EPss-def}
		\begin{split}
			EP^{ss}(t)
			&:=\int_0^t \langle B^{-1}[2F-CD\log\rho^{ss}](X(s)),dW\rangle_U\\
			&\quad+\frac{1}{2}\int_0^t\langle B^{-1}[2F-CD\log\rho^{ss}](X(s)),B^{-1}[2F-CD\log\rho^{ss}](X(s))\rangle_U ds.
		\end{split}
	\end{equation}
	%Under \textbf{(H-rev)} this functional coincides with the logarithm of the
	%ratio of the forward and time-reversed path probabilities
	%(Theorem~\ref{thm:EPss-sym}(iii) below).
	
	\begin{theorem}[Steady-state entropy production, with $L_2$ symmetric]\label{thm:EPss-sym}
		Under Hypotheses~\ref{hypo_1}--\ref{hypo_3} and Hypothesis~\ref{hypo_5},
		with both $F$ and the reversed drift $\tilde F=CD\log\rho^{ss}-F$ admissible,
		the following hold.
		\begin{enumerate}[label=\rm(\roman*)]
			\item $\mathbb{E} e^{-EP^{ss}(t)}=1$ for all $t>0$.
			\item The entropy production rate $epr:=\frac{\mathbb{E} EP^{ss}(t)}{t}$ is
			independent of $t$ and can be expressed as
			\[
			epr=\frac{1}{2}\int_H\langle B^{-1}[2F-CD\log\rho^{ss}](x),B^{-1}[2F-CD\log\rho^{ss}](x)\rangle_U d\nu.
			\]
			\item If in addition \textbf{(H-rev)} holds, then
			\[
			EP^{ss}(t)=\log\frac{dP^{ss}_{[0,t]}}{dP^{rev,ss}_{[0,t]}},
			\]
			where $P^{ss}_{[0,t]}$ is the law of $\{X(s):0\leqslant s\leqslant t\}$
			and $P^{rev,ss}_{[0,t]}$ is the law of its time reversal
			$\{X^{rev}(s)=X(t-s):0\leqslant s\leqslant t\}$, both under the
			steady-state condition.
		\end{enumerate}
	\end{theorem}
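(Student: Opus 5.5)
The plan is to obtain all three items from the Girsanov formula, Proposition~\ref{prop:girsanov}, together with Lemma~\ref{lem:girsanov-growth}. Write $\psi:=B^{-1}F$, $\tilde\psi:=B^{-1}\tilde F=B^{-1}(CD\log\rho^{ss}-F)$ and $\phi:=B^{-1}[2F-CD\log\rho^{ss}]=\psi-\tilde\psi$. Hypothesis~\ref{hypo_5} gives linear growth for each of these. On path space, under the forward stationary law $P^{ss}$, set $W_t=\int_0^tB^{-1}(\mathrm dX-AX\,\mathrm ds-F(X)\,\mathrm ds)$; by Girsanov this is a cylindrical Wiener process, and $\mathrm dW^*=\mathrm dW+\psi\,\mathrm ds$.

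For (i), apply Lemma~\ref{lem:girsanov-growth} with $G_0=F$, $G_1=\tilde F-F=-B\phi$ and $G_2=\tilde F$, all started from $\nu$. The linear growth of $\phi$ is exactly \eqref{force-growth}, since $B^{-1}=C^{-1/2}$ on $B(U)$. The lemma then shows that
\[
\exp\Big\{-\int_0^t\langle\phi,\mathrm dW\rangle_U-\tfrac12\int_0^t|\phi|_U^2\,\mathrm ds\Big\}=e^{-EP^{ss}(t)}
\]
is a mean-one martingale under $P^{ss}$. This gives $\mathbb E e^{-EP^{ss}(t)}=1$.

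For (ii), the linear growth of $\phi$ and the finite moments of $X$ from Theorem~\ref{Invariant measure & density} give $\mathbb E\int_0^t|\phi(X_s)|_U^2\,\mathrm ds<\infty$. Hence the stochastic integral term is a true martingale with zero mean. By stationarity, $\mathbb E EP^{ss}(t)=\tfrac t2\int_H|\phi|_U^2\,\mathrm d\nu$, which yields the stated formula for $epr$, independent of $t$.

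For (iii), combine (H-rev) with $\lambda=\nu$ and $\rho^\lambda\equiv\rho^{ss}$, which satisfies the growth condition by Hypothesis~\ref{hypo_5}. In the symmetric case $A^r=A$, so the time-reversed law $P^{rev,ss}$ is the law of the semilinear equation with drift $\tilde F$ started from $\nu$. Applying Proposition~\ref{prop:girsanov} twice relative to the same OU law $P^*$ started at $\nu$ gives $\mathrm dP^{ss}/\mathrm dP^*=\mathcal E^\psi_t$ and $\mathrm dP^{rev,ss}/\mathrm dP^*=\mathcal E^{\tilde\psi}_t$, where both are mean-one martingales by Lemma~\ref{lem:girsanov-growth} with $G_0=0$. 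Taking the ratio gives
\[
\log\frac{\mathrm dP^{ss}}{\mathrm dP^{rev,ss}}=\int_0^t\langle\psi-\tilde\psi,\mathrm dW^*\rangle_U-\tfrac12\int_0^t\big(|\psi|_U^2-|\tilde\psi|_U^2\big)\,\mathrm ds.
\]
Substituting $\mathrm dW^*=\mathrm dW+\psi\,\mathrm ds$ turns the drift part into $\langle\phi,\psi\rangle-\tfrac12|\psi|^2+\tfrac12|\tilde\psi|^2=\tfrac12|\psi-\tilde\psi|^2=\tfrac12|\phi|^2$. This gives exactly \eqref{eq:EPss-def}.

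The step I expect to be most delicate is the pathwise meaning of the stochastic integrals, since $\psi$ and $\tilde\psi$ are defined only $\mu$-a.e. Because the laws are mutually equivalent by Lemma~\ref{lem:girsanov-growth}, and $\nu\ll\mu$ makes the time-marginals $\nu\ll\mu$, the integrals are a.s. independent of the version. The identification $\mathrm dW^*=\mathrm dW+\psi\,\mathrm ds$ must also hold $P^{ss}$-a.s. on the canonical space, which is the shorthand convention stated after Lemma~\ref{lem:girsanov-growth}.
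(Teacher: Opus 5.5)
Your proposal is correct and follows the paper's proof essentially step for step. In (i) you apply Lemma~\ref{lem:girsanov-growth} between the forward equation and the equation with drift $\tilde F$, so that $e^{-EP^{ss}(t)}$ is the mean-one stochastic exponential of $-B^{-1}[2F-CD\log\rho^{ss}]$; in (ii) you use the martingale property together with stationarity under $\nu$; and in (iii) you invoke \textbf{(H-rev)} with $\lambda=\nu$, take two Girsanov densities against the common OU reference started from $\nu$, and subtract them using $\mathrm dX-AX\,\mathrm ds=F(X)\,\mathrm ds+B\,\mathrm dW$. Your explicit checks of square-integrability (from the finite moments of $\nu$) and of version-independence (via $\nu\ll\mu$) supply details the paper leaves implicit.
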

	
	\begin{proof}
		(i) Writing $\psi:=B^{-1}[2F-CD\log\rho^{ss}]$, one has
		$e^{-EP^{ss}(t)}=\mathcal E(-\psi)_t$, the stochastic exponential of
		$-\psi$. Since the bound of Hypothesis~\ref{hypo_5} applies to
		$\pm\psi$, Lemma~\ref{lem:girsanov-growth}, applied in both directions
		between the two equations, shows that $\mathcal E(\pm\psi)_t$ are
		mean-one martingales.
		
		(ii) The stochastic integral in \eqref{eq:EPss-def} is a
		square-integrable martingale vanishing at $0$, whence
		\[
		\mathbb{E} EP^{ss}(t)=\frac{1}{2}\int_0^t\int_H\langle B^{-1}[2F-CD\log\rho^{ss}](x),B^{-1}[2F-CD\log\rho^{ss}](x)\rangle_U d\nu ds=t\cdot epr
		\]
		by the stationarity of $\nu$.
		
		(iii) By \textbf{(H-rev)} with $\lambda=\nu$ and $\rho^{\lambda}\equiv\rho^{ss}$,
		the reversed law $P^{rev,ss}_{[0,t]}$ is the law of the adjoint SPDE,
		i.e.\ of the solution of
		$\mathrm dX=[AX+CD\log\rho^{ss}(X)-F(X)]\,\mathrm dt+B\,\mathrm dW$
		started from $\nu$. Both dynamics are admissible, so
		Proposition~\ref{prop:girsanov} applies to each, with $P^*_{[0,t]}$
		the law of the corresponding OU process started from $\nu$:
		\begin{align}
			\log\frac{dP^{ss}_{[0,t]}}{dP^*_{[0,t]}}(X(\cdot))&=\int_0^t \langle B^{-1}F(X),B^{-1}[dX-AXds]\rangle_U \nonumber\\
			&-\frac{1}{2}\int_0^t\| B^{-1}F(X)\|_U^2ds,
		\end{align}
		\begin{align}
			\log\frac{dP^{rev,ss}_{[0,t]}}{dP^*_{[0,t]}}(X(\cdot))&=\int_0^t \langle B^{-1}[CD\log\rho^{ss}-F](X),B^{-1}[dX-AXds]\rangle_U \nonumber\\
			&-\frac{1}{2}\int_0^t\| B^{-1}[CD\log\rho^{ss}-F](X)\|_U^2ds.
		\end{align}
		Subtracting and using $dX-AXds=F(X)ds+BdW$,
		\begin{align}
			\log\frac{dP^{ss}_{[0,t]}}{dP^{rev,ss}_{[0,t]}}
			&=\int_0^t \langle B^{-1}[2F-CD\log\rho^{ss}](X),B^{-1}[dX-AXds]\rangle_U \nonumber\\
			&+\frac{1}{2}\int_0^t\langle -B^{-1}CD\log\rho^{ss}(X),B^{-1}[2F-CD\log\rho^{ss}](X)\rangle_U ds\nonumber\\
			&=\int_0^t \langle B^{-1}[2F-CD\log\rho^{ss}](X),dW\rangle_U \nonumber\\
			&+\frac{1}{2}\int_0^t\langle B^{-1}[2F-CD\log\rho^{ss}](X),B^{-1}[2F-CD\log\rho^{ss}](X)\rangle_U ds,
		\end{align}
		which is precisely $EP^{ss}(t)$ as defined in \eqref{eq:EPss-def}.
	\end{proof}
	
	\begin{remark}[Bounded vs.\ unbounded $C^{-1}$: finiteness of entropy production]
		\label{rmk:EPfinite}
		The dichotomy in Hypothesis~\ref{hypo_3} between bounded and unbounded $C^{-1}$ governs not merely the well-posedness of thermodynamic functionals, but the very finiteness of the mean entropy production. Indeed, the entropy production rate can be written as
		$$epr=\tfrac12\int_H \big\langle C^{-1}\big[2F-CD\log\rho^{ss}\big],\,2F-CD\log\rho^{ss}\big\rangle\,\mathrm d\nu .$$
		Under Hypothesis~\ref{hypo_3}\ref{hypo_3.b}, $C^{-1}\in L(H)$ and this integral is finite for every bounded Lipschitz drift $F$. Under Hypothesis~\ref{hypo_3}\ref{hypo_3.a}, $C^{-1}$ is unbounded and $epr$ is finite only when the nonequilibrium drift $2F-CD\log\rho^{ss}$ lies in $\operatorname{Range}(C^{1/2})$ for $\nu$-a.e.\ $x$, with integrable $C^{-1/2}$-norm; otherwise it diverges. This is the infinite-dimensional manifestation of the ultraviolet divergence of the entropy production rate observed in field-theoretic settings \cite{Niggeman22}, and it singles out the bounded-$C^{-1}$ regime as the most natural setting in which the steady-state thermodynamic quantities are automatically well defined. In the reversible case $2F=CD\log\rho^{ss}$ the integrand vanishes identically, so $epr=0$ regardless.
	\end{remark}
	
	\begin{theorem}[Characterizations of reversibility]\label{Thm_equi_sym}
		Under Hypotheses~\ref{hypo_1}--\ref{hypo_3} and Hypothesis~\ref{hypo_5}, the following statements are equivalent.
		\begin{enumerate}[label=\rm(\arabic*)]
			\item $X(t)$ is time reversible.
			\item The semigroup $P_t$ is symmetric in $L^2(H,\nu)$.
			\item The generator $K_2$ is symmetric.
			\item $epr=0$.
			\item $EP^{ss}(t)=0$ almost surely.
			\item $2F=CD\log\rho^{ss}$ for $\nu$-a.e.\ $x\in H$.
			\item The system is a gradient system: there exists a measurable
			potential $U:H\to\mathbb R$, with $e^{-U}\in L^2(H,\mu)$ and
			$C^{1/2}DU\in L^2(H,\mu)$, such that
			$2F=-CDU$ in the Gaussian Sobolev sense.
		\end{enumerate}
	\end{theorem}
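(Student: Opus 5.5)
I will organise the proof as a cycle of implications around the drift identity (6), which serves as the hub: (1)$\Leftrightarrow$(2)$\Leftrightarrow$(3), (3)$\Leftrightarrow$(6), (4)$\Leftrightarrow$(5)$\Leftrightarrow$(6), and (6)$\Leftrightarrow$(7). Throughout, $L_2$ is symmetric (this subsection), so $A^r=A$ and the reversed drift is $\tilde F=CD\log\rho^{ss}-F$.

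\textbf{The semigroup block.} For (1)$\Leftrightarrow$(2), stationary time reversibility means that the two-time laws $(X_0,X_t)$ and $(X_t,X_0)$ under $\nu$ coincide for every $t$. This is exactly $\int_H P_t\varphi\,\psi\,\mathrm d\nu=\int_H\varphi\,P_t\psi\,\mathrm d\nu$ for bounded $\varphi,\psi$. The Markov property then upgrades equality of two-time marginals to equality of all finite-dimensional distributions, and path continuity gives equality of laws on $C([0,t];H)$. For (2)$\Leftrightarrow$(3), differentiate at $t=0$ on the core $\mathscr E_A(H)$ (Theorem~\ref{thm_generator}); conversely, a symmetric generator of a contraction semigroup with a core is self-adjoint, so $P_t=e^{tK_2}$ is symmetric.

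\textbf{Generator symmetry versus the drift identity.} For (3)$\Leftrightarrow$(6), I will use the weak identity of Theorem~\ref{thm_generator}(i). Symmetry of $K_2$ holds if and only if, for all $\varphi,\psi\in\mathscr E_A(H)$,
\[
\int_H\varphi\,\big\langle 2F-CD\log\rho^{ss},D\psi\big\rangle\,\mathrm d\nu=0 .
\]
Here I use Hypothesis~\ref{hypo_5} to ensure the integrand lies in $L^1(\nu)$. Choosing $\psi=e^{i\langle h,\cdot\rangle}$ gives $\int_H\varphi\,e^{i\langle h,x\rangle}\langle G(x),h\rangle\,\mathrm d\nu=0$, where $G:=2F-CD\log\rho^{ss}$. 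Density of $\mathscr E_A(H)$ in $L^2(\nu)$ then yields $\langle G,h\rangle=0$ $\nu$-a.e. for every $h\in D(A^*)$. Since $D(A^*)$ is dense and separable, this forces $G=0$ $\nu$-a.e.

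\textbf{The entropy-production block.} For (6)$\Rightarrow$(5), if $G=0$ then $B^{-1}G=0$ and \eqref{eq:EPss-def} vanishes identically. (5)$\Rightarrow$(4) is trivial. For (4)$\Rightarrow$(6), Theorem~\ref{thm:EPss-sym}(ii) gives $\int_H|C^{-1/2}G|^2\,\mathrm d\nu=0$, hence $C^{-1/2}G=0$ $\nu$-a.e. Since $\Ker B=\{0\}$ (Hypothesis~\ref{hypo_2}(iii)), this gives $G=0$ $\nu$-a.e.

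\textbf{The gradient characterisation.} For (6)$\Rightarrow$(7), set $U:=-\log\rho^{ss}$. Then $e^{-U}=\rho^{ss}\in L^2(\mu)$ by the log-Sobolev remark after Hypothesis~\ref{hypo_5}, and $2F=-CDU$. It remains to check that $C^{1/2}DU\in L^2(\mu)$. The remark gives $C^{1/2}D\log\rho^{ss}\in L^2(\nu)$, but we need integrability with respect to $\mu$ rather than $\nu$, and this is where I expect trouble. I would try the identity $C^{1/2}DU=-2C^{-1/2}F=-2B^{-1}F$ together with the linear growth bound on $B^{-1}F$ from Hypothesis~\ref{hypo_5} and the finite second moments of $\mu$. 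For (7)$\Rightarrow$(6), the point is to show that $\rho^{ss}=e^{-U}/Z$, which requires identifying the invariant density. With $2F=-CDU$ and $L_2$ symmetric, the measure $e^{-U}\mu$ is symmetric for $K_2$ via Gaussian integration by parts [\cite{Prato2004} Lemma~3.38], hence invariant. By the uniqueness of $\nu$ (Theorem~\ref{Invariant measure & density}), $\nu=e^{-U}\mu/Z$, and therefore $CD\log\rho^{ss}=-CDU=2F$.

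\textbf{Main obstacle.} I expect the hardest step to be justifying the integrations by parts in (7)$\Rightarrow$(6) and in (3)$\Leftrightarrow$(6) at the Gaussian-Sobolev level of regularity. These require approximating $U$ by cylindrical smooth functions and passing to the limit, using $C^{1/2}DU\in L^2(\mu)$ and $e^{-U}\in L^2(\mu)$ via Cauchy--Schwarz.
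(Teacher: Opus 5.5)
Your proposal is correct and follows the paper's proof essentially step for step. It uses the same hub at (6), the adjoint formula of Theorem~\ref{thm_generator} for (3)$\Leftrightarrow$(6), the weighted $L^2(\nu)$ form of $epr$ for (4)--(6), $U=-\log\rho^{ss}$ for (6)$\Rightarrow$(7), and invariance of $e^{-U}\mu$ plus uniqueness of $\nu$ for (7)$\Rightarrow$(6), with your density argument and semigroup discussion filling in what the paper only cites; the integrability worry you raise in (6)$\Rightarrow$(7), which the paper addresses only under $\nu$, is settled directly by Hypothesis~\ref{hypo_5}, since $C^{1/2}DU=-B^{-1}F-B^{-1}(CD\log\rho^{ss}-F)$ has linear growth $\mu$-a.e.\ and so lies in $L^2(H,\mu)$ without routing through (6) or comparing $\nu$- and $\mu$-null sets.
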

	
	\begin{proof}
		The equivalences $(1)\Leftrightarrow(2)\Leftrightarrow(3)$ are classical \cite{JQQ}. By the expression of the adjoint in Theorem~\ref{thm_generator}, $K_2$ is symmetric if and only if $2F=CD\log\rho^{ss}$, which is $(3)\Leftrightarrow(6)$. The rate $epr$ is a weighted $L^2(\nu)$ norm of $2F-CD\log\rho^{ss}$, so it vanishes precisely when $2F=CD\log\rho^{ss}$ holds $\nu$-almost everywhere, giving $(4)\Leftrightarrow(6)$. By (ii) of Theorem~\ref{thm:EPss-sym} and the already established
		equivalence with $(6)$, $epr=0$ holds if and only if
		$2F=CD\log\rho^{ss}$ $\nu$-a.e., in which case both integrands in
		\eqref{eq:EPss-def} vanish identically and $EP^{ss}(t)=0$ almost
		surely; conversely $EP^{ss}(t)=0$ a.s.\ implies
		$\mathbb{E}\,EP^{ss}(t)=0$, hence $epr=0$. This gives
		$(4)\Leftrightarrow(5)$.
		
		For $(6)\Rightarrow(7)$, take
		\[
		U=-\log\rho^{ss}.
		\]
		By Theorem~\ref{Invariant measure & density} together with Hypothesis~\ref{hypo_5}, $e^{-U}=\rho^{ss}$ has a square-integrable $C$-gradient under $\nu$. The Sobolev chain rule and $(6)$ give
		\[
		C^{1/2}D(e^{-U})
		=
		C^{1/2}D\rho^{ss}
		=
		\rho^{ss}C^{1/2}D\log\rho^{ss}
		=
		2\,C^{-1/2}F\, e^{-U},
		\]
		the last step by $(6)$ and $C^{1/2}D\log\rho^{ss}=B^{-1}CD\log\rho^{ss}$. Multiplying by $C^{1/2}$ yields $2F=-CDU$ in the Gaussian Sobolev sense, and $(7)$ follows.

		Conversely, assume $(7)$ and set
		\[
		g=e^{-U},
		\qquad
		Z_U=\int_Hg\,\mathrm d\mu.
		\]
		Since $g\in L^2(H,\mu)$ and $\mu$ is a probability measure,
		$0<Z_U<\infty$. In terms of the Sobolev derivative of $g$,
		condition~$(7)$ reads
		\[
		CDg=2Fg
		\qquad \mu\text{-a.e.}
		\]
		For every $\varphi\in\mathscr E_A(H)$, the symmetry of $L_2$ and the
		carr\'e-du-champ identity yield
		\begin{align}
			\int_H K_2\varphi\,g\,\mathrm d\mu
			&=
			-\frac12\int_H
			\langle C^{1/2}D\varphi,C^{1/2}Dg\rangle\,\mathrm d\mu
			+\int_H\langle F,D\varphi\rangle g\,\mathrm d\mu
			\nonumber\\
			&=
			-\frac12\int_H\langle D\varphi,CDg\rangle\,\mathrm d\mu
			+\int_H\langle F,D\varphi\rangle g\,\mathrm d\mu
			=0.
		\end{align}
		By the core characterization of the adjoint, this is the stationary
		equation $K_{2,\mu}^*g=0$. Hence
		\[
		\nu_U:=Z_U^{-1}g\,\mu
		\]
		is an invariant probability measure. The uniqueness of the invariant
		measure implies
		\[
		\rho^{ss}=Z_U^{-1}g.
		\]
		Using the Sobolev chain rule once more, we obtain
		\[
		CD\log\rho^{ss}
		=
		\frac{CDg}{g}
		=
		2F
		\qquad \nu\text{-a.e.},
		\]
		which is $(6)$.
	\end{proof}

	\subsection{When the associated Ornstein--Uhlenbeck process is not symmetric}
	\label{sec:ts-nonsym}
	
	When $L_2$ is symmetric, the time-reversed process has the \emph{same} linear part $A$, and the entropy production follows directly from Proposition~\ref{prop:girsanov}. When $L_2$ is not symmetric, the time-reversed Ornstein--Uhlenbeck process has linear drift $-A-CQ_\infty^{-1}$, which differs from $A$. Then, besides the Girsanov transformations for the nonlinear drifts, one must also compare the path laws of two Ornstein--Uhlenbeck processes with different linear parts. Whether two such Gaussian path laws are equivalent or mutually singular is decided by the Feldman--H\'ajek theorem, which we recall in Appendix~\ref{Appendix A}. Its specialization to Ornstein--Uhlenbeck processes is the Hilbert--Schmidt criterion of [\cite{Prato1992} Theorem~10.8 and Lemma~10.10].

	The next theorem gives the adjoint of the OU generator. Formally, it corresponds to the Ornstein--Uhlenbeck dynamics whose linear drift is $-A-CQ_\infty^{-1}=Q_\infty A^*Q_\infty^{-1}$, which involves the unbounded $Q_\infty^{-1}$, and all identities below are understood on the core $\mathscr E_A(H)$, on which the relevant pairings are finite.
	
	\begin{theorem}\label{adjoint_L_2}
		Under Hypotheses~\ref{hypo_1}--\ref{hypo_3} and Hypothesis~\ref{hypo_4}(ii), the adjoint operator of $L_2$ in $L^2(H;\mu)$, i.e. the generator of the time-reversed OU process, is
		$$L_2^*\varphi(x)=\frac{1}{2}\Tr[CD^2\varphi(x)]+\langle x,[-A-CQ_\infty^{-1}]^* D\varphi(x)\rangle,~x\in H,~\varphi\in \mathscr{E}_A(H).$$
	\end{theorem}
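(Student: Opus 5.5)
Since $\mathscr E_A(H)$ is a core for $L_2$ and is dense in $L^2(H,\mu)$, the plan is to compute $\int_H L_2\varphi\,\psi\,\mathrm d\mu$ for $\varphi,\psi\in\mathscr E_A(H)$. We then rewrite it as $\int_H\varphi\,\mathcal L^r\psi\,\mathrm d\mu$, where $\mathcal L^r$ is the displayed operator. The reversed drift is not applied to $x$ directly. We interpret it as $\langle x,[-A-CQ_\infty^{-1}]^*D\psi\rangle=\langle x,(-A^*-Q_\infty^{-1}C)D\psi\rangle$ on $\mathscr E_A(H)$, i.e. through $-A^*-Q_\infty^{-1}C=-A^*+Q_\infty^{-1}(2AQ_\infty+RQ_\infty)\cdots$. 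More concretely, since $R=A^r-A\in L(H)$ by Hypothesis~\ref{hypo_4}(ii), we can write the claimed operator as $L_2\psi+\langle x,R^*D\psi\rangle$ formally. This isolates the unbounded part $A^*D\psi$, which is fine for $\psi=\varphi_h$ with $h\in D(A^*)$, plus the bounded perturbation $R^*$.

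\textbf{Step 1 (symmetric part).} Use the carré-du-champ identity for the symmetric part together with Gaussian integration by parts \cite[Lemma~3.38]{Prato2004}: for $\varphi,\psi\in\mathscr E_A(H)$,
\[
\int_H \langle x,A^*D\varphi\rangle\psi\,\mathrm d\mu=\int_H\langle Q_\infty A^*D\varphi,\,Q_\infty^{-1}x\rangle\psi\,\mathrm d\mu ,
\]
and $\int_H\langle Q_\infty^{-1}x,k\rangle g\,\mathrm d\mu=\int_H\langle D g,k\rangle\,\mathrm d\mu$ for $k\in Q_\infty(H)$, $g$ smooth cylindrical. With $\varphi=\varphi_h,\psi=\varphi_k$ everything reduces to Gaussian integrals of exponentials, $\int e^{i\langle h+k,x\rangle}\mathrm d\mu=e^{-\frac12\langle Q_\infty(h+k),h+k\rangle}$. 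The cleanest route is to verify the identity
\[
\int_H L_2\varphi_h\,\varphi_k\,\mathrm d\mu=\int_H \varphi_h\,\mathcal L^r\varphi_k\,\mathrm d\mu
\]
by explicit computation: both sides equal $e^{-\frac12|Q_\infty^{1/2}(h+k)|^2}$ times a polynomial in $h,k$ (this uses $\langle x,A^*h\rangle e^{i\langle h+k,x\rangle}$ integrated, giving $i\langle Q_\infty(h+k),A^*h\rangle$ after differentiating the characteristic function).

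\textbf{Step 2 (matching coefficients).} The left side produces $-\frac12\langle Ch,h\rangle-\langle Q_\infty(h+k),A^*h\rangle$. The right side produces $-\frac12\langle Ck,k\rangle-\langle Q_\infty(h+k),(-A^*-Q_\infty^{-1}C)k\rangle=-\frac12\langle Ck,k\rangle+\langle Q_\infty(h+k),A^*k\rangle+\langle C(h+k),k\rangle$. Here $Q_\infty Q_\infty^{-1}C=C$ is legitimate because the pairing is taken against $Q_\infty(h+k)$, and $AQ_\infty\in L(H)$ by Remark~\ref{rmk:AQauto}. Subtracting the two sides and applying the Lyapunov equation $\langle Q_\infty x,A^*y\rangle+\langle A^*x,Q_\infty y\rangle=-\langle Cx,y\rangle$ with $x,y\in\{h,k\}$, the difference vanishes identically. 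This is a direct algebraic check.

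\textbf{Step 3 (identification of the adjoint).} By linearity the identity extends to $\mathscr E_A(H)$, so $\mathcal L^r\subset L_2^*$ on $\mathscr E_A(H)$. The main obstacle is showing that the pairing on the right is genuinely well defined, i.e. that $\mathcal L^r\varphi_k\in L^2(\mu)$ despite $Q_\infty^{-1}$. Here we use $-A-CQ_\infty^{-1}=A+R$, so $\langle x,(A+R)^*k\rangle$ is finite for $k\in D(A^*)$ because $R\in L(H)$ by Hypothesis~\ref{hypo_4}(ii). This shows $\mathscr E_A(H)\subset D(L_2^*)$, with $L_2^*$ given by the stated formula. Finally, $L_2^*$ generates the adjoint semigroup $R_t^*$, which is the transition semigroup of the stationary reversed OU process. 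This yields the interpretation as the generator of the time-reversed OU process.
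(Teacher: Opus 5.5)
Your proposal is correct and is essentially the paper's argument. Both verify the adjoint identity on exponentials $\varphi_h,\varphi_k$ with $h,k\in D(A^*)$ by reducing it to the Lyapunov equation, and both give the reversed drift a meaning through the bounded operators $R$ and $A^rQ_\infty=-(AQ_\infty+C)$; the only difference is that you evaluate both sides directly through the characteristic function of $\mu$, whereas the paper quotes Da Prato's form identity $\int_H L_2\phi\,\psi\,\mathrm d\mu=\int_H\langle Q_\infty D\psi,A^*D\phi\rangle\,\mathrm d\mu$ and compares integrands pointwise, which tacitly uses that identity for the reversed operator as well.
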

	
	\begin{proof}
		By [\cite{Prato2004} Proposition 2.46], for any $\phi,\psi\in \mathscr{E}_A(H)$,
		$$\int_H L_2\phi\,\psi\, d\mu=\int_H\langle Q_\infty D\psi,A^*D\phi\rangle d\mu.$$
		It therefore suffices to prove that
		$$\int_H\langle Q_\infty D\psi,A^*D\phi\rangle d\mu=\int_H\langle Q_\infty D\phi,[-A-CQ_\infty^{-1}]^*D\psi\rangle d\mu.$$
		Since $AQ_\infty\in L(H)$ by Hypothesis~\ref{hypo_4}(ii) and Remark~\ref{rmk:AQauto}, also $Q_\infty A^*=(AQ_\infty)^*\in L(H)$, and the Lyapunov identity holds between bounded operators,
		$$AQ_\infty+Q_\infty A^*=-C.$$
		For $x,y\in D(A^*)$, interpreting $\langle Q_\infty x,A^*y\rangle:=\langle x,Q_\infty A^*y\rangle$ through the bounded operator, this gives
		$$\langle Q_\infty x, A^*y\rangle+\langle A^*x, Q_\infty y\rangle
		=\langle x,(Q_\infty A^*+AQ_\infty)y\rangle=-\langle x,Cy\rangle=-\langle Cx,y\rangle,
		$$
		where the last equality uses the self-adjointness of $C$, and the right-hand side of the target identity equals the same quantity because $(-A-CQ_\infty^{-1})Q_\infty=-(AQ_\infty+C)$ is bounded. Here both sides are finite because, for $\varphi=\varphi_h\in\mathscr E_A(H)$ with $h\in D(A^*)$, the gradients $D\varphi$ lie in $D(A^*)$, so that all pairings above are well defined through bounded operators. Substituting $x=D\phi$ and $y=D\psi$ into the displayed identity and using the symmetry of the real inner product gives the pointwise equality of the two integrands, and integration over $H$ with respect to $\mu$ completes the proof.
	\end{proof}
	
	\begin{remark}
		In the finite-dimensional case, writing $\Sigma$ for the covariance matrix of $\mu$, this formula reduces to the replacement of $Ax$ by $CD\log\mu(x)-Ax=(-C\Sigma^{-1}-A)x$ in $L_2$.
	\end{remark}
	
	Combining the expression of $L_2^*$ with the stationary density $\rho^{ss}$, we can now identify the generator of the time-reversed semilinear SPDE.
	
	\begin{theorem}
		\label{adjoint_K_2}
		In addition to Hypotheses~\ref{hypo_1}--\ref{hypo_3} and Hypothesis~\ref{hypo_4}(i)--(ii), assume that $F\in C_b^1(H;H)$ has finite divergence with respect to $\mu$. The adjoint operator of $K_2$ in $L^2(H;\nu)$ is determined through the weak identity
		$$\int_H (K_2\varphi)\,\psi\,\mathrm d\nu=\int_H \varphi\,\big[L_2^*\psi+\langle CD\log\rho^{ss}-F,D\psi\rangle\big]\,\mathrm d\nu,\qquad \varphi,\psi\in\mathscr E_A(H),$$
		which holds since $\mathscr E_A(H)$ is a core for $K_2$.
		$K_2$ is symmetric if and only if $Ax+F(x)=\frac{1}{2}CD\log\rho^{ss}(x)-\frac{1}{2}CQ_\infty^{-1}x$.
	\end{theorem}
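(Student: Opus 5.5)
We prove the weak identity on the core $\mathscr E_A(H)$ by writing $\mathrm d\nu=\rho^{ss}\,\mathrm d\mu$ and splitting $K_2\varphi=L_2\varphi+\langle F,D\varphi\rangle$. Fix $\varphi,\psi\in\mathscr E_A(H)$. We treat the linear and nonlinear parts of $\int_H (K_2\varphi)\psi\rho^{ss}\,\mathrm d\mu$ separately, and then use the stationarity of $\rho^{ss}$ (Lemma~\ref{lem:rho-ss-stationary}) to cancel the zeroth-order remainder.

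\emph{Linear part.} By Hypothesis~\ref{hypo_4}(i), $\rho^{ss}\in D(L_2)$, and since $\mathscr E_A(H)$ is an algebra of bounded smooth functions, the product $\psi\rho^{ss}$ should belong to $D(L_2^*)$. Theorem~\ref{adjoint_L_2} then gives
\[
\int_H L_2\varphi\,(\psi\rho^{ss})\,\mathrm d\mu=\int_H\varphi\,L_2^*(\psi\rho^{ss})\,\mathrm d\mu .
\]
Since $L_2^*$ is a second-order operator with diffusion $C$, the Leibniz rule gives
\[
L_2^*(\psi\rho^{ss})=\rho^{ss}L_2^*\psi+\psi L_2^*\rho^{ss}+\langle CD\psi,D\rho^{ss}\rangle .
\]
The cross term equals $\rho^{ss}\langle CD\log\rho^{ss},D\psi\rangle$. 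This product rule has to be justified rigorously. I would do so with the bounded pairing $\int\langle AQ_\infty D\cdot,D\cdot\rangle\,\mathrm d\mu$ and Gaussian integration by parts \cite[Lemma~3.38]{Prato2004}, approximating $\rho^{ss}$ in the graph norm of $L_2$ by elements of $\mathscr E_A(H)$. I expect this to be the main obstacle, because $D(L_2^*)$ is not explicitly characterized and the cross term requires the $C$-gradient bound $C^{1/2}D\rho^{ss}\in L^2(\mu)$ from Theorem~\ref{Invariant measure & density}.

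\emph{Nonlinear part.} Gaussian integration by parts with $F$ of finite divergence gives
\[
\int_H\langle F,D\varphi\rangle\psi\rho^{ss}\,\mathrm d\mu=-\int_H\varphi\big[\langle F,D(\psi\rho^{ss})\rangle+\dv_\mu F\,\psi\rho^{ss}\big]\,\mathrm d\mu .
\]
Expanding $D(\psi\rho^{ss})=\rho^{ss}D\psi+\psi D\rho^{ss}$ produces $-\rho^{ss}\langle F,D\psi\rangle$ together with the zeroth-order term $-\psi\big[\langle F,D\rho^{ss}\rangle+\dv_\mu F\rho^{ss}\big]$.

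\emph{Collecting terms.} The zeroth-order coefficient of $\psi$ is
\[
L_2^*\rho^{ss}-\langle F,D\rho^{ss}\rangle-\dv_\mu F\,\rho^{ss}=K_{2,\mu}^*\rho^{ss}.
\]
Here the adjoint of $K_2$ in $L^2(\mu)$ is computed with $L_2^*$ in place of $L_2$, which is the general form of Proposition~3.39 of \cite{Prato2004}. This coefficient vanishes by Lemma~\ref{lem:rho-ss-stationary}, because $\rho^{ss}\in D(L_2)$ places $\rho^{ss}$ in $D(K_{2,\mu}^*)$, given that $F$ is bounded with bounded divergence. What remains is
\[
\int_H\varphi\big[L_2^*\psi+\langle CD\log\rho^{ss}-F,D\psi\rangle\big]\,\mathrm d\nu .
\]
Since $\mathscr E_A(H)$ is a core for $K_2$ (Theorem~\ref{thm_generator}), this identity determines $K_2^*$.

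\emph{Symmetry criterion.} $K_2$ is symmetric if and only if $K_2\psi=K_2^*\psi$ on the core. Both operators share the second-order part $\frac12\Tr[CD^2\psi]$, so symmetry is equivalent to equality of the first-order coefficients:
\[
Ax+F(x)=(-A-CQ_\infty^{-1})x+CD\log\rho^{ss}(x)-F(x)
\]
as vector fields paired with $D\psi$, for all $\psi\in\mathscr E_A(H)$. These pairings are understood through the bounded operator $AQ_\infty$. Since the gradients $h\in D(A^*)$ of elements of $\mathscr E_A(H)$ are dense, the equality holds $\nu$-a.e. Rearranging gives $2(Ax+F(x))=CD\log\rho^{ss}(x)-CQ_\infty^{-1}x$, which is the stated condition.
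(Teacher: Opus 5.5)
Your proposal is correct and is essentially the paper's argument. You split $K_2=L_2+\langle F,D\cdot\rangle$, treat the $F$-term by Gaussian integration by parts, and expand the linear part on the product $\psi\rho^{ss}$; your carr\'e-du-champ cross term $\langle CD\psi,D\rho^{ss}\rangle$ is what the paper obtains by combining two terms through $AQ_\infty+(AQ_\infty)^*=-C$. The zeroth-order remainder is then removed by stationarity: the paper's remainder $V$ is exactly $K_{2,\mu}^*\rho^{ss}/\rho^{ss}$, which it kills via $K_2^*\mathbf 1=0$ where you invoke Lemma~\ref{lem:rho-ss-stationary}.

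One slip concerns the nonsymmetric case, where the forward operator is built on $L_2^*$, so $\rho^{ss}\in D(K_{2,\mu}^*)$ should not be deduced from $\rho^{ss}\in D(L_2)$. It holds directly because $\int_H K_2\varphi\,\mathrm d\nu=0$ on the core, which is the argument of that lemma. Moreover, $|\langle F,D\varphi\rangle|\leqslant\|B^{-1}F\|_\infty\,|C^{1/2}D\varphi|$ makes the first-order part relatively bounded with bound zero with respect to both $L_2$ and $L_2^*$, so $D(K_{2,\mu}^*)=D(L_2^*)$. This is the regularity your Leibniz step actually uses.
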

	
	\begin{proof}
		The argument parallels Chapter 3 of \cite{Prato2004}; we make explicit the points at which a general self-adjoint $C$ and the regularity of $\rho^{ss}$ enter. $\phi,\psi\in\mathscr E_A(H)$. The regularity in force is $\rho^{ss}\in D(L_2)$, by Hypothesis~\ref{hypo_4}(i), with $C^{1/2}D\log\rho^{ss}\in L^2(H,\nu;H)$ by Theorem~\ref{Invariant measure & density}. Thus the products $\psi\rho^{ss}$ lie in the domain of the Ornstein--Uhlenbeck form, and the identity [\cite{Prato2004} Proposition 2.46] extends, by density, from $\mathscr E_A(H)$ to such test functions. Moreover $AQ_\infty\in L(H)$ by Hypothesis~\ref{hypo_4}(ii) and Remark~\ref{rmk:AQauto}, and every term containing $AQ_\infty D\log\rho^{ss}$ below is read through the standing interpretation of Section~2.
		
		We would like to show that
		$$\int_H K_2\phi\,\psi \,d\nu=\int_H \phi K^*_2\psi \,d\nu,$$
		i.e.
		$$\int_HK_2\phi\,\psi\rho^{ss}d\mu=\int_H \phi K^*_2\psi \,\rho^{ss}d\mu.$$
		
		We first have
		\begin{align}
			&\int_HK_2\phi\psi\rho^{ss}d\mu\nonumber\\
			&=\int_H L_2\phi\psi\rho^{ss}d\mu+\int_H \langle F(x), D\phi\rangle \psi\rho^{ss}d\mu\nonumber\\
			&=\int_H \langle Q_\infty D(\psi\rho^{ss}), A^*D\phi\rangle d\mu+\int_H \langle F(x), D\phi\rangle \psi\rho^{ss}d\mu\nonumber\\
			&=\int_H \langle Q_\infty D\psi, A^*D\phi\rangle \rho^{ss}d\mu+\int_H \langle Q_\infty D\rho^{ss}, A^*D\phi\rangle \psi d\mu+\int_H \langle F(x), D\phi\rangle \psi\rho^{ss}d\mu\nonumber\\
			&=\int_H \phi L_2^*\psi\rho^{ss}d\mu-\int\langle Q_\infty D\psi,A^*D\log\rho^{ss}\rangle \phi \rho^{ss}d\mu+\int_H \langle Q_\infty D\log\rho^{ss}, A^*D\phi\rangle \psi \rho^{ss}d\mu\nonumber\\
			&+\int_H \langle F(x), D\phi\rangle \psi\rho^{ss}d\mu.
		\end{align}
		
		Due to Lemma 3.38 in \cite{Prato2004}:
		\begin{align}
			&\int_H \langle F(x), D\phi\rangle \psi\rho^{ss}d\mu=-\int_H \phi \langle F(x), D(\psi\rho^{ss})\rangle d\mu-\int_H \dv_\mu F \phi\psi\rho^{ss}d\mu\nonumber\\
			&=-\int_H \phi \langle F(x), D\psi\rangle \rho^{ss}d\mu-\int_H \phi \langle F(x), D\log\rho^{ss}\rangle \psi\rho^{ss}d\mu-\int_H \dv_\mu F \phi\psi\rho^{ss}d\mu.
		\end{align}
		and similarly
		\begin{align}
			&\int_H \langle Q_\infty D\log\rho^{ss}, A^*D\phi\rangle \psi \rho^{ss}d\mu=\int_H \langle AQ_\infty D\log\rho^{ss}, D\phi\rangle \psi \rho^{ss}d\mu\nonumber\\
			&=-\int_H \phi \langle AQ_\infty D\log\rho^{ss}, D\psi\rangle \rho^{ss}d\mu-\int_H \phi \langle AQ_\infty D\log\rho^{ss}, D\log\rho^{ss}\rangle \psi\rho^{ss}d\mu\nonumber\\
			&-\int_H \dv_\mu (AQ_\infty D\log\rho^{ss}) \phi\psi\rho^{ss}d\mu.
		\end{align}
		
		Hence because of $AQ_\infty +(AQ_\infty)^*=-C$,
		\begin{align}
			K_2^*\psi&=L_2^*\psi-\langle (AQ_\infty +(AQ_\infty)^*)D\log\rho^{ss}+F(x), D\psi\rangle+V\psi\nonumber\\
			&=L_2^*\psi+\langle CD\log\rho^{ss}-F(x), D\psi\rangle+V\psi,
		\end{align}
		in which
		\begin{align}
			V&=-\langle AQ_\infty D\log\rho^{ss}, D\log\rho^{ss}\rangle-\dv_\mu (AQ_\infty D\log\rho^{ss})\nonumber\\
			&-\langle F(x), D\log\rho^{ss}\rangle-\dv_\mu F.
		\end{align}
		
		Since $K_2^*1=0$, $V=0$.
	\end{proof}
	
	Therefore, by \textbf{(H-rev)} with $\lambda=\nu$, the time-reversed process of a stationary semilinear SPDE is the stationary semilinear SPDE with linear operator $A^{r}=-A-CQ_\infty^{-1}$ and nonlinear drift $CD\log\rho^{ss}-F$.
	
	\begin{corollary}
		$\rho^{ss}$ satisfies
		$$\langle AQ_\infty D\log\rho^{ss}, D\log\rho^{ss}\rangle+\dv_\mu (AQ_\infty D\log\rho^{ss})+\langle F(x), D\log\rho^{ss}\rangle+\dv_\mu F=0.$$
	\end{corollary}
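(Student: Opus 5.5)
The corollary records the zeroth-order remainder produced in the proof of Theorem~\ref{adjoint_K_2}. The plan is to identify that remainder as the multiplier $V$ and show that it vanishes. In that proof, moving $K_2$ across the $L^2(H,\nu)$ pairing produced
\[
K_2^*\psi=L_2^*\psi+\langle CD\log\rho^{ss}-F,D\psi\rangle+V\psi ,
\]
where
\[
V=-\langle AQ_\infty D\log\rho^{ss},D\log\rho^{ss}\rangle-\dv_\mu(AQ_\infty D\log\rho^{ss})-\langle F,D\log\rho^{ss}\rangle-\dv_\mu F .
\]
The claimed identity is exactly $V=0$, so it suffices to justify $V=0$ rigorously, in the weak sense relevant here.

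First, I would test the weak identity of Theorem~\ref{adjoint_K_2} with $\psi\equiv1$. The constant function lies in $\mathscr E_A(H)$ (take $h=0$), and $L_2^*1=0$ and $D1=0$. The weak identity then reduces to
\[
\int_H K_2\varphi\,\mathrm d\nu=\int_H\varphi\,V\,\mathrm d\nu,\qquad \varphi\in\mathscr E_A(H).
\]
Second, invariance of $\nu$ under $P_t$ gives $\int_H P_t\varphi\,\mathrm d\nu=\int_H\varphi\,\mathrm d\nu$. Differentiating at $t=0$, which is legitimate since $\varphi\in\mathscr E_A(H)\subset D(K_2)$, yields $\int_H K_2\varphi\,\mathrm d\nu=0$. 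Equivalently, one can use Lemma~\ref{lem:rho-ss-stationary}, which gives $K^*_{2,\mu}\rho^{ss}=0$. Hence $\int_H\varphi V\,\mathrm d\nu=0$ for all $\varphi\in\mathscr E_A(H)$.

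Third, density of $\mathscr E_A(H)$ in $L^2(H,\nu)$ gives $V=0$ $\nu$-a.e. This density follows because $\mathscr E_A(H)$ is dense in $L^p(H,\mu)$ for every finite $p$, and because $\rho^{ss}\in L^p(H,\mu)$ for all finite $p$ by the remark after Hypothesis~\ref{hypo_5}; Hölder's inequality then transfers $L^{p}(\mu)$-approximation to $L^2(\nu)$-approximation. Multiplying $V=0$ by $-1$ gives the displayed identity. Equivalently, expanding $K^*_{2,\mu}\rho^{ss}=0$ in Lemma~\ref{lem:rho-ss-stationary} and dividing by $\rho^{ss}$ gives the same formula, using the Lyapunov relation $AQ_\infty+(AQ_\infty)^*=-C$ to rewrite $\frac12\Tr[CD^2\rho^{ss}]$ in terms of $AQ_\infty$.

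The main obstacle is giving meaning to $V$ as a function, in particular to $\dv_\mu(AQ_\infty D\log\rho^{ss})$ and the quadratic term. The Gaussian divergence of a merely Sobolev-regular vector field need not be pointwise defined. I would therefore interpret the corollary weakly: the identity holds as the distribution $\varphi\mapsto\int_H\varphi V\,\mathrm d\nu$ on the core. Each term is read through the bounded pairing $\int_H\langle AQ_\infty D\varphi,D\psi\rangle\,\mathrm d\mu$ and Gaussian integration by parts \cite[Lemma~3.38]{Prato2004}, as in the standing interpretation of Section~2. The ingredients for this are Hypothesis~\ref{hypo_4}(i)--(ii), which give $\rho^{ss}\in D(L_2)$ and $AQ_\infty$ bounded, together with $C^{1/2}D\log\rho^{ss}\in L^2(H,\nu;H)$. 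Under these, every pairing is finite, and the pointwise statement holds whenever these terms are individually defined.
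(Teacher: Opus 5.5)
Your proposal is correct and follows the paper's route. The paper obtains the identity as the statement $V=0$ at the end of the proof of Theorem~\ref{adjoint_K_2}, justified in one line by $K_2^*\mathbf 1=0$; you spell out why $K_2^*\mathbf 1=0$ holds (testing the $V$-containing adjoint formula with $\psi\equiv1$ and using $\int_H K_2\varphi\,\mathrm d\nu=0$ from the invariance of $\nu$), and how the resulting weak statement on the core upgrades to a $\nu$-a.e.\ identity when $V$ is a genuine integrable function.
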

	
	\begin{remark}
		In the finite-dimensional case, the adjoint of $K_2$ in $L^2(H;\nu)$ is obtained from $K_2$ by replacing $Ax+F(x)$ with $CD\log\mu(x)-Ax+CD\log\rho^{ss}-F(x)=(-C\Sigma^{-1}-A)x+CD\log\rho^{ss}-F(x)$. Hence we can write $K^*_2$ as
		\begin{align}
			K_2^*\varphi(x)&=\frac{1}{2}\Tr[CD^2\varphi]+\langle x,(-C\Sigma^{-1}-A)^*D\varphi\rangle+\langle CD\log\rho^{ss}-F(x), D\varphi\rangle\nonumber\\
			&=L_2^*\varphi+\langle CD\log\rho^{ss}-F(x), D\varphi\rangle.
		\end{align}
	\end{remark}

	We shall compare four path measures: the law of the original semilinear SPDE $\{X(t)\}$, the law of its time reversal  $\{X^{rev}(t)\}$, and the two Ornstein--Uhlenbeck reference laws associated with the forward and reversed linear parts. All start from initial distribution $\nu$, the invariant measure of $\{X(t)\}$ and $\{X^{rev}(t)\}$.
	
	If one insisted on using the same linear part $A$ for the time-reversed dynamics, the remaining drift would contain
	$\tilde{F}(x)=-2Ax-F(x)+CD\log\rho^{ss}(x)-CQ_\infty^{-1}x$,
	which is not a Lipschitz map on $H$ when $A$ is unbounded. We therefore organize the computation relative to two Ornstein--Uhlenbeck reference laws, corresponding respectively to the forward linear part $A$ and the reversed linear part $A^r=-A-CQ_\infty^{-1}=Q_\infty A^*Q_\infty^{-1}$.
	
	We still define the sample-path steady-state entropy production as
	$$EP^{ss}(t)=\log\frac{dP^{ss}_{[0,t]}}{dP^{rev,ss}_{[0,t]}},$$
	in which $P^{ss}_{[0,t]}$ is the law of $\{X(s):0\leqslant s\leqslant t\}$ and $P^{rev,ss}_{[0,t]}$ is the law of $\{X^{rev}(s)=X(t-s):0\leqslant s\leqslant t\}$. All start from $\nu$.
	
	Hence
	$$EP^{ss}(t)=\log\frac{dP^{ss}_{[0,t]}}{dP^*_{[0,t]}}-\log\frac{dP^{rev,*}_{[0,t]}}{dP^*_{[0,t]}}-\log\frac{dP^{rev,ss}_{[0,t]}}{dP^{rev,*}_{[0,t]}},$$
	in which $P^*_{[0,t]}$ is the law of OU process with linear operator $A$ and $P^{rev,*}_{[0,t]}$ is the law of OU process with linear operator $-A-CQ_\infty^{-1}$. The first and third terms can be calculated using Girsanov formula, and we need a lemma for the second term.
	
	To treat the second term which emerges in the nonsymmetric case, we recall Hypothesis~\ref{hypo_4}(ii)--(iii), namely that the irreversible part $R=A^r-A$ is bounded together with $C^{-1/2}R$ and $R(H)\subset B(U)$.
	
	The two Ornstein--Uhlenbeck laws are compared in the following lemma.
	\begin{lemma}[Radon--Nikodym derivative between the two Ornstein--Uhlenbeck laws]
		\label{lem:two-ou}
		Assume Hypotheses~\ref{hypo_1}--\ref{hypo_3} and Hypothesis~\ref{hypo_4}(ii)--(iii). Then, for every $t>0$, the law $P^{rev,*}_{[0,t]}$ of the reversed Ornstein--Uhlenbeck process (linear part $A^r=-A-CQ_\infty^{-1}$) and the law $P^{*}_{[0,t]}$ of the forward one (linear part $A$), both started from $\mu=N_{Q_\infty}$, are mutually absolutely continuous on $C([0,t];H)$, and
		\begin{equation}\label{two-ou-density}
			\log\frac{dP^{rev,*}_{[0,t]}}{dP^*_{[0,t]}}(X)=\int_0^t\langle B^{-1}RX,\mathrm dW\rangle_U-\frac12\int_0^t\|B^{-1}RX\|_U^2\,\mathrm ds,
		\end{equation}
		the stochastic integral being taken against the forward driving noise $W$.
	\end{lemma}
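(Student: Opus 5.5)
The plan is to view the reversed Ornstein--Uhlenbeck equation as a perturbation of the forward one by a bounded linear drift. Since $A^r=A+R$ with $R\in L(H)$ by Hypothesis~\ref{hypo_4}(ii), the reversed equation reads
\[
\mathrm dY(s)=\big[AY(s)+RY(s)\big]\,\mathrm ds+B\,\mathrm dW(s).
\]
This is a semilinear equation with linear part $A$ and drift $G(s,x):=Rx$. Because $R$ is bounded, it is globally Lipschitz. Also $A+R$ generates a $C_0$-semigroup by bounded perturbation, so the equation has a unique mild solution. That solution coincides with the mild solution built from $e^{t(A+R)}$, and it is unique in law. I would therefore apply Lemma~\ref{lem:girsanov-growth} with $G_0=0$ (the forward OU law $P^*$) and $G_1=G_2=R$ (the reversed OU law $P^{rev,*}$). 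Both laws start from $\mu$.

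The key steps, in order, are as follows.

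First, the compatibility condition. We need $G_1(s,H)=R(H)\subset B(U)$, which is Hypothesis~\ref{hypo_4}(iii). We then set $\psi(x):=B^{-1}Rx$. Since $B=\sqrt C$, we have $B^{-1}R=C^{-1/2}R\in L(H)$, again by Hypothesis~\ref{hypo_4}(iii). This gives the linear-growth bound \eqref{eq:psi-growth} with $a=0$ and $b=\|C^{-1/2}R\|$.

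Second, the invariant law. We check that $\mu=N_{Q_\infty}$ is also an admissible initial law for the reversed process. It is in fact invariant for it, since $L_2^*$ generates the adjoint semigroup (Theorem~\ref{adjoint_L_2}). This is not needed for the Girsanov step itself, but it ensures the two laws are the ones in the statement.

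Third, the change of measure. Lemma~\ref{lem:girsanov-growth} yields $P^{rev,*}_{[0,t]}\sim P^*_{[0,t]}$, with density the stochastic exponential
\[
\exp\Big\{\int_0^t\langle B^{-1}RX,\mathrm dW^0\rangle_U-\tfrac12\int_0^t\|B^{-1}RX\|_U^2\,\mathrm ds\Big\}.
\]
Here $W^0=\int B^{-1}(\mathrm dX-AX\,\mathrm ds)$. Under $P^*$ this $W^0$ is exactly the forward driving noise $W$. Taking logarithms gives \eqref{two-ou-density}.

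The main obstacle is justifying that the path-space identification holds for mild rather than strong solutions. The Lipschitz and growth conditions are immediate. What needs care is that the canonical process $W^0$ is well defined on $C([0,t];H)$ for OU paths that do not lie in $D(A)$. I would handle this as in Proposition~\ref{prop:girsanov}, interpreting $W^0$ through the weak formulation \eqref{weak solution} tested against $z\in D(A^*)$ and invoking \cite[Theorem~10.14]{Prato1992}.

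A cross-check via Feldman--H\'ajek is also available. Through the Hilbert--Schmidt criterion of \cite[Theorem~10.8, Lemma~10.10]{Prato1992}, one needs $B^{-1}R$ to be such that $\int_0^t\|B^{-1}Re^{sA}Q_\infty^{1/2}\|_{\mathrm{HS}}^2\,\mathrm ds<\infty$. This holds because $B^{-1}R$ is bounded and $Q_\infty$ is trace class.
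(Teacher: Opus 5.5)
Your proposal is correct and follows the paper's proof essentially step for step. You treat the reversed equation as the forward OU equation perturbed by the globally Lipschitz drift $Rx$, use Hypothesis~\ref{hypo_4}(iii) to get $\psi=C^{-1/2}RX$ with linear-growth constants $a=0$, $b=\|C^{-1/2}R\|$, apply Lemma~\ref{lem:girsanov-growth} with the forward OU law as base (where the innovation is the forward noise $W$), and fix the path-space meaning of the stochastic integral through the weak formulation, as the paper's closing measurability remark does. The only difference is that the paper gets invariance of $\mu$ for the reversed process directly from the Lyapunov identity $A^rQ_\infty+Q_\infty(A^r)^*=-C$ rather than via Theorem~\ref{adjoint_L_2}, which would additionally require identifying the adjoint semigroup with the $A^r$-transition semigroup. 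As you note, this invariance is not needed for the lemma anyway, since both laws start from $\mu$ by assumption.
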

	
	\begin{proof}
		By the Lyapunov equation $AQ_\infty+Q_\infty A^*=-C$ one has $A^r=Q_\infty A^*Q_\infty^{-1}=-A-CQ_\infty^{-1}$, and a direct computation shows that $A^r$ satisfies the \emph{same} Lyapunov equation $A^rQ_\infty+Q_\infty(A^r)^*=-C$; hence $\mu=N_{Q_\infty}$ is invariant for the time-reversed process as well, and the two reference laws share the stationary initial law $\mu$. Under Hypothesis~\ref{hypo_4}(ii) the operator $A^r=A+R$ is a bounded perturbation of $A$, so it again generates a $C_0$-semigroup, and the time-reversed process is the unique mild solution of $\mathrm dZ^r=(A+R)Z^r\,\mathrm dt+B\,\mathrm dW$. Its drift $\tilde F(z)=Rz$ is globally Lipschitz of linear growth $|Rz|\leqslant\|R\|\,|z|$, so existence and uniqueness in law follow from [\cite{Prato1992} Theorem~7.5]. Note that the unboundedness of $A$ is never used: $\tilde F=R\,\cdot$ is treated as a bounded Lipschitz \emph{map}.
		
		Set $\psi(s):=B^{-1}RX(s)=C^{-1/2}RX(s)$, with $X$ the forward Ornstein--Uhlenbeck process. By Hypothesis~\ref{hypo_4}(iii), $\psi$ is $U$-valued and satisfies the linear-growth bound \eqref{eq:psi-growth} with $a=0$ and $b=\|C^{-1/2}R\|$. Lemma~\ref{lem:girsanov-growth}, applied with the forward Ornstein--Uhlenbeck process as the base equation and time-independent perturbing drift $G(x)=Rx$, gives $P^{rev,*}_{[0,t]}\sim P^*_{[0,t]}$ with density the stochastic exponential of $\psi$, noting that under $P^*$ the innovation $W^*$ coincides with the forward driving noise $W$.
		
		Finally, since $R$ and $C^{-1/2}R$ are bounded, $\psi(s)=C^{-1/2}RX(s)$ is a continuous adapted functional of the path $X$, so the It\^o integral $\int_0^t\langle\psi,\mathrm dW\rangle_U$ is the $L^2$-limit of $\sigma(X)$-measurable Riemann sums, and the density reduces to the stochastic exponential \eqref{two-ou-density}.
	\end{proof}

	\begin{remark}[Scope of Hypothesis~\ref{hypo_4}(ii)]
		\label{rmk:Hstar-scope}
		Writing $A_a:=A+\tfrac12CQ_\infty^{-1}$ for the antisymmetric (irreversible) part of the generator in $L^2(H,\mu)$, one has $R=-2A_a$, so Hypothesis~\ref{hypo_4}(ii) is precisely the requirement that the irreversible part be a \emph{bounded} (zeroth-order) operator; the forward and reversed dynamics then share the same unbounded principal part. This covers bounded non-gradient drift contributions---for instance the irreversible part of the two-temperature coupled field of Section~\ref{sec:applications}, for which $\|R\|_{L(H)}\leqslant2\kappa\,|T_1-T_2|/\min(T_1,T_2)<\infty$ and $C^{-1/2}R$ is bounded because $B$ is boundedly invertible---but it excludes genuine spatial transport. For advection $A=\nu\Delta+a\partial_x$ one finds $R=-2a\partial_x$, a first-order \emph{unbounded} operator: Hypothesis~\ref{hypo_4}(ii) fails, the per-mode contribution $\Tr(R_kQ_{\infty,k}R_k^*)\equiv 2a^2/\nu$ does not decay, the two path laws become mutually singular, and the mean entropy production is ultraviolet-divergent and calls for renormalization (cf.\ Remark~\ref{rmk:EPfinite}).
	\end{remark}
	
	\begin{theorem}[Steady-state entropy production, nonsymmetric case]
		Assume Hypotheses~\ref{hypo_1}--\ref{hypo_5} and \textbf{(H-rev)}. Then both $F$ and the time-reversed drift are admissible in the sense of Proposition~\ref{prop:girsanov}. The steady-state entropy production can be expressed as
		\begin{align}
			EP^{ss}(t)&=\int_0^t \langle B^{-1}[2F-CD\log\rho^{ss}-RX],dW\rangle_U \nonumber\\
			&+\frac{1}{2}\int_0^t\| B^{-1}[2F-CD\log\rho^{ss}-RX]\|^2_Uds\nonumber
		\end{align}
		\begin{align}
			epr=\frac{1}{2}\int_H\| B^{-1}[2F-CD\log\rho^{ss}-Rx]\|^2_U d\nu.\nonumber
		\end{align}
		The fluctuation theorem $\mathbb{E} e^{-EP^{ss}(t)}=1$ holds, for all $t>0$.
	\end{theorem}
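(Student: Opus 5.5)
We use the three-term decomposition stated just before Lemma~\ref{lem:two-ou},
\[
EP^{ss}(t)=\log\frac{dP^{ss}_{[0,t]}}{dP^*_{[0,t]}}-\log\frac{dP^{rev,*}_{[0,t]}}{dP^*_{[0,t]}}-\log\frac{dP^{rev,ss}_{[0,t]}}{dP^{rev,*}_{[0,t]}},
\]
and evaluate each term along the forward stationary path $X$, driven by the noise $W$. By \textbf{(H-rev)} with $\lambda=\nu$ and $\rho^\lambda\equiv\rho^{ss}$, $P^{rev,ss}_{[0,t]}$ is the law of the mild solution with linear part $A^r=A+R$ and drift $\tilde F=CD\log\rho^{ss}-F$ started from $\nu$.

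\textbf{Admissibility.} Hypothesis~\ref{hypo_5} gives linear growth of $B^{-1}F$ and of $B^{-1}\tilde F$. Lemma~\ref{lem:girsanov-growth} then applies in two settings. With base equation the forward OU process and perturbation $G_1=F$, it yields the first density. With base equation the OU process with linear part $A^r$, which is well posed by Lemma~\ref{lem:two-ou}, and perturbation $\tilde F$, it yields the third density. For the third term the innovation is $W^{r}=\int B^{-1}(dX-A^rX\,ds)$. Since $R$ is bounded and $A^r=A+R$, this equals $\int B^{-1}(dX-AX\,ds)-\int B^{-1}RX\,ds$. Both pieces make sense by Hypothesis~\ref{hypo_4}(iii), because $B^{-1}R=C^{-1/2}R$ is bounded.

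\textbf{Algebra.} Along the forward path, $B^{-1}(dX-AX\,ds)=B^{-1}F\,ds+dW$. The three log-densities are as follows:
\[
\log\frac{dP^{ss}}{dP^*}=\int\langle B^{-1}F,dW\rangle+\tfrac12\int|B^{-1}F|^2ds.
\]
From Lemma~\ref{lem:two-ou}, after substituting the innovation,
\[
\log\frac{dP^{rev,*}}{dP^*}=\int\langle B^{-1}RX,dW+B^{-1}F\,ds\rangle-\tfrac12\int|B^{-1}RX|^2ds.
\]
This substitution is a point to check: Lemma~\ref{lem:two-ou} writes the density in terms of the OU innovation, which on the path of $X$ equals $dW+B^{-1}F\,ds$.
\[
\log\frac{dP^{rev,ss}}{dP^{rev,*}}=\int\langle B^{-1}\tilde F,dW+B^{-1}(F-RX)\,ds\rangle-\tfrac12\int|B^{-1}\tilde F|^2ds.
\]
Set $\psi=B^{-1}[2F-CD\log\rho^{ss}-RX]=B^{-1}F-B^{-1}RX-B^{-1}\tilde F$. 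The martingale parts combine to $\int\langle\psi,dW\rangle$. For the $ds$ parts, write $a=B^{-1}F$, $r=B^{-1}RX$ and $f=B^{-1}\tilde F$. They sum to
\[
\tfrac12|a|^2-\langle r,a\rangle+\tfrac12|r|^2-\langle f,a-r\rangle+\tfrac12|f|^2=\tfrac12|a-r-f|^2=\tfrac12|\psi|^2,
\]
which is the claimed formula.

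\textbf{Rate and fluctuation theorem.} The stochastic integral is a square-integrable martingale. Its integrand has linear growth: $|C^{-1/2}(2F-CD\log\rho^{ss})|\le a+b|x|$ by Hypothesis~\ref{hypo_5}, and $|C^{-1/2}Rx|\le\|C^{-1/2}R\|\,|x|$. Since $\nu$ has second moments, the expectation of the stochastic integral vanishes, and stationarity gives $\mathbb E\,EP^{ss}(t)=t\cdot epr$ with the stated $epr$. For the fluctuation theorem, $e^{-EP^{ss}(t)}=\mathcal E(-\psi)_t$. Because $\psi$ has linear growth, Lemma~\ref{lem:girsanov-growth} applies with base equation the forward semilinear SPDE and perturbation $G_1(x)=-(2F-CD\log\rho^{ss}-Rx)$. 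The resulting equation has linear part $A$ and drift $-F+CD\log\rho^{ss}+Rx$, i.e. linear part $A^r$ and drift $\tilde F$. This is exactly the reversed equation, which is unique in law by \textbf{(H-rev)}. Hence $\mathcal E(-\psi)$ is a mean-one martingale and $\mathbb E e^{-EP^{ss}(t)}=1$.

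\textbf{Main obstacle.} The hardest step is justifying the change of innovation in the third term. One must show that the Girsanov density of the reversed equation relative to the $A^r$-OU law, evaluated on forward paths, can be rewritten using the forward noise $W$. The paths of $X$ are not in $D(A)$, so this has to be phrased through the weak formulation: $W^{r}-W^*=-\int B^{-1}RX\,ds$ holds pathwise because $R$ is bounded. It is also needed that the densities, defined $P^{rev,*}$- and $P^*$-a.s., remain meaningful $P^{ss}$-a.s., which follows from the mutual equivalence of all four laws given by the two lemmas.
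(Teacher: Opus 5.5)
Your proposal is correct and follows essentially the paper's argument: the same decomposition through the two Ornstein--Uhlenbeck reference laws, the same three Girsanov densities with the innovations rewritten via $\mathrm dX=[AX+F(X)]\,\mathrm ds+B\,\mathrm dW$, and the same completion of the square to $\tfrac12|\psi|^2$. Your derivation of $epr$ and of $\mathbb E e^{-EP^{ss}(t)}=1$, via stationarity and Lemma~\ref{lem:girsanov-growth}, is the symmetric-case argument, which the paper's proof leaves implicit here; like the paper, you take uniqueness in law of the reversed equation for granted, although \textbf{(H-rev)} strictly supplies only existence.
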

	
	\begin{proof}
		All four path laws in the above decomposition are understood to start
		from $\nu$. Although Lemma~\ref{lem:two-ou} is stated for the common
		initial law $\mu$, its Radon--Nikodym density is unchanged when both
		Ornstein--Uhlenbeck reference processes are started from
		$\nu=\rho^{ss}\mu$. Indeed, on the canonical path space,
		\[
		\mathrm dP^{*,\nu}_{[0,t]}
		=
		\rho^{ss}(X(0))\,\mathrm dP^{*,\mu}_{[0,t]},
		\qquad
		\mathrm dP^{rev,*,\nu}_{[0,t]}
		=
		\rho^{ss}(X(0))\,\mathrm dP^{rev,*,\mu}_{[0,t]}.
		\]
		Hence the common factor $\rho^{ss}(X(0))$ cancels in their
		Radon--Nikodym ratio, and \eqref{two-ou-density} remains valid for the
		common initial law $\nu$. We suppress the initial-law superscript below.
		
		\begin{align}
			\log\frac{dP^{rev,*}_{[0,t]}}{dP^*_{[0,t]}}(X(\cdot))&=\int_0^t \langle B^{-1}RX,B^{-1}[dX-AXds]\rangle_U \nonumber\\
			&-\frac{1}{2}\int_0^t\| B^{-1}RX\|_U^2ds.
		\end{align}
		
		Then combined with the expressions of $\log\frac{dP^{ss}_{[0,t]}}{dP^*_{[0,t]}}(X(\cdot))$ and $\log\frac{dP^{rev,ss}_{[0,t]}}{dP^{rev,*}_{[0,t]}}(X(\cdot))$ derived in the last subsection, one has 
		\begin{align}
			EP^{ss}(t)&=\log\frac{dP^{ss}_{[0,t]}}{dP^*_{[0,t]}}-\log\frac{dP^{rev,*}_{[0,t]}}{dP^*_{[0,t]}}-\log\frac{dP^{rev,ss}_{[0,t]}}{dP^{rev,*}_{[0,t]}}\nonumber\\
			&=\int_0^t \langle B^{-1}F(X),B^{-1}[dX-AXds]\rangle_U-\frac{1}{2}\int_0^t\| B^{-1}F(X)\|_U^2ds\nonumber\\
			&-\int_0^t \langle B^{-1}RX,B^{-1}[dX-AXds]\rangle_U \nonumber\\
			&+\frac{1}{2}\int_0^t\| B^{-1}RX\|_U^2ds\nonumber\\
			&-\int_0^t \langle B^{-1}[CD\log\rho^{ss}-F](X),B^{-1}[dX-(-A-CQ_\infty^{-1})Xds]\rangle_U \nonumber\\
			&+\frac{1}{2}\int_0^t\| B^{-1}[CD\log\rho^{ss}-F](X)\|_U^2ds.
		\end{align}
		
		Using $dX=[AX+F(X)]dt+BdW$, we obtain
		\begin{align}
			EP^{ss}(t)&=\int_0^t \langle B^{-1}F(X),B^{-1}F(X)ds+dW\rangle_U-\frac{1}{2}\int_0^t\| B^{-1}F(X)\|_U^2ds\nonumber\\
			&-\int_0^t \langle B^{-1}RX,B^{-1}F(X)ds+dW\rangle_U \nonumber\\
			&+\frac{1}{2}\int_0^t\| B^{-1}RX\|_U^2ds\nonumber\\
			&-\int_0^t \langle B^{-1}[CD\log\rho^{ss}-F](X),B^{-1}[F(X)-RX]ds+dW\rangle_U \nonumber\\
			&+\frac{1}{2}\int_0^t\| B^{-1}[CD\log\rho^{ss}-F](X)\|_U^2ds\nonumber\\
			&=\int_0^t \langle B^{-1}[2F-CD\log\rho^{ss}-RX],dW\rangle_U \nonumber\\
			&+\frac{1}{2}\int_0^t\| B^{-1}[2F-CD\log\rho^{ss}-RX]\|^2_Uds
		\end{align}
	\end{proof}
	
	\begin{theorem}[Characterizations of reversibility, nonsymmetric case]
		Under Hypotheses~\ref{hypo_1}--\ref{hypo_5}, the following statements are equivalent.
		\begin{enumerate}[label=\rm(\arabic*)]
			\item $X(t)$ is time reversible.
			\item The semigroup $P_t$ is symmetric in $L^2(H;\nu)$.
			\item The generator $K_2$ is symmetric.
			\item $epr=0$.
			\item $EP^{ss}(t)=0$ almost surely.
			\item $2F(x)-CD\log\rho^{ss}(x)=Rx$ for $\nu$-a.e.\ $x\in H$.
			\item There exists a measurable potential $U:H\to\mathbb R$, with $e^{-U}\in L^2(H,\mu)$ and $C^{1/2}DU\in L^2(H,\mu)$, such that $2F=-CDU+Rx$ in the Gaussian Sobolev sense.
		\end{enumerate}
	\end{theorem}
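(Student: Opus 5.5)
We follow the proof of Theorem~\ref{Thm_equi_sym}, replacing the symmetric-case adjoint by Theorem~\ref{adjoint_K_2} and the symmetric entropy production by the nonsymmetric formula just proved. The chain $(1)\Leftrightarrow(2)\Leftrightarrow(3)$ is again the classical Markov-process statement \cite{JQQ}: stationary time reversibility of $X$ is equivalent to symmetry of $P_t$ in $L^2(H,\nu)$. Since $\mathscr E_A(H)$ is a core for $K_2$, this is in turn equivalent to symmetry of $K_2$.

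\textbf{Step 1: $(3)\Leftrightarrow(6)$.} Theorem~\ref{adjoint_K_2} identifies $K_2^*=L_2^*+\langle CD\log\rho^{ss}-F,D\cdot\rangle$ on $\mathscr E_A(H)$. Theorem~\ref{adjoint_L_2} writes $L_2^*$ as $L_2$ with $A$ replaced by $A^r=A+R$, so on the core
\[
K_2^*\psi-K_2\psi=\langle Rx+CD\log\rho^{ss}-2F,D\psi\rangle .
\]
Symmetry of $K_2$ thus means this first-order pairing vanishes in $L^2(\nu)$ for all $\psi\in\mathscr E_A(H)$. Taking $\psi=\varphi_h$ gives $\langle Rx+CD\log\rho^{ss}-2F,h\rangle e^{i\langle h,x\rangle}=0$ $\nu$-a.e. for all $h\in D(A^*)$. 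Density of $D(A^*)$ in $H$ then yields $(6)$. The converse is immediate.

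\textbf{Step 2: $(4)\Leftrightarrow(5)\Leftrightarrow(6)$.} Here $epr$ is the $\nu$-weighted squared norm of $B^{-1}[2F-CD\log\rho^{ss}-Rx]$, which is finite by Hypothesis~\ref{hypo_5} together with Hypothesis~\ref{hypo_4}(iii). Since $B^{-1}$ is injective, $epr=0$ holds iff $(6)$ holds $\nu$-a.e. If $(6)$ holds, both integrands in the formula for $EP^{ss}(t)$ vanish along $\nu$-stationary paths, because $X(s)\sim\nu$ for every $s$ and Fubini applies. Conversely, $EP^{ss}=0$ a.s. gives $\mathbb E\,EP^{ss}(t)=t\cdot epr=0$, the stochastic integral being a square-integrable martingale.

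\textbf{Step 3: $(6)\Leftrightarrow(7)$.} For $(6)\Rightarrow(7)$, take $U=-\log\rho^{ss}$. Then $e^{-U}=\rho^{ss}\in L^2(\mu)$ by the log-Sobolev remark after Hypothesis~\ref{hypo_5}. The Sobolev chain rule gives $C^{1/2}DU=-C^{1/2}D\log\rho^{ss}$, and $(6)$ reads $2F=-CDU+Rx$.

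For $(7)\Rightarrow(6)$, set $g=e^{-U}$, so $CDg=(2F-Rx)g$. We show $\int_H K_2\varphi\, g\,\mathrm d\mu=0$ for $\varphi\in\mathscr E_A(H)$. Use Proposition~2.46 of \cite{Prato2004} in the form $\int L_2\varphi\, g\,\mathrm d\mu=\int\langle AQ_\infty D\varphi,Dg\rangle\,\mathrm d\mu$, and split $AQ_\infty=-\tfrac12 C-\tfrac12 RQ_\infty$ by Remark~\ref{rmk:AQauto}. The $C$-part gives $-\tfrac12\int\langle D\varphi,(2F-Rx)g\rangle\,\mathrm d\mu$, which cancels the $F$-term of $K_2\varphi$ and leaves
\[
\tfrac12\int\langle Rx,D\varphi\rangle g\,\mathrm d\mu-\tfrac12\int\langle RQ_\infty D\varphi,Dg\rangle\,\mathrm d\mu .
\]
By Gaussian integration by parts (Lemma~3.38 of \cite{Prato2004}), the second integral equals $\int\langle Rx,D\varphi\rangle g\,\mathrm d\mu$ plus terms involving $\operatorname{Tr}(RQ_\infty D^2\varphi)$. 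These terms vanish precisely because $RQ_\infty$ is antisymmetric: $RQ_\infty=-(2AQ_\infty+C)$, and the Lyapunov equation gives $RQ_\infty+(RQ_\infty)^*=0$. Combining, the two integrals cancel, so $K_{2,\mu}^*g=0$. Uniqueness of the invariant measure then forces $\rho^{ss}=g/Z_U$, and the chain rule returns $(6)$.

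\textbf{Main obstacle.} The delicate point is this last cancellation, i.e. justifying that the antisymmetric piece $\langle Rx,D\cdot\rangle$ annihilates $g\mu$ when $Q_\infty^{-1}$ is unbounded. I would verify it on the core using bounded pairings only: $RQ_\infty$ is trace class and antisymmetric, and $\langle Q_\infty^{-1}x,RQ_\infty h\rangle=\langle x,Rh\rangle$. This follows the standing interpretation of Section~2.
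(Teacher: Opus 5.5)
Your skeleton is the paper's. You get $(1)\Leftrightarrow(2)\Leftrightarrow(3)$ from the classical Markov argument and $(3)\Leftrightarrow(6)$ by comparing the drifts of $K_2$ and $K_2^*$ from Theorem~\ref{adjoint_K_2}; your test functions $\varphi_h$ just spell out the paper's one-line claim. You get $(4)\Leftrightarrow(5)\Leftrightarrow(6)$ from the $EP^{ss}$ formula, and take $U=-\log\rho^{ss}$ for $(6)\Rightarrow(7)$.

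The real difference is $(7)\Rightarrow(6)$. The paper verifies the pointwise identity \eqref{eq:stat-identity} for $g=e^{-U}$ using three facts: skewness of $RQ_\infty$, $\dv_\mu(Rx)=0$, and $\dv_\mu(RQ_\infty DU)=\langle Rx,DU\rangle$. You instead compute $\int_H K_2\varphi\,g\,\mathrm d\mu$ directly. That route is legitimate and somewhat cleaner. Only the first-order relation $CDg=(2F-Rx)g$ enters, which is exactly what (7) provides. The paper's identity contains $\dv_\mu(AQ_\infty D\log g)$, i.e.\ second derivatives of $U$, and has to be argued for smooth $U$ and extended by density. In your version the paper's last two facts collapse into one Gaussian integration by parts with a vanishing trace term.

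As written, however, your cancellation rests on two identities that each have the wrong sign, and the errors compensate.

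First, the form identity is
\[
\int_H L_2\varphi\,g\,\mathrm d\mu=\int_H\langle AQ_\infty Dg,D\varphi\rangle\,\mathrm d\mu=\int_H\langle (AQ_\infty)^*D\varphi,Dg\rangle\,\mathrm d\mu .
\]
This is the paper's $\int L_2\phi\,\psi\,\mathrm d\mu=\int\langle Q_\infty D\psi,A^*D\phi\rangle\,\mathrm d\mu$ from the proof of Theorem~\ref{adjoint_L_2}. So $AQ_\infty$ acts on the gradient of the function \emph{not} hit by $L_2$; your version is the formula for $\int_H\varphi\,L_2g\,\mathrm d\mu$. A check in $\mathbb R^2$: take $Q_\infty=I$, $C=2I$, $A=\begin{pmatrix}-1&a\\-a&-1\end{pmatrix}$, $\varphi=x_1$, $g=x_2$. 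The left side is $a$, while your formula gives $-a$. Since $(AQ_\infty)^*=-\tfrac12C+\tfrac12RQ_\infty$, the correct leftover is
\[
\tfrac12\int_H\langle Rx,D\varphi\rangle g\,\mathrm d\mu+\tfrac12\int_H\langle RQ_\infty D\varphi,Dg\rangle\,\mathrm d\mu .
\]

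Second, the integration by parts reads
\[
\int_H\langle Rx,D\varphi\rangle g\,\mathrm d\mu
=\int_H\Tr(RQ_\infty D^2\varphi)\,g\,\mathrm d\mu+\int_H\langle D\varphi,RQ_\infty Dg\rangle\,\mathrm d\mu
=-\int_H\langle RQ_\infty D\varphi,Dg\rangle\,\mathrm d\mu,
\]
with a minus sign, again because $RQ_\infty$ is skew.

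With both corrections the leftover vanishes and $K_{2,\mu}^*g=0$ follows, so your conclusion stands. Correcting only one of them would leave $\int_H\langle Rx,D\varphi\rangle g\,\mathrm d\mu$, which is nonzero in general. Since the skew part is exactly what is new in this case, you need to track the transposes here.

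Your closing remark has the same defect. From $Q_\infty R^*=-RQ_\infty$ one gets $\langle Q_\infty^{-1}x,RQ_\infty h\rangle=-\langle Rx,h\rangle$. This agrees with your $\langle x,Rh\rangle$ only if $R$ itself is skew. Only $RQ_\infty$ is skew; $R$ is not, for example in the two-temperature model with $T_1\neq T_2$.
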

	
	\begin{proof}
		The implications $(1)\Leftrightarrow(2)\Leftrightarrow(3)$ are classical \cite{JQQ}. By Theorem~\ref{adjoint_K_2}, $K_2$ is symmetric if and only if the drifts of $K_2$ and $K_2^*$ coincide, that is, if and only if $2F-CD\log\rho^{ss}=Rx$. This gives $(3)\Leftrightarrow(6)$. The rate $epr$ of the preceding theorem is the weighted $L^2(\nu)$-norm of $B^{-1}[2F-CD\log\rho^{ss}-Rx]$, so $(4)\Leftrightarrow(6)$. Finally, under $(6)$ the integrand of $EP^{ss}$ vanishes identically, whence $EP^{ss}(t)=0$ almost surely, while conversely $\mathbb E\,EP^{ss}(t)=t\,epr$. This gives $(4)\Leftrightarrow(5)$.
		
		$(6)\Rightarrow(7)$. Take $U=-\log\rho^{ss}$. Then $e^{-U}=\rho^{ss}$ has a square-integrable $C$-gradient by Theorem~\ref{Invariant measure & density} together with Hypothesis~\ref{hypo_5}, and $(6)$ reads $2F=CD\log\rho^{ss}+Rx=-CDU+Rx$.
		
		$(7)\Rightarrow(6)$. Let $g=e^{-U}$. We verify that $g$ satisfies the stationary identity
		\begin{equation}
			\langle AQ_\infty w,w\rangle+\dv_\mu(AQ_\infty w)+\langle F,w\rangle+\dv_\mu F=0,
			\qquad w:=D\log g,
			\label{eq:stat-identity}
		\end{equation}
		in the same sense as $\rho^{ss}$ does (the corollary following Theorem~\ref{adjoint_K_2}), the computation being performed for smooth $U$ and extended by density, all terms in any case being read through the standing interpretation of Section~2. Running the integration-by-parts computation of Theorem~\ref{adjoint_K_2} with $g$ in place of $\rho^{ss}$, this time in $L^2(H,\mu)$, produces a potential term equal, up to sign, to the left-hand side of \eqref{eq:stat-identity}, so that \eqref{eq:stat-identity} is equivalent to the weak invariance relation $\int_H(K_2\varphi)\,g\,\mathrm d\mu=0$ for every $\varphi\in\mathscr E_A(H)$. Three facts enter, all valid under Hypothesis~\ref{hypo_4}(ii) with Remark~\ref{rmk:AQauto}.
		
		First, by the Lyapunov equation,
		\[
		AQ_\infty+\tfrac12C=-\tfrac12RQ_\infty
		\quad\text{with }RQ_\infty\text{ skew-adjoint},
		\]
		so that $\langle(AQ_\infty+\tfrac12C)y,y\rangle=0$ for every $y\in H$. Second, $\dv_\mu(Rx)=0$. Indeed $\mu$ is invariant for both Ornstein--Uhlenbeck processes with drifts $A$ and $A^{r}=A+R$ and common diffusion $C$, so for every $\varphi\in\mathscr E_A(H)$,
		\[
		\int_H\langle Rx,D\varphi\rangle\,\mathrm d\mu
		=\int_H\langle A^{r}x,D\varphi\rangle\,\mathrm d\mu-\int_H\langle Ax,D\varphi\rangle\,\mathrm d\mu=0,
		\]
		since each integral equals $-\tfrac12\int_H\Tr[CD^2\varphi]\,\mathrm d\mu$ through the corresponding generator. Third, $\dv_\mu(RQ_\infty DU)=\langle Rx,DU\rangle$. For $\varphi\in\mathscr E_A(H)$, Gaussian integration by parts applied to the field $R^{*}DU$ gives
		\[
		\int_H\varphi\,\langle Rx,DU\rangle\,\mathrm d\mu
		=-\int_H\langle RQ_\infty DU,D\varphi\rangle\,\mathrm d\mu,
		\]
		the trace term $\Tr(Q_\infty R^{*}D^{2}U)=-\Tr(RQ_\infty D^{2}U)$ vanishing because $RQ_\infty$ is skew-adjoint and $D^{2}U$ symmetric.
		
		Now substitute $w=-DU$ and $F=-\tfrac12CDU+\tfrac12Rx$ into \eqref{eq:stat-identity}. Collecting terms,
		\begin{align}
			&\langle AQ_\infty DU,DU\rangle-\dv_\mu(AQ_\infty DU)
			+\tfrac12\langle CDU,DU\rangle-\tfrac12\langle Rx,DU\rangle
			-\tfrac12\dv_\mu(CDU)+\tfrac12\dv_\mu(Rx)\nonumber\\
			&=\langle(AQ_\infty+\tfrac12C)DU,DU\rangle
			-\dv_\mu\big((AQ_\infty+\tfrac12C)DU\big)
			-\tfrac12\langle Rx,DU\rangle\nonumber\\
			&=\tfrac12\dv_\mu(RQ_\infty DU)-\tfrac12\langle Rx,DU\rangle=0,
		\end{align}
		by the three facts above. Hence $Z_U^{-1}g\,\mu$, with $Z_U=\int_Hg\,\mathrm d\mu$, is an invariant probability measure. The uniqueness of the invariant measure yields $\rho^{ss}=Z_U^{-1}g$, so that $CD\log\rho^{ss}=-CDU$, and then $2F-CD\log\rho^{ss}=Rx$ by $(7)$. This is $(6)$.
	\end{proof}
	
	\section{Transient entropy production and its thermodynamic decomposition}
	\label{sec:transient}
	
	\subsection{When the associated Ornstein--Uhlenbeck process is symmetric}
	
	We now turn to the transient regime, allowing initial
	laws that are absolutely continuous with respect to $\mu$.
	The analysis in this section requires the $L^2(\mu)$
	It\^o formulas established in
	Theorems~\ref{thm:dynkin-ito} and
	\ref{thm:parabolic-ito}
	(Appendix~\ref{Appendix B}).
	To ensure their applicability, we impose the following
	additional assumption.
	
	\begin{itemize}
		\item[\textbf{(H-form)}]
		The Dirichlet form associated with the Kolmogorov
		operator of the SPDE \eqref{semilinearSPDE}
		on $L^2(H,\mu)$
		(see \eqref{eq:dirichlet-form} in
		Appendix~\ref{Appendix B})
		is closable and quasi-regular, and its closure
		is properly associated with the mild solution.
	\end{itemize}
	
	This assumption is standard in the theory of
	generalized Dirichlet forms
	\cite{MaRoeckner1992,Stannat1999}.
	It is needed only for the transient analysis
	in this section; the steady-state results of
	Section~3 rely on Girsanov transformations
	and do not require (H-form).
	
	We begin by establishing the existence and
	regularity of the time-dependent density
	\[
	\rho_t=\frac{d\nu_t}{d\mu},
	\]
	where $\nu_t$ denotes the law of $X(t)$ and
	$\mu$ is the invariant Gaussian measure of
	the associated Ornstein--Uhlenbeck process.
		
	The construction of $\rho_t$ depends on whether Hypothesis~\ref{hypo_3}\ref{hypo_3.a} or Hypothesis~\ref{hypo_3}\ref{hypo_3.b} is assumed.
	
	Under Hypothesis~\ref{hypo_3}\ref{hypo_3.a}, the strong Feller property is unavailable in general, so we construct $\rho_t$ through the Kolmogorov forward equation. We assume
	\[
	\nu_0\ll\mu,
	\qquad
	\rho_0=\frac{d\nu_0}{d\mu}\in C_b^2(H),
	\]
	and define $\rho_t$ as the solution of
	\[
	\partial_t\rho_t
	=
	K_{2,\mu}^*\rho_t,
	\qquad
	\rho_t|_{t=0}=\rho_0.
	\]
	Appendix~\ref{Appendix C} establishes the existence and uniqueness of a mild solution in \(L^2(H,\mu)\). If, in addition, the initial density satisfies \(\rho_0\in D(K_{2,\mu}^*)\), this solution is strict.
	Writing
	\[
	\nu_t(dx)=\rho_t(x)\mu(dx),
	\]
	one has
	\[
	\int_H P_tf(x)\rho_0(x)\,d\mu
	=
	\int_H f(x)\rho_t(x)\,d\mu,
	\qquad
	f\in B_b(H),
	\]
	showing that $\nu_t$ is precisely the law of $X(t)$.
	
	Under Hypothesis~\ref{hypo_3}\ref{hypo_3.b}, the same density is obtained directly from the transition probabilities. Indeed, $P_t$ is strong Feller and irreducible by \cite[Theorems~3.11 and~3.13]{Prato2004}, hence regular by \cite[Proposition~4.1.1 and Theorem~4.2.1]{Prato_Zabczyk_1996}. Therefore, for every initial distribution $\nu_0$,
	\[
	\nu_t:=P_t(\nu_0,\cdot)
	\]
	satisfies
	\[
	\nu_t\ll\nu\ll\mu,
	\]
	where $\nu$ is the unique invariant measure. Consequently,
	\[
	\rho_t=\frac{d\nu_t}{d\mu}
	\]
	exists for every $t\geqslant 0$. Since $P_{t,\mu}^*$ is the adjoint semigroup of $P_t$ in $L^2(H,\mu)$, the density $\rho_t$ also satisfies the Kolmogorov forward equation
	\[
	\partial_t\rho_t
	=
	K_{2,\mu}^*\rho_t,
	\qquad
	\rho_t|_{t=0}=\rho_0.
	\]
	
	Hence, under either Hypothesis~\ref{hypo_3}\ref{hypo_3.a} or Hypothesis~\ref{hypo_3}\ref{hypo_3.b}, the distribution of the solution admits the representation
	\[
	\nu_t(dx)=\rho_t(x)\mu(dx),
	\]
	where $\rho_t$ is the unique solution of the Kolmogorov forward equation.
	
	The following proposition summarizes their regularity properties.
	\begin{proposition}[Existence and regularity of the transient density]
		\label{prop:transient-reg}
		
		Assume Hypotheses~\ref{hypo_1}--\ref{hypo_3} and $AQ_\infty\in L(H)$, and that
		\[
		F\in C_b^2(H;H)
		\]
		has finite divergence with respect to $\mu$. Suppose furthermore that
		\[
		\nu_0\ll\mu,\qquad
		\rho_0=\frac{d\nu_0}{d\mu}\in C_b^2(H),
		\qquad
		\rho_0>0.
		\]
		
		Then the transient density $\rho_t$ satisfies
		
		\begin{enumerate}[label=(\roman*)]
			
			\item
			$\rho_t>0$ $\mu$-almost everywhere;
			
			\item
			$\rho_t\in D(K_{2,\mu}^*)$ when $L_2$ is symmetric or $C=\mathrm{Id}$, and is the mild solution of Theorem~\ref{thm:forward-W22}(ii) in general;
			
			\item
			if moreover $\rho_0\in D(K_{2,\mu}^*)$, then $\rho_t$ is the strict solution of Theorem~\ref{thm:forward-W22}(ii) and $\partial_t\rho_t=K_{2,\mu}^*\rho_t\in L^2(H,\mu)$ for every $t\geqslant0$.
			
		\end{enumerate}
		
	\end{proposition}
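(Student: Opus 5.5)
The plan is to treat the three items separately, relying on the forward-equation theory of Appendix~\ref{Appendix C} (Theorem~\ref{thm:forward-W22}) for the existence, mild/strict dichotomy and regularity, and on the duality $\int_H P_tf\,\rho_0\,\mathrm d\mu=\int_H f\rho_t\,\mathrm d\mu$ for positivity. First I would record that $K_{2,\mu}^*$ acts on the core as
\[
K_{2,\mu}^*\varphi=L_2^*\varphi-\langle F,D\varphi\rangle-\dv_\mu F\,\varphi .
\]
This follows from Lemma~\ref{lem:rho-ss-stationary}, with $L_2$ replaced by $L_2^*$ of Theorem~\ref{adjoint_L_2} in the nonsymmetric case; this is where $AQ_\infty\in L(H)$ is used. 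The perturbation $-\langle F,D\cdot\rangle-\dv_\mu F$ is first order with coefficients in $C_b^1$, since $F\in C_b^2$ and $\dv_\mu F$ is bounded. Hence $K_{2,\mu}^*$ generates a $C_0$-semigroup on $L^2(H,\mu)$ as a perturbation of the OU generator $L_2^*$. This perturbation is relatively bounded with respect to the form of $L_2^*$, with bound zero, because $|\langle F,D\varphi\rangle|$ is controlled by $|C^{1/2}D\varphi|$ via Hypothesis~\ref{hypo_3}\ref{hypo_3.compatibility}. Then $\rho_t=e^{tK_{2,\mu}^*}\rho_0$ is the mild solution of Theorem~\ref{thm:forward-W22}(ii), which gives the general part of (ii).

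For (iii), if $\rho_0\in D(K_{2,\mu}^*)$, standard semigroup theory gives $\rho_t\in D(K_{2,\mu}^*)$, $t\mapsto\rho_t$ is $C^1$ in $L^2$, and $\partial_t\rho_t=K_{2,\mu}^*\rho_t=e^{tK_{2,\mu}^*}K_{2,\mu}^*\rho_0\in L^2(H,\mu)$. For the stronger claim in (ii) that $\rho_t\in D(K_{2,\mu}^*)$ for every $t>0$ without assuming it at $t=0$, I would show that the semigroup is analytic in the two cases listed. When $L_2$ is symmetric, $L_2$ is self-adjoint and nonpositive, hence generates an analytic semigroup, and a perturbation that is form-bounded with relative bound $0$ preserves analyticity (sectorial forms). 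When $C=\mathrm{Id}$, the Gaussian Sobolev regularity of Proposition~\ref{prop:reg}(i) and the first-order bound $|\langle F,D\varphi\rangle|\le\|F\|_\infty|D\varphi|$ give a sectorial form as well, since the antisymmetric part of $L_2$ is controlled by the bounded $AQ_\infty$. Analyticity then yields $e^{tK^*}(L^2)\subset D(K^*)$ for $t>0$.

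For (i), positivity $\rho_t\ge0$ follows from duality, since $P_t$ is positivity preserving. Strict positivity is the main obstacle. Under Hypothesis~\ref{hypo_3}\ref{hypo_3.b} it follows from the strong Feller property and irreducibility cited above: $\nu_t$ is equivalent to $\nu$, and $\nu$ is equivalent to $\mu$ by uniqueness and irreducibility. In the trace-class regime I would instead use a Girsanov comparison with the OU process. By Proposition~\ref{prop:girsanov}, the path law started from $\nu_0$ is equivalent to the OU path law $P^*$ started from $\nu_0$, with an a.s.\ positive density $\mathcal E_t$. Taking time-$t$ marginals, $\nu_t$ is equivalent to the law of $Z(t)$ with $Z(0)\sim\nu_0$. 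Because $\rho_0>0$, we have $\nu_0\sim\mu$, so this law is equivalent to $\mu R_t=\mu$. Hence $\rho_t>0$ $\mu$-a.e. The delicate point is justifying that equivalence of path laws passes to marginals; this is immediate because marginals are push-forwards, and a set that is null for one marginal gives a null cylinder event.
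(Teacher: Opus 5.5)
Your argument is correct. For (ii)--(iii) it uses the same mechanism as the paper: the mild/strict dichotomy from semigroup theory and analytic smoothing. The paper simply cites Theorem~\ref{thm:forward-W22}(ii) for these, while you supply the reasons it leaves implicit. The first-order term is bounded by $\|B^{-1}F\|_\infty|C^{1/2}D\varphi|$, so it is form-bounded with bound zero relative to the symmetric part. In the case $C=\mathrm{Id}$, the sector condition holds because $J=AQ_\infty+\tfrac12 I$ is bounded. Proposition~\ref{prop:reg}(i) plays no role there, since it concerns $\rho^{ss}$ in the symmetric case.

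For (i) your route genuinely differs from the paper's. The paper attributes $\rho_t>0$ to the positivity preservation of $P^*_{t,\mu}$, which by itself only gives $\rho_t\geqslant0$. Your Girsanov comparison, $\nu_t\sim\nu_0R_t\sim\mu R_t=\mu$, uses bounded $B^{-1}F$, $\rho_0>0$ and the invariance of $\mu$, and so actually delivers strict positivity. It works verbatim in both regimes of Hypothesis~\ref{hypo_3}. You can therefore drop the separate regime~(b) argument, whose step ``$\mu\ll\nu$ by uniqueness and irreducibility'' is not justified as written.

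One step needs a different justification. In the nonsymmetric trace-class case the Ornstein--Uhlenbeck form need not satisfy a sector condition, so form-boundedness alone does not give generation of $K_{2,\mu}^*$. Instead, use the operator bound
\[
\|\langle F,D\varphi\rangle\|_{L^2(\mu)}\leqslant\|B^{-1}F\|_\infty\|C^{1/2}D\varphi\|_{L^2(\mu)}\leqslant\delta\|L_2^*\varphi\|+c_\delta\|\varphi\|,
\]
which follows from $\|C^{1/2}D\varphi\|^2_{L^2(\mu)}=-2\int_H\varphi\,L_2^*\varphi\,\mathrm d\mu$. Combine it with the quasi-dissipativity $\int_H(K_{2,\mu}^*\varphi)\varphi\,\mathrm d\mu\leqslant\tfrac12\|\dv_\mu F\|_\infty\|\varphi\|^2$, obtained from Gaussian integration by parts. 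The dissipative perturbation theorem for generators of contraction semigroups then gives generation.
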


	\begin{proof}
		(i) is the positivity preservation of $P_{t,\mu}^*$ (Theorem~\ref{thm:forward-W22}(ii)). (ii) is the smoothing statement of Theorem~\ref{thm:forward-W22}(ii). (iii) follows from the strict-solution statement of Theorem~\ref{thm:forward-W22}(ii) under the additional assumption $\rho_0\in D(K_{2,\mu}^*)$.
	\end{proof}
	
	Before introducing the trajectory-wise thermodynamic quantities, we assume that the transient density enjoys the same regularity and admissibility properties as the stationary density in Theorem~\ref{Invariant measure & density}, uniformly in time.
	
	\begin{hypo}[Transient admissibility]
		\label{hypo_6}
		Assume the hypotheses of Proposition~\ref{prop:transient-reg}. Fix $T>0$.
		We further assume that
		\[
		\log\rho\in L^\infty\bigl(0,T;D(K_{2,\mu})\bigr),
		\qquad
		\partial_t\log\rho_t
		=
		\frac{K_{2,\mu}^*\rho_t}{\rho_t}
		\in
		L^2\bigl((0,T)\times H,\mathrm dt\otimes\mu\bigr),
		\]
		where $D(K_{2,\mu})$ is equipped with its graph norm, and that the
		thermodynamic force has at most linear growth,
		\[
		\operatorname*{ess\,sup}_{0<u<T}\,
		\big|C^{-1/2}\big(2F-CD\log\rho_u\big)(x)\big|
		\leqslant a+b|x|,
		\qquad \mu\text{-a.e. }x\in H.
		\]
	\end{hypo}
	
	\noindent This hypothesis is independent of Proposition~\ref{prop:transient-reg} and is imposed as a standing regularity assumption on the density curve. It is deliberately strong, and we do not derive it from more primitive conditions on the coefficients, since even in the linear Gaussian case the quadratic logarithmic densities meet the same Cameron--Martin obstruction discussed in Appendix~\ref{Appendix B}.
	
	We define the entropy of a state $x$ at time $t$, relative to the Gaussian reference measure $\mu$, by
	\[
	s_t(x):=-\log\rho_t(x).
	\]
	
	\begin{remark}[Why entropy is defined relative to $\mu$]
		\label{rmk:absolute-entropy}
		
		In finite dimensions, writing the probability density as
		\[
		p_t(x)=\rho_t(x)\mu(x),
		\]
		the stochastic entropy with respect to the Lebesgue measure is
		\[
		-\log p_t(x)
		=
		-\log\rho_t(x)-\log\mu(x).
		\]
		Thus defining the entropy through $\rho_t$ amounts to removing the Gaussian contribution coming from the reference measure $\mu$.
		
		In infinite dimensions this Gaussian contribution,
		formally
		$\frac12\langle x,Q_\infty^{-1}x\rangle$,
		is infinite for $\mu$-almost every $x$, so the analogue of the finite-dimensional entropy cannot be defined as a finite random variable.
		Nevertheless, this does not affect the thermodynamic theory developed below.
		Indeed, the Gaussian contribution cancels identically in the total entropy production, so all thermodynamic balance laws remain finite under our assumptions.
		In finite dimensions one has $s_t=s_t^{\mathrm{Leb}}+\log\mu$, so the Gaussian reference contributes $\Delta\log\mu$ to the balance: it cancels in the total entropy production but shifts the heat term, which is therefore a reference-relative heat functional rather than the reservoir heat of a detailed-balance system.
	\end{remark}
	
	\begin{theorem}\label{thm_ep}
		Under Hypotheses~\ref{hypo_1}--\ref{hypo_3}$,$ Hypothesis~\ref{hypo_6} imposed on $(0,2T)$, and \textbf{(H-form)}, the transient entropy production along a sample path $X(t)$ admits the decomposition
		$$EP(t)=\Delta s(t)+HD_{\mathrm{rel}}(t),$$
		in which $\Delta s(t)=s_t(X(t))-s_0(X(0))$ is the change of the state entropy,
		\begin{align}
			EP(t)&=\int_0^t\left\langle B^{-1}\left[2F-CD\log\rho_s\right](X(s)), B^{-1}dX(s)\right\rangle_U-\frac{\partial \log\rho_s}{\partial s}(X(s))ds\nonumber\\
			&+\frac{1}{2}\int_0^t \Tr\left(D\left[2F-CD\log\rho_s\right](X(s))\right)ds,
		\end{align} and
		\begin{align}
			HD_{\mathrm{rel}}(t)&=\int_0^t 2\left\langle B^{-1}F(X(s)), B^{-1}dX(s)\right\rangle_U+\dv F(X(s))\,ds\nonumber\\
			&=2\int_0^t\langle B^{-1}F(X),B^{-1}\circ dX(s)\rangle_U
		\end{align}
		is the heat dissipation.
		
		Moreover,
		\begin{equation}
			\mathbb{E} EP(t)=\frac{1}{2}\int_0^t\int_H\langle B^{-1}[2F-CD\log\rho_s](x),B^{-1}[2F-CD\log\rho_s](x)\rangle_U d\nu_s ds\geqslant 0,
		\end{equation}
		and $\mathbb{E} EP(t)=0$ if and only if $2F=CD\log\rho_s$ for almost every $s\leqslant t$, which implies $\nu_s=\nu$ and the SPDE is time-reversible.
	\end{theorem}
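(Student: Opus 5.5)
The plan is to obtain the decomposition by applying the parabolic Dynkin--Fukushima formula (Theorem~\ref{thm:parabolic-ito}) to the time-dependent function $(s,x)\mapsto -\log\rho_s(x)$ along the mild solution. Hypothesis~\ref{hypo_6} and \textbf{(H-form)} are exactly what make $\log\rho\in L^\infty(0,T;D(K_{2,\mu}))$ with $\partial_t\log\rho\in L^2$ an admissible test function. Formally, It\^o's formula gives
\[
\Delta s(t)=-\int_0^t\Big[\partial_s\log\rho_s+\tfrac12\Tr(CD^2\log\rho_s)\Big](X(s))\,ds-\int_0^t\langle D\log\rho_s(X(s)),dX(s)\rangle .
\]
The next step is to rewrite $\langle D\log\rho_s,dX\rangle=\langle B^{-1}CD\log\rho_s,B^{-1}dX\rangle_U$. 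Then $\Delta s$ plus
\[
HD_{\mathrm{rel}}=\int_0^t\Big(2\langle B^{-1}F,B^{-1}dX\rangle_U+\dv F\,ds\Big)
\]
reproduces the expression for $EP(t)$. This uses $\frac12\Tr D[2F-CD\log\rho_s]=\dv F-\frac12\Tr(CD^2\log\rho_s)$, which is a purely algebraic identity once each term is read through the standing interpretation of Section~2. The Stratonovich form of $HD_{\mathrm{rel}}$ follows from the It\^o--Stratonovich correction: formally the correction is $\frac12\Tr(C\,C^{-1}DF)=\frac12\dv F$, doubled. The unbounded pieces ($AX$, $\Tr(CD^2)$, $C^{-1}$) do not exist individually, so I would state the identity as the complete generator identity of the Dynkin--Fukushima formula. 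There $K_{2,\mu}\log\rho_s$ appears as a single $L^2$ object, and the cancellation is checked on the core $\mathscr E_A(H)$ via the bounded pairing with $AQ_\infty=-\tfrac12C$ (symmetric case).

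For the mean, substitute $dX=(AX+F)ds+B\,dW$ into $EP$. This gives a martingale part $\int\langle B^{-1}[2F-CD\log\rho_s],dW\rangle_U$, which has zero mean by the linear growth in Hypothesis~\ref{hypo_6} and finite moments of $X$. The remaining drift part has expectation
\[
\int_0^t\!\!\int_H\Big[\langle C^{-1}(2F-CD\log\rho_s),F\rangle+\langle C^{-1}(2F-CD\log\rho_s),Ax\rangle-\partial_s\log\rho_s+\tfrac12\Tr D(2F-CD\log\rho_s)\Big]\rho_s\,d\mu\,ds.
\]
I would handle three of these terms as follows.
\begin{itemize}
\item $\int\partial_s\log\rho_s\,\rho_s\,d\mu=\frac{d}{ds}\int\rho_s\,d\mu=0$, which uses the $L^2$ bound on $\partial_t\log\rho$ to differentiate under the integral.
\item The terms containing $Ax$ and the trace are combined by Gaussian integration by parts (Lemma~3.38 of \cite{Prato2004}) against $\rho_s\mu$, yielding $-\int\langle \dv_\mu\text{-type}\rangle$ contributions.
\item Using the Lyapunov relation $Ax\sim-\frac12CQ_\infty^{-1}x$ in the symmetric case, the pairing $\langle C^{-1}G,Ax\rangle$ combines with the divergence to give $\dv_\mu G$, where $G=F-\frac12CD\log\rho_s$. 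Integrating $\dv_\mu G\,\rho_s$ by parts gives $-\int\langle G,D\log\rho_s\rangle\rho_s$.
\end{itemize}
Collecting terms then leaves $\frac12\int\|B^{-1}(2F-CD\log\rho_s)\|_U^2\,d\nu_s$. This computation mirrors the stationary one in Theorem~\ref{thm:EPss-sym}(ii), and it is cleaner to recognize it as the statement that the forward equation $\partial_s\rho_s=K^*_{2,\mu}\rho_s$ (Lemma~\ref{lem:rho-ss-stationary} and Appendix~\ref{Appendix C}) kills the non-quadratic terms. This step, the rigorous justification that all drift terms other than the quadratic form cancel in expectation, is the main obstacle. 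I would do it first for $\varphi\in\mathscr E_A(H)$ and pass to $\log\rho_s$ by the graph-norm regularity in Hypothesis~\ref{hypo_6}. Imposing Hypothesis~\ref{hypo_6} on $(0,2T)$ supplies the margin needed for this approximation, uniformly in $s\le t$.

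The equality case then follows. If $\mathbb E\,EP(t)=0$, then $2F=CD\log\rho_s$ $\nu_s$-a.e. for a.e. $s$, and since $\rho_s>0$ (Proposition~\ref{prop:transient-reg}(i)) this holds $\mu$-a.e. By the argument of Theorem~\ref{Thm_equi_sym}, $(7)\Rightarrow(6)$, with $U=-\log\rho_s$, the measure $\rho_s\mu$ is invariant. Uniqueness of the invariant measure (Theorem~\ref{Invariant measure & density}) then gives $\nu_s=\nu$, and condition (6) of Theorem~\ref{Thm_equi_sym} gives time reversibility.
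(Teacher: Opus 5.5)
Your proposal is correct. Step 1 is the paper's own argument: the parabolic Dynkin--Fukushima formula for $-\log\rho_s$, the rewriting $\langle D\log\rho_s,\mathrm dX\rangle=\langle B^{-1}CD\log\rho_s,B^{-1}\mathrm dX\rangle_U$, and the splitting of $-D\log\rho_s$ into the thermodynamic force minus $2C^{-1}F$.

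The mean computation takes a different route from the paper. The paper substitutes the forward equation $\partial_t\rho_t=K_{2,\mu}^*\rho_t$ \emph{pointwise} into the trace term, obtaining
\[
\Tr D\big[2F-CD\log\rho_t\big]=-2\big\langle B^{-1}[2F-CD\log\rho_t],B^{-1}\big[Ax+\tfrac12CD\log\rho_t\big]\big\rangle_U-2\,\partial_t\log\rho_t .
\]
Both $\partial_t\log\rho_t$ terms then integrate to zero against $\rho_t\mu$, and the remaining pairings collapse algebraically to the square.

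You instead integrate by parts against $\rho_s\mu$. With $G=F-\tfrac12CD\log\rho_s$ and $A=-\tfrac12CQ_\infty^{-1}$, the combination $2\langle C^{-1}G,Ax\rangle+\Tr DG$ equals $\dv_\mu G$. Then $\int_H\dv_\mu G\,\rho_s\,\mathrm d\mu=-\int_H\langle G,D\log\rho_s\rangle\rho_s\,\mathrm d\mu$, which gives the same quadratic form; I checked this.

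Your computation never uses the forward equation beyond mass conservation $\int_H\partial_s\rho_s\,\mathrm d\mu=0$:
\begin{itemize}
\item for the $F$ part it is Lemma~3.38 of \cite{Prato2004};
\item for the $\log\rho_s$ part it is the symmetric Dirichlet-form identity $\int_H L_2u\,v\,\mathrm d\mu=-\tfrac12\int_H\langle CDu,Dv\rangle\,\mathrm d\mu$ with $u=\log\rho_s$, $v=\rho_s$.
\end{itemize}
So your remark that the forward equation ``kills the non-quadratic terms'' describes the paper's route, not yours.

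Your route only involves the bounded pairings of the paper's standing interpretation in Section~2. It never needs $\Tr[CD^2\rho_t]$ or $\langle x,A^*D\rho_t\rangle$ to exist separately, so it is closer to a rigorous argument. The paper's route is a pointwise manipulation of the Fokker--Planck equation. It mirrors the finite-dimensional derivation and is the form reused in the steady-state consistency remark.

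Two smaller points.
\begin{itemize}
\item Your equality-case argument, via the $(7)\Rightarrow(6)$ computation of Theorem~\ref{Thm_equi_sym} and uniqueness of $\nu$, supplies something the paper's proof does not spell out. To invoke Theorem~\ref{Thm_equi_sym} for reversibility you need Hypothesis~\ref{hypo_5}. It is automatic here: the stationary force vanishes, and $B^{-1}F$ is bounded under either regime of Hypothesis~\ref{hypo_3}.
\item Imposing Hypothesis~\ref{hypo_6} on $(0,2T)$ plays no role in this theorem; only $(0,t)$ is used. It is there for the restarted density $\tilde\rho_s=\rho_{t+s}$ in Theorem~\ref{thm:ep-rn}, not as a margin for your approximation from $\mathscr E_A(H)$.
\end{itemize}
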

	
	\begin{proof}
		
		\noindent\emph{Step 1: Pathwise decomposition.}
		
		By Hypothesis~\ref{hypo_6}, the parabolic
		Dynkin--Fukushima formula
		(Appendix~\ref{Appendix B}, Theorem~\ref{thm:parabolic-ito}) applies to
		$s_t=-\log\rho_t$, $\mathbb P_{\nu_0}$-a.s., and gives
		\begin{align}
			ds_t(X(t))&=-\frac{\partial \log\rho_t}{\partial t}(X(t))dt\nonumber\\
			&-\left\langle D \log\rho_t(X(t)), dX(t)\right\rangle-\frac{1}{2}\Tr\left(D^2 \log\rho_t(X(t))C\right)dt.
		\end{align}
		
		By the definition of $C$ and $B^*$, we have for any $x$ and $y\in H$,
		$$\langle x, y\rangle=\langle C^{-1}Cx,y\rangle=\langle (B^{-1})^*B^{-1}Cx,y\rangle=\langle B^{-1}Cx, B^{-1}y\rangle_U.$$
		
		Then we decompose $-D\log\rho_t$ as  $C^{-1}\left[2F-CD\log\rho_t\right]-2C^{-1}F$, so that
		\begin{equation}
			ds_t(X(t))=ep(t)-hd_{\mathrm{rel}}(t),
		\end{equation}
		in which
		\begin{align}
			ep(t)&=\left\langle B^{-1}\left[2F-CD\log\rho_t\right](X(t)), B^{-1}dX(t)\right\rangle_U-\frac{\partial \log\rho_t}{\partial t}(X(t))dt\nonumber\\
			&+\frac{1}{2}\Tr\left(DC^{-1}\left[2F-CD\log\rho_t\right](X(t))C\right)dt\nonumber\\
			&=\left\langle B^{-1}\left[2F-CD\log\rho_t\right](X(t)), B^{-1}dX(t)\right\rangle_U-\frac{\partial \log\rho_t}{\partial t}(X(t))dt\nonumber\\
			&+\frac{1}{2}\Tr\left(D\left[2F-CD\log\rho_t\right](X(t))\right)dt,
		\end{align}

		and
		\begin{align}
			hd_{\mathrm{rel}}(t)&=2\left\langle B^{-1}F(X(t)), B^{-1}dX(t)\right\rangle_U+\Tr\left(DC^{-1}F(X(t))C\right)dt\nonumber\\
			&=2\left\langle B^{-1}F(X(t)), B^{-1}dX(t)\right\rangle_U+\dv F(X(t))\,dt.
		\end{align}
		Here $ep(t)$ and $hd_{\mathrm{rel}}(t)$ denote the infinitesimal increments of $EP(t)$ and $HD_{\mathrm{rel}}(t)$.
		
		\noindent\emph{Step 2: Mean entropy production.}
		
		$$EP(t)=\int_0^t ep(s)ds,$$
		hence
		\begin{align}
			\mathbb{E} EP(t)&=\mathbb{E}\int_0^t\left\langle B^{-1}\left[2F-CD\log\rho_s\right](X(s)), B^{-1}(AX(s)+F(X(s)))\right\rangle_U ds\nonumber\\
			&+\frac{1}{2}\mathbb{E}\int_0^t\Tr\left(D\left[2F-CD\log\rho_s\right](X(s))\right)ds,
		\end{align}
		with the pairing $\langle AX(s),\cdot\rangle$ read as the generator
		action of the standing interpretation in Section~2. Since $X(s)\sim\nu_s=\rho_s\mu$ and
		\begin{align}
			\mathbb{E}\int_0^t\frac{\partial \log\rho_s}{\partial s}(X(s))ds&=\int_0^t\int_H \frac{\partial \log\rho_s}{\partial s}(x)\rho_sd\mu ds\nonumber\\
			&=\int_0^t\left[\int_H \frac{\partial \rho_s}{\partial s}(x)d\mu \right]ds=\int_0^t \left[\frac{\partial }{\partial s}\int_H \rho_sd\mu\right]ds\nonumber\\
			&=\int_0^t\left(\frac{\partial }{\partial s} 1\right)ds=0.
		\end{align}
		
		Then
		\begin{align}
			\mathbb{E} EP(t)&=\int_0^t\int_H\left\langle B^{-1}\left[2F-CD\log\rho_s\right](x), B^{-1}(Ax+F(x))\right\rangle_U \rho_s d\mu ds\nonumber\\
			&+\frac{1}{2}\int_0^t\int_H\Tr\left(D\left[2F-CD\log\rho_s\right](x)\right)\rho_sd\mu ds,
		\end{align}

		Furthermore, since $\frac{\partial \rho_t}{\partial t}=K_{2,\mu}^*\rho_t=\frac{1}{2}\Tr[CD^2\rho_t]+\langle x,A^* D\rho_t\rangle-[\langle F,D\rho_t\rangle+\dv_\mu F\rho_t]$,
		then
		\begin{align}
			&\Tr\left(D\left[2F-CD\log\rho_t\right]\right)\nonumber\\
			&=\Tr\left(\left[\frac{D(2F\rho_t-CD\rho_t)}{\rho_t}-\frac{2F\rho_t-CD\rho_t}{\rho_t}D\log\rho_t\right]\right)\nonumber\\
			&=-2\left\langle C^{-1}\left[2F-CD\log\rho_t\right](x), [Ax+\frac{1}{2}CD\log\rho_t]\right\rangle-2\frac{\partial \log\rho_t}{\partial t}\nonumber\\
			&=-2\left\langle B^{-1}\left[2F-CD\log\rho_t\right](x), B^{-1}[Ax+\frac{1}{2}CD\log\rho_t]\right\rangle_U-2\frac{\partial \log\rho_t}{\partial t}.
		\end{align}
		
		Therefore,
		\begin{align}
			\frac{d}{dt}\mathbb{E} EP(t)&=\int_H\left\langle B^{-1}\left[2F-CD\log\rho_t\right](x), B^{-1}(Ax+F(x))\right\rangle_U \rho_t d\mu\nonumber\\
			&-\frac{1}{2}\int_H\left\langle B^{-1}\left[2F-CD\log\rho_t\right], B^{-1}[CD\log\rho_t+2Ax]\right\rangle_U\rho_td\mu\nonumber\\
			&=\frac{1}{2}\int_H\left\langle B^{-1}\left[2F-CD\log\rho_t\right], B^{-1}(2F-CD\log\rho_t)\right\rangle_U (x) \rho_t d\mu.
		\end{align}
	\end{proof}
	
	\begin{remark}[Robust It\^o form of the heat dissipation]
		\label{rmk:ito-definition}
		The display for $hd_{\mathrm{rel}}$ in Step 1 of the proof of Theorem~\ref{thm_ep} is a formal Stratonovich line integral: neither the term involving $AX_t$ nor the trace needs to exist separately. An equivalent, robust definition starts from the It\^o sum
		\[
		HD_{\mathrm{rel}}(t)
		=2\int_0^t\big\langle B^{-1}F,\mathrm dW\big\rangle_U
		+\int_0^t\big\{2\|B^{-1}F\|_U^2+\dv_\mu F\big\}(X(s))\,\mathrm ds,
		\]
		whose terms are individually well defined under the standing assumptions: the stochastic integral by the admissibility of $F$, the square term by the force bounds, and the divergence by \eqref{finite divergence}. The parabolic Dynkin--Fukushima formula (Theorem~\ref{thm:parabolic-ito}, hence \textbf{(H-form)}) shows that the two definitions coincide along the trajectory. In the nonsymmetric case the same remark applies with $2F$ replaced by $2F-RX$ in the stochastic integral and the drift term $\tfrac12\|B^{-1}(2F-RX)\|_U^2+\dv_\mu F$, and likewise to $Q_{hk}$ with the stationary force in place of $2F$. These conversions rest on the identity $A+A^{r}=-CQ_\infty^{-1}$, the vanishing pairing $\langle Q_\infty^{-1}x,Rx\rangle=0$, and the partial-trace convention for $R$ introduced in Section~\ref{sec:ts-nonsym}.
	\end{remark}
	
	\begin{theorem}[Path-space representation and fluctuation theorem]\label{thm:ep-rn}
		Under Hypotheses~\ref{hypo_1}--\ref{hypo_3}$,$ Hypothesis~\ref{hypo_6} imposed on $(0,2T)$, \textbf{(H-form)}, and \textbf{(H-rev)}, let
		$P_{[0,t]}^{\nu_0}$ denote the law of the forward trajectory starting
		from $\nu_0$, let $P_{[0,t]}^{\nu_t}$ denote the law of the same SPDE
		started from $\nu_t$, the terminal law of the forward experiment,
		i.e.\ the backward experiment in the sense of \cite{Cro2,Seifert05},
		and let $P_{[0,t]}^{rev,\nu_t}:=(\Theta_t)_\#P_{[0,t]}^{\nu_t}$ be
		the law of the time-reversed trajectories. Then
		$$EP(t)=\log\frac{dP_{[0,t]}^{\nu_0}}{dP_{[0,t]}^{rev,\nu_t}},$$
		and consequently $\mathbb{E} e^{-EP(t)}=1$ for any $t>0$.
	\end{theorem}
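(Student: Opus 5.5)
The plan is to compute both path laws relative to a common Ornstein--Uhlenbeck reference law via Proposition~\ref{prop:girsanov} and Lemma~\ref{lem:girsanov-growth}, take the quotient, and then identify the result with $EP(t)$ using the pathwise decomposition of Theorem~\ref{thm_ep}. We work in the symmetric setting of this subsection, so $A^r=A$ and a single reference law suffices. Let $P^{*,\mu}_{[0,t]}$ be the stationary OU law with linear part $A$ started from $\mu$. For any initial law $\lambda=g\mu$, the OU law started from $\lambda$ has density $g(X(0))$ with respect to $P^{*,\mu}_{[0,t]}$. This lets us handle the two different initial laws by a ratio of initial densities.

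\emph{Step 1 (forward law).} The forward equation has drift $F$. Hypothesis~\ref{hypo_5}, or equivalently the growth bound on $B^{-1}F$, makes Lemma~\ref{lem:girsanov-growth} applicable, and it gives
\[
\log\frac{dP^{\nu_0}_{[0,t]}}{dP^{*,\mu}_{[0,t]}}
=\log\rho_0(X(0))+\int_0^t\langle B^{-1}F,B^{-1}[\mathrm dX-AX\,\mathrm ds]\rangle_U-\tfrac12\int_0^t\|B^{-1}F\|_U^2\,\mathrm ds .
\]

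\emph{Step 2 (reversed law).} This is where the wording of the statement matters. Apply \textbf{(H-rev)} with $\lambda=\nu_t$, whose density curve is $\rho^{\lambda}_u=\rho_{t+u}$. The law $(\Theta_t)_\#P^{\nu_t}_{[0,t]}$ is then that of the mild solution with drift $CD\log\rho^{\lambda}_{t-s}-F$ and initial law $\lambda_t$. Read off the curve $\rho_{t+u}$, the time-indexed densities entering are $\rho_u$ for $u\in(0,2t)$. This is exactly why Hypothesis~\ref{hypo_6} is imposed on $(0,2T)$: it supplies the uniform linear growth bound needed for Lemma~\ref{lem:girsanov-growth} and the admissibility of this time-inhomogeneous drift. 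Girsanov then gives the analogous expression, with the initial density of $\lambda_t$ evaluated at $X(0)$ and integrand $B^{-1}[CD\log\rho^{\lambda}_{t-s}-F]$.

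\emph{Step 3 (quotient and comparison with $EP$).} Subtract the two logarithms and substitute $\mathrm dX-AX\,\mathrm ds=F\,\mathrm ds+B\,\mathrm dW$. As in the proof of Theorem~\ref{thm:EPss-sym}(iii), this produces:
\begin{enumerate}[label=(\alph*)]
\item the Itô integral $\int\langle B^{-1}[2F-CD\log\rho_\cdot],\mathrm dW\rangle_U$;
\item the quadratic term $\tfrac12\int\|B^{-1}[2F-CD\log\rho_\cdot]\|_U^2\,\mathrm ds$;
\item the logarithmic ratio of initial densities.
\end{enumerate}
Separately, rewrite $EP(t)$ from Theorem~\ref{thm_ep} in this Itô form. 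Use the parabolic Dynkin--Fukushima formula (Theorem~\ref{thm:parabolic-ito}, available by \textbf{(H-form)} and Hypothesis~\ref{hypo_6}) applied to $\log\rho_s(X(s))$. This converts the Stratonovich-type trace and $\partial_s\log\rho_s$ terms into the boundary term $\log\rho_t(X(t))-\log\rho_0(X(0))$ plus Itô pieces, in the manner of Remark~\ref{rmk:ito-definition}. Matching the two expressions term by term, the initial-density ratio together with the Dynkin boundary term must reproduce $\Delta s(t)$, and the integrands must agree.

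\emph{Step 4 (fluctuation theorem).} Once the identity $EP(t)=\log\frac{dP^{\nu_0}_{[0,t]}}{dP^{rev,\nu_t}_{[0,t]}}$ holds, $\mathbb E e^{-EP(t)}=1$ follows immediately. The expectation equals $\mathbb E_{P^{\nu_0}}\big[\frac{dP^{rev,\nu_t}_{[0,t]}}{dP^{\nu_0}_{[0,t]}}\big]=1$. This uses the mutual absolute continuity, which Lemma~\ref{lem:girsanov-growth} provides in both directions.

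\emph{Main obstacle.} I expect the main obstacle to be Step 3, namely the exact bookkeeping of time indices and initial laws. The \textbf{(H-rev)} reversal of the SPDE started from $\nu_t$ involves the shifted density curve $\rho_{t+\cdot}$. One must check carefully that, after the Dynkin--Fukushima rewriting, the integrands and boundary terms indexed by this shifted curve align with those in $EP(t)$, which are indexed by $\rho_s$. I would first attempt the check in the linear Gaussian case, using Proposition~\ref{prop:Hrev-linear}, where every density is explicit. Only then would I run the general argument, with all pairings involving $AX$ read through the bounded form $\langle AQ_\infty D\cdot,D\cdot\rangle$ of Section~2.
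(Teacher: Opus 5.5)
Your architecture is the paper's: Girsanov for the forward law and for the \textbf{(H-rev)}-identified reversed law against an Ornstein--Uhlenbeck reference, with the two initial laws handled by density ratios. However, Step~3 cannot be closed the way you describe. With $\lambda=\nu_t$ the reversed drift is $CD\log\rho_{2t-s}-F$ and the reversed initial law is $\nu_{2t}$. After substituting $\mathrm dX-AX\,\mathrm ds=F\,\mathrm ds+B\,\mathrm dW$, the quotient equals
\[
\log\frac{\rho_0}{\rho_{2t}}(X(0))+\int_0^t\big\langle B^{-1}[2F-CD\log\rho_{2t-s}](X(s)),\mathrm dW\big\rangle_U+\frac12\int_0^t\big\|B^{-1}[2F-CD\log\rho_{2t-s}](X(s))\big\|_U^2\,\mathrm ds .
\]
Its integrands involve $\rho_{2t-s}$, not $\rho_s$, and its boundary term involves $\rho_{2t}(X(0))$. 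Neither occurs in $EP(t)$ or in the Dynkin--Fukushima expansion of $\log\rho_s(X(s))$. In particular the martingale parts $\int\langle B^{-1}[2F-CD\log\rho_{2t-s}],\mathrm dW\rangle_U$ and $\int\langle B^{-1}[2F-CD\log\rho_{s}],\mathrm dW\rangle_U$ genuinely differ. So a term-by-term matching in which ``the integrands must agree'' is impossible outside stationarity; only there does $\rho_{2t-s}=\rho_s=\rho^{ss}$ and your argument reduce to Theorem~\ref{thm:EPss-sym}(iii). What you flag as bookkeeping is the point where a cancellation, not an alignment, is needed.

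The missing step is to apply Theorem~\ref{thm:parabolic-ito} also to $\Psi(s,x):=\log\tilde\rho_{t-s}(x)=\log\rho_{2t-s}(x)$ along the forward path. This, not only the Girsanov growth bound, is why Hypothesis~\ref{hypo_6} is imposed on $(0,2T)$. Compute the drift of $\Psi$ from the forward equation $\partial_u\rho_u=K_{2,\mu}^*\rho_u$ of the restarted curve; in the symmetric case $(\partial_s+K_{2,\mu})\Psi=2\langle F,D\Psi\rangle-\tfrac12|C^{1/2}D\Psi|^2+\dv_\mu F$. Solve this identity for $\int_0^t\langle C^{1/2}D\Psi,\mathrm dW\rangle$ and substitute into the quotient. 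Every $\Psi$-dependent term then cancels except $\Psi(t,X(t))-\Psi(0,X(0))=\log\rho_t(X(t))-\log\rho_{2t}(X(0))$. This removes $\rho_{2t}$ from the initial ratio and leaves
\[
\log\rho_0(X(0))-\log\rho_t(X(t))+2\int_0^t\langle B^{-1}F,\mathrm dW\rangle_U+\int_0^t\big\{2\|B^{-1}F\|_U^2+\dv_\mu F\big\}(X(s))\,\mathrm ds ,
\]
which is $\Delta s(t)+HD_{\mathrm{rel}}(t)$ in the form of Remark~\ref{rmk:ito-definition}, hence $EP(t)$ by Theorem~\ref{thm_ep}. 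The paper does the same thing by writing $\log\frac{\rho_0}{\tilde\rho_t}(X(0))$ as $-\Delta\log\rho_s+\Delta\log\tilde\rho_{t-s}$, expanding both, and using the Kolmogorov equation for $\tilde\rho$ to cancel the $\tilde\rho$-terms.

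A smaller point on Step~4: equivalence of the two path laws also needs $\rho_0>0$ and $\rho_{2t}>0$ (Proposition~\ref{prop:transient-reg}(i)). Lemma~\ref{lem:girsanov-growth} only compares laws with a common initial distribution.
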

	
	\begin{proof}
		Denote the distribution of $P_{[0,t]}^{\nu_t}$ at time $s$ as $\tilde{\nu}_s=\tilde{\rho}_s\mu$, so we have
		$$\frac{\partial \tilde{\rho}_s}{\partial s}=K_{2,\mu}^*\tilde{\rho}_s=\frac{1}{2}\Tr[CD^2\tilde{\rho}_s]+\langle x,A^* D\tilde{\rho}_s\rangle-[\langle F,D\tilde{\rho}_s\rangle+\dv_\mu F\tilde{\rho}_s],$$
		By uniqueness for the forward equation, $\tilde{\rho}_s=\rho_{t+s}$, so
		that the regularity of $\log\tilde{\rho}$ and the admissibility of the
		drift $CD\log\tilde{\rho}_{t-s}-F=CD\log\rho_{2t-s}-F$ are covered by the
		assumptions on $(0,2T)$. Then as its time-reversed measure, denote the distribution of $P_{[0,t]}^{rev,\nu_t}$ at time $s$ as $\nu^{rev}_s=\rho^{rev}_s\mu=\tilde{\nu}_{t-s}$,
		\begin{align}
			&\frac{\partial \rho^{rev}_s}{\partial s}=\frac{\partial \tilde{\rho}_{t-s}}{\partial s}=-K_{2,\mu}^*\tilde{\rho}_{t-s}\nonumber\\
			&=\frac{1}{2}\Tr[CD^2\tilde{\rho}_{t-s}]+\langle x,A^* D\tilde{\rho}_{t-s}\rangle-[\langle CD\log\tilde{\rho}_{t-s}-F,D\tilde{\rho}_{t-s}\rangle\nonumber\\
			&+\dv_\mu (CD\log\tilde{\rho}_{t-s}-F)\tilde{\rho}_{t-s}].
		\end{align}
		By \textbf{(H-rev)} applied to the restarted dynamics ($\lambda=\nu_t$, $\rho^{\lambda}=\tilde{\rho}$), $P_{[0,t]}^{rev,\nu_t}$ is the law of the SPDE with the time-dependent nonlinear drift $CD\log\tilde{\rho}_{t-s}-F$, with final distribution $\nu_t$.
		
		Denote $P^{*,\nu_0}_{[0,t]}$ as the law of the corresponding OU process with initial distribution $\nu_0$, and $P^{*,\nu^{rev}_0}_{[0,t]}$ as the law of the corresponding OU process with initial distribution $\nu^{rev}_0=\tilde\nu_t$.
		
		The Girsanov formula gives
		\begin{align}
			\log\frac{dP^{\nu_0}_{[0,t]}}{dP^{*,\nu_0}_{[0,t]}}(X(\cdot))&=\int_0^t \langle B^{-1}F(X),B^{-1}[dX-AXds]\rangle_U-\frac{1}{2}\int_0^t\| B^{-1}F(X)\|_U^2ds.\nonumber
		\end{align}
		and
		\begin{align}
			\log\frac{dP^{rev,\nu_t}_{[0,t]}}{dP^{*,\nu^{rev}_0}_{[0,t]}}(X(\cdot))&=\int_0^t \langle B^{-1}[CD\log\tilde{\rho}_{t-s}-F](X),B^{-1}[dX-AXds]\rangle_U \nonumber\\
			&-\frac{1}{2}\int_0^t\| B^{-1}[CD\log\tilde{\rho}_{t-s}-F](X)\|_U^2ds.\nonumber
		\end{align}
		We know
		\begin{align}
			&\log\frac{dP^{*,\nu_0}}{dP^{*,\nu^{rev}_0}_{[0,t]}}(X(\cdot))=\log\frac{\rho_0}{\rho^{rev}_0}(X(0))=\log\frac{\rho_0}{\tilde{\rho}_t}(X(0))\nonumber\\
			&=-\log\frac{\rho_t(X(t))}{\rho_0(X(0))}+\log\frac{\tilde{\rho}_0(X(t))}{\tilde{\rho}_t(X(0))}=\int_0^t-d\log \rho_s(X(s))+d\log \tilde{\rho}_{t-s}(X(s)),\nonumber
		\end{align}
		due to $\tilde{\rho}_0=\rho_t$.
		
		Then the $L^2(\mu)$ It\^o formula (Theorem~\ref{thm:parabolic-ito}) gives, $\mathbb P_{\nu_0}$-a.s.,
		\begin{align}
			d\log \rho_s(X(s))&=\frac{\partial \log \rho_s}{\partial s}(X(s))ds\nonumber\\
			&+\left\langle D \log\rho_s(X(s)), dX(s)\right\rangle+\frac{1}{2}\Tr\left(D^2 \log\rho_s(X(s))C\right)ds,\nonumber
		\end{align}
		and
		\begin{align}
			d\log \tilde{\rho}_{t-s}(X(s))&=\frac{\partial \log \tilde{\rho}_{t-s}}{\partial s}(X(s))ds\nonumber\\
			&+\left\langle D \log \tilde{\rho}_{t-s}(X(s)), dX(s)\right\rangle+\frac{1}{2}\Tr\left(D^2 \log \tilde{\rho}_{t-s}(X(s))C\right)ds.\nonumber
		\end{align}
		Furthermore, since
		\begin{align}
			&\frac{\partial \log\tilde{\rho}_{t-s}}{\partial s}\nonumber\\
			&=-\frac{1}{2}\Tr[CD^2\tilde{\rho}_{t-s}]/\tilde{\rho}_{t-s}-\langle x,A^* D\log\tilde{\rho}_{t-s}\rangle+[\langle F,D\log\tilde{\rho}_{t-s}\rangle+\dv_\mu F],
		\end{align}
		we have
		\begin{align}
			&d\log\frac{dP_{[0,t]}^{\nu_0}}{dP_{[0,t]}^{rev,\nu_t}}\nonumber\\
			&=d\log\frac{dP^{\nu_0}_{[0,t]}}{dP^{*,\nu_0}_{[0,t]}}-d\log\frac{dP^{rev,\nu_t}_{[0,t]}}{dP^{*,\nu^{rev}_0}_{[0,t]}}+d\log\frac{dP^{*,\nu_0}_{[0,t]}}{dP^{*,\nu^{rev}_0}_{[0,t]}}\nonumber\\
			&=\langle B^{-1}F(X),B^{-1}[dX-AXds]\rangle_U-\frac{1}{2}\| B^{-1}F(X)\|_U^2ds\nonumber\\
			&-\frac{\partial \log \rho_s}{\partial s}(X(s))ds-\left\langle D \log\rho_s(X(s)), dX(s)\right\rangle-\frac{1}{2}\Tr\left(D^2 \log\rho_s(X(s))C\right)ds\nonumber\\
			&+\langle C^{-1}F,dX-AXds\rangle+\frac{1}{2}\langle C^{-1}F, F\rangle ds+\dv_\mu Fds\nonumber\\
			&=\langle 2C^{-1}F(X),dX-AXds\rangle-\frac{\partial \log \rho_s}{\partial s}(X(s))ds-\left\langle D \log\rho_s(X(s)), dX(s)\right\rangle\nonumber\\
			&-\frac{1}{2}\Tr\left(D^2 \log\rho_s(X(s))C\right)ds+\dv Fds-\langle F, Q_\infty^{-1}X\rangle ds.
		\end{align}
		Finally, integrating over $[0,t]$ and using the symmetric-case identity
		$A=-\frac{1}{2}CQ_\infty^{-1}$ gives
		\begin{align}
			&\log\frac{dP_{[0,t]}^{\nu_0}}{dP_{[0,t]}^{rev,\nu_t}}\nonumber\\
			&=\int_0^t\left\langle B^{-1}\left[2F-CD\log\rho_s\right](X(s)), B^{-1}dX(s)\right\rangle_U-\frac{\partial \log\rho_s}{\partial s}(X(s))ds\nonumber\\
			&+\frac{1}{2}\Tr\left(D\left[2F-CD\log\rho_s\right](X(s))\right)ds=EP(t).
		\end{align}
	\end{proof}
	
	\begin{remark}[Consistency with the steady-state theory]
		When $\rho_t\equiv\rho^{ss}$, the transient entropy production reduces exactly to the steady-state entropy production obtained in Section~3. Indeed, since
		$\partial_t\rho^{ss}=0$
		and $\rho^{ss}$ satisfies the stationary Fokker--Planck equation
		$K_{2,\mu}^*\rho^{ss}=0$,
		the divergence term simplifies to
		\[
		\Tr D\!\left(2F-CD\log\rho^{ss}\right)
		=
		-2\left\langle
		B^{-1}(2F-CD\log\rho^{ss}),
		B^{-1}\!\left(Ax+\frac12CD\log\rho^{ss}\right)
		\right\rangle_U,
		\]
		which yields exactly the expression of Theorem~\ref{thm:EPss-sym}.
	\end{remark}

	We define the housekeeping heat by
	\begin{align}
		Q_{hk}(t)&=\int_0^t\left\langle B^{-1}\left[2F-CD\log\rho^{ss}\right](X(s)), B^{-1}dX(s)\right\rangle_U\nonumber\\
		&+\frac{1}{2}\int_0^t\Tr\left(D\left[2F-CD\log\rho^{ss}\right](X(s))\right)ds\nonumber\\
		&=\int_0^t\langle B^{-1}[2F-CD\log\rho^{ss}](X),B^{-1}\circ dX\rangle_U,
	\end{align}
	and define the free-energy dissipation by
	\begin{align}
		F_d(t)&=EP(t)-Q_{hk}(t)\nonumber\\
		&=\int_0^t\langle B^{-1}(CD\log\rho^{ss}-CD\log\rho_s)(X),B^{-1}\circ dX\rangle_U-\int_0^t\frac{\partial \log\rho_s}{\partial s}(X(s))ds.\nonumber
	\end{align}
	
	\begin{proposition}[Positivity of housekeeping heat and free-energy dissipation]
		Under the assumptions of Theorem \ref{thm_ep} and Hypothesis~\ref{hypo_5}, the mean housekeeping heat is
		\begin{equation}
			\mathbb{E}Q_{hk}(t)=\frac{1}{2}\int_0^t\int_H\| B^{-1}[2F-CD\log\rho^{ss}](x)\|^2_U d\nu_sds.
		\end{equation}
		
		The mean free energy dissipation is
		\begin{equation}
			\mathbb{E}F_d(t)=\frac{1}{2}\int_0^t\int_H\| B^{-1}[CD\log\rho^{ss}-CD\log\rho_s](x)\|^2_U d\nu_s ds.
		\end{equation}
		$\mathbb{E}Q_{hk}(t)\geqslant 0$ and $\mathbb{E}Q_{hk}(t)=0$ if and only if $2F\equiv CD\log\rho^{ss}$. $\mathbb{E}F_d(t)\geqslant 0$ and $\mathbb{E}F_d(t)=0$ if and only if $\rho_t\equiv\rho^{ss}$.
	\end{proposition}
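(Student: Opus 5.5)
The plan is to mirror Step 2 of the proof of Theorem~\ref{thm_ep}, with the time-dependent force $2F-CD\log\rho_s$ replaced by the stationary force $2F-CD\log\rho^{ss}$. We then obtain the free-energy part by subtraction.

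\emph{Housekeeping heat.} Write $\Phi:=2F-CD\log\rho^{ss}$. The It\^o form of $Q_{hk}$ (Remark~\ref{rmk:ito-definition}) is a martingale $\int_0^t\langle B^{-1}\Phi(X),\mathrm dW\rangle_U$ plus a drift. This martingale is square integrable by Hypothesis~\ref{hypo_5} and the finite moments of $\nu_s$, so it has zero mean. Hence
\[
\frac{\mathrm d}{\mathrm dt}\mathbb E Q_{hk}(t)=\int_H\Big[\langle B^{-1}\Phi,B^{-1}(Ax+F)\rangle_U+\tfrac12\Tr D\Phi\Big]\rho_t\,\mathrm d\mu .
\]
Here the $Ax$ pairing is read through the bounded form $\int\langle AQ_\infty D\cdot,D\cdot\rangle\,\mathrm d\mu$, as in Section~2. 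The key input is the stationary identity already used in the consistency remark after Theorem~\ref{thm:ep-rn}. Since $K_{2,\mu}^*\rho^{ss}=0$ (Lemma~\ref{lem:rho-ss-stationary}), that remark gives
\[
\Tr D\Phi=-2\langle B^{-1}\Phi,B^{-1}(Ax+\tfrac12CD\log\rho^{ss})\rangle_U .
\]
This identity is pointwise, so it may be integrated against $\rho_t\,\mathrm d\mu$ rather than $\rho^{ss}\,\mathrm d\mu$. Substituting it, the $Ax$ terms cancel and the integrand becomes $\langle B^{-1}\Phi,B^{-1}(F-\tfrac12CD\log\rho^{ss})\rangle_U=\tfrac12\|B^{-1}\Phi\|_U^2$. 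Integrating in time gives the stated formula for $\mathbb EQ_{hk}(t)$.

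\emph{Free-energy dissipation.} By definition $\mathbb EF_d=\mathbb EEP-\mathbb EQ_{hk}$, and Theorem~\ref{thm_ep} gives $\mathbb EEP$. Set $a:=B^{-1}(2F-CD\log\rho_s)$ and $b:=B^{-1}\Phi$, so that $a-b=B^{-1}(CD\log\rho^{ss}-CD\log\rho_s)$. Then
\[
\tfrac12\|a\|^2-\tfrac12\|b\|^2=\tfrac12\|a-b\|^2+\langle a-b,b\rangle_U .
\]
It therefore remains to show that the cross term vanishes:
\[
\int_H\langle B^{-1}C(D\log\rho^{ss}-D\log\rho_s),B^{-1}\Phi\rangle_U\,\rho_s\,\mathrm d\mu=\int_H\langle D\rho^{ss}\rho_s/\rho^{ss}-D\rho_s,\Phi\rangle\,\mathrm d\mu=0 .
\]
Write $h_s=\rho_s/\rho^{ss}$. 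The integrand equals $-\rho^{ss}\langle Dh_s,\Phi\rangle$. Gaussian integration by parts (\cite[Lemma~3.38]{Prato2004}) moves the derivative onto $\rho^{ss}\Phi$. The result vanishes because $\dv_\mu(\rho^{ss}\Phi)$, with the linear $Ax$ part included, is exactly the statement $K_{2,\mu}^*\rho^{ss}=0$ written as a divergence of the stationary probability current. Equivalently, this is the weak identity of Theorem~\ref{thm_generator}(i), tested with $\varphi=h_s$ and $\psi=1$ and compared with the same identity for the reversed drift.

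This cross-term cancellation is the main obstacle. The function $h_s$ is not in $\mathscr E_A(H)$, so the weak identity has to be extended by density. The extension uses Hypothesis~\ref{hypo_6} ($\log\rho\in L^\infty(0,T;D(K_{2,\mu}))$), Proposition~\ref{prop:reg}, and the $L^p$ integrability of $\rho^{ss}$ from the remark after Hypothesis~\ref{hypo_5}. These controls justify the product $h_s\rho^{ss}$ and the convergence of the pairings.

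\emph{Positivity and equality cases.} Nonnegativity is immediate from the squared norms. $\mathbb EQ_{hk}=0$ forces $\Phi=0$ $\nu_s$-a.e. for a.e. $s$. Since $\nu_s\sim\mu\sim\nu$ by Proposition~\ref{prop:transient-reg}(i), this gives $2F=CD\log\rho^{ss}$. $\mathbb EF_d=0$ forces $C^{1/2}D\log(\rho_s/\rho^{ss})=0$ a.e. Injectivity of $C$ (Hypothesis~\ref{hypo_2}(iii)) and irreducibility of the Gaussian Sobolev gradient (a function with zero Malliavin gradient is constant) then make $\rho_s/\rho^{ss}$ constant, and the normalization gives $\rho_s=\rho^{ss}$. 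Conversely, both quantities vanish trivially in these cases.
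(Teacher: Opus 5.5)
Your proposal is correct and follows the paper's proof. You obtain $\mathbb E Q_{hk}$ by substituting the stationary pointwise identity for $\Tr D\Phi$ so that the $Ax$ terms cancel, and you obtain $\mathbb E F_d$ by subtraction from $\mathbb E\,EP$, with the cross term removed by Gaussian integration by parts. Your identity $\dv_\mu(\rho^{ss}\Phi)=0$ is just the paper's relation $\dv_\mu(F-\tfrac12CD\log\rho^{ss})=-\langle D\log\rho^{ss},F-\tfrac12CD\log\rho^{ss}\rangle$ rewritten, and your treatment of the equality cases fills in what the paper leaves implicit (there you only need $\nu_s\sim\mu$ and $\nu\ll\mu$, not $\mu\sim\nu$).
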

	
	\begin{proof}
		Since
		\begin{align}
			&\Tr\left(D\left[2F-CD\log\rho^{ss}\right]\right)\nonumber\\
			&=-2\left\langle B^{-1}\left[2F-CD\log\rho^{ss}\right](x), B^{-1}[Ax+\frac{1}{2}CD\log\rho^{ss}]\right\rangle_U,
		\end{align}
		the housekeeping heat can be expressed as
		\begin{align}
			&Q_{hk}(t)\nonumber\\
			&=\int_0^t \langle B^{-1}[2F-CD\log\rho^{ss}](X),dW\rangle_U +\int_0^t \langle B^{-1}[2F-CD\log\rho^{ss}](X), B^{-1}[AX+F(X)]\rangle ds\nonumber\\
			&-\int_0^t\left\langle B^{-1}\left[2F-CD\log\rho^{ss}\right](X(s)), B^{-1}[AX(s)+\frac{1}{2}CD\log\rho^{ss}]\right\rangle_Uds\nonumber\\
			&=\int_0^t \langle B^{-1}[2F-CD\log\rho^{ss}](X),dW\rangle_U \nonumber\\
			&+\frac{1}{2}\int_0^t\| B^{-1}[2F-CD\log\rho^{ss}](X(s))\|^2_U ds.
		\end{align}
		
		Hence the mean housekeeping heat is
		\begin{equation}
			\mathbb{E}Q_{hk}(t)=\frac{1}{2}\int_0^t\int_H\| B^{-1}[2F-CD\log\rho^{ss}](x)\|^2_U d\nu_sds.
		\end{equation}
		
		On the other hand,
		\begin{align}
			&\mathbb{E}F_d(t)=\mathbb{E} EP(t)-\mathbb{E} Q_{hk}(t)\nonumber\\
			&=\frac{1}{2}\int_0^t\int_H\langle B^{-1}[CD\log\rho^{ss}-CD\log\rho_s](x),B^{-1}[4F-CD\log\rho^{ss}-CD\log\rho_s](x)\rangle_U d\nu_s ds.\nonumber
		\end{align}
		
		Expanding $|a+b|^2-|b|^2=|a|^2+2\langle a,b\rangle$,
		with
		\[
		a=B^{-1}(CD\log\rho^{ss}-CD\log\rho_s),
		\qquad
		b=B^{-1}(2F-CD\log\rho^{ss}),
		\]
		it therefore suffices to prove that the cross term vanishes:
		$$\int_H\langle B^{-1}[CD\log\rho^{ss}-CD\log\rho_s](x),B^{-1}[4F-2CD\log\rho^{ss}](x)\rangle_U d\nu_s=0,$$
		i.e.
		$$\int_H\langle D\log\frac{\rho^{ss}}{\rho_s},F-\frac{1}{2}CD\log\rho^{ss}\rangle (x)d\nu_s=0.$$
		
		To prove this identity we use the stationary Fokker--Planck equation for
		$\rho^{ss}$ together with the integration-by-parts formula.
		
		According to Lemma 3.38 in \cite{Prato2004},
		\begin{align}
			&\int_H\langle D\log\rho_s,F-\frac{1}{2}CD\log\rho^{ss}\rangle (x)\rho_sd\mu\nonumber\\
			&=\int_H \langle D\rho_s,F-\frac{1}{2}CD\log\rho^{ss}\rangle (x)d\mu\nonumber\\
			&=-\int_H \dv_\mu (F-\frac{1}{2}CD\log\rho^{ss}) \rho_sd\mu\nonumber\\
			&=-\int_H [\dv(F-\frac{1}{2}CD\log\rho^{ss}) -\langle Q_\infty^{-1}x, F-\frac{1}{2}CD\log\rho^{ss}\rangle]\rho_sd\mu.
		\end{align}
		
		Using the stationary Fokker--Planck equation
		$$\frac{1}{2}\Tr[CD^2\rho^{ss}]+\langle x,A^* D\rho^{ss}\rangle-[\dv F\rho^{ss}+\langle F, D\rho^{ss}\rangle-\langle Q_\infty^{-1}x,F(X)\rangle\rho^{ss}]=0,$$
		together with
		\[
		A=-\frac12CQ_\infty^{-1}
		\]
		in the symmetric Ornstein--Uhlenbeck case,
		after a straightforward computation we obtain
		\begin{align}
			\dv(F-\frac{1}{2}CD\log\rho^{ss})
			&=\langle Q_\infty^{-1}x,F(x)-\frac{1}{2}CD\log\rho^{ss}\rangle+\langle D\log\rho^{ss}, -F(x)+\frac{1}{2}CD\log\rho^{ss}\rangle.\nonumber
		\end{align}
		
		Therefore
		\begin{align}
			&\int_H\langle D\log\rho_s,F-\frac{1}{2}CD\log\rho^{ss}\rangle (x)\rho_sd\mu\nonumber\\
			&=-\int_H [\dv(F-\frac{1}{2}CD\log\rho^{ss}) -\langle Q_\infty^{-1}x, F-\frac{1}{2}CD\log\rho^{ss}\rangle]\rho_sd\mu\nonumber\\
			&=\int_H \langle D\log\rho^{ss}, F-\frac{1}{2}CD\log\rho^{ss}\rangle\rho_sd\mu.
		\end{align}
		
		Hence the cross term vanishes, proving the desired expression for $\mathbb EF_d(t)$.
	\end{proof}
	
	\begin{remark}[Clausius inequality]
		Since $EP(t)=\Delta s(t)+HD_{\mathrm{rel}}(t)$, the inequality
		$\mathbb E\,EP(t)\geqslant0$ is equivalent to
		$\mathbb E\,HD_{\mathrm{rel}}(t)\geqslant-\mathbb E\,\Delta s(t)$,
		the analogue of the classical Clausius inequality between the
		dissipated heat and the entropy change, with the Gaussian reference
		measure playing the role of the heat bath. The separate
		non-negativity of $\mathbb E\,Q_{hk}(t)$ and $\mathbb E\,F_d(t)$ is
		its nonequilibrium refinement.
	\end{remark}
	
	\begin{proposition}
		Under the assumptions of Theorem \ref{thm_ep} and Hypothesis~\ref{hypo_5}, define the relative entropy at time $t$ as
		$$\mathcal F(t)=\mathbb{E} \log\frac{d\nu_t}{d\nu}=\int_H \log\frac{\rho_t}{\rho^{ss}}\rho_t d\mu.$$
		Then
		$$\frac{d\mathcal F(t)}{dt}=-\mathbb{E} f_d(t),$$
		in which $\mathbb{E} f_d(t)=\frac{d}{dt}\mathbb{E} F_d(t)$.
	\end{proposition}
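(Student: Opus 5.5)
The plan is to differentiate $\mathcal F(t)=\int_H \rho_t\log\rho_t\,d\mu-\int_H\rho_t\log\rho^{ss}\,d\mu$ in time, express the result through the Kolmogorov forward equation $\partial_t\rho_t=K_{2,\mu}^*\rho_t$, and then compare with the expression for $\mathbb E F_d(t)$ obtained in the preceding proposition. By that proposition,
\[
\mathbb E f_d(t)=\frac12\int_H\big\|B^{-1}C D\log\tfrac{\rho^{ss}}{\rho_t}\big\|_U^2\,d\nu_t
=\frac12\int_H\big\langle C D\log\tfrac{\rho_t}{\rho^{ss}},D\log\tfrac{\rho_t}{\rho^{ss}}\big\rangle\rho_t\,d\mu ,
\]
so it suffices to show that $\frac{d}{dt}\mathcal F(t)$ equals minus this Fisher-type information.

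\emph{Step 1.} Using Hypothesis~\ref{hypo_6} (in particular $\partial_t\log\rho_t\in L^2(dt\otimes\mu)$ and $\log\rho\in L^\infty(0,T;D(K_{2,\mu}))$), I will justify differentiation under the integral sign and write
\[
\frac{d\mathcal F}{dt}=\int_H \partial_t\rho_t\,\big(\log\rho_t-\log\rho^{ss}+1\big)\,d\mu
=\int_H K_{2,\mu}^*\rho_t\;\log\frac{\rho_t}{\rho^{ss}}\,d\mu .
\]
The term $\int\partial_t\rho_t\,d\mu$ vanishes by mass conservation, exactly as in Step 2 of Theorem~\ref{thm_ep}. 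Moving the adjoint across gives
\[
\frac{d\mathcal F}{dt}=\int_H \rho_t\,K_{2,\mu}\log\frac{\rho_t}{\rho^{ss}}\,d\mu ,
\]
which requires $\log(\rho_t/\rho^{ss})\in D(K_{2,\mu})$. This holds by Hypothesis~\ref{hypo_6} for $\log\rho_t$ and by the regularity assumed for $\rho^{ss}$ (Hypothesis~\ref{hypo_4}(i) or Proposition~\ref{prop:reg}).

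\emph{Step 2.} Set $h=\rho_t/\rho^{ss}$, so that $\rho_t=h\,\rho^{ss}$ and $d\nu_t=h\,d\nu$. Rewrite the integral as $\int_H h\,K_2\log h\,d\nu$, now in $L^2(H,\nu)$, and use the weak adjoint identity of Theorem~\ref{thm_generator}(i). Since $K_2=\tfrac12(K_2+K_2^*)+\tfrac12(K_2-K_2^*)$ and the generator of the reversed process has the same diffusion part, the symmetric part of $K_2$ in $L^2(\nu)$ has carr\'e du champ $\tfrac12|C^{1/2}D\cdot|^2$. The antisymmetric part is the first-order operator $\langle F-\tfrac12CD\log\rho^{ss},D\cdot\rangle$; this is the cross-term structure already isolated in the previous proof. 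The key computations are:
\[
\int h\,\tfrac12(K_2+K_2^*)\log h\,d\nu=-\tfrac12\int\langle CDh,D\log h\rangle\,d\nu=-\tfrac12\int\langle CD\log h,D\log h\rangle\,h\,d\nu ,
\]
and, for the antisymmetric part,
\[
\int h\,\langle F-\tfrac12CD\log\rho^{ss},D\log h\rangle\,d\nu=\int\langle F-\tfrac12CD\log\rho^{ss},Dh\rangle\,d\nu .
\]
The latter vanishes because the antisymmetric generator annihilates constants and is skew in $L^2(\nu)$, so $\int G h\,d\nu=-\int h\,G1\,d\nu=0$. This is precisely the vanishing cross term already established in the preceding proposition via Lemma~3.38 of \cite{Prato2004} and the stationary Fokker--Planck equation (Lemma~\ref{lem:rho-ss-stationary}), and I would cite that computation directly. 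Combining the two pieces gives $\frac{d\mathcal F}{dt}=-\mathbb E f_d(t)$.

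The main obstacle is rigor in Step 2. The chain rule $Dh=h\,D\log h$ and the integration by parts with the unbounded $Q_\infty^{-1}$ and $A$ must be read through the standing interpretation of Section~2 (the bounded pairing $\langle AQ_\infty D\varphi,D\psi\rangle$). I would first carry out the identities for $\varphi,\psi\in\mathscr E_A(H)$ and then extend by density, using the $L^2$ bounds from Hypotheses~\ref{hypo_5} and~\ref{hypo_6} together with the $L^p$-integrability of $\rho^{ss}$ to control the limits. A lesser point is exchanging $d/dt$ with $\int$, which I would handle by integrating in time on $[t_1,t_2]$ and using Fubini under Hypothesis~\ref{hypo_6}.
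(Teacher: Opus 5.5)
Your argument is correct, but after the common first step $\mathcal F'(t)=\int_H\log\frac{\rho_t}{\rho^{ss}}\,K_{2,\mu}^*\rho_t\,\mathrm d\mu$ it takes a different route from the paper.

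\emph{The paper's route.} The paper moves the operator onto $\log(\rho_t/\rho^{ss})$. It uses Lemma~3.38 of \cite{Prato2004} for the $F$-part and self-adjointness of $L_2$ for the Ornstein--Uhlenbeck part. It then matches the result term by term with $-\mathbb E f_d(t)$ in \emph{generator form},
\[
\mathbb E f_d(t)=\int_H\big[\langle D\log\tfrac{\rho^{ss}}{\rho_t},Ax+F\rangle+\tfrac12\Tr(CD^2\log\tfrac{\rho^{ss}}{\rho_t})\big]\rho_t\,\mathrm d\mu ,
\]
which it reads off directly from the Stratonovich definition of $F_d$ (the $\partial_t\log\rho_t$ term has zero mean). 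Thus $\mathcal F'(t)$ and $-\mathbb E f_d(t)$ are both identified with $\mathbb E[\mathcal K\log(\rho_t/\rho^{ss})(X(t))]$. Neither the quadratic formula for $\mathbb E F_d$ nor the cross-term cancellation of the preceding proposition enters.

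\emph{Your route.} You take $\mathbb E f_d$ in its quadratic form from the preceding proposition. You then prove the de~Bruijn-type identity $\mathcal F'(t)=-\frac12\int_H|C^{1/2}D\log(\rho_t/\rho^{ss})|^2\,\mathrm d\nu_t$. You do this by splitting $K_2$ in $L^2(H,\nu)$ into its symmetric part, with form $\frac12\int_H\langle CD\cdot,D\cdot\rangle\,\mathrm d\nu$, and the skew first-order part $G=\langle F-\frac12CD\log\rho^{ss},D\cdot\rangle$.

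\emph{Comparison.} The paper's route is shorter and independent of the previous proposition, but it manipulates the individually formal terms $\langle Ax,\cdot\rangle$ and $\Tr(CD^2\cdot)$. Your route leans on the previous proposition, but it gains three things:
\begin{itemize}
\item it yields the monotonicity $\mathcal F'\leqslant0$ with an explicit dissipation functional;
\item it shows transparently why the non-gradient force does not dissipate relative entropy, since the vanishing of $\int_H Gh\,\mathrm d\nu$ is just $\int_H K_2h\,\mathrm d\nu=0$, i.e.\ invariance of $\nu$ tested on $h$;
\item it ends with first-order, $C$-weighted quantities only.
\end{itemize}

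One correction to your justification. Hypothesis~\ref{hypo_4}(i) and Proposition~\ref{prop:reg} concern $\rho^{ss}$; they do not give $\log\rho^{ss}\in D(K_{2,\mu})$. Moreover, Hypothesis~\ref{hypo_4} is not in force in this symmetric setting. You do not actually need the generator domain. Pairing the forward equation weakly with $\log h$ gives
\[
\int_H(K_{2,\mu}^*\rho_t)\log h\,\mathrm d\mu=-\frac12\int_H\langle CD\rho_t,D\log h\rangle\,\mathrm d\mu+\int_H\langle F,D\log h\rangle\,\rho_t\,\mathrm d\mu .
\]
Substituting $D\rho_t=\rho_t(D\log h+D\log\rho^{ss})$ produces your two pieces at once. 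What is really needed is then only the square-integrability of $C^{1/2}D\log h$ under $\nu_t$, which the force bounds of Hypotheses~\ref{hypo_5} and~\ref{hypo_6} provide.
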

	
	\begin{proof}
		Since
		\[
		\int_H\partial_t\rho_t\,d\mu=0,
		\]
		we obtain
		\begin{align}
			&\frac{d\mathcal F(t)}{dt}=\int_H \log\frac{\rho_t}{\rho^{ss}}\frac{\partial \rho_t}{\partial t}d\mu=\int_H \log\frac{\rho_t}{\rho^{ss}}K_{2,\mu}^*\rho_td\mu\nonumber.
		\end{align}
		
		First, according to Lemma 3.38 in \cite{Prato2004},
		\begin{align}
			&\int_H\langle D\log\frac{\rho^{ss}}{\rho_t},F\rangle (x)\rho_td\mu\nonumber\\
			&=-\int_H \log\frac{\rho^{ss}}{\rho_t}\langle F,D\rho_t\rangle d\mu-\int_H \dv_\mu F \log\frac{\rho^{ss}}{\rho_t}\rho_td\mu\nonumber\\
			&=-\int_H \log\frac{\rho^{ss}}{\rho_t}\left[\langle F,D\rho_t\rangle+(\dv F -\langle Q_\infty^{-1}x, F\rangle)\rho_t\right]d\mu.
		\end{align}
		
		Second, since $L_2$ is self-adjoint in $L^2(H;\mu)$,
		\begin{align}
			&\int_H \log\frac{\rho_t}{\rho^{ss}}\left\{\frac{1}{2}\Tr[CD^2\rho_t]+\langle x,A^* D\rho_t\rangle\right\}d\mu\nonumber\\
			&=\int_H \left\{\frac{1}{2}\Tr[CD^2\log\frac{\rho_t}{\rho^{ss}}]+\langle x,A^* D\log\frac{\rho_t}{\rho^{ss}}\rangle\right\}\rho_td\mu
		\end{align}
		
		Hence
		\begin{align}
			&\frac{d\mathcal F(t)}{dt}\nonumber\\
			&=-\int_H\langle D\log\frac{\rho^{ss}}{\rho_t},F\rangle (x)\rho_td\mu+\int_H \left\{\frac{1}{2}\Tr[CD^2\log\frac{\rho_t}{\rho^{ss}}]+\langle x,A^* D\log\frac{\rho_t}{\rho^{ss}}\rangle\right\}\rho_td\mu=-\mathbb{E}f_d(t),
		\end{align}
		since from the definition of $F_d(t)$, we have
		\begin{align}
			\mathbb{E}f_d(t)&=\int_H\left\langle D\log\frac{\rho^{ss}}{\rho_t}(x), Ax+F(x)\right\rangle \rho_td\mu+\frac{1}{2}\int_H\Tr\left(D^2\log\frac{\rho^{ss}}{\rho_t}(x)C\right)\rho_td\mu.\nonumber
		\end{align}
	\end{proof}
	
	The decomposition $EP=Q_{hk}+F_d$ admits a natural measure-theoretic interpretation: each component is itself the logarithm of a Radon--Nikodym derivative between suitable path measures.
	
	\begin{theorem}[Measure-theoretic characterizations of housekeeping heat and free-energy dissipation]\label{thm_ft_sym}
		Under Hypotheses~\ref{hypo_1}--\ref{hypo_3}, \ref{hypo_5} and~\ref{hypo_6} imposed on $(0,2T)$, together with \textbf{(H-form)} and \textbf{(H-rev)}, assume in addition that $\rho^{ss}>0$ $\mu$-a.e.\ and that the adjoint SPDE, namely the semilinear equation with nonlinear drift $CD\log\rho^{ss}-F$, admits a unique mild solution started from every initial law absolutely continuous with respect to $\nu$. Then
		$$Q_{hk}(t)=\log\frac{dP^{\nu_0}_{[0,t]}}{dP^{ad,\nu_0}_{[0,t]}},$$
		in which $P^{ad,\nu_0}_{[0,t]}$ denotes the law of the adjoint dynamics started from $\nu_0$. Moreover, writing $f_s:=\frac{d\nu_s}{d\nu}=\frac{\rho_s}{\rho^{ss}}$ and $\Delta\log\rho^{ss}(t):=\log\rho^{ss}(X(t))-\log\rho^{ss}(X(0))$,
		$$F_d(t)=\log\frac{dP^{\nu_0}_{[0,t]}}{dP^{r,ad,\nu_t}_{[0,t]}}
		=\log f_0(X(0))-\log f_t(X(t))
		=\Delta s(t)+\Delta\log\rho^{ss}(t),$$
		in which $P^{ad,\nu_t}_{[0,t]}$ denotes the law of the adjoint dynamics started from $\nu_t$, and $P^{r,ad,\nu_t}_{[0,t]}:=(\Theta_t)_\#P^{ad,\nu_t}_{[0,t]}$ is the law of its time-reversed trajectories.
		
		Consequently, both $Q_{hk}$ and $F_d$ satisfy integral fluctuation theorems:
		$$\mathbb{E}e^{-Q_{hk}(t)}=1,~\mathbb{E}e^{-F_d(t)}=1,$$
		for any fixed $t>0$.
	\end{theorem}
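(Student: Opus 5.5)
Both identities will be obtained by the same Girsanov-difference scheme used in Theorem~\ref{thm:EPss-sym}(iii) and Theorem~\ref{thm:ep-rn}: every path law is compared with the reference OU law, and the reference terms cancel.

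For $Q_{hk}$, both $P^{\nu_0}_{[0,t]}$ and $P^{ad,\nu_0}_{[0,t]}$ start from $\nu_0$ and have the same linear part $A$, since $L_2$ is symmetric. Proposition~\ref{prop:girsanov} applies to the drifts $F$ and $CD\log\rho^{ss}-F$. Both are admissible: the growth bounds of Hypothesis~\ref{hypo_5} feed directly into Lemma~\ref{lem:girsanov-growth}, and uniqueness in law for the adjoint equation is assumed. Their densities with respect to $P^{*,\nu_0}_{[0,t]}$ are therefore the two stochastic exponentials, and the initial factors $\rho_0(X(0))$ cancel. Subtracting exactly as in the proof of Theorem~\ref{thm:EPss-sym}(iii) gives the It\^o form
\[
\int_0^t\langle B^{-1}[2F-CD\log\rho^{ss}](X),\mathrm dW\rangle_U+\tfrac12\int_0^t\|B^{-1}[2F-CD\log\rho^{ss}](X)\|_U^2\,\mathrm ds .
\]
This is the It\^o expression for $Q_{hk}$ already derived in the proof of the positivity proposition, via the stationary Fokker--Planck identity for $\operatorname{Tr} D[2F-CD\log\rho^{ss}]$ and Remark~\ref{rmk:ito-definition}. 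The only difference is that the path law now starts from $\nu_0$ rather than $\nu$, and the density is unaffected by this. The fluctuation theorem $\mathbb E e^{-Q_{hk}}=1$ then follows because $e^{-Q_{hk}}$ is the stochastic exponential of $-\psi$, with $\psi=B^{-1}[2F-CD\log\rho^{ss}]$. This exponential is a mean-one martingale by Lemma~\ref{lem:girsanov-growth}, applied between the forward and adjoint equations.

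For $F_d$, the key observation is that the adjoint dynamics has invariant measure $\nu$. Its time reversal with respect to $\nu$ is the original dynamics, by (H-rev) applied to the adjoint equation: its reversed drift is $CD\log\rho^{ss}-(CD\log\rho^{ss}-F)=F$. More precisely, I will argue at the level of Markov kernels. The adjoint semigroup is $P_t^{*,\nu}$, the $L^2(\nu)$-adjoint of $P_t$, and this is exactly what the weak identity of Theorem~\ref{thm_generator}(i) identifies as the generator $K_2^*$. Duality of the finite-dimensional distributions then gives, for the reversed adjoint law started so that it ends at $\nu_t$,
\[
\mathrm dP^{r,ad,\nu_t}_{[0,t]}(X)=\frac{\mathrm d\nu_t}{\mathrm d\nu}(X(t))\,\mathrm d\big(\text{stationary forward law from }\nu\big)(X),
\qquad
\mathrm dP^{\nu_0}_{[0,t]}(X)=f_0(X(0))\,\mathrm d\big(\text{same}\big)(X).
\]
The ratio is therefore $f_0(X(0))/f_t(X(t))$. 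This expression is checked first on cylinder functionals, using $\int P_s\varphi\,\psi\,\mathrm d\nu=\int\varphi\,P^{*,\nu}_s\psi\,\mathrm d\nu$ iteratively, and then extended by a monotone class argument on $C([0,t];H)$. Writing $f=\rho/\rho^{ss}$ gives $\Delta s(t)+\Delta\log\rho^{ss}(t)$.

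It remains to check that this boundary expression equals the pathwise $F_d$ defined as $EP-Q_{hk}$. This follows by applying the parabolic Dynkin--Fukushima formula (Theorem~\ref{thm:parabolic-ito}, via (H-form) and Hypothesis~\ref{hypo_6}) to $\log\rho^{ss}(X_s)-\log\rho_s(X_s)$ in Stratonovich form. That yields exactly the integral $\int\langle B^{-1}C D\log(\rho^{ss}/\rho_s),B^{-1}\circ\mathrm dX\rangle_U-\int\partial_s\log\rho_s\,\mathrm ds$, with the minus sign matching. Finally, $\mathbb E e^{-F_d}=\int f_t/f_0\,\mathrm d\nu_0\,\cdot$ reduces to $\int f_t\,\mathrm d\nu=1$, or equivalently to the total mass of $P^{r,ad,\nu_t}$.

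\textbf{Main obstacle.} I expect the hardest step to be applying the It\^o formula to $\log\rho^{ss}$. Hypothesis~\ref{hypo_6} controls $\log\rho_s$, but not directly $\log\rho^{ss}$. I would try first to treat $\rho^{ss}$ as the constant-in-time density curve started from $\nu$, so that Hypotheses~\ref{hypo_5} and~\ref{hypo_6} cover it. The fallback is to avoid It\^o on $\log\rho^{ss}$ altogether: the Radon--Nikodym identity above is purely measure-theoretic, and the equality with the integral form then follows from $F_d=EP-Q_{hk}$ together with Theorem~\ref{thm:ep-rn}.
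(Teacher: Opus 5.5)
\textbf{Verdict: the proposal has a genuine gap in the $F_d$ argument, at the step identifying the adjoint dynamics with the reversed stationary law. It is fixable by using \textbf{(H-rev)} exactly as the paper does.}

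Your $Q_{hk}$ argument is the paper's. You use Girsanov for the drifts $F$ and $CD\log\rho^{ss}-F$ against the common OU reference started from $\nu_0$, subtract the two densities, and apply Lemma~\ref{lem:girsanov-growth} between the forward and adjoint equations to get $\mathbb E e^{-Q_{hk}}=1$.

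For $F_d$ you reach the same duality structure as the paper, namely $P^{\nu_0}_{[0,t]}=f_0(X(0))P^{\nu}_{[0,t]}$ and $P^{r,ad,\nu_t}_{[0,t]}=f_t(X(t))P^{\nu}_{[0,t]}$. The problem is how you justify the second identity. It requires that the mild solution of the adjoint SPDE, started from $\nu$ and hence, by disintegration, from $\nu_t$, is the Markov law with the $\nu$-dual kernels $P^{*,\nu}_s$.

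\begin{itemize}
\item Your cylinder-function and monotone-class computation is fine once that is known. Deriving it from the weak identity of Theorem~\ref{thm_generator}(i) does not work.
\item That identity only computes $K_2^*$ on $\mathscr E_A(H)$. Without a core or uniqueness result for the Kolmogorov operator of the adjoint equation, it does not identify that SPDE's transition semigroup with $e^{tK_2^*}$.
\item No such result is available here. The adjoint drift $CD\log\rho^{ss}-F$ is only of linear growth, neither bounded nor Lipschitz, so Theorem~\ref{thm_generator} does not cover it. This is exactly why the paper postulates \textbf{(H-rev)}.
\item ``(H-rev) applied to the adjoint equation'' is also not available as stated, since \textbf{(H-rev)} concerns the forward equation only.
\end{itemize}

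The fix is the paper's argument. Apply \textbf{(H-rev)} to the forward equation with $\lambda=\nu$; its density curve is constantly $\rho^{ss}$, with growth controlled by Hypothesis~\ref{hypo_5}. This gives $\Theta_t{}_\#P^{\nu}_{[0,t]}=P^{ad,\nu}_{[0,t]}$, and the involution $\Theta_t\circ\Theta_t=\mathrm{id}$ then gives $\Theta_t{}_\#P^{ad,\nu}_{[0,t]}=P^{\nu}_{[0,t]}$. Finally, use the assumed well-posedness of the adjoint SPDE from laws $\ll\nu$ to get $P^{ad,\nu_t}_{[0,t]}=f_t(X(0))P^{ad,\nu}_{[0,t]}$. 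The kernel check then becomes unnecessary. You should also say that strict positivity of $\rho_0$, $\rho_t$ and $\rho^{ss}$ is what makes $f_0(X(0))/f_t(X(t))$ a two-sided density.

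On your ``main obstacle'', the first remedy does not follow from the hypotheses. Hypothesis~\ref{hypo_6} is imposed on the density curve issued from the given $\nu_0$, not on $\rho^{ss}$. The paper simply asserts $HD_{\mathrm{rel}}-Q_{hk}=\int_0^t\langle D\log\rho^{ss},\circ\mathrm dX\rangle=\Delta\log\rho^{ss}(t)$. The same second-order information on $\log\rho^{ss}$ already enters your $Q_{hk}$ step, through $\Tr D[2F-CD\log\rho^{ss}]$, unless you take the It\^o form of Remark~\ref{rmk:ito-definition} as the definition.

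Your fallback is sound and does avoid any chain rule for $\log\rho^{ss}$, but it needs one step you did not state. Theorem~\ref{thm:ep-rn} and the $Q_{hk}$ identity give $EP-Q_{hk}=\log\frac{dP^{ad,\nu_0}_{[0,t]}}{dP^{rev,\nu_t}_{[0,t]}}$. To identify this with $\log\frac{dP^{\nu_0}_{[0,t]}}{dP^{r,ad,\nu_t}_{[0,t]}}$, you must cancel the factors $f_0(X(0))$ and $f_t(X(t))$ and then use the reversal oddness $\frac{dP^{ad,\nu}}{dP^{\nu}}\circ\Theta_t=\frac{dP^{\nu}}{dP^{ad,\nu}}$, i.e.\ $Q_{hk}\circ\Theta_t=-Q_{hk}$. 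This follows from the same two pushforward identities obtained above from \textbf{(H-rev)} and the involution.
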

	
	\begin{proof}
		For $Q_{hk}$, we compare the forward and the adjoint dynamics with the common Ornstein--Uhlenbeck reference $P^{*,\nu_0}_{[0,t]}$. By Lemma~\ref{lem:girsanov-growth}, applied with the base equation being the reference Ornstein--Uhlenbeck process and with the drifts $F$ and $CD\log\rho^{ss}-F$, both admissible by the individual bounds in Hypothesis~\ref{hypo_5},
		\begin{align}
			\log\frac{dP^{\nu_0}_{[0,t]}}{dP^{*,\nu_0}_{[0,t]}}&=\int_0^t \langle B^{-1}F(X(s)),B^{-1}[dX-AXds]\rangle_U
			-\frac{1}{2}\int_0^t\| B^{-1}F(X(s))\|_U^2ds,
		\end{align}
		and
		\begin{align}
			\log\frac{dP^{ad,\nu_0}_{[0,t]}}{dP^{*,\nu_0}_{[0,t]}}
			&=\int_0^t \langle B^{-1}[CD\log\rho^{ss}-F](X(s)),B^{-1}[dX-AXds]\rangle_U
			\nonumber\\
			&\quad-\frac{1}{2}\int_0^t\| B^{-1}[CD\log\rho^{ss}-F](X(s))\|_U^2\,\mathrm ds.
		\end{align}
		Subtracting the two densities and converting to the Stratonovich form exactly as in the derivation of the steady-state entropy production,
		\begin{align}
			\log\frac{dP^{\nu_0}_{[0,t]}}{dP^{ad,\nu_0}_{[0,t]}}
			&=\int_0^t \langle B^{-1}[2F-CD\log\rho^{ss}](X),\mathrm dW\rangle_U
			+\frac{1}{2}\int_0^t\|B^{-1}[2F-CD\log\rho^{ss}](X)\|_U^2\,\mathrm ds\nonumber\\
			&=\int_0^t\langle B^{-1}[2F-CD\log\rho^{ss}](X),B^{-1}\circ dX\rangle_U,
		\end{align}
		which is the defining expression of $Q_{hk}(t)$. For the fluctuation theorem, apply Lemma~\ref{lem:girsanov-growth} once more, with the adjoint dynamics as the base equation and perturbing drift $2F-CD\log\rho^{ss}$, the negative of the stationary thermodynamic force: it gives $P^{ad,\nu_0}_{[0,t]}\ll P^{\nu_0}_{[0,t]}$ with density $e^{-Q_{hk}(t)}$, whence $\mathbb E\,e^{-Q_{hk}(t)}=P^{ad,\nu_0}_{[0,t]}(\Omega)=1$.
		
		For $F_d$ we argue by a Markov duality, which requires no reversal identification beyond \textbf{(H-rev)} at the stationary law. By \textbf{(H-rev)} with $\lambda=\nu$, whose density curve is constant, $\Theta_t{}_\#P^{\nu}_{[0,t]}$ is the law $P^{ad,\nu}_{[0,t]}$ of the adjoint dynamics started from $\nu$. Applying the involution $\Theta_t$ once more,
		\begin{equation}
			\Theta_t{}_\#P^{ad,\nu}_{[0,t]}=P^{\nu}_{[0,t]}.
			\label{eq:adjoint-involution}
		\end{equation}
		Since the adjoint dynamics is well posed from every initial law absolutely continuous with respect to $\nu$, its law depends on the initial datum through the Markov disintegration, so
		\[
		P^{ad,\nu_t}_{[0,t]}=f_t(X(0))\,P^{ad,\nu}_{[0,t]},
		\qquad
		P^{\nu_0}_{[0,t]}=f_0(X(0))\,P^{\nu}_{[0,t]}.
		\]
		Pushing the first identity forward by $\Theta_t$ and using \eqref{eq:adjoint-involution} together with $X(0)\circ\Theta_t=X(t)$,
		\[
		P^{r,ad,\nu_t}_{[0,t]}
		=\Theta_t{}_\#\big(f_t(X(0))\,P^{ad,\nu}_{[0,t]}\big)
		=f_t(X(t))\,P^{\nu}_{[0,t]}.
		\]
		Since $\rho_t>0$ $\mu$-a.e.\ by Proposition~\ref{prop:transient-reg}(i) and $\rho^{ss}>0$ $\mu$-a.e.\ by assumption, while $\rho_0>0$ by the standing hypotheses, the densities $f_0(X(0))$ and $f_t(X(t))$ are strictly positive almost surely, and therefore
		\[
		\frac{dP^{\nu_0}_{[0,t]}}{dP^{r,ad,\nu_t}_{[0,t]}}
		=\frac{f_0(X(0))}{f_t(X(t))}
		\qquad\text{and conversely},
		\]
		the two laws being mutually absolutely continuous. The endpoint identity follows by taking logarithms, since $-\log\rho_s=s_s$, and it coincides with the defining formula $F_d=EP-Q_{hk}$, because $HD_{\mathrm{rel}}-Q_{hk}=\int_0^t\langle D\log\rho^{ss},\circ\mathrm dX\rangle=\Delta\log\rho^{ss}(t)$. Finally,
		\[
		\mathbb E\,e^{-F_d(t)}
		=\mathbb E^{\nu_0}\Big[\frac{f_t(X(t))}{f_0(X(0))}\Big]
		=\int_H f_t\,\mathrm d\nu
		=\int_H\frac{\rho_t}{\rho^{ss}}\,\mathrm d\nu=1,
		\]
		which is the integral fluctuation theorem.
	\end{proof}
	
	\subsection{When the associated Ornstein--Uhlenbeck process is not symmetric}
	
	In this case, the Kolmogorov forward equation for $\rho_t$ is $\frac{\partial \rho_t}{\partial t}=K_{2,\mu}^*\rho_t$, in which $K_{2,\mu}^*$ denotes the adjoint of $K_2$ in $L^2(H,\mu)$.
	
	\begin{align}
		&\int_HK_2\phi\psi d\mu\nonumber\\
		&=\int_H L_2\phi\psi d\mu+\int_H \langle F(x), D\phi\rangle \psi d\mu=\int_H \phi L_2^*\psi d\mu+\int_H \langle F(x), D\phi\rangle \psi d\mu\nonumber,
	\end{align}
	and due to Lemma 3.38 in \cite{Prato2004},
	\begin{equation}
		\int_H \langle F(x), D\phi\rangle \psi d\mu=-\int_H \phi \langle F(x), D\psi\rangle d\mu-\int_H \dv_\mu F \phi\psi d\mu,\nonumber
	\end{equation}
	thus,
	$$K_{2,\mu}^*\psi=L^*_2\psi-[\langle F,D\psi\rangle+\dv_\mu F\psi],\qquad \psi\in\mathscr E_A(H),$$
	which determines $K_{2,\mu}^*$ in $L^2(H,\mu)$ since $\mathscr E_A(H)$ is a core for $K_{2,\mu}$.
	Hence the stationary density $\rho^{ss}$ satisfies $K_{2,\mu}^*\rho^{ss}=0$, i.e.
	\begin{align}
		&\frac{1}{2}\Tr[CD^2\rho^{ss}]+\langle -Ax-CQ_\infty^{-1}x-F,D\rho^{ss}\rangle+(\langle F(x),Q_\infty^{-1} x\rangle-\dv F)\rho^{ss}=0.\nonumber
	\end{align}

	As before we use the relative state entropy $s_t(x):=-\log\rho_t(x)$, with $\log\rho_t\in D(K_{2,\mu})$ (Proposition~\ref{prop:transient-reg}, Hypothesis~\ref{hypo_6}), whose pathwise It\^o expansion is the $L^2(\mu)$ formula of Theorem~\ref{thm:parabolic-ito}. Unlike the symmetric Ornstein--Uhlenbeck case, the entropy production now carries the additional term $-Rx=(A-A^r)x$, where $A^r=-A-CQ_\infty^{-1}$ is the time-reversed drift. The additional term is controlled by Hypothesis~\ref{hypo_4}(ii)--(iii), under which the irreversible part $A^r-A=R$ is bounded and $EP(t)$ is finite (Remark~\ref{rmk:EPfinite}, Lemma~\ref{lem:two-ou}). Hypothesis~\ref{hypo_4}(ii)--(iii), together with the force admissibility of Hypothesis~\ref{hypo_5}, is the set of sufficient conditions adopted in this paper, rather than a characterization: finiteness of $EP(t)$ is governed directly by the squared thermodynamic force under $\nu_s$.
	
	\begin{proposition}
		Under Hypotheses~\ref{hypo_1}--\ref{hypo_4}$,$ Hypothesis~\ref{hypo_6}$,$ and \textbf{(H-form)}, along a single trajectory $X(t)$, the transient entropy production decomposes as
		$$EP(t)=\Delta s(t)+HD_{\mathrm{rel}}(t),$$
		in which $\Delta s(t)=s_t(X(t))-s_0(X(0))$ and the transient heat dissipation
		\begin{align}
			HD_{\mathrm{rel}}(t)&=\int_0^t \left\langle B^{-1}[2F(X(s))-RX(s)], B^{-1}dX(s)\right\rangle_U\nonumber\\
			&+\frac{1}{2}\Tr\left(DC^{-1}[2F(X(s))-RX(s)]C\right)ds\nonumber\\
			&=\int_0^t\langle B^{-1}[2F-RX](X),B^{-1}\circ dX(s)\rangle_U.
		\end{align}
		All traces below that involve the linear field $Rx$ are understood as limits of partial traces over the eigenbasis $(e_k)$ of $Q_\infty$. Since $RQ_\infty$ is skew-adjoint, the diagonal entries $\langle Re_k,e_k\rangle$ vanish identically, so the formal contribution $-\Tr(R)$ vanishes under this convention, in agreement with the vanishing pairing $\langle Q_\infty^{-1}x,Rx\rangle$ used below. Here we have used $(2A+CQ_\infty^{-1})x=-Rx=2A_ax$, twice the irreversible part of the linear dynamics (cf.\ Remark~\ref{rmk:Hstar-scope}, where $R=-2A_a$). Then
		\begin{equation}
			\mathbb{E} EP(t)=\frac{1}{2}\int_0^t\int_H\| B^{-1}[2F-CD\log\rho_s-Rx]\|_U^2\,d\nu_s\,\mathrm ds\geqslant 0,
		\end{equation}
		and $\mathbb{E} EP(t)=0$ if and only if $2F-Rx=CD\log\rho_s$ for almost every $s\leqslant t$, which implies $\nu_s=\nu$ and the SPDE is time-reversible.
	\end{proposition}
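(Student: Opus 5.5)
The plan follows the proof of Theorem~\ref{thm_ep} step by step. The one new ingredient is the bounded irreversible field $Rx$. It enters through $L_2^*$ (Theorem~\ref{adjoint_L_2}) and through the forward operator $K_{2,\mu}^*\psi=L_2^*\psi-[\langle F,D\psi\rangle+\dv_\mu F\psi]$ displayed above.

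\emph{Pathwise decomposition.} First apply the parabolic Dynkin--Fukushima formula (Theorem~\ref{thm:parabolic-ito}) to $s_t=-\log\rho_t$. This is justified by Hypothesis~\ref{hypo_6} and \textbf{(H-form)}, and gives
\[
ds_t=-\partial_t\log\rho_t\,dt-\langle D\log\rho_t,dX\rangle-\tfrac12\Tr(CD^2\log\rho_t)\,dt .
\]
Next write
\[
-D\log\rho_t=C^{-1}[2F-CD\log\rho_t-Rx]-C^{-1}[2F-Rx].
\]
The first piece, together with $-\partial_t\log\rho_t$ and half of the trace, is collected into $ep(t)$. The second piece produces $hd_{\mathrm{rel}}(t)$ with the Itô correction $\tfrac12\Tr(DC^{-1}[2F-RX]C)$, which is the stated Stratonovich form. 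The identity $\langle x,y\rangle=\langle B^{-1}Cx,B^{-1}y\rangle_U$ then rewrites everything as $U$-pairings.

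The trace of the linear field $Rx$ is taken as a limit of partial traces in the eigenbasis of $Q_\infty$. Skew-adjointness of $RQ_\infty$ gives $\langle Re_k,e_k\rangle=0$, so $\Tr R$ contributes nothing. By definition $EP=\Delta s+HD_{\mathrm{rel}}$, with
\[
EP(t)=\int_0^t\langle B^{-1}[2F-CD\log\rho_s-RX],B^{-1}dX\rangle_U-\partial_s\log\rho_s\,ds+\tfrac12\Tr D[2F-CD\log\rho_s-Rx]\,ds .
\]

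\emph{Mean.} Take expectations. The martingale part vanishes by the linear-growth force bound in Hypothesis~\ref{hypo_6}. We have $\mathbb E\int\partial_s\log\rho_s=\int_0^t\partial_s\int\rho_s\,d\mu\,ds=0$. The remaining drift is $\langle B^{-1}[\cdots],B^{-1}(Ax+F)\rangle_U$, read through the bounded pairing $AQ_\infty\in L(H)$ (Remark~\ref{rmk:AQauto}).

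The key step, and the main obstacle, is to rewrite the trace term using the forward equation. Write $L_2^*$ with drift $A^r=A+R$, so
\[
\partial_t\rho_t=\tfrac12\Tr CD^2\rho_t+\langle A^rx-F,D\rho_t\rangle+(\langle F,Q_\infty^{-1}x\rangle-\dv F)\rho_t .
\]
Dividing by $\rho_t$ and using $A^r=-A-CQ_\infty^{-1}$, I aim for the pointwise identity
\[
\Tr D[2F-CD\log\rho_t-Rx]=-2\langle B^{-1}[2F-CD\log\rho_t-Rx],B^{-1}(Ax+\tfrac12CD\log\rho_t)\rangle_U-2\partial_t\log\rho_t,
\]
the analogue of the symmetric computation. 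Besides the symmetric-case cancellations, the extra terms are $\langle Q_\infty^{-1}x,Rx\rangle$, which vanishes by skew-adjointness, and the cross term $\langle C^{-1}Rx,\cdot\rangle$. The cross term is handled by $2A+CQ_\infty^{-1}=-R$. All of these identities are understood weakly, under $\int\cdot\,\rho_t\,d\mu$ with Gaussian integration by parts (Lemma~3.38 of \cite{Prato2004}, and $\dv_\mu(Rx)=0$ as in the preceding reversibility proof), not pointwise. Substituting, the drift terms combine as
\[
\langle a,B^{-1}(Ax+F)\rangle-\langle a,B^{-1}(Ax+\tfrac12CD\log\rho_t)\rangle=\tfrac12\langle a,B^{-1}(2F-CD\log\rho_t)\rangle ,
\]
with $a=B^{-1}[2F-CD\log\rho_t-Rx]$. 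To reach $\tfrac12\|a\|_U^2$ the leftover $\tfrac12\langle a,B^{-1}Rx\rangle$ must be absorbed. I expect this to come from rewriting $Ax=-\tfrac12CQ_\infty^{-1}x-\tfrac12Rx$, which I would track carefully; this bookkeeping is where an error is most likely. Integrating in time gives the stated formula for $\mathbb E\,EP(t)\ge0$.

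\emph{Equality case.} If $\mathbb E\,EP(t)=0$, then $2F-Rx=CD\log\rho_s$ holds $\nu_s$-a.e. for a.e. $s$, hence $\mu$-a.e. since $\rho_s>0$ (Proposition~\ref{prop:transient-reg}(i)). This is condition (7) of the nonsymmetric reversibility theorem with $U=-\log\rho_s$. The $(7)\Rightarrow(6)$ argument there shows that $\rho_s/\!\int\rho_s\,d\mu$ is invariant, and uniqueness of the invariant measure gives $\nu_s=\nu$. Condition (6) then gives time-reversibility.
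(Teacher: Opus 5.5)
There is a genuine gap, and it sits exactly at the step you flagged. Your pathwise decomposition, the vanishing of the martingale and $\partial_s\log\rho_s$ contributions, and the use of the forward equation with drift $A^{r}x-F$ all match the paper. The problem is the trace identity you aim for: it is false in the nonsymmetric case.

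Write $a'=2F-CD\log\rho_t-Rx$. The forward equation, together with $\Tr R=0$ (partial traces) and $\langle Q_\infty^{-1}x,Rx\rangle=0$, gives, as in the paper,
\[
\Tr D a'=\langle Q_\infty^{-1}x-D\log\rho_t,\,a'\rangle-2\,\partial_t\log\rho_t .
\]
To rewrite $\langle Q_\infty^{-1}x,a'\rangle=\langle C^{-1}a',CQ_\infty^{-1}x\rangle$ as a $U$-pairing, you must use $CQ_\infty^{-1}x=-(2A+R)x$, not $-2Ax$. This gives
\[
\Tr D a'=-2\big\langle B^{-1}a',\,B^{-1}\big(Ax+\tfrac12CD\log\rho_t+\tfrac12Rx\big)\big\rangle_U-2\,\partial_t\log\rho_t .
\]
Your ``symmetric analogue'' drops the term $-\langle B^{-1}a',B^{-1}Rx\rangle_U$. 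The cross term $\langle C^{-1}Rx,\cdot\rangle$ does not cancel; it survives, and it is precisely what completes the square. With the corrected identity, the drift terms combine directly to
\[
\big\langle B^{-1}a',B^{-1}\big(F-\tfrac12CD\log\rho_t-\tfrac12Rx\big)\big\rangle_U=\tfrac12\|B^{-1}a'\|_U^2 .
\]

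Your proposed repair cannot work where you place it. In your combination $\langle a,B^{-1}(Ax+F)\rangle-\langle a,B^{-1}(Ax+\tfrac12CD\log\rho_t)\rangle$, the $Ax$ terms cancel identically. So rewriting $Ax=-\tfrac12CQ_\infty^{-1}x-\tfrac12Rx$ at that stage cannot absorb the leftover $\tfrac12\langle a,B^{-1}Rx\rangle$; the correction has to enter upstream, in the trace identity itself.

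The two-temperature model shows what is at stake. Take $F=0$ and $\rho_t\equiv\rho^{ss}\equiv1$. Your identity then reads $0=-|C^{-1/2}Rx|^2$. Your mean formula gives $\tfrac12\langle a,B^{-1}(2F-CD\log\rho)\rangle=0$, i.e.\ zero entropy production, instead of $\tfrac12\int_H\|B^{-1}Rx\|_U^2\,d\nu>0$.

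Once the identity is corrected, the rest of your argument is fine. That includes your treatment of the equality case through the $(7)\Rightarrow(6)$ step of the nonsymmetric reversibility theorem, which the paper's proof does not spell out.
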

	
	\begin{proof}
		Applying the $L^2(\mu)$ It\^o formula (Theorem~\ref{thm:parabolic-ito}) to $s_t=-\log\rho_t$,
		\begin{align}
			ds_t(X(t))&=-\frac{\partial \log\rho_t}{\partial t}(X(t))dt\nonumber\\
			&-\left\langle D \log\rho_t(X(t)), dX(t)\right\rangle-\frac{1}{2}\Tr\left(D^2 \log\rho_t(X(t))C\right)dt\nonumber
		\end{align}
		We decompose $-D\log\rho_t$ as $C^{-1}[2F-CD\log\rho_t-Rx]-C^{-1}(2F-Rx)$, then
		\begin{equation}
			ds_t(X(t))=ep(t)-hd_{\mathrm{rel}}(t),
		\end{equation}
		in which
		\begin{align}
			ep(t)&=\left\langle B^{-1}\left[2F-CD\log\rho_t-RX\right], B^{-1}dX(t)\right\rangle_U-\frac{\partial \log\rho_t}{\partial t}(X(t))dt\nonumber\\
			&+\frac{1}{2}\Tr\left(DC^{-1}\left[2F-CD\log\rho_t-RX\right]C\right)dt,
		\end{align}
		and
		\begin{align}
			hd_{\mathrm{rel}}(t)&=\left\langle B^{-1}[2F-RX], B^{-1}dX(t)\right\rangle_U+\frac{1}{2}\Tr\left(DC^{-1} [2F-RX]C\right)dt.\nonumber
		\end{align}
		
		Then
		$$EP(t)=\int_0^t ep(s)ds,$$
		hence
		\begin{align}
			\mathbb{E} EP(t)&=\mathbb{E}\int_0^t\left\langle B^{-1}\left[2F-CD\log\rho_s-RX(s)\right], B^{-1}(AX(s)+F(X(s)))\right\rangle_U ds\nonumber\\
			&+\frac{1}{2}\mathbb{E}\int_0^t\Tr\left(DC^{-1}\left[2F-CD\log\rho_s-RX(s)\right]C\right)ds\nonumber\\
			&=\int_0^t\int_H\left\langle C^{-1}\left[2F-CD\log\rho_s-Rx\right], (Ax+F(x))\right\rangle \rho_s d\mu ds\nonumber\\
			&+\frac{1}{2}\int_0^t\int_H\Tr\left(DC^{-1}\left[2F-CD\log\rho_s-Rx\right]C\right)\rho_sd\mu ds,
		\end{align}
		
		Notice that
		\begin{align}
			&\Tr D\left(C^{-1}\left[2F-CD\log\rho_s-Rx\right]C\right)\nonumber\\
			&=\Tr\left(D\left[\frac{2F\rho_s-CD\rho_s-Rx\rho_s}{\rho_s}\right]\right)\nonumber\\
			&=\Tr\left(\left[\frac{D(2F\rho_s-CD\rho_s-Rx\rho_s)}{\rho_s}\right]\right)\nonumber\\
			&-\left\langle \frac{2F\rho_s-CD\rho_s-Rx\rho_s}{\rho_s},D\log\rho_s\right\rangle\nonumber\\
			&=\langle Q_\infty^{-1}x-D\log\rho_s, 2F-CD\log\rho_s-Rx\rangle-2\frac{\partial\log\rho_s}{\partial s}.
		\end{align}
		Since $Q_\infty^{-1}R$ is formally skew-adjoint, this being equivalent to the skew-adjointness of $RQ_\infty$ established in the proof of the reversibility theorem in Section~\ref{sec:ts-nonsym}, we have $\langle Q_\infty^{-1}x,(2A+CQ_\infty^{-1})x\rangle=-\langle Q_\infty^{-1}x,Rx\rangle=0$. Consequently
		\begin{align}
			\mathbb{E} EP(t)
			&=\int_0^t\frac{1}{2}\int_H\| B^{-1}\left[2F-CD\log\rho_s-Rx\right]\|_U^2 \rho_s d\mu ds.
		\end{align}
	\end{proof}
	
	\begin{theorem}[Path-space representation and fluctuation theorem, nonsymmetric case]\label{thm:ep-rn-nonsym}
		Under Hypotheses~\ref{hypo_1}--\ref{hypo_4}$,$ Hypothesis~\ref{hypo_6} imposed on $(0,2T)$, \textbf{(H-form)}, and \textbf{(H-rev)}, let $P_{[0,t]}^{\nu_0}$ and $P_{[0,t]}^{\nu_t}$ denote the laws of the SPDE started from $\nu_0$ and from $\nu_t$ respectively, and let $P_{[0,t]}^{rev,\nu_t}:=(\Theta_t)_\#P_{[0,t]}^{\nu_t}$. Then
		$$EP(t)=\log\frac{dP_{[0,t]}^{\nu_0}}{dP_{[0,t]}^{rev,\nu_t}},$$
		and consequently $\mathbb{E}e^{-EP(t)}=1$ for any $t>0$.
	\end{theorem}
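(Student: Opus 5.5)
The plan is to follow the proof of Theorem~\ref{thm:ep-rn}, inserting the two Ornstein--Uhlenbeck reference laws exactly as in the steady-state nonsymmetric theorem. Let $\tilde\rho_s$ be the density of $P^{\nu_t}_{[0,t]}$ at time $s$. By uniqueness for the forward equation $\partial_s\tilde\rho_s=K_{2,\mu}^*\tilde\rho_s$, with the nonsymmetric form $K_{2,\mu}^*=L_2^*-[\langle F,D\cdot\rangle+\dv_\mu F\,\cdot]$ derived above, we get $\tilde\rho_s=\rho_{t+s}$. Hypothesis~\ref{hypo_6} on $(0,2T)$ therefore covers the regularity and linear growth of the reversed drift $CD\log\tilde\rho_{t-s}-F$. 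By \textbf{(H-rev)} with $\lambda=\nu_t$, the law $P^{rev,\nu_t}_{[0,t]}$ is the law of the mild solution with linear part $A^r=A+R$, time-dependent drift $CD\log\tilde\rho_{t-s}-F$, and initial law $\tilde\nu_t$.

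Next, split the log-density into four pieces:
\[
\log\frac{dP^{\nu_0}}{dP^{rev,\nu_t}}=\log\frac{dP^{\nu_0}}{dP^{*,\nu_0}}+\log\frac{dP^{*,\nu_0}}{dP^{*,\tilde\nu_t}}-\log\frac{dP^{rev,*,\tilde\nu_t}}{dP^{*,\tilde\nu_t}}-\log\frac{dP^{rev,\nu_t}}{dP^{rev,*,\tilde\nu_t}}.
\]
The pieces are handled as follows.
\begin{itemize}
\item The first piece is given by Proposition~\ref{prop:girsanov} together with Lemma~\ref{lem:girsanov-growth}, using the bound on $B^{-1}F$.
\item The second piece is the initial-law ratio $\log(\rho_0/\tilde\rho_t)(X(0))$. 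Since $\tilde\rho_0=\rho_t$, it can be rewritten as $\int_0^t[-d\log\rho_s+d\log\tilde\rho_{t-s}](X(s))$.
\item The third piece is Lemma~\ref{lem:two-ou}, transferred to the common initial law $\tilde\nu_t$ by the cancellation argument already used in the steady-state nonsymmetric proof.
\item The fourth piece is again Lemma~\ref{lem:girsanov-growth}, now with base equation the $A^r$-OU process and perturbation $CD\log\tilde\rho_{t-s}-F$. Its innovation is $B^{-1}(dX-A^rX\,ds)=B^{-1}(dX-AX\,ds)-B^{-1}RX\,ds$.
\end{itemize}
All four pieces are expressed along the forward path, using $dX-AX\,ds=F\,ds+B\,dW$.

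Then expand both endpoint differentials with the parabolic Dynkin--Fukushima formula (Theorem~\ref{thm:parabolic-ito}). In $d\log\tilde\rho_{t-s}$, replace $\partial_s\log\tilde\rho_{t-s}$ using the forward equation with $L_2^*$. This produces the terms $-\tfrac12\Tr[CD^2\tilde\rho_{t-s}]/\tilde\rho_{t-s}$, $-\langle x,(A^r)^*D\log\tilde\rho_{t-s}\rangle$ and $+\langle F,D\log\tilde\rho_{t-s}\rangle+\dv_\mu F$. The second-order parts of $d\log\tilde\rho_{t-s}$ cancel against these. The first-order pieces $\langle D\log\tilde\rho_{t-s},dX\rangle$ combine with $-\langle C^{-1}CD\log\tilde\rho_{t-s},dX-A^rX\,ds\rangle$ from the fourth piece, and the remainder drops out. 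This is the same cancellation that removes every $\tilde\rho$ dependence in the symmetric proof, with $A$ replaced by $A^r$ where required.

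What is left should be collected as
\[
\langle C^{-1}(2F-RX),dX-AX\,ds\rangle+\tfrac12\|B^{-1}RX\|^2ds-\langle C^{-1}F,RX\rangle ds+\dv_\mu F\,ds-d\log\rho_s-\langle D\log\tilde\rho\text{-free terms}\rangle.
\]
This is to be matched with $ep(t)$ from the preceding proposition, i.e.\ with $-\partial_s\log\rho_s\,ds+\langle B^{-1}[2F-CD\log\rho_s-RX],B^{-1}dX\rangle_U+\tfrac12\Tr D(\cdots)\,ds$. The matching uses three facts: the identity $2A+CQ_\infty^{-1}=-R$, the vanishing pairing $\langle Q_\infty^{-1}x,Rx\rangle=0$, and the partial-trace convention $\Tr R=0$ (from the skew-adjointness of $RQ_\infty$). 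Together these turn the term $-\langle F,Q_\infty^{-1}X\rangle+\langle C^{-1}2F,AX\rangle$ into $-\langle C^{-1}F,RX\rangle$, and make the $RX$-dependent quadratic terms agree. The fluctuation theorem then follows because the reversed law is mutually absolutely continuous with the forward law, each of the four factors being an equivalence. Hence $\mathbb E e^{-EP(t)}=P^{rev,\nu_t}_{[0,t]}(\Omega)=1$.

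The main obstacle is this final bookkeeping step. The formal terms $AX$, $CQ_\infty^{-1}X$ and the traces are not individually defined, so the cancellation must be carried out in the robust Itô form of Remark~\ref{rmk:ito-definition}, where every term is finite. The approach I would try first is to verify the identity on smooth cylindrical approximations, with all operators restricted to $\operatorname{span}\{e_1,\dots,e_n\}$. In that finite-dimensional setting the algebra is the classical one. I would then pass to the limit using the bounded pairing $\langle AQ_\infty D\varphi,D\psi\rangle$ and the $L^2$ convergence supplied by Hypothesis~\ref{hypo_6}.
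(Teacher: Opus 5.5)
Your proposal is correct and follows the paper's proof essentially step for step: the same four-factor splitting through the forward and reversed Ornstein--Uhlenbeck references, the identification $\tilde\rho_s=\rho_{t+s}$ with Hypothesis~\ref{hypo_6} on $(0,2T)$, the exact cancellation of all $\tilde\rho$-terms after the parabolic Dynkin--Fukushima expansion (your collected expression is right), and the final matching via $2A+CQ_\infty^{-1}=-R$, $\langle Q_\infty^{-1}x,Rx\rangle=0$ and $\Tr R=0$. Two small points: the identity you need is $\langle F,Q_\infty^{-1}X\rangle+2\langle C^{-1}F,AX\rangle=-\langle C^{-1}F,RX\rangle$, so your display has a sign slip. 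Also, in the robust It\^o form your collected expression is already $\Delta s(t)$ plus the It\^o form of $HD_{\mathrm{rel}}(t)$ given in Remark~\ref{rmk:ito-definition}, so the three identities are needed only for that remark's equivalence with the trace form.
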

	
	\begin{proof}[Proof of Theorem~\ref{thm:ep-rn-nonsym}]
		Denote by $\tilde\nu_s=\tilde\rho_s\mu$ the marginals of the restarted process $P_{[0,t]}^{\nu_t}$, so that $\partial_s\tilde\rho_s=K_{2,\mu}^*\tilde\rho_s$ and, by uniqueness for the forward equation, $\tilde\rho_s=\rho_{t+s}$. Hence the regularity of $\log\tilde\rho$ and the admissibility of the drift $CD\log\tilde\rho_{t-s}-F=CD\log\rho_{2t-s}-F$ are covered by the assumptions on $(0,2T)$.
		
		By \textbf{(H-rev)} with $\lambda=\nu_t$ (in its general form, with $A^{r}=-A-CQ_\infty^{-1}$), $P_{[0,t]}^{rev,\nu_t}$ is the law of the SPDE with time-dependent drift $A^{r}x+CD\log\tilde\rho_{t-s}(x)-F(x)$, with final distribution $\nu_t$. Let $P^{*,\nu_0}_{[0,t]}$ and $P^{*,\tilde\nu_t}_{[0,t]}$ denote the laws of the forward Ornstein--Uhlenbeck process started from $\nu_0$ and from $\tilde\nu_t$, and $P^{rev,*,\tilde\nu_t}_{[0,t]}$ the law of the reversed Ornstein--Uhlenbeck process (drift $A^{r}$) started from $\tilde\nu_t$. Then
		\begin{align}
			\log\frac{dP_{[0,t]}^{\nu_0}}{dP_{[0,t]}^{rev,\nu_t}}
			&=\log\frac{dP^{\nu_0}_{[0,t]}}{dP^{*,\nu_0}_{[0,t]}}
			-\log\frac{dP^{rev,\nu_t}_{[0,t]}}{dP^{rev,*,\tilde\nu_t}_{[0,t]}}
			-\log\frac{dP^{rev,*,\tilde\nu_t}_{[0,t]}}{dP^{*,\tilde\nu_t}_{[0,t]}}\nonumber\\
			&\quad+\log\frac{dP^{*,\nu_0}_{[0,t]}}{dP^{*,\tilde\nu_t}_{[0,t]}}.
		\end{align}
		The first and second terms are given by Proposition~\ref{prop:girsanov} applied to the drifts $F$ and $CD\log\tilde\rho_{t-s}-F$ against the respective Ornstein--Uhlenbeck references,
		\begin{align}
			\log\frac{dP^{\nu_0}_{[0,t]}}{dP^{*,\nu_0}_{[0,t]}}
			&=\int_0^t\langle B^{-1}F(X),B^{-1}[dX-AXds]\rangle_U
			-\frac12\int_0^t\|B^{-1}F(X)\|_U^2ds,\nonumber
		\end{align}
		\begin{align}
			\log\frac{dP^{rev,\nu_t}_{[0,t]}}{dP^{rev,*,\tilde\nu_t}_{[0,t]}}
			&=\int_0^t\langle B^{-1}[CD\log\tilde\rho_{t-s}-F](X),
			B^{-1}[dX-A^{r}Xds]\rangle_U \nonumber\\
			&-\frac12\int_0^t\|B^{-1}[CD\log\tilde\rho_{t-s}-F](X)\|_U^2ds.
		\end{align}
		For the third term, since both Ornstein--Uhlenbeck laws start from the same initial distribution $\tilde\nu_t$, the common initial density cancels in their ratio and Lemma~\ref{lem:two-ou} yields
		\begin{align}
			\log\frac{dP^{rev,*,\tilde\nu_t}_{[0,t]}}{dP^{*,\tilde\nu_t}_{[0,t]}}
			=\int_0^t\langle B^{-1}RX,\mathrm dW\rangle_U
			-\frac12\int_0^t\|B^{-1}RX\|_U^2ds,
		\end{align}
		the stochastic integral being taken against the same canonical innovation $B^{-1}(\mathrm dX-AX\,\mathrm ds)$ as in the first term. Substituting $\mathrm dX=[AX+F(X)]\,\mathrm ds+B\,\mathrm dW$ converts this innovation to $\mathrm dW+B^{-1}F\,\mathrm ds$, and similarly converts the reversed innovation $B^{-1}(\mathrm dX-A^{r}X\,\mathrm ds)$ to $\mathrm dW+B^{-1}(F-RX)\,\mathrm ds$; the resulting finite-variation cross terms cancel in the difference, and a straightforward expansion yields
		\begin{align}
			\log\frac{dP^{*,\nu_0}}{dP^{*,\tilde\nu_t}_{[0,t]}}(X(\cdot))
			=\log\frac{\rho_0}{\tilde\rho_t}(X(0))
			=\int_0^t-d\log \rho_s(X(s))+d\log \tilde\rho_{t-s}(X(s)).
		\end{align}
		Now substitute $dX=[AX+F(X)]ds+BdW$, note that $dX-A^{r}Xds=[F(X)-RX]ds+BdW$, and expand $\log\rho_s(X(s))$ and $\log\tilde\rho_{t-s}(X(s))$ by the parabolic Dynkin--Fukushima formula (Theorem~\ref{thm:parabolic-ito}), which applies by Hypothesis~\ref{hypo_6} on $(0,2T)$. The terms containing $\tilde\rho$ cancel exactly as in the proof of Theorem~\ref{thm:ep-rn}, and the $R$-terms combine as in the proof of the steady-state theorem of Section~\ref{sec:ts-nonsym}. The result is
		\begin{align}
			\log\frac{dP_{[0,t]}^{\nu_0}}{dP_{[0,t]}^{rev,\nu_t}}
			&=\int_0^t\left\langle B^{-1}\left[2F-CD\log\rho_s-RX\right](X(s)), B^{-1}dX(s)\right\rangle_U-\int_0^t\frac{\partial \log\rho_s}{\partial s}(X(s))ds\nonumber\\
			&+\frac{1}{2}\int_0^t \Tr\left(D\left[2F-CD\log\rho_s-RX\right](X(s))\right)ds=EP(t),
		\end{align}
		which is the entropy production of the preceding proposition.
	\end{proof}

	\begin{remark}
		At steady state $\nu_0=\nu$ and $\nu_t=\nu$, so $P^{\nu_0}_{[0,t]}=P^{ss}_{[0,t]}$ while the reversal of the restarted process, started from $\nu$, is precisely the stationary time reversal $P^{rev,ss}_{[0,t]}$. The ratio defining $EP(t)$ in Theorem~\ref{thm:ep-rn-nonsym} therefore coincides with the ratio defining $EP^{ss}(t)$, whence $EP(t)\equiv EP^{ss}(t)$, in accordance with the symmetric case.
	\end{remark}
	
	We define housekeeping heat by
	\begin{align}
		Q_{hk}(t)&=\int_0^t\left\langle B^{-1}\left[2F-CD\log\rho^{ss}-RX\right], B^{-1}dX(s)\right\rangle_U\nonumber\\
		&+\frac{1}{2}\int_0^t\Tr\left(DC^{-1}\left[2F-CD\log\rho^{ss}-RX\right](X(s))C\right)ds\nonumber\\
		&=\int_0^t\langle B^{-1}[2F-CD\log\rho^{ss}-RX],B^{-1}\circ dX\rangle_U,
	\end{align}
	and free energy dissipation by
	\begin{align}
		F_d(t)&=EP(t)-Q_{hk}(t)\nonumber\\
		&=\int_0^t\left\langle B^{-1}\left[CD\log\rho^{ss}-CD\log\rho_s\right](X(s)), B^{-1}dX(s)\right\rangle_U-\int_0^t\frac{\partial \log\rho_s}{\partial s}(X(s))ds\nonumber\\
		&+\frac{1}{2}\int_0^t\Tr\left(DC^{-1}\left[CD\log\rho^{ss}-CD\log\rho_s\right](X(s))C\right)ds\nonumber\\
		&=\int_0^t\langle B^{-1}(CD\log\rho^{ss}-CD\log\rho_s)(X),B^{-1}\circ dX\rangle_U-\int_0^t\frac{\partial \log\rho_s}{\partial s}(X(s))ds.
	\end{align}
	
	\begin{proposition} Under Hypotheses~\ref{hypo_1}--\ref{hypo_6} and \textbf{(H-form)}, the mean housekeeping heat is
		\begin{equation}
			\mathbb{E}Q_{hk}(t)=\frac{1}{2}\int_0^t\int_H\| B^{-1}[2F-CD\log\rho^{ss}-Rx]\|^2_U d\nu_sds.
		\end{equation}
		
		The mean free energy dissipation is
		\begin{equation}
			\mathbb{E}F_d(t)=\frac{1}{2}\int_0^t\int_H\| B^{-1}[CD\log\rho^{ss}-CD\log\rho_s](x)\|_U^2 d\nu_s ds.
		\end{equation}
		$\mathbb{E}Q_{hk}(t)\geqslant 0$ and $\mathbb{E}Q_{hk}(t)=0$ if and only if $2F-CD\log\rho^{ss}=Rx$. $\mathbb{E}F_d(t)\geqslant 0$ and $\mathbb{E}F_d(t)=0$ if and only if $\rho_t\equiv\rho^{ss}$.
	\end{proposition}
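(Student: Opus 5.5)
The plan follows the symmetric-case proof of the positivity proposition, keeping track of the extra linear field $Rx$. The first step is to convert $Q_{hk}$ from its Stratonovich form into an Itô martingale plus a quadratic drift term. Write $J:=2F-CD\log\rho^{ss}-Rx$ for the stationary thermodynamic force. The stationary Fokker--Planck equation $K_{2,\mu}^*\rho^{ss}=0$ in the nonsymmetric form displayed above, together with $A^r=-A-CQ_\infty^{-1}$, gives an identity of the same type as in the transient entropy-production proposition. There one computed $\Tr D(C^{-1}[\cdots]C)$ using $\partial_s\log\rho_s$; here the same computation with $\partial_s\log\rho^{ss}=0$ yields
\[
\Tr\big(DC^{-1}JC\big)=\langle Q_\infty^{-1}x-D\log\rho^{ss},J\rangle .
\]
Traces involving $Rx$ are taken as partial traces over the eigenbasis of $Q_\infty$, so that $\Tr R$ contributes nothing.

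Substituting $dX=(AX+F)ds+B\,dW$ into the defining expression for $Q_{hk}$ gives
\[
Q_{hk}(t)=\int_0^t\langle B^{-1}J,dW\rangle_U+\int_0^t\Big[\langle C^{-1}J,Ax+F\rangle+\tfrac12\langle Q_\infty^{-1}x-D\log\rho^{ss},J\rangle\Big]ds .
\]
The drift bracket equals $\tfrac12\langle C^{-1}J,\,2Ax+2F+CQ_\infty^{-1}x-CD\log\rho^{ss}\rangle$. Since $2Ax+CQ_\infty^{-1}x=-Rx$, this is $\tfrac12\|B^{-1}J\|_U^2$. The stochastic integral is a square-integrable martingale by the linear-growth bound of Hypothesis~\ref{hypo_5} and the finite moments of $\nu_s$, so taking expectations gives the formula for $\mathbb E Q_{hk}(t)$. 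Nonnegativity is immediate. Vanishing holds if and only if $J=0$ $\nu_s$-a.e. for a.e. $s$, and since $\nu_s\sim\mu$ (because $\rho_s>0$ by Proposition~\ref{prop:transient-reg}(i)) this is equivalent to $2F-CD\log\rho^{ss}=Rx$.

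For $F_d$, subtract this formula from the formula for $\mathbb E\,EP(t)$ in the preceding proposition. Set $a=B^{-1}(CD\log\rho^{ss}-CD\log\rho_s)$ and $b=B^{-1}J$. Then $\mathbb E F_d=\frac12\int\!\!\int(\|a+b\|^2-\|b\|^2)\,d\nu_s\,ds$, so it suffices to show that the cross term $\int_H\langle D\log(\rho^{ss}/\rho_s),J\rangle\,d\nu_s$ vanishes for each $s$. This is the main obstacle. The approach is to integrate by parts with Lemma~3.38 of \cite{Prato2004}:
\[
\int_H\langle D\rho_s,J\rangle\,d\mu=-\int_H\dv_\mu(J)\,\rho_s\,d\mu .
\]
The claim is then the pointwise identity
\[
\dv_\mu J=-\langle D\log\rho^{ss},J\rangle ,
\]
which, once established, makes the two pieces of the cross term cancel.

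To prove the identity, I would use the three facts from the proof of the nonsymmetric reversibility theorem: $\dv_\mu(Rx)=0$, $RQ_\infty$ is skew-adjoint, and the Lyapunov relation $AQ_\infty+\tfrac12C=-\tfrac12RQ_\infty$. These convert the stationary equation $K_{2,\mu}^*\rho^{ss}=0$, after division by $\rho^{ss}$, into $\dv_\mu(2F-CD\log\rho^{ss})+\langle D\log\rho^{ss},2F-CD\log\rho^{ss}\rangle=\langle D\log\rho^{ss},Rx\rangle$. Equivalently, the zeroth-order potential $V$ in Theorem~\ref{adjoint_K_2} vanishes, which is the Corollary following that theorem. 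Subtracting $\dv_\mu(Rx)=0$ gives the identity. The delicate points are the manipulations involving $AQ_\infty D\log\rho^{ss}$ and $\langle Q_\infty^{-1}x,Rx\rangle=0$. I would justify them only through the bounded pairing and standing interpretation of Section~2, extending from smooth approximations by density using Hypothesis~\ref{hypo_6}.

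The equality case for $F_d$ then follows. $\mathbb EF_d(t)=0$ forces $C^{1/2}D\log(\rho^{ss}/\rho_s)=0$ a.e. Since $\ker B=\{0\}$, the ratio $\rho_s/\rho^{ss}$ is constant, hence equal to $1$ by normalization, so $\rho_s\equiv\rho^{ss}$.
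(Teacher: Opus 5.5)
Your proposal is correct and follows the paper's proof essentially step for step. You use the trace identity to turn $Q_{hk}$ into an It\^o integral plus $\tfrac12\|B^{-1}J\|_U^2$, then write $\mathbb E F_d=\mathbb E\,EP-\mathbb E Q_{hk}$ and kill the cross term with Lemma~3.38 and the pointwise identity $\dv_\mu J=-\langle D\log\rho^{ss},J\rangle$; this is the same identity as your first trace formula, so it does double duty, exactly as in the paper. The only variation is that you derive it from the corollary $V=0$ together with the three facts of the nonsymmetric reversibility proof, whereas the paper expands $\dv(J\rho^{ss})/\rho^{ss}$ directly with the stationary Fokker--Planck equation, $\Tr R=0$ and $\langle Q_\infty^{-1}x,Rx\rangle=0$; the two computations are equivalent.
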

	
	\begin{proof}
		Since
		\begin{align}
			&\Tr C^{-1}\left(D\left[2F-CD\log\rho^{ss}-Rx\right]C\right)\nonumber\\
			&=\langle Q_\infty^{-1}x-D\log\rho^{ss},2F-CD\log\rho^{ss}-Rx\rangle,\nonumber
		\end{align}
		the housekeeping heat can be expressed as
		\begin{align}
			&Q_{hk}(t)\nonumber\\
			&=\int_0^t \langle B^{-1}[2F-CD\log\rho^{ss}-RX],dW\rangle_U \nonumber\\
			&+\int_0^t \langle B^{-1}[2F-CD\log\rho^{ss}-RX], B^{-1}[AX+F(X)]\rangle_U ds\nonumber\\
			&+\frac{1}{2}\int_0^t\langle Q_\infty^{-1}x-D\log\rho^{ss},2F-CD\log\rho^{ss}-Rx\rangle ds\nonumber\\
			&=\int_0^t \langle B^{-1}[2F-CD\log\rho^{ss}-RX],dW\rangle_U \nonumber\\
			&+\frac{1}{2}\int_0^t\| B^{-1}[2F-CD\log\rho^{ss}-RX]\|^2_U ds.
		\end{align}
		
		Hence the mean housekeeping heat is
		\begin{equation}
			\mathbb{E}Q_{hk}(t)=\frac{1}{2}\int_0^t\int_H\| B^{-1}[2F-CD\log\rho^{ss}-Rx]\|^2_U d\nu_sds.
		\end{equation}
		
		On the other hand,
		\begin{align}
			&\mathbb{E}F_d(t)=\mathbb{E} EP(t)-\mathbb{E} Q_{hk}(t)\nonumber\\
			&=\frac{1}{2}\int_0^t\int_H\langle B^{-1}CD\log\tfrac{\rho^{ss}}{\rho_s}(x),\nonumber\\
			&\qquad\qquad B^{-1}[4F-2Rx-CD\log\rho^{ss}-CD\log\rho_s](x)\rangle_U d\nu_s ds.
		\end{align}
		
		It is sufficient to prove
		$$\int_H\langle B^{-1}CD\log\frac{\rho^{ss}}{\rho_s}(x),B^{-1}[4F-2Rx-2CD\log\rho^{ss}](x)\rangle_U d\nu_s=0,$$
		i.e.
		$$\int_H\langle D\log\frac{\rho^{ss}}{\rho_s},F-\frac{1}{2}Rx-\frac{1}{2}CD\log\rho^{ss}\rangle (x)d\nu_s=0.$$
		
		According to Lemma 3.38 in \cite{Prato2004},
		\begin{align}
			&\int_H\langle D\log\rho_s,F-\frac{1}{2}Rx-\frac{1}{2}CD\log\rho^{ss}\rangle (x)\rho_sd\mu\nonumber\\
			&=-\int_H \dv(F-\frac{1}{2}Rx-\frac{1}{2}CD\log\rho^{ss})\rho_sd\mu\nonumber\\
			&+\int_H\langle Q_\infty^{-1}x, F-\frac{1}{2}Rx-\frac{1}{2}CD\log\rho^{ss}\rangle\rho_sd\mu,
		\end{align}
		together with
		\begin{align}
			&\frac{1}{2}\Tr[CD^2\rho^{ss}]+\langle A^{r}x-F,D\rho^{ss}\rangle+(\langle F(x),Q_\infty^{-1} x\rangle-\dv F)\rho^{ss}=0,\nonumber
		\end{align}
		and
		\begin{align}
			&\dv(F-\frac{1}{2}Rx-\frac{1}{2}CD\log\rho^{ss})\nonumber\\
			&=\frac{\dv(F\rho^{ss}-\frac{1}{2}Rx\rho^{ss}-\frac{1}{2}CD\rho^{ss})}{\rho^{ss}}\nonumber\\
			&-\langle F-\frac{1}{2}Rx-\frac{1}{2}CD\log\rho^{ss}, D\log\rho^{ss}\rangle\nonumber\\
			&=\left\langle Q_\infty^{-1}x-D\log\rho^{ss},
			F-\frac12 CD\log\rho^{ss}
			-\frac12 Rx
			\right\rangle,
		\end{align}
		and thus
		\begin{align}
			&\int_H\langle D\log\rho_s,F-\frac{1}{2}CD\log\rho^{ss}-\frac{1}{2}Rx\rangle\rho_sd\mu\nonumber\\
			&=\int_H \langle D\log\rho^{ss}, F-\frac{1}{2}CD\log\rho^{ss}-\frac{1}{2}Rx\rangle\rho_sd\mu,
		\end{align}
		which exactly yields the desired cancellation of the cross term. Hence the stated formula for
		\(\mathbb E F_d(t)\) follows.
	\end{proof}
	
	\begin{proposition}
		Under Hypotheses~\ref{hypo_1}--\ref{hypo_6} and \textbf{(H-form)}, define the relative entropy at time $t$ by
		$$\mathcal F(t)=\mathbb{E} \log\frac{d\nu_t}{d\nu}=\int_H \log\frac{\rho_t}{\rho^{ss}}\rho_t d\mu.$$
		Then
		$$\frac{d\mathcal F(t)}{dt}=-\mathbb{E} f_d(t),$$
		in which $\mathbb{E} f_d(t)=\frac{d}{dt}\mathbb{E} F_d(t)$.
	\end{proposition}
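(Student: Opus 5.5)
The plan follows the symmetric-case proof of the analogous proposition. Only the last stage changes: $L_2$ is no longer self-adjoint, and we use its adjoint $L_2^*$ from Theorem~\ref{adjoint_L_2} instead.

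First, differentiate $\mathcal F(t)=\int_H\rho_t\log(\rho_t/\rho^{ss})\,\mathrm d\mu$ in time. The regularity is supplied by Proposition~\ref{prop:transient-reg}(iii) and Hypothesis~\ref{hypo_6}: $\partial_t\rho_t\in L^2(H,\mu)$ and $\log\rho_t\in D(K_{2,\mu})$, uniformly on $(0,T)$, with $\log\rho^{ss}\in D(L_2)$ by Hypothesis~\ref{hypo_4}(i). Using $\int_H\partial_t\rho_t\,\mathrm d\mu=0$ and the forward equation $\partial_t\rho_t=K_{2,\mu}^*\rho_t$, this gives
\[
\frac{\mathrm d\mathcal F}{\mathrm dt}=\int_H\log\frac{\rho_t}{\rho^{ss}}\,K_{2,\mu}^*\rho_t\,\mathrm d\mu=\int_H K_{2,\mu}\Big(\log\frac{\rho_t}{\rho^{ss}}\Big)\rho_t\,\mathrm d\mu .
\]
The last equality is the defining duality of $K_{2,\mu}^*$. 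It is legitimate because $\log(\rho_t/\rho^{ss})$ lies in $D(K_{2,\mu})$, and this is obtained by approximating from the core $\mathscr E_A(H)$ in graph norm, exactly as in Theorem~\ref{adjoint_K_2}. Moving the full Kolmogorov operator to the other side, rather than only its $F$-part, is the key simplification: we never need $L_2^*$ applied to $\rho_t$.

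Second, compute $\mathbb E f_d(t)=\frac{\mathrm d}{\mathrm dt}\mathbb E F_d(t)$ from the Itô form of $F_d$ in the nonsymmetric definition above. The $\mathrm dW$ part is a martingale with zero mean. The term $-\mathbb E\,\partial_s\log\rho_s(X(s))$ vanishes, as in Step~2 of Theorem~\ref{thm_ep}. What remains is
\[
\mathbb E f_d(t)=\int_H\Big[\Big\langle D\log\frac{\rho^{ss}}{\rho_t},Ax+F\Big\rangle+\tfrac12\Tr\Big(CD^2\log\frac{\rho^{ss}}{\rho_t}\Big)\Big]\rho_t\,\mathrm d\mu=\int_H K_{2,\mu}\Big(\log\frac{\rho^{ss}}{\rho_t}\Big)\rho_t\,\mathrm d\mu ,
\]
with the drift pairing $\langle Ax,\cdot\rangle$ read through the standing interpretation of Section~2. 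This identification of the mean increment with the generator is precisely the content of the parabolic Dynkin--Fukushima formula (Theorem~\ref{thm:parabolic-ito}) under \textbf{(H-form)}. Comparing with the first step, and using the linearity of $K_{2,\mu}$ together with $\log(\rho^{ss}/\rho_t)=-\log(\rho_t/\rho^{ss})$, yields $\frac{\mathrm d\mathcal F}{\mathrm dt}=-\mathbb E f_d(t)$.

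As a consistency check, and as an alternative route, I would also substitute the Stratonovich expression of $F_d$. Integrating by parts with Lemma~3.38 of \cite{Prato2004} and using the stationary equation $K_{2,\mu}^*\rho^{ss}=0$ recovers $-\frac12\int\|B^{-1}CD\log(\rho^{ss}/\rho_t)\|_U^2\,\mathrm d\nu_t$. This agrees with the preceding proposition. The $Rx$ contributions drop out via $\dv_\mu(Rx)=0$ and the skew-adjointness of $RQ_\infty$.

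The main obstacle is justifying that the Itô-side drift of $F_d$, which contains the formally ill-defined terms $\langle AX,\cdot\rangle$ and the second-order trace, equals $K_{2,\mu}$ applied to $\log(\rho^{ss}/\rho_t)$ in $L^1(\nu_t)$. Closely tied to this is placing $\log\rho^{ss}$ in $D(K_{2,\mu})$ when $L_2$ is non-symmetric, which requires $F\in C^1_b$ together with $\rho^{ss}\in D(L_2)$ and a lower bound handled via $\rho^{ss}>0$. I would handle this by working throughout with the bounded pairing $\int\langle AQ_\infty D\varphi,D\psi\rangle\,\mathrm d\mu$, which is available since $AQ_\infty\in L(H)$ by Remark~\ref{rmk:AQauto}. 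Every identity would first be proved for $\varphi\in\mathscr E_A(H)$ and then extended by graph-norm density. Differentiability of $\mathcal F$ follows from dominated convergence, using the $L^2((0,T)\times H)$ bound on $\partial_t\log\rho_t$ in Hypothesis~\ref{hypo_6}.
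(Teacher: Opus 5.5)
Your proposal is correct and is essentially the paper's proof. Both differentiate $\mathcal F$, substitute $\partial_t\rho_t=K_{2,\mu}^*\rho_t$, move the operator onto $\log(\rho_t/\rho^{ss})$, and identify the result with the expected drift $\int_H K_{2,\mu}\big(\log\frac{\rho^{ss}}{\rho_t}\big)\rho_t\,\mathrm d\mu$ of $F_d$; the paper simply performs your single $K_{2,\mu}^*$/$K_{2,\mu}$ duality in two pieces, using Lemma~3.38 of \cite{Prato2004} for the $F$-part and Theorem~\ref{adjoint_L_2} for the $L_2^*$-part.

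One correction: Hypothesis~\ref{hypo_4}(i) gives $\rho^{ss}\in D(L_2)$, not $\log\rho^{ss}\in D(L_2)$. So $\log\rho^{ss}\in D(K_{2,\mu})$ is an extra tacit regularity requirement, which the paper also leaves at the formal level.
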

	
	\begin{proof}
		Since
		\[
		\int_H\partial_t\rho_t\,d\mu=0,
		\]
		we have
		\begin{align}
			&\frac{d\mathcal F(t)}{dt}\nonumber\\
			&=\int_H \log\frac{\rho_t}{\rho^{ss}}\frac{\partial \rho_t}{\partial t}d\mu=\int_H \log\frac{\rho_t}{\rho^{ss}}K_{2,\mu}^*\rho_td\mu\nonumber\\
			&=\int_H \log\frac{\rho_t}{\rho^{ss}}\left\{\frac{1}{2}\Tr[CD^2\rho_t]+\langle A^{r}x-F,D\rho_t\rangle\right.\nonumber\\
			&\left.+(\langle F(x), Q_\infty^{-1} x\rangle-\dv F)\rho_t\right\}d\mu.
		\end{align}
		
		According to Lemma 3.38 in \cite{Prato2004},
		\begin{align}
			&\int_H\langle D\log\frac{\rho^{ss}}{\rho_t},F\rangle (x)\rho_td\mu\nonumber\\
			&=-\int_H \log\frac{\rho^{ss}}{\rho_t}\langle F,D\rho_t\rangle d\mu-\int_H \dv_\mu F \log\frac{\rho^{ss}}{\rho_t}\rho_td\mu\nonumber\\
			&=-\int_H \log\frac{\rho^{ss}}{\rho_t}\left[\langle F,D\rho_t\rangle+(\dv F -\langle Q_\infty^{-1}x, F\rangle)\rho_t\right]d\mu.
		\end{align}
		
		Next, by Theorem \ref{adjoint_L_2}, we have
		\begin{align}
			&\int_H \log\frac{\rho_t}{\rho^{ss}}\left\{\frac{1}{2}\Tr[CD^2\rho_t]+\langle x,(A^{r})^* D\rho_t\rangle\right\}d\mu\nonumber\\
			&=\int_H \left\{\frac{1}{2}\Tr[CD^2\log\frac{\rho_t}{\rho^{ss}}]+\langle x,A^* D\log\frac{\rho_t}{\rho^{ss}}\rangle\right\}\rho_td\mu.
		\end{align}
		
		Hence
		\begin{align}
			&\frac{d\mathcal F(t)}{dt}\nonumber\\
			&=-\int_H\langle D\log\frac{\rho^{ss}}{\rho_t},F\rangle (x)\rho_td\mu+\int_H \left\{\frac{1}{2}\Tr[CD^2\log\frac{\rho_t}{\rho^{ss}}]+\langle x,A^* D\log\frac{\rho_t}{\rho^{ss}}\rangle\right\}\rho_td\mu\nonumber\\
			&=-\mathbb{E}f_d(t),
		\end{align}
		where the last identity follows from the expression for the mean free-energy dissipation rate established in the previous proposition, i.e.
		\begin{align}
			\mathbb{E}f_d(t)&=\int_H\left\langle D\log\frac{\rho^{ss}}{\rho_t}(x), Ax+F(x)\right\rangle \rho_td\mu+\frac{1}{2}\int_H\Tr\left(D^2\log\frac{\rho^{ss}}{\rho_t}(x)C\right)\rho_td\mu.\nonumber
		\end{align}
	\end{proof}

	\begin{theorem}[Measure-theoretic characterizations, nonsymmetric case]\label{thm_ft_nonsym}
		Under Hypotheses~\ref{hypo_1}--\ref{hypo_6}, \textbf{(H-form)}, and \textbf{(H-rev)}, assume in addition that $\rho^{ss}>0$ $\mu$-a.e.\ and that the adjoint SPDE, namely the semilinear equation with linear part $A^{r}$ and nonlinear drift $CD\log\rho^{ss}-F$, admits a unique mild solution started from every initial law absolutely continuous with respect to $\nu$. Then
		$$Q_{hk}(t)=\log\frac{dP_{[0,t]}^{\nu_0}}{dP^{ad,\nu_0}_{[0,t]}},$$
		where $P^{ad,\nu_0}_{[0,t]}$ denotes the law of the adjoint dynamics started from $\nu_0$. Moreover, with $f_s:=\frac{d\nu_s}{d\nu}=\frac{\rho_s}{\rho^{ss}}$,
		$$F_d(t)=\log\frac{dP_{[0,t]}^{\nu_0}}{dP^{r,ad,\nu_t}_{[0,t]}}=\log f_0(X(0))-\log f_t(X(t))=\Delta s(t)+\Delta\log\rho^{ss}(t),$$
		where $P^{ad,\nu_t}_{[0,t]}$ denotes the law of the adjoint dynamics started from $\nu_t$, and $P^{r,ad,\nu_t}_{[0,t]}:=(\Theta_t)_\#P^{ad,\nu_t}_{[0,t]}$. Consequently,
		$$\mathbb{E}e^{-Q_{hk}(t)}=1,~\mathbb{E}e^{-F_d(t)}=1,$$
		for any fixed $t>0$.
	\end{theorem}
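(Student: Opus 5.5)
The plan follows the proof of Theorem~\ref{thm_ft_sym}, with Lemma~\ref{lem:two-ou} supplying the one new ingredient: the two Ornstein--Uhlenbeck references now have different linear parts. There are two parts, one for $Q_{hk}$ and one for $F_d$.

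\emph{Housekeeping heat.} We compare the forward law $P^{\nu_0}_{[0,t]}$ and the adjoint law $P^{ad,\nu_0}_{[0,t]}$ through the chain
\[
\log\frac{dP^{\nu_0}_{[0,t]}}{dP^{ad,\nu_0}_{[0,t]}}
=\log\frac{dP^{\nu_0}_{[0,t]}}{dP^{*,\nu_0}_{[0,t]}}
-\log\frac{dP^{rev,*,\nu_0}_{[0,t]}}{dP^{*,\nu_0}_{[0,t]}}
-\log\frac{dP^{ad,\nu_0}_{[0,t]}}{dP^{rev,*,\nu_0}_{[0,t]}}.
\]
\begin{itemize}
\item The first term is given by Proposition~\ref{prop:girsanov} with drift $F$. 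Lemma~\ref{lem:girsanov-growth} applies because of the bound on $B^{-1}F$ in Hypothesis~\ref{hypo_5}.
\item The second term is given by Lemma~\ref{lem:two-ou}. The common initial density $\rho_0(X(0))$ cancels exactly as in the proof of the nonsymmetric steady-state theorem.
\item The third term is given by Lemma~\ref{lem:girsanov-growth}. The base equation is the reversed OU process with linear part $A^r=A+R$, which is well posed since $R\in L(H)$. The perturbing drift is $CD\log\rho^{ss}-F$, admissible by Hypothesis~\ref{hypo_5}.
\end{itemize}
Next substitute $B^{-1}(\mathrm dX-AX\,\mathrm ds)=\mathrm dW+B^{-1}F\,\mathrm ds$ and $B^{-1}(\mathrm dX-A^rX\,\mathrm ds)=\mathrm dW+B^{-1}(F-RX)\,\mathrm ds$. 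The algebra is then identical to the steady-state computation of Section~\ref{sec:ts-nonsym}, and the result is
\[
\int_0^t\langle B^{-1}[2F-CD\log\rho^{ss}-RX],\mathrm dW\rangle_U+\frac12\int_0^t\|B^{-1}[2F-CD\log\rho^{ss}-RX]\|_U^2\,\mathrm ds .
\]
By the It\^o form of $Q_{hk}$ obtained in the proof of the preceding proposition, this is exactly $Q_{hk}(t)$.

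For the fluctuation theorem, apply Lemma~\ref{lem:girsanov-growth} once more, now with the adjoint dynamics as base. This gives $P^{ad,\nu_0}_{[0,t]}\ll P^{\nu_0}_{[0,t]}$ with density $e^{-Q_{hk}(t)}$, hence $\mathbb E e^{-Q_{hk}(t)}=1$. Equivalence of the two laws follows by chaining the three mutually equivalent pairs. The linear-growth bound needed for the combined field $B^{-1}[2F-CD\log\rho^{ss}-RX]$ comes from Hypothesis~\ref{hypo_5} together with $\|C^{-1/2}R\|<\infty$ from Hypothesis~\ref{hypo_4}(iii).

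\emph{Free-energy dissipation.} We reuse the Markov duality argument verbatim.
\begin{itemize}
\item By \textbf{(H-rev)} with $\lambda=\nu$ (constant density curve, general form with $A^r$), $\Theta_t{}_\#P^\nu_{[0,t]}=P^{ad,\nu}_{[0,t]}$. Since $\Theta_t$ is an involution, $\Theta_t{}_\#P^{ad,\nu}_{[0,t]}=P^\nu_{[0,t]}$.
\item Well-posedness of the adjoint SPDE from laws $\ll\nu$ gives the Markov disintegrations $P^{ad,\nu_t}_{[0,t]}=f_t(X(0))P^{ad,\nu}_{[0,t]}$ and $P^{\nu_0}_{[0,t]}=f_0(X(0))P^{\nu}_{[0,t]}$.
\item Pushing forward by $\Theta_t$ yields $P^{r,ad,\nu_t}_{[0,t]}=f_t(X(t))P^\nu_{[0,t]}$.
\item Positivity of $f_0$ and $f_t$ holds $\mu$-a.e.: it uses $\rho_0>0$, Proposition~\ref{prop:transient-reg}(i) for $\rho_t$, and the assumption $\rho^{ss}>0$. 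Hence $dP^{\nu_0}_{[0,t]}/dP^{r,ad,\nu_t}_{[0,t]}=f_0(X(0))/f_t(X(t))$.
\end{itemize}
Then $\mathbb E e^{-F_d}=\int_H f_t\,\mathrm d\nu=1$.

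The main obstacle is identifying this endpoint expression with the defining formula $F_d=EP-Q_{hk}$. By the decomposition $EP=\Delta s+HD_{\mathrm{rel}}$ of the nonsymmetric proposition, it suffices to show
\[
HD_{\mathrm{rel}}-Q_{hk}=\int_0^t\langle D\log\rho^{ss},\circ\mathrm dX\rangle=\Delta\log\rho^{ss}(t).
\]
The $RX$ terms appear identically in $HD_{\mathrm{rel}}$ and in $Q_{hk}$, so they cancel and leave the Stratonovich integral of $B^{-1}CD\log\rho^{ss}$ against $B^{-1}\circ\mathrm dX$. That integral is $\langle D\log\rho^{ss},\circ\mathrm dX\rangle$.

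Evaluating it as $\Delta\log\rho^{ss}(t)$ requires the time-homogeneous Dynkin--Fukushima formula (Theorem~\ref{thm:dynkin-ito}) applied to $\log\rho^{ss}$, under \textbf{(H-form)}. The remaining trace and $AX$ terms are read through the bounded pairing $AQ_\infty\in L(H)$ of Section~2 and the partial-trace convention for $R$. This is the step where I expect the regularity of $\log\rho^{ss}$, taken from Hypotheses~\ref{hypo_4}(i) and \ref{hypo_6}, to be used most delicately.
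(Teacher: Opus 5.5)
Your proposal follows the paper's proof essentially step for step. It uses the same three-way Girsanov comparison for $Q_{hk}$: the forward law against the $A$-reference, the two-Ornstein--Uhlenbeck ratio of Lemma~\ref{lem:two-ou}, and the adjoint law against the $A^{r}$-reference, with the $R$-terms cancelling. It uses the same Markov-duality argument from \textbf{(H-rev)} at $\lambda=\nu$ for $F_d$, and the same reduction of consistency with $F_d=EP-Q_{hk}$ to $HD_{\mathrm{rel}}-Q_{hk}=\Delta\log\rho^{ss}(t)$.

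The paper simply asserts that last identity, so the regularity concern you raise is genuine but shared with the paper. Strictly, Hypothesis~\ref{hypo_4}(i) gives only $\rho^{ss}\in D(L_2)$ and Hypothesis~\ref{hypo_6} concerns $\log\rho_t$, so neither places $\log\rho^{ss}$ in $D(K_{2,\mu})$ as Theorem~\ref{thm:dynkin-ito} would require.
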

	
	\begin{proof}
		For $Q_{hk}$ the argument parallels that of Theorem~\ref{thm_ft_sym} with one structural difference: the adjoint dynamics has linear part $A^{r}=A+R$, so, relative to the common Ornstein--Uhlenbeck reference of linear part $A$, the two perturbing drifts are $F$ and $Rx+CD\log\rho^{ss}-F$, whose difference is the stationary force $2F-CD\log\rho^{ss}-Rx$. The three-way comparison (forward vs.\ reference, adjoint vs.\ reference, and the two-Ornstein--Uhlenbeck ratio of Lemma~\ref{lem:two-ou} with linear parts $A$ and $A^{r}$) then yields the same Girsanov expansion as in the symmetric case, the additional $R$-terms from the two-Ornstein--Uhlenbeck ratio cancelling those from the adjoint Girsanov density.
		
		For $F_d$, the Markov duality of Theorem~\ref{thm_ft_sym} applies verbatim. By \textbf{(H-rev)} with $\lambda=\nu$, the reversal $\Theta_t{}_\#P^{\nu}_{[0,t]}$ is the adjoint dynamics with linear part $A^{r}$, the involution identity $\Theta_t{}_\#P^{ad,\nu}_{[0,t]}=P^{\nu}_{[0,t]}$ follows as before, and no step of the duality involves $R$. The endpoint identity likewise reads $F_d(t)=\Delta s(t)+\Delta\log\rho^{ss}(t)$, consistent with $F_d=EP-Q_{hk}$ through $HD_{\mathrm{rel}}-Q_{hk}=\int_0^t\langle D\log\rho^{ss},\circ\mathrm dX\rangle=\Delta\log\rho^{ss}(t)$, the $R$-terms cancelling between $HD_{\mathrm{rel}}$ and $Q_{hk}$.
	\end{proof}

	\section{Applications}
	\label{sec:applications}
	
	In this section, we consider several representative semilinear SPDEs arising in stochastic analysis and statistical physics to illustrate the scope of our theory. For each model, we verify the relevant hypotheses and identify the applicable thermodynamic results. In particular, the examples highlight the roles of reversibility and the bounded-irreversibility condition Hypothesis~\ref{hypo_4}(ii) in determining the applicability of the theory.
	
	To develop a rigorous thermodynamic theory, we impose assumptions that are stronger than those typically satisfied by many classical SPDEs. These conditions allow us to justify the measure-theoretic arguments, particularly the Girsanov transformations and the pathwise It\^o formula. Consequently, our framework does not cover equations with unbounded or merely locally Lipschitz nonlinearities, such as stochastic Navier--Stokes equations and superlinear reaction-diffusion equations. It also excludes models whose natural Gaussian reference field is not supported on \(H\), such as the white-noise free field, whose covariance is not trace class. Extending the theory to these settings under weaker and more natural assumptions remains an important direction for future research.
	
	\subsection{Semilinear stochastic heat equation with spatially correlated noise}
	\label{subsec:M1}
	
	We first consider the semilinear stochastic heat equation driven by spatially correlated additive noise, a setting considered, for example, in \cite{Niggeman22}. Let
	\[
	A=\Delta
	\]
	on the torus, and let
	\[
	B\gamma_k=\alpha_k\gamma_k,
	\qquad
	C\gamma_k=\alpha_k^2\gamma_k,
	\]
	with $\alpha_k>0$ for every $k\neq0$ and
	\[
	\sum_{k\neq0}\alpha_k^2<\infty,
	\]
	so that $B$ is an injective Hilbert--Schmidt operator and the reference covariance is injective.
	We assume that $F$ is bounded and globally Lipschitz.
	
	Since $C$ is trace class whereas $C^{-1}$ is unbounded, this is the canonical example covered by Hypothesis~\ref{hypo_3}\ref{hypo_3.a}, with condition~\ref{hypo_3}\ref{hypo_3.a}(iii) verified in Proposition~\ref{prop:Hdensity-check} below.
	
	The only structural issue comes from the zero Fourier mode. On the full periodic space, $\Delta$ has eigenvalue $0$, so the semigroup is not exponentially stable and the covariance operator does not converge. We therefore work on the mean-zero subspace
	\[
	H_0=\{\gamma_0\}^{\perp},
	\]
	equivalently, one may introduce a positive mass term and replace
	$A=\Delta-mI,\qquad m>0$,
	which shifts the zero eigenvalue to $-m$. On $H_0$, the Laplacian generates an exponentially stable analytic semigroup and
	\[
	AQ_\infty=-\frac12C,
	\]
	so Hypothesis~\ref{hypo_1} is satisfied. Together with a bounded globally Lipschitz $F$, Hypothesis~\ref{hypo_2} also holds: the Lipschitz constant of $F$ must be smaller than the mass $m$ in the massive formulation, or smaller than the spectral gap $\lambda_1$ of $-\Delta$ on $H_0$ in the mean-zero formulation, and in the latter case one also requires $F(H_0)\subset H_0$, as happens for instance when $F$ is composed with the orthogonal projection onto $H_0$.
	
	The associated Ornstein--Uhlenbeck process is symmetric. Therefore the thermodynamic results established for the symmetric case apply, subject to the admissibility of the thermodynamic force discussed next. Since the noise here is correlated, this model lies in the trace-class regime~\ref{hypo_3}\ref{hypo_3.a}, and condition~\ref{hypo_3}\ref{hypo_3.a}(iii) must be verified. It is satisfied as soon as the individual drift bound in Hypothesis~\ref{hypo_5} holds, by the following proposition, whose diagonalization hypotheses hold for the present Fourier model.
	
	\begin{proposition}[A sufficient condition for $\nu\ll\mu$]
		\label{prop:Hdensity-check}
		Assume Hypotheses~\ref{hypo_1} and~\ref{hypo_2} together with Hypothesis~\ref{hypo_3}\ref{hypo_3.a}(i)--(ii), and let $A$ be self-adjoint and simultaneously diagonalizable with $C$, so that $Ae_k=-\lambda_ke_k$ and $Ce_k=\alpha_k^2e_k$ with $\alpha_k>0$ and $\lambda_1=\min_k\lambda_k>0$. Assume the transient regularity of Hypothesis~\ref{hypo_6} and \textbf{(H-form)} for the initial density $\rho_0\equiv1$, and the individual force bound
		\[
		\big|C^{-1/2}F(x)\big|_U^2\leqslant a_F^2+b_F^2|x|^2,
		\qquad x\in H.
		\]
		Then $\nu\ll\mu$, with
		\[
		\mathcal D(\nu\|\mu):=\int_H\log\frac{\mathrm d\nu}{\mathrm d\mu}\,\mathrm d\nu
		\leqslant\frac{a_F^2+b_F^2M_2}{\lambda_1},
		\qquad
		M_2:=\sup_{t\geqslant0}\mathbb E\,|X(t)|^2<\infty .
		\]
	\end{proposition}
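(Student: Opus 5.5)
Start the dynamics from the reference measure itself, $\nu_0=\mu$, i.e.\ $\rho_0\equiv1$. Then bound the relative entropy $\mathcal H(t):=\int_H\rho_t\log\rho_t\,\mathrm d\mu$ of the law $\nu_t$ with respect to $\mu$ uniformly in $t$. Finally pass to the limit $t\to\infty$, using that $\nu_t\to\nu$ weakly; this convergence follows from the exponential contraction $\omega+\kappa<0$ in Hypothesis~\ref{hypo_2} and Theorem~\ref{Invariant measure & density}. Lower semicontinuity of relative entropy under weak convergence then gives $\mathcal D(\nu\|\mu)\leqslant\liminf_t\mathcal H(t)<\infty$, and in particular $\nu\ll\mu$. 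The assumed transient regularity of Hypothesis~\ref{hypo_6} and \textbf{(H-form)} with $\rho_0\equiv1$ is exactly what licenses differentiating $\mathcal H(t)$ along the forward equation $\partial_t\rho_t=K_{2,\mu}^*\rho_t$.

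The key computation is the entropy dissipation identity. Differentiating, using $\int\partial_t\rho_t\,\mathrm d\mu=0$, the symmetry of $L_2$ (here $A=A^*$ commutes with $C$), the carr\'e-du-champ identity and Gaussian integration by parts [\cite{Prato2004} Lemma~3.38] for the $F$-term, exactly as in the relative-entropy proposition of Section~4.1, I expect
\[
\frac{\mathrm d}{\mathrm dt}\mathcal H(t)=-\frac12\int_H|C^{1/2}D\log\rho_t|^2\,\mathrm d\nu_t+\int_H\langle F,D\log\rho_t\rangle\,\mathrm d\nu_t .
\]
Write $\langle F,D\log\rho_t\rangle=\langle C^{-1/2}F,C^{1/2}D\log\rho_t\rangle$ and apply Young's inequality. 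This gives
\[
\frac{\mathrm d}{\mathrm dt}\mathcal H(t)\leqslant-\frac14\int_H|C^{1/2}D\log\rho_t|^2\,\mathrm d\nu_t+\int_H|C^{1/2}\cdot C^{-1}F|^2\,\mathrm d\nu_t .
\]
Alternatively one can keep the sharp form $-\tfrac12|C^{1/2}D\log\rho_t-2C^{-1/2}F|^2+2|C^{-1/2}F|^2$, adjusting the constants to match the stated bound. The individual force bound and the moment bound $\mathbb E|X(t)|^2\leqslant M_2$ from the mild-solution proposition control the last integral by $a_F^2+b_F^2M_2$.

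The main obstacle is closing the differential inequality, which needs a log-Sobolev inequality for $\mu$ in the $C$-weighted gradient. For $\mu=N_{Q_\infty}$ with $Q_\infty=\tfrac12(-A)^{-1}C$ (diagonal with entries $\alpha_k^2/(2\lambda_k)$), the Gaussian LSI in Cameron--Martin form reads
\[
\mathcal H(t)\leqslant\frac12\int|Q_\infty^{1/2}D\log\rho_t|^2\,\mathrm d\nu_t\leqslant\frac1{4\lambda_1}\int|C^{1/2}D\log\rho_t|^2\,\mathrm d\nu_t ,
\]
since $Q_\infty\leqslant C/(2\lambda_1)$ by the simultaneous diagonalization. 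This is where the spectral gap $\lambda_1$ enters. Inserting it into the dissipation inequality yields $\mathcal H'\leqslant-c\lambda_1\mathcal H+(a_F^2+b_F^2M_2)$. By Gronwall with $\mathcal H(0)=0$, $\mathcal H(t)\leqslant(a_F^2+b_F^2M_2)/\lambda_1$ once the Young weights are chosen so that $c=1$ (splitting the dissipation optimally between the Young step and the LSI). I would check this constant bookkeeping carefully, and justify the LSI for $\rho_t$ only in the Sobolev class granted by Hypothesis~\ref{hypo_6}, approximating by cylindrical functions in $\mathscr E_A(H)$ if needed.
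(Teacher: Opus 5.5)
Your proposal is correct and matches the paper's proof step for step: $\rho_0\equiv1$, the entropy balance $\mathcal H'(t)=-\tfrac12\mathcal I_t+\mathbb E\langle F,D\log\rho_t\rangle$, Young's inequality with weight $\tfrac14$, the mode-wise Gaussian log-Sobolev inequality $\mathcal I_t\geqslant4\lambda_1\mathcal H(t)$ (via $\alpha_k^2/(2\lambda_k)\leqslant\alpha_k^2/(2\lambda_1)$), Gr\"onwall from $\mathcal H(0)=0$, and passage to $t\to\infty$. The only difference is the last step: the paper goes through uniform integrability and Dunford--Pettis before invoking lower semicontinuity, whereas your Donsker--Varadhan lower semicontinuity gives $\nu\ll\mu$ directly.

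Two small fixes. First, the mild-solution proposition only gives moment bounds on finite horizons, so $M_2<\infty$ uniformly in $t$ needs the one-line estimate from $\omega<0$, $\|F\|_\infty<\infty$ and $\Tr Q_\infty<\infty$, which the paper writes out. Second, your ``sharp'' alternative is miscomputed: completing the square gives $-\tfrac12|C^{1/2}D\log\rho_t-C^{-1/2}F|^2+\tfrac12|C^{-1/2}F|^2$, and extracting $\mathcal I_t$ from that just reproduces your Young estimate, so keep the weight-$\tfrac14$ Young split, which already yields exactly the constant $1/\lambda_1$.
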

	
	\begin{proof}
		Let $\nu_0=\mu$, so that $\rho_0\equiv1$ and $\nu_t=\mu P_t$ has density $\rho_t$ with the standing regularity. Along the flow set
		\[
		\mathcal D_t:=\int_H\rho_t\log\rho_t\,\mathrm d\mu,
		\qquad
		\mathcal I_t:=\int_H\big|C^{1/2}D\log\rho_t\big|^2\rho_t\,\mathrm d\mu .
		\]
		By the parabolic Dynkin--Fukushima formula applied to $\log\rho_t$ (Theorem~\ref{thm:parabolic-ito}), or by the computation of the relative-entropy proposition with $\rho^{ss}$ replaced by $1$, the entropy balance reads
		\[
		\frac{\mathrm d\mathcal D_t}{\mathrm dt}
		=-\tfrac12\mathcal I_t+\mathbb E\big\langle F,D\log\rho_t\big\rangle,
		\]
		the antisymmetric part of $L_2$ contributing zero through the carr\'e du champ identity \eqref{eq:dirichlet-form}. The tensorized Gaussian logarithmic Sobolev inequality gives
		\[
		\mathcal D_t\leqslant\frac{1}{4\lambda_1}\,\mathcal I_t
		\qquad\text{whence}\qquad
		\mathcal I_t\geqslant4\lambda_1\mathcal D_t:
		\]
		indeed, applying Gross's inequality mode-wise, $\operatorname{Ent}_\mu(f^2)\leqslant2\sum_kq_k\int(\partial_kf)^2\mathrm d\mu$ with $q_k=\alpha_k^2/(2\lambda_k)\leqslant\alpha_k^2/(2\lambda_1)$, so $\operatorname{Ent}_\mu(f^2)\leqslant\lambda_1^{-1}\int|C^{1/2}Df|^2\mathrm d\mu$, and the choice $f=\sqrt{\rho_t}$ yields the claim. For the cross term, Young's inequality with $\varepsilon=\tfrac14$ and the force bound give
		\[
		\mathbb E\langle F,D\log\rho_t\rangle
		\leqslant\tfrac14\mathcal I_t+\big(a_F^2+b_F^2\mathbb E|X(t)|^2\big)
		\leqslant\tfrac14\mathcal I_t+a_F^2+b_F^2M_2 ,
		\]
		where $M_2<\infty$ by the standard mild-solution estimate \[ \mathbb E|X(t)|^2\leqslant 3\,e^{2\omega t}\,\mathbb E|X(0)|^2+3\Big(\int_0^t e^{\omega(t-s)}\,\mathrm ds\Big)^{\!2}\|F\|_\infty^2+3\,\Tr Q_t\leqslant 3\,\mathbb E|X(0)|^2+\frac{3\|F\|_\infty^2}{\omega^2}+3\,\Tr Q_\infty, \] valid for all noise types since $\Tr Q_\infty<\infty$ by Hypothesis~\ref{hypo_1}. Combining,
		\[
		\frac{\mathrm d\mathcal D_t}{\mathrm dt}\leqslant-\tfrac14\mathcal I_t+a_F^2+b_F^2M_2
		\leqslant-\lambda_1\mathcal D_t+a_F^2+b_F^2M_2 .
		\]
		Since $\mathcal D_0=0$, Grönwall's inequality yields $\sup_{t\geqslant0}\mathcal D_t\leqslant(a_F^2+b_F^2M_2)/\lambda_1$. In particular the densities $\{\rho_t\}_{t\geqslant0}$ are uniformly integrable in $L^1(H,\mu)$, being entropy-bounded. By [\cite{Prato2004} Theorem~3.17(iii)], $\nu_t$ converges weakly to the invariant measure $\nu$. Uniform integrability and the Dunford--Pettis theorem provide a subsequence with $\rho_{t_n}\rightharpoonup\rho^{*}$ weakly in $L^1(H,\mu)$, and every weak limit point defines a measure $\nu^{*}\ll\mu$ with $\int\psi\,\mathrm d\nu^{*}=\lim_n\int\psi\rho_{t_n}\,\mathrm d\mu=\int\psi\,\mathrm d\nu$ for all bounded continuous $\psi$. Hence $\nu=\nu^{*}\ll\mu$, and lower semicontinuity of the entropy gives $\mathcal D(\nu\|\mu)\leqslant\liminf_n\mathcal D_{t_n}\leqslant(a_F^2+b_F^2M_2)/\lambda_1$.
	\end{proof}
	
	The distinctive feature of this example is the presence of correlated (trace-class) noise. Because $C^{-1}$ is unbounded, the entropy production is finite only when the thermodynamic force
	\[
	C^{-1/2}\big(2F-CD\log\rho^{ss}\big)
	\]
	is square integrable under $\nu$, in particular whenever it has at most linear growth as required by Hypothesis~\ref{hypo_5}. In this case
	\[
	epr
	=
	\frac12
	\int_H
	\Big\langle
	C^{-1}
	\!\left(
	2F-CD\log\rho^{ss}
	\right),
	\,
	2F-CD\log\rho^{ss}
	\Big\rangle
	\,d\nu.
	\]
	Otherwise the entropy production becomes ultraviolet divergent. This example illustrates the main qualitative difference between the trace-class and white-noise settings: under correlated noise, finiteness of the entropy production requires the additional range condition, whereas under white noise it is automatic.
	
	\subsection{Random polymer model}\label{subsec:M2}
	
	We next consider the random polymer model, described by the semilinear stochastic heat equation
	\[
	\frac{\partial u(x,t)}{\partial t}
	=
	\nu\,\partial_x^2u
	+
	F(u)
	+
	\xi(x,t),
	\qquad
	x\in(0,1),
	\]
	where the noise is white in space,
	\[
	C=I,
	\]
	and $F$ is the Nemytskii operator
	\[
	F(u)(x)=f(u(x))
	\]
	generated by a bounded globally Lipschitz function $f$.
	
	The only structural issue comes from the zero Fourier mode under Neumann boundary conditions. Since the Neumann Laplacian has eigenvalue $0$, the semigroup is not exponentially stable and the covariance operator fails to converge. This difficulty is removed either by restricting the dynamics to the mean-zero subspace or, equivalently, by introducing a positive mass term
	\[
	A=\nu\partial_x^2-mI,
	\qquad
	m>0.
	\]
	Then the linear operator generates an exponentially stable analytic semigroup, so Hypothesis~\ref{hypo_1} is satisfied. Together with the bounded globally Lipschitz assumption on $F$, fulfilled once the Lipschitz constant of $F$ is smaller than the mass $m$, Hypothesis~\ref{hypo_2} also holds. 
	
	Since the driving noise is white, $C=I$, Hypothesis~\ref{hypo_3}\ref{hypo_3.b} applies: $C^{-1}$ is bounded, the range condition is automatic, the Hilbert--Schmidt integrability condition being satisfied since the eigenvalues are $\asymp k^2$, and the transition semigroup is strong Feller, so the transient density $\rho_t$ exists and is positive for every initial law $\nu_0\ll\mu$. The stationary theory then applies unconditionally: as shown next, the model is of gradient type, the stationary thermodynamic force vanishes identically, and all steady-state results hold with $EP^{ss}=0$. The transient theory requires in addition the regularity and force-growth conditions of Hypothesis~\ref{hypo_6} on the density curve, which are not automatic and are assumed here.
	
	A key feature of this model is that the drift is automatically of gradient type. Indeed,
	\[
	F(u)=f(u(\cdot))
	=
	-D\Phi(u),
	\]
	where
	\[
	\Phi(u)=\int_0^1V(u(x))\,dx,
	\qquad
	V'=-f.
	\]
	Consequently, the invariant measure satisfies
	\[
	2F=CD\log\rho^{ss},
	\]
	so the dynamics is reversible, and the invariant measure is the Gibbs measure
	\[
	\rho^{ss}\propto
	\exp\!\left(
	-2\int_0^1V(u(x))\,dx
	\right)
	\]
	with respect to the Gaussian reference measure $\mu$. In particular,
	\[
	EP^{ss}=0,
	\qquad
	Q_{hk}=0,
	\]
	and the only nontrivial thermodynamic quantity is the transient free-energy dissipation $F_d(t)$ whenever the initial distribution is not stationary.
	
	This example therefore serves as an equilibrium reference. A genuine nonequilibrium steady state requires breaking the gradient structure, for instance by adding non-potential forcing or coupling to additional nonequilibrium degrees of freedom.
	
	\begin{remark}[A divergence-free nonequilibrium class]
		The simplest genuinely nonequilibrium examples in the symmetric regime are divergence-free drifts. Let $F\in C_b(H;H)$ be globally Lipschitz with $\dv_\mu F=0$ (see~\eqref{finite divergence}) and $F\not\equiv0$. Then the stationary Fokker--Planck equation of Lemma~\ref{lem:rho-ss-stationary} gives $K_{2,\mu}^*\mathbf 1=-\dv_\mu F=0$, so $\rho^{ss}\equiv1$ and $\nu=\mu$. The stationary thermodynamic force reduces to $2B^{-1}F$, which is bounded by the standing range compatibility, so Hypothesis~\ref{hypo_5} and the Girsanov admissibility hold automatically. The entropy production rate is $epr=2\int_H\|B^{-1}F\|_U^2\,\mathrm d\mu>0$, and by the reversibility criterion of Theorem~\ref{Thm_equi_sym} the dynamics is reversible if and only if $F\equiv0$. Such drifts arise, for instance, by rotating a finite-dimensional block of an eigenbasis of $Q_\infty$ with a smooth cutoff, the rotation chosen skew with respect to the $Q_\infty^{-1}$-weighted inner product on that block. This provides the infinite-dimensional analogue of the classical rotational nonequilibrium steady states of finite-dimensional stochastic thermodynamics.
	\end{remark}
	
	For comparison, one may consider a more general stochastic reaction--diffusion equation with Dirichlet boundary conditions,
	\[
	\frac{\partial u}{\partial t}
	=
	\Delta u+F(u)+\xi,
	\]
	where \(F:H\to H\) is a bounded globally Lipschitz nonlinear drift. Since the first Dirichlet eigenvalue is strictly positive, the Laplacian generates an exponentially stable analytic semigroup. Hence, whenever \(F\) satisfies the hypotheses imposed above, the thermodynamic results of the preceding sections apply. If \(F\) is not a gradient field, the dynamics admits a genuine nonequilibrium steady state with positive entropy production. Thus nonequilibrium behaviour comes from the non-gradient part of the drift, rather than from the diffusion operator itself.
	
	\subsection{A divergence-free cylindrical model for the transient theory}
	\label{subsec:divfree}
	
	The hypotheses of the transient theory---Hypothesis~\ref{hypo_6} in general, and \textbf{(H-rev)} beyond the linear class of Proposition~\ref{prop:Hrev-linear}---have so far been assumed rather than verified. We now construct a genuinely nonequilibrium model, within the divergence-free class of the preceding remark, for which \textbf{(H-rev)} and \textbf{(H-form)} can be checked directly, while the remaining transient regularity reduces to standard finite-dimensional density estimates.
	
	Let $A$ be self-adjoint and negative with eigenvalues $-\lambda_k$, where $\lambda_k=\pi^2k^2$ as for the Dirichlet Laplacian on $(0,1)$, and let $C=\mathrm{Id}$. Then $\sum_k\lambda_k^{-1}<\infty$ and $Q_\infty e_k=(2\lambda_k)^{-1}e_k$ is of trace class. Hypothesis~\ref{hypo_3}\ref{hypo_3.b} holds: the range condition in~{\rm(i)} is valid for any self-adjoint $A$ with $\lambda_k>0$, since $\sup_{\lambda>0}\lambda e^{-2\lambda t}<\infty$, while in~{\rm(ii)} one has $\int_0^1t^{-2\alpha}e^{-2\lambda_k t}\,\mathrm dt\asymp\lambda_k^{2\alpha-1}$, so what is needed is $\sum_k\lambda_k^{2\alpha-1}<\infty$ for some $\alpha\in(0,\tfrac12)$; with $\lambda_k=\pi^2k^2$ this holds for every $\alpha\in(0,\tfrac14)$, since $\|e^{tA}\|_{\mathrm{HS}}^2=\sum_k e^{-2\lambda_k t}=O(t^{-1/2})$ as $t\downarrow0$. In the standardized coordinates
	\[
	y_k:=(2\lambda_k)^{1/2}\,x_k,
	\]
	which are independent standard normal under $\mu$, let $J$ be the rotation acting on coordinate pairs, $(Jy)_{2j-1}=-y_{2j}$ and $(Jy)_{2j}=y_{2j-1}$, let $P_{2N}$ denote the projection onto the first $2N$ coordinates, let $\varphi:[0,\infty)\to[0,1]$ be smooth with $\varphi\equiv1$ on $[0,1]$ and $\varphi\equiv0$ on $[4,\infty)$, and define the drift, in these coordinates, by
	\[
	G(y):=\varepsilon\,\varphi\big(|P_{2N}y|^2\big)\,J P_{2N}y,
	\qquad
	F:=(2\Lambda)^{-1/2}\,G\circ(2\Lambda)^{1/2},
	\]
	where $\Lambda$ denotes multiplication by $\lambda_k$ and $\varepsilon>0$.
	
	The field $G$ is cylindrical over the first $2N$ coordinates, bounded, smooth with bounded derivatives and globally Lipschitz, with Lipschitz constant at most $C\varepsilon$, so $F$ inherits the same properties and Hypothesis~\ref{hypo_2} holds once $C\varepsilon<\lambda_1$. It is divergence-free with respect to $\mu$:
	\[
	\dv_{\mathbb R^{2N}}\big(\varphi(|y|^2)Jy\big)
	=2\varphi'(|y|^2)\langle y,Jy\rangle+\varphi(|y|^2)\Tr J=0,
	\qquad
	\langle y,Jy\rangle=0,
	\]
	and the Gaussian divergence transforms under the linear change of coordinates, so $\dv_\mu F=0$. It is not a gradient field, since the line integral of $G$ along the circle $\{|P_{2N}y|=r\}$, $r\leqslant1$, equals $2\pi\varepsilon r^2\neq0$. By Lemma~\ref{lem:rho-ss-stationary}, $\rho^{ss}\equiv1$ and $\nu=\mu$, the stationary force $2B^{-1}F=2F$ is bounded, so Hypothesis~\ref{hypo_5} and the Girsanov admissibility are automatic, and
	\[
	epr=2\int_H|F|^2\,\mathrm d\mu
	=2\varepsilon^2\,\mathbb E\Big[\varphi\big(|P_{2N}y|^2\big)^2
	\sum_{j\leqslant N}\Big(\frac{y_{2j}^2}{2\lambda_{2j-1}}+\frac{y_{2j-1}^2}{2\lambda_{2j}}\Big)\Big]>0,
	\]
	a genuine nonequilibrium steady state of non-gradient type.
	
	The transient hypotheses are verified for initial laws of product form
	\[
	\nu_0=\nu_0^{(N)}\otimes\mu_{>2N},
	\qquad
	\nu_0^{(N)}=N(m_0,\Sigma_0),
	\qquad
	0<\Sigma_0<I,
	\]
	where $0<\Sigma_0<I$ is understood in the order of symmetric positive matrices, so that the block initial density with respect to the standard Gaussian,
	\[
	\rho_0^{(N)}(y)\propto\exp\Big[-\tfrac12\,y^{*}\big(\Sigma_0^{-1}-I\big)y+\text{ affine terms}\Big],
	\]
	has Gaussian decay and the initial law is non-stationary. Since $F$ is cylindrical and the noise is white, the coordinates $k>2N$ stay stationary and independent of the block for all times, while the block evolves as the $2N$-dimensional diffusion
	\[
	\mathrm dy=\big[-\Lambda_N y+G(y)\big]\,\mathrm dt+(2\Lambda_N)^{1/2}\,\mathrm dW^{(2N)},
	\qquad
	\Lambda_N=\operatorname{diag}(\lambda_1,\dots,\lambda_{2N}),
	\]
	whose diffusion matrix is constant and nondegenerate and whose drift is smooth, globally Lipschitz, of at most linear growth, with bounded derivatives of all orders. Its transition densities are smooth and strictly positive, and the full transient density is cylindrical, $\rho_t(x)=\rho_t^{(N)}(y_1,\dots,y_{2N})$.
	\begin{enumerate}
		\item $\rho_0\in C_b^2(H)$ and $\rho_0>0$, being a smooth cylindrical function with Gaussian decay, and $\nu_0\neq\mu$ since $\Sigma_0\neq I$;
		\item (density estimates) for the block equation, being a uniformly parabolic equation with nondegenerate diffusion and Gaussian-type initial data, the following standard bounds hold:
		\begin{equation}
			\big|D\log\rho_t^{(N)}(y)\big|\leqslant C_T(1+|y|),
			\qquad
			\big|D^2\log\rho_t^{(N)}(y)\big|\leqslant C_T(1+|y|^2),
			\qquad 0\leqslant t\leqslant T,
			\label{eq:density-estimates}
		\end{equation}
		the constants at $t=0$ coming from the quadratic $\log\rho_0^{(N)}$;
		\item given the density estimates \eqref{eq:density-estimates}, Hypothesis~\ref{hypo_6} holds in its stated form. Its force-growth clause follows from the linear bound on $D\log\rho_t^{(N)}$, the block components of $2F-D\log\rho_t$ growing at most linearly and the tail components vanishing. Its regularity clauses follow from the two bounds together with the block forward equation: the graph norm of $\log\rho_t$ in $D(K_{2,\mu})$ involves only block derivatives and is uniformly bounded on $[0,T]$, and $\partial_t\log\rho_t\in L^2((0,T)\times H,\mathrm dt\otimes\mu)$, all integrals being finite since $\mu$ is Gaussian;
		\item \textbf{(H-rev)} holds for every initial law of the above form: the block reversal is given by the classical finite-dimensional formula \cite{HaussmannPardoux86,JQQ}, and a direct computation in the block coordinates, using $\sigma\sigma^{*}=2\Lambda_N$ and $Q_N=I$, identifies the reversed drift as
		\[
		-\Lambda_Ny+2\Lambda_N D\log\rho^{(N)}_{t-s}(y)-G(y),
		\]
		which is exactly the $y$-coordinate form of $A^{r}x+CD\log\rho^{\lambda}_{t-s}(x)-F(x)$ with $A^{r}=A$, while the independent tail reverses to an Ornstein--Uhlenbeck process, the case $m=0$ of Proposition~\ref{prop:Hrev-linear}.
		\item \textbf{(H-form)} holds: with $C=\mathrm{Id}$, the form \eqref{eq:dirichlet-form} is the classical Ornstein--Uhlenbeck Dirichlet form on $(H,\mu)$, closed and quasi-regular \cite{AlbeverioRoeckner1991,MaRoeckner1992}, perturbed by the first-order term $\int_H\langle F,Du\rangle v\,\mathrm d\mu$ with bounded cylindrical $F$, which is form-bounded; closability is preserved under form-bounded perturbations, and quasi-regularity together with the association with the unique mild solution follows from the perturbation theory of generalized Dirichlet forms \cite{Stannat1999}.
	\end{enumerate}
	
	Since some results invoke the transient hypotheses on $(0,2T)$, applying \eqref{eq:density-estimates} on $[0,2T]$ covers all such uses. Finally, the adjoint dynamics of Theorem~\ref{thm_ft_sym} reduces here to the drift $-F$, bounded and globally Lipschitz, so its well-posedness is automatic and $\rho^{ss}\equiv1>0$. Every hypothesis of the steady-state theory and, modulo the density estimates \eqref{eq:density-estimates}, every hypothesis of the transient theory of Sections 3.1 and 4.1, \textbf{(H-form)} included, is thus verified for this model, with genuinely non-stationary initial laws, and both integral fluctuation theorems of Theorem~\ref{thm_ft_sym} hold within it under \eqref{eq:density-estimates}.

	\subsection{A two-temperature coupled Ornstein--Uhlenbeck system}
	\label{subsec:M7}
	
	The associated Ornstein--Uhlenbeck processes in all the previous examples  belong to the symmetric scenario. Their nonequilibrium behaviour, when present, is generated by the nonlinear drift, while the associated Ornstein--Uhlenbeck process remains reversible. We now present a genuinely non-symmetric Ornstein--Uhlenbeck process, which illustrates the theory developed in Section~\ref{sec:ts-nonsym}.
	
	Consider two stochastic heat equations coupled through linear exchange,
	\[
	\begin{aligned}
		\frac{\partial u_1}{\partial t}
		&=
		\Delta u_1
		+
		\kappa(u_2-u_1)
		+
		\sqrt{2T_1}\,\xi_1,
		\\
		\frac{\partial u_2}{\partial t}
		&=
		\Delta u_2
		+
		\kappa(u_1-u_2)
		+
		\sqrt{2T_2}\,\xi_2,
	\end{aligned}
	\]
	where $\xi_1$ and $\xi_2$ are independent space--time white noises and
	$T_1,T_2>0$ denote the temperatures of the two heat baths.
	
	Writing
	\[
	X=(u_1,u_2),
	\]
	the system can be rewritten as the linear SPDE
	\[
	dX=(A_0+\kappa K)X\,dt+B\,dW,
	\]
	where
	\[
	A_0=\Delta\otimes I_2,
	\qquad
	K=
	\begin{pmatrix}
		-1&1\\
		1&-1
	\end{pmatrix},
	\]
	and
	\[
	C=BB^*
	=
	\operatorname{diag}(2T_1,2T_2).
	\]
	The coupling is of gradient type, with interaction potential $\tfrac{\kappa}{2}\int_D(u_2-u_1)^2\,\mathrm dx$.
	
	The physical picture is transparent. When $T_1=T_2$, one has $C=2T_1\,\mathrm{Id}$, which commutes with the self-adjoint operator $A=A_0+\kappa K$, so the associated Ornstein--Uhlenbeck process is symmetric by Proposition~\ref{prop:symmetry} and the dynamics is reversible. When $T_1\neq T_2$, energy continuously flows between the two reservoirs through the gradient coupling, producing a genuine nonequilibrium steady state. Unlike the previous examples, this irreversibility originates entirely from the linear part of the dynamics.
	
	To satisfy the hypothesis, we work on a bounded interval and consider the Dirichlet realization of the Laplacian (or equivalently introduce a positive mass term), so that $-\Delta$ has eigenvalues $\lambda_k>0$ with eigenfunctions $e_k$. Then the linear operator generates an exponentially stable analytic semigroup, and so does the bounded symmetric perturbation $A=A_0+\kappa K$ for $\kappa>0$. Moreover, the driving noise is white, so Hypothesis~\ref{hypo_3}\ref{hypo_3.b} applies automatically.
	
	Both $A$ and $Q_\infty$ are block-diagonal with respect to the decomposition $H=L^2(D)\otimes\mathbb R^2$ along the eigenbasis $(e_k)$, with two-by-two blocks $A_k=-\lambda_k I_2+\kappa K$ and $C_k=\operatorname{diag}(2T_1,2T_2)$. Writing $Q_k$ for the corresponding blocks of $Q_\infty$ and solving the two-by-two Lyapunov equations $A_kQ_k+Q_kA_k=-C_k$ explicitly, using $K^2=-2K$, gives
	\[
	q_{11}^{(k)}-q_{22}^{(k)}=\frac{T_1-T_2}{\lambda_k+\kappa},
	\qquad
	2(\lambda_k+\kappa)q_{12}^{(k)}=\kappa\,\big(q_{11}^{(k)}+q_{22}^{(k)}\big),
	\]
	whence
	\[
	R_kQ_k=-2A_kQ_k-C_k
	=\gamma_k
	\begin{pmatrix}
		0&1\\
		-1&0
	\end{pmatrix},
	\qquad
	\gamma_k:=\frac{\kappa(T_1-T_2)}{\lambda_k+\kappa},
	\]
	and
	\[
	A_kQ_k=
	\begin{pmatrix}
		-T_1&-\tfrac{\gamma_k}{2}\\
		\tfrac{\gamma_k}{2}&-T_2
	\end{pmatrix}.
	\]
	Thus $R$ is block-diagonal with uniformly bounded blocks,
	\[
	\|R\|_{L(H)}
	=\sup_k\|R_k\|
	\leqslant\frac{2\kappa\,|T_1-T_2|}{\min(T_1,T_2)},
	\]
	since $R_k=\gamma_kJQ_k^{-1}$ with $J$ the elementary rotation and $Q_k\geqslant\min(T_1,T_2)(-A_k)^{-1}\geqslant\frac{\min(T_1,T_2)}{\lambda_k+2\kappa}\,I_2$. So Hypothesis~\ref{hypo_4}(ii) holds, and Hypothesis~\ref{hypo_4}(iii) holds because $C^{-1/2}$ is bounded. Hypothesis~\ref{hypo_5} holds as well: since $F=0$ and $\rho^{ss}\equiv1$, the stationary force reduces to $-Rx$, of linear growth $|C^{-1/2}Rx|\leqslant\|C^{-1/2}R\|\,|x|$ by Hypothesis~\ref{hypo_4}(iii), while both individual drift bounds are trivial. Consequently, all thermodynamic results apply.
	
	A remarkable feature of this example is that the invariant measure remains Gaussian,
	\[
	\nu=\mu,
	\qquad
	\rho^{ss}\equiv1,
	\]
	since there is no nonlinear drift. Nevertheless, when $T_1\neq T_2$ the dynamics is genuinely irreversible, because the associated Ornstein--Uhlenbeck process is non-symmetric. The steady-state entropy production is therefore generated entirely by the irreversible linear drift,
	\[
	epr
	=
	\frac12
	\int_H
	\left\|
	B^{-1}Rx
	\right\|_U^2
	\,d\nu
	=\sum_{k\geqslant1}epr_k,
	\qquad
	epr_k:=\tfrac12\Tr\big(C_k^{-1}R_kQ_kR_k^{*}\big).
	\]
	The block structure permits an explicit evaluation. Since $JQ_kJ=-(\det Q_k)\,Q_k^{-1}$ for every symmetric two-by-two matrix, and since by the explicit form of $Q_k$ above $\Tr(C_k^{-1}Q_k)=(\lambda_k+\kappa)\det Q_k/(T_1T_2)$, a direct two-by-two computation gives
	\[
	epr_k
	=\frac{\kappa^2(T_1-T_2)^2}{2T_1T_2\,(\lambda_k+\kappa)},
	\]
	and therefore
	\[
	epr
	=\frac{\kappa^2(T_1-T_2)^2}{2T_1T_2}
	\sum_{k\geqslant1}\frac{1}{\lambda_k+\kappa}
	\leqslant
	\frac{\kappa^2(T_1-T_2)^2}{2T_1T_2}\,
	\Tr\big((-A_0)^{-1}\big),
	\]
	the rightmost trace being finite because $Q_\infty$ is of trace class, as Hypothesis~\ref{hypo_1} requires. In particular,
	\[
	epr>0
	\quad\text{if and only if}\quad
	\kappa\neq0\ \text{and}\ T_1\neq T_2,
	\]
	the entropy production vanishing in the two reversible limits of decoupled fields ($\kappa=0$) and of a common bath ($T_1=T_2$). On an interval of length $L$, where $\lambda_k=(\pi k/L)^2$, the spectral sum is explicit, \[ \sum_{k\geqslant1}\frac{1}{\lambda_k+\kappa} =\frac{L}{2\sqrt{\kappa}}\,\coth(L\sqrt{\kappa})-\frac{1}{2\kappa}, \] so the weak-coupling behaviour is $epr\sim\frac{L^2}{6}\cdot\frac{\kappa^2(T_1-T_2)^2}{2T_1T_2}$ as $\kappa\downarrow0$, quadratic in the coupling strength and the temperature mismatch, while the strong-coupling behaviour is $epr\sim\frac{L\,\kappa^{3/2}(T_1-T_2)^2}{4T_1T_2}$ as $\kappa\to\infty$.
	
	This example demonstrates the essential role of the non-symmetric theory developed in this paper. In contrast to the symmetric framework, where irreversibility is induced by the nonlinear drift, here the entropy production is generated solely by the irreversible linear dynamics while the invariant measure remains Gaussian. The model provides a simple infinite-dimensional analogue of a system coupled to two heat reservoirs and illustrates that the present framework naturally captures genuinely nonequilibrium Ornstein--Uhlenbeck processes. The steady-state results of Section~\ref{sec:ts-nonsym} rest on Hypotheses~\ref{hypo_1}--\ref{hypo_5} and \textbf{(H-rev)}. All the hypotheses have been verified above---for Hypothesis~\ref{hypo_4}(i), note that $\rho^{ss}\equiv1\in D(L_2)$---and the following proposition supplies \textbf{(H-rev)}, so that the steady-state theory applies to the two-temperature system unconditionally.
	
	\begin{proposition}[Time-reversal identification for linear systems]
		\label{prop:Hrev-linear}
		Let $F=0$ and assume that $A^{r}$ generates a $C_0$-semigroup (e.g.\
		under Hypothesis~\ref{hypo_4}(ii)). Then \textbf{(H-rev)} holds for every mean-shifted
		Gaussian initial law $\lambda=N(m,Q_\infty)$ with
		$m\in Q_\infty^{1/2}(H)$ (so that $\lambda\sim\mu$ by the Feldman--H\'ajek theorem, Theorem~\ref{thm:feldman-hajek}) and with locally
		square-integrable inhomogeneity,
		\[
		\int_0^t\big\|CQ_\infty^{-1}e^{uA}m\big\|^2\,\mathrm du<\infty
		\qquad\text{for every }t>0:
		\]
		the reversal
		$\Theta_t{}_\#P^{\lambda}_{[0,t]}$ is the law of the unique mild
		solution of
		\[
		\mathrm dY(s)=\big[A^{r}Y(s)+CQ_\infty^{-1}m_{t-s}\big]\,\mathrm ds
		+B\,\mathrm dW(s),
		\qquad Y(0)\sim N(m_t,Q_\infty),
		\]
		where $m_u=e^{uA}m$. The integrability condition is automatic for $m=0$ and holds whenever $A$, $C$ and $Q_\infty$ are simultaneously diagonalizable: writing $Q_\infty e_k=q_ke_k$, $Ce_k=C_ke_k$, $m=Q_\infty^{1/2}h$ and using $2\lambda_kq_k=C_k$ from the Lyapunov equation, where $\lambda_k$ are the eigenvalues of $-A$,
		\[
		\int_0^t\|CQ_\infty^{-1}e^{uA}m\|^2\,\mathrm du
		=\sum_{k\geqslant1}C_kh_k^2\big(1-e^{-2t\lambda_k}\big)
		\leqslant\|C\|\,|Q_\infty^{-1/2}m|^2.
		\]
		In particular \textbf{(H-rev)} holds for the
		two-temperature system of this subsection, and for $m=0$ it reduces
		to the identification of the reversed Ornstein--Uhlenbeck process.
	\end{proposition}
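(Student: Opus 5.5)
Since $F=0$, the forward process is the Ornstein--Uhlenbeck process $Z$ of \eqref{OU} started from $\lambda=N(m,Q_\infty)$. Its law on $C([0,t];H)$ is therefore Gaussian and is determined by its mean function and covariance kernel. The plan is to compute these two objects for the time-reversed process $Y(s):=Z(t-s)$. I will then check that they coincide with the mean and covariance of the mild solution of the proposed reversed equation. Both laws are Gaussian on path space, with continuous paths by the mild-solution proposition, so the two laws are then equal.

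\emph{Step 1 (forward moments).} Write $Z(u)=e^{uA}\xi+W_A(u)$ with $\xi\sim N(m,Q_\infty)$ independent of $W$. Then $\mathbb E Z(u)=m_u=e^{uA}m$. Since $Q_u+e^{uA}Q_\infty e^{uA^*}=Q_\infty$, the marginal at time $u$ is $N(m_u,Q_\infty)$, and the covariance is $\operatorname{Cov}(Z(u),Z(v))=e^{(u-v)A}Q_\infty$ for $u\geqslant v$. Hence for $s\geqslant r$,
\[
\mathbb E\,Y(s)=m_{t-s},\qquad \operatorname{Cov}(Y(s),Y(r))=Q_\infty e^{(s-r)A^*}.
\]

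\emph{Step 2 (reversed equation).} Let $\tilde Y$ be the mild solution
\[
\tilde Y(s)=e^{sA^r}\eta+\int_0^se^{(s-u)A^r}CQ_\infty^{-1}m_{t-u}\,\mathrm du+\int_0^se^{(s-u)A^r}B\,\mathrm dW(u),
\]
with $\eta\sim N(m_t,Q_\infty)$. The deterministic integral is well defined by the assumed local square integrability. Existence and uniqueness follow because the drift is affine with an $L^2$-in-time inhomogeneity, so I can invoke \cite[Theorem~5.4]{Prato1992} with the bounded-perturbation semigroup.

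For the covariance, I will use the identity $A^rQ_\infty=Q_\infty A^*$, i.e.\ $e^{sA^r}Q_\infty=Q_\infty e^{sA^*}$, from the proof of Lemma~\ref{lem:two-ou}. Together with the fact that $\mu$ is invariant for $A^r$ (the shared Lyapunov equation), this gives $\operatorname{Cov}(\tilde Y(s),\tilde Y(r))=e^{(s-r)A^r}Q_\infty=Q_\infty e^{(s-r)A^*}$, which matches Step 1.

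For the mean, $\tilde m(s):=\mathbb E\tilde Y(s)$ solves $\tilde m'=A^r\tilde m+CQ_\infty^{-1}m_{t-s}$ in the mild sense, with $\tilde m(0)=m_t$. Take the candidate $\tilde m(s)=m_{t-s}$. It satisfies $\frac{\mathrm d}{\mathrm ds}m_{t-s}=-Am_{t-s}$, and $A^r+CQ_\infty^{-1}=-A$. Hence it solves the equation, and uniqueness of mild solutions of the linear inhomogeneous Cauchy problem gives $\tilde m(s)=m_{t-s}$.

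\emph{Step 3 (obstacle).} The delicate point is making Step 2 rigorous. Neither $m_{t-s}$ nor $CQ_\infty^{-1}m_{t-s}$ need lie in $D(A)$ or $H$ pointwise, so I will verify the mild identity in weak form. I will pair with $z\in D((A^r)^*)$ and apply $e^{(s-u)A^r}$ to the identity $CQ_\infty^{-1}m_{t-u}=-(A+A^r)m_{t-u}$, read through $Q_\infty$. Concretely, the relation
\[
e^{sA^r}m_t+\int_0^se^{(s-u)A^r}CQ_\infty^{-1}m_{t-u}\,\mathrm du=m_{t-s}
\]
should follow by differentiating $u\mapsto e^{(s-u)A^r}m_{t-u}$, first for $m\in Q_\infty(D(A^*))$ (a dense set), and then passing to the limit using the integrability hypothesis.

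\emph{Step 4 (the identification).} By Steps 1–3 the reversed drift is $A^rY+CQ_\infty^{-1}m_{t-s}$. Since $D\log\rho^\lambda_{t-s}(y)=Q_\infty^{-1}m_{t-s}$ by the Cameron--Martin formula, this is exactly the drift $A^rY+CD\log\rho^\lambda_{t-s}(Y)$ required in \textbf{(H-rev)}.

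\emph{Step 5 (diagonal case).} The integral formula is a mode-by-mode computation. Here $\|CQ_\infty^{-1}e^{uA}m\|^2=\sum_k C_k^2q_k^{-1}h_k^2e^{-2\lambda_ku}$, and integrating in $u$ with $C_k/q_k=2\lambda_k$ gives the stated sum and bound. For the two-temperature system, $m=0$ and $A^r=A+R$ with $R$ bounded, so the proposition applies.
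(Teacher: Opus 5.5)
Your proof is correct, but it is organized differently from the paper's. The paper works through the Markov structure. It invokes the Markov property of the reversed process (citing \cite{JQQ}), computes the transition kernels $\mathcal L(Z_u\mid Z_s)$ by a Gaussian regression formula obtained from finite-dimensional projections, and then checks that the affine equation has the same kernels and initial law. You instead treat both processes as Gaussian processes on $C([0,t];H)$ and match mean functions and covariance kernels directly. You compare $\operatorname{Cov}(Y(s),Y(r))=Q_\infty e^{(s-r)A^{*}}$ with $e^{(s-r)A^{r}}Q_\infty$, and $m_{t-s}$ with the mild mean.

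The analytic input is the same in both proofs. The first ingredient is the conjugation identity $e^{\tau A^{r}}Q_\infty=Q_\infty e^{\tau A^{*}}$. The second is the invariance identity $Q_\infty-e^{\tau A^{r}}Q_\infty e^{\tau(A^{r})^{*}}=\int_0^\tau e^{rA^{r}}Ce^{r(A^{r})^{*}}\,\mathrm dr$, which you use as $\operatorname{Var}\tilde Y(r)=Q_\infty$. The third is that $s\mapsto m_{t-s}$ solves the inhomogeneous mean equation because $A^{r}+CQ_\infty^{-1}=-A$.

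Your route is more economical. You need neither the Markov property of the reversal, for which the paper's reference is finite-dimensional, nor conditional Gaussian laws in infinite dimensions, where the unbounded $Q_\infty^{-1}$ enters the regression coefficient. Joint Gaussianity of the finite-dimensional marginals plus path continuity already determines the law. The paper's route, in exchange, exhibits the reversal explicitly as a Markov process with drift $A^{r}$. That is the form in which \textbf{(H-rev)} and Theorem~\ref{adjoint_L_2} are phrased.

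On Step 3, you are more careful than the paper, which asserts the mean identity only formally. However, the density-and-limit step is unnecessary, and ``using the integrability hypothesis'' does not by itself control the approximants. You can pair directly with $z\in D(A^{*})$, which equals $D((A^{r})^{*})$ when $R$ is bounded as in Hypothesis~\ref{hypo_4}(ii). Since $A+A^{r}=-CQ_\infty^{-1}$ on $Q_\infty(H)\subset D(A)$ once $AQ_\infty\in L(H)$, one has $\langle CQ_\infty^{-1}y,z\rangle=-\langle y,(A+A^{r})^{*}z\rangle$ for every $y\in Q_\infty(H)$. Together with $\langle m_{t-s},z\rangle-\langle m_t,z\rangle=-\int_0^s\langle m_{t-u},A^{*}z\rangle\,\mathrm du$, this shows that $m_{t-\cdot}$ is a weak solution of the mean equation. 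Weak and mild solutions coincide for an inhomogeneity in $L^1(0,t;H)$, so no approximation of $m$ is needed.
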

	
	\begin{proof}
		Since $F=0$, the solution is the Ornstein--Uhlenbeck process itself,
		with marginals $\lambda_u=N(m_u,Q_\infty)$; note that
		$m_u=e^{uA}m\in Q_\infty^{1/2}(H)$, because
		$e^{uA}Q_\infty e^{uA^{*}}\leqslant Q_\infty$. The density ratio of
		two Gaussians with the same covariance has $C$-gradient
		$CD\log\rho^{\lambda}_u=CQ_\infty^{-1}m_u$, so the SPDE displayed in
		\textbf{(H-rev)} is precisely the affine equation above.
		
		Let $Z_s:=X_{t-s}$ denote the reversed process. It is Gaussian, and
		Markov (time-inhomogeneous) by \cite{JQQ}. We compute its transition
		kernels. For $0\leqslant s<u\leqslant t$, put $v:=t-u\leqslant
		r:=t-s$. From the mild form,
		$\mathbb E[\langle X_r-m_r,h\rangle\mid X_v]
		=\langle X_v-m_v,e^{(r-v)A^{*}}h\rangle$, whence the joint
		covariance
		$\mathbb E[\langle X_v-m_v,k\rangle\langle X_r-m_r,h\rangle]
		=\langle Q_\infty k,e^{(r-v)A^{*}}h\rangle$.
		Projecting on finitely many directions and passing to the limit in
		the finite-dimensional Gaussian regression formula yields
		\[
		\mathcal L\big(Z_u\,\big|\,Z_s\big)
		=
		N\big(m_{t-u}+e^{(u-s)A^{r}}\,(Z_s-m_{t-s}),\;
		Q_\infty-e^{(u-s)A^{r}}Q_\infty e^{(u-s)(A^{r})^{*}}\big),
		\]
		where the conjugation identity
		$Q_\infty e^{\tau A^{*}}Q_\infty^{-1}=e^{\tau A^{r}}$ on
		$D(Q_\infty^{-1})$ follows from
		$A^{r}=Q_\infty A^{*}Q_\infty^{-1}$ and the generation hypothesis,
		and the covariance reduction
		\[
		Q_\infty-e^{\tau A^{r}}Q_\infty e^{\tau(A^{r})^{*}}
		=\int_0^\tau e^{rA^{r}}Ce^{rA^{r}{}^{*}}\,\mathrm dr
		\]
		is the Lyapunov computation of Lemma~\ref{lem:two-ou}.
		
		On the other hand, the affine equation admits a unique mild solution. The inhomogeneous term $CQ_\infty^{-1}m_{t-s}$ is locally square integrable in time by assumption, so the variation-of-constants integral converges absolutely, since $\sup_{s\leqslant t}\|e^{sA^{r}}\|\leqslant M_r$ gives
		\[
		\int_0^t\big\|e^{(t-s)A^{r}}CQ_\infty^{-1}m_{t-s}\big\|\,\mathrm ds
		\leqslant M_r\sqrt t\,\Big(\int_0^t\|CQ_\infty^{-1}m_{t-s}\|^2\,\mathrm ds\Big)^{1/2}<\infty,
		\]
		and pathwise uniqueness holds as in
		\cite[Theorem~7.5]{Prato1992}, both drifts being Lipschitz. Its transition from $(s,x)$ to time
		$u$ is, by the variation-of-constants formula, the Gaussian
		$N\big(m_{t-u}+e^{\tau A^{r}}(x-m_{t-s}),\;
		\int_0^\tau e^{(\tau-r)A^{r}}Ce^{(\tau-r)A^{r}{}^{*}}\mathrm dr\big)$:
		the mean function $s\mapsto m_{t-s}$ solves
		$\frac{\mathrm d}{\mathrm ds}m_{t-s}=-Am_{t-s}
		=A^{r}m_{t-s}+CQ_\infty^{-1}m_{t-s}$ by
		$A^{r}=-A-CQ_\infty^{-1}$, and the centred part is an
		Ornstein--Uhlenbeck process with drift $A^{r}$. In particular this
		solution is Markov with the same kernels.
		
		Both $Z$ and the solution of the affine equation are Markov processes
		on $C([0,t];H)$ with the same initial law $N(m_t,Q_\infty)$ and the
		same transition kernels; their finite-dimensional distributions
		coincide, and therefore so do their laws.
	\end{proof}

	\section{Conclusion and discussion}
	
	Stochastic thermodynamics has been remarkably successful for finite-dimensional diffusion processes, where trajectory-level thermodynamic quantities admit rigorous probabilistic interpretations. In this work, we have extended this framework to a class of semilinear stochastic partial differential equations. Using the Gaussian invariant measure of the underlying Ornstein--Uhlenbeck process as a reference, we developed a rigorous theory of entropy, heat dissipation, and entropy production in both stationary and transient settings, including the corresponding fluctuation theorems. We also established a decomposition of entropy production into housekeeping heat and free-energy dissipation, with both components represented by logarithms of Radon--Nikodym derivatives between suitable path measures.
	
	The framework builds on several tools from infinite-dimensional stochastic analysis, including the Da Prato--Zabczyk theory of semilinear SPDEs, Girsanov transformations for mild solutions, Kolmogorov equations, and an $L^2(\mu)$ It\^o formula. Together, these tools provide a rigorous probabilistic foundation for the thermodynamic quantities and fluctuation theorems developed here. By allowing both symmetric and non-symmetric Ornstein--Uhlenbeck reference processes, our theory covers not only reversible systems but also nonequilibrium steady states with irreversible linear dynamics.
	
	The mathematical rigor of our framework comes at the cost of relatively restrictive assumptions. In particular, the bounded Lipschitz condition on the nonlinear drift, the trace-class requirement on the reference covariance $Q_\infty$ (ensured either by trace-class noise or by Hilbert--Schmidt smoothing of the semigroup), and the bounded-irreversibility condition Hypothesis~\ref{hypo_4}(ii) exclude several physically important classes of SPDEs from the present framework.
	
	From the perspective of statistical physics, however, we expect the thermodynamic structures established here to extend beyond the present assumptions. These assumptions primarily reflect the analytical requirements of our approach rather than intrinsic limitations of the underlying thermodynamic principles. Extending the theory to broader classes of SPDEs remains an important direction for future research and may require new techniques from infinite-dimensional stochastic analysis and mathematical physics.

	\begin{appendices}

		\section{Feldman--H\'ajek theorem}
		\label{Appendix A}
		
		The following criterion characterizes the equivalence of Gaussian measures on a Hilbert space.
		
		\begin{theorem}[Feldman--H\'ajek theorem {\rm(\cite[Theorem~2.25]{Prato1992})}]
			\label{thm:feldman-hajek}
			Let $H$ be a separable Hilbert space and $\mu=N_{m_1,Q_1}$, $\nu=N_{m_2,Q_2}$ two Gaussian measures on it. Then $\mu$ and $\nu$ are either singular or equivalent; they are equivalent if and only if the following conditions hold.
			\begin{enumerate} [label=(\roman*)]
				\item \label{Feldman-Hajek 1}$Q_1^{1 / 2}(H)=Q_2^{1 / 2}(H)\stackrel{\Delta}{=}H_0$;
				\item $m_1-m_2 \in H_0$;
				\item The operator $\left(Q_1^{-1 / 2} Q_2^{1 / 2}\right)\left(Q_1^{-1 / 2} Q_2^{1 / 2}\right)^*-\operatorname{Id}$ is a Hilbert-Schmidt operator on $\overline{H_0}$.
			\end{enumerate}
		\end{theorem}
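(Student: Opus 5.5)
The strategy is to reduce everything to finite dimensions along the eigenbasis and then pass to the limit. Write $\mu=N_{m_1,Q_1}$ and $\nu=N_{m_2,Q_2}$. The first step is the dichotomy via Kakutani's theorem on infinite product measures. Suppose condition (i) holds. Diagonalize the symmetric operator $S:=(Q_1^{-1/2}Q_2^{1/2})(Q_1^{-1/2}Q_2^{1/2})^*$ on $\overline{H_0}$; when (iii) holds it is Id plus a compact operator, so it has discrete spectrum. This lets us choose coordinates $f_k=Q_1^{1/2}g_k$, with $(g_k)$ orthonormal, under which both measures become product measures. Their one-dimensional factors are $N(0,1)$ and $N(\beta_k,\lambda_k)$, where $\lambda_k$ are the eigenvalues of $S$ and $\beta_k$ are the coordinates of $Q_1^{-1/2}(m_2-m_1)$. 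Kakutani's theorem then says that the two product measures are either equivalent or singular, and equivalent exactly when the Hellinger affinities $H_k=\int\sqrt{\mathrm d\nu_k\,\mathrm d\mu_k}$ satisfy $\prod_kH_k>0$.

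Next we compute these affinities in closed form:
\[
H_k=\Big(\frac{2\sqrt{\lambda_k}}{1+\lambda_k}\Big)^{1/2}\exp\Big(-\frac{\beta_k^2}{4(1+\lambda_k)}\Big).
\]
The product is positive iff $\sum_k(\lambda_k-1)^2<\infty$ and $\sum_k\beta_k^2<\infty$. The first sum is exactly the Hilbert--Schmidt norm in (iii), and the second says $m_2-m_1\in Q_1^{1/2}(H)$, which is (ii). Putting these together gives the "if" direction.

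For the "only if" direction there are two things to do. First, if (i) fails, exhibit singularity directly. Use the characterization that $Q^{1/2}(H)$ is the Cameron--Martin space, i.e.\ the set of shifts $h$ with $\mu_h\sim\mu$. From this one shows the range inclusions must agree: if $Q_2^{1/2}(H)\not\subset Q_1^{1/2}(H)$, one builds a sequence $h_n$ with $|Q_2^{1/2}h_n|\to\infty$ while $|Q_1^{1/2}h_n|$ stays bounded. The set $\{x:\sum_n\langle x,h_n\rangle^2/c_n<\infty\}$ for a suitable normalization $c_n$ then has full measure under one law and is null under the other, by a Borel--Cantelli or zero--one argument. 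Second, if (i) holds but (ii) or (iii) fails, Kakutani gives singularity outright.

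The main obstacle is the case where (i) holds but $S-\mathrm{Id}$ is not compact, so $S$ need not have a discrete eigenbasis. Here I would take a spectral projection of $S$ onto a set where $|S-\mathrm{Id}|\ge\delta$. If that projection has infinite rank, pick infinitely many orthogonal unit vectors in its range; the corresponding coordinates are independent under both measures, and Kakutani applied to them yields singularity. This reduces the general case to the discrete one.
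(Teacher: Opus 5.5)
The paper gives no proof of this statement; it quotes it from Da Prato--Zabczyk, so your argument has to stand on its own. Its skeleton is the classical one: diagonalize $S=TT^*$ with $T=Q_1^{-1/2}Q_2^{1/2}$, reduce to product measures, and apply Kakutani with one-dimensional Hellinger integrals. Your formula for $H_k$ is correct, and so is the criterion $\prod_kH_k>0\iff\sum_k(\lambda_k-1)^2<\infty$ and $\sum_k\beta_k^2<\infty$. The Borel--Cantelli argument when (i) fails also works, once you pass to a subsequence on which $|Q_2^{1/2}h_n|$ grows fast. An $N(a,s^2)$ variable lies in $[-r,r]$ with probability at most $2r/(s\sqrt{2\pi})$ whatever $a$ is, so the means do no harm.

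\textbf{The spectral-projection step fails as written.} For arbitrary orthonormal $u_j$ in the range of $E=\mathbf 1_{\{|\lambda-1|\geqslant\delta\}}(S)$, the coordinates $\langle x-m_1,Q_1^{-1/2}u_j\rangle$ have covariance $\langle u_i,u_j\rangle=\delta_{ij}$ under $\mu$ but $\langle Su_i,u_j\rangle$ under $\nu$. The latter need not be diagonal when $S$ has continuous spectrum there. For example, take $\overline{H_0}=L^2([2,3])$ and $S$ multiplication by $\lambda$, realized by $Q_2=Q_1^{1/2}SQ_1^{1/2}$. With $u_1=1$ and $u_2=\sqrt3\,(2\lambda-5)$ one gets $\langle Su_1,u_2\rangle=\sqrt3/6$. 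So the coordinates are not independent under $\nu$ and Kakutani does not apply.

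Moreover, mixing the spectral parts above $1+\delta$ and below $1-\delta$ can give $\langle Su_j,u_j\rangle=1$, which destroys the Hellinger bound. To repair this, with (ii) already in hand (see below), work inside whichever of $\operatorname{Ran}E([1+\delta,\infty))$ and $\operatorname{Ran}E([0,1-\delta])$ is infinite-dimensional. Choose unit vectors $u_j$ there inductively with $u_j\perp u_i$ and $u_j\perp Su_i$ for $i<j$; these are finitely many constraints. Then the $\nu$-covariance is diagonal and $|\langle Su_j,u_j\rangle-1|\geqslant\delta$. Every Hellinger factor is then at most some $c(\delta)<1$, so the product vanishes.

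\textbf{The mean is mishandled.} The numbers $\beta_k=\langle Q_1^{-1/2}(m_2-m_1),g_k\rangle$ exist only if (ii) already holds. Eigenvectors of $S$ need not lie in $Q_1^{1/2}(H)$; take $S=\mathrm{Id}+c\,g\otimes g$ with $g\notin Q_1^{1/2}(H)$. So the coordinates $\xi_k(x)=\langle x-m_1,Q_1^{-1/2}g_k\rangle$ are only $L^2(\mu)$-limits of $\langle x-m_1,Q_1^{-1/2}g^{(n)}\rangle$ with $g^{(n)}\in Q_1^{1/2}(H)$, $g^{(n)}\to g_k$. Under $\nu$ these approximations carry the deterministic shift $\langle m_2-m_1,Q_1^{-1/2}g^{(n)}\rangle$. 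When $m_2-m_1\notin Q_1^{1/2}(H)$ this is an unbounded functional of $g^{(n)}$ and need not converge, so $\xi_k$ is not even defined $\nu$-a.s. Hence ``if (ii) fails, Kakutani gives singularity outright'' is not available, and (ii) must be settled before Kakutani is invoked.

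Your own Borel--Cantelli device does this. Since $m_2-m_1\notin Q_1^{1/2}(H)$, there are $h_n$ with $|Q_1^{1/2}h_n|\leqslant1$ and $|\langle m_2-m_1,h_n\rangle|\geqslant3n$. By (i) and Douglas' range-inclusion lemma, $|Q_2^{1/2}h_n|\leqslant c$. Then the event that $|\langle x-m_1,h_n\rangle|\leqslant n$ for all large $n$ has $\mu$-measure one and $\nu$-measure zero.

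\textbf{The pull-back to $H$ needs checking.} Even under (i)--(iii), you must transfer the product structure back to $H$. Pick a subsequence of the approximations converging a.s.\ under $\mu+\nu$, so that each $\xi_k$ is a single Borel function. This is possible because they also converge in $L^2(\nu)$: $\mathbb E_\nu\langle x-m_2,Q_1^{-1/2}g\rangle^2=\langle Sg,g\rangle$, and (ii) makes the shift continuous in $g$. Then check that $x=m_1+\sum_k\xi_k(x)\,Q_1^{1/2}g_k$ almost surely under both measures. Only then does equivalence of the image product measures give $\mu\sim\nu$.
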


		\section{$L^2(\mu)$ Dynkin--Fukushima It\^o formulas and their applications}
		\label{Appendix B}
		
		Since mild solutions generally do not belong to $D(A)$, the classical
		It\^o formula for strong solutions cannot be applied directly.
		We instead use the \emph{$L^2(\mu)$ Dynkin--Fukushima formula} \cite{FOT2011}, which only requires the test function to
		belong to $D(K_{2,\mu})$. In this formulation the operator $A$ enters
		only through the generator $K_{2,\mu}$, rather than as the pointwise
		quantity $AX$. When stronger regularity is available, the same
		identities agree with the classical mild It\^o formula of
		Da Prato--Jentzen--R\"ockner
		\cite{DaPratoJentzenRockner2019}.
		Assume Hypotheses~\ref{hypo_1}--\ref{hypo_3} and $AQ_\infty\in L(H)$. Set
		\[
		J:=AQ_\infty+\frac12 C.
		\]
		The Lyapunov identity then gives $J^*=-J$. On the core $\mathscr E_A(H)$, the pre-form associated with the
		$L^2(H,\mu)$-realization of the Kolmogorov operator of the forward
		SPDE~\eqref{semilinearSPDE} is
		\begin{equation}
			\mathcal E(u,v):=-\int_H K_{2,\mu}u\,v\,\mathrm d\mu
			=\int_H\left\langle\left(\tfrac12C+J\right)Du,Dv\right\rangle\,\mathrm d\mu
			-\int_H\langle F,Du\rangle v\,\mathrm d\mu .
			\label{eq:dirichlet-form}
		\end{equation}
		After the standard lower-bound shift, its closed quasi-regular
		realization is the possibly non-symmetric generalized Dirichlet form
		properly associated with the forward mild solution $X$ of
		\eqref{semilinearSPDE}; see
		\cite{AlbeverioRoeckner1991,MaRoeckner1992,Stannat1999,Trutnau2000}.
		The hypothesis that this form be closable and quasi-regular, properly associated with the mild solution, is stated as \textbf{(H-form)} at the beginning of Section~\ref{sec:transient}. A complete verification in the present setting would require checking the sector bounds of \cite{MaRoeckner1992,Stannat1999}.
		Here $\mu=N(0,Q_\infty)$ is the invariant Gaussian measure of the
		reference Ornstein--Uhlenbeck process and serves as the reference
		measure for the $L^2$-realization. It need not be invariant for the
		nonlinear process $X$, whose invariant measure is $\nu$.
		Thus $C=BB^*$ determines the symmetric diffusion part, $J$ represents
		the antisymmetric part of the linear Ornstein--Uhlenbeck dynamics, and
		$F$ is the nonlinear first-order perturbation.
		
		On $\mathscr E_A(H)$, the corresponding carr\'e du champ is given by
		\[
		\Gamma(u)
		:=
		K_{2,\mu}(u^2)-2uK_{2,\mu}u
		=
		|B^*Du|_U^2
		=
		|C^{1/2}Du|^2.
		\]

		\begin{theorem}[$L^2(\mu)$ Dynkin--Fukushima formula]
			\label{thm:dynkin-ito}
			Under \textbf{(H-form)}, let $(\mathcal E,D(\mathcal E))$ be the closure of
			\eqref{eq:dirichlet-form}, properly associated with $X$, with generator
			$K_{2,\mu}$. For $u\in D(K_{2,\mu})$, there exists an
			$\mathcal E$-quasi-continuous $\mu$-version $\widetilde u$ such that, for
			$\mathcal E$-q.e.\ $x\in H$,
			\begin{equation}
				\widetilde u(X_t)=\widetilde u(X_0)
				+\int_0^tK_{2,\mu}u(X_s)\,\mathrm ds+M_t^{[u]},
				\label{eq:dynkin-ito}
			\end{equation}
			$\mathbb P_x$-a.s.\ for all $t\geqslant0$. Here $M^{[u]}$ is a
			continuous martingale additive functional of finite energy. In the
			mild-solution realization,
			\[
			M_t^{[u]}=\int_0^t\langle B^*Du(X_s),\mathrm dW_s\rangle_U,
			\]
			and
			\begin{equation}
				\big\langle M^{[u]}\big\rangle_t
				=\int_0^t|B^*Du(X_s)|_U^2\,\mathrm ds
				=\int_0^t|C^{1/2}Du(X_s)|^2\,\mathrm ds .
				\label{eq:carre-bracket}
			\end{equation}
		\end{theorem}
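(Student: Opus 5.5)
The plan is to obtain the theorem from the general Fukushima decomposition for quasi-regular (generalized) Dirichlet forms, and then identify the martingale part with a stochastic integral against $W$. First, by \textbf{(H-form)} the closure $(\mathcal E,D(\mathcal E))$ is quasi-regular and properly associated with $X$. The quasi-continuity theory of \cite{MaRoeckner1992,Stannat1999} then gives, for each $u\in D(\mathcal E)$, an $\mathcal E$-quasi-continuous version $\widetilde u$.

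For $u\in D(K_{2,\mu})$, the Fukushima decomposition \cite{FOT2011}, in the non-symmetric form of \cite{Trutnau2000}, writes
\[
\widetilde u(X_t)-\widetilde u(X_0)=M^{[u]}_t+N^{[u]}_t,
\]
where $M^{[u]}$ is a martingale additive functional of finite energy. The zero-energy part $N^{[u]}$ is, since $u$ lies in the generator domain, of bounded variation and equal to $\int_0^tK_{2,\mu}u(X_s)\,\mathrm ds$. This yields \eqref{eq:dynkin-ito} for q.e.\ starting point $x$.

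The energy measure of $M^{[u]}$ is the carr\'e du champ. On the core $\mathscr E_A(H)$, a direct computation gives $\Gamma(u)=|C^{1/2}Du|^2$. The antisymmetric part $J$ and the first-order term $F$ drop out of $K_{2,\mu}(u^2)-2uK_{2,\mu}u$ by the Leibniz rule, so only the second-order term survives. Since the form norm controls $\int|C^{1/2}Du|^2\,\mathrm d\mu$, we extend this by closure to $D(\mathcal E)$, which gives the bracket formula \eqref{eq:carre-bracket}.

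Next comes the identification $M^{[u]}_t=\int_0^t\langle B^*Du(X_s),\mathrm dW_s\rangle_U$. For $u=\varphi_h\in\mathscr E_A(H)$, we use the weak formulation \eqref{weak solution}, with the drift $F$ added, for $\langle X_t,h\rangle$. We then apply the finite-dimensional It\^o formula to $e^{i\langle X_t,h\rangle}$, which is legitimate since $h\in D(A^*)$. This produces exactly $K_{2,\mu}\varphi_h(X_s)\,\mathrm ds$ plus $\langle B^*D\varphi_h(X_s),\mathrm dW_s\rangle_U$. Uniqueness of the Fukushima decomposition then identifies $M^{[\varphi_h]}$, and by linearity the same holds on the whole core.

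For general $u\in D(K_{2,\mu})$, take $u_n\in\mathscr E_A(H)$ with $u_n\to u$ in graph norm. The energies of $M^{[u_n]}-M^{[u]}$ tend to zero. The Itô isometry, with $\mathbb E\int_0^t|B^*D(u_n-u)(X_s)|^2\,\mathrm ds$ bounded via the law of $X_s$ against $\mu$, shows that the stochastic integrals converge as well. Passing to the limit gives the identity along a subsequence, first $\mathbb P_\mu$-a.s. and then for q.e.\ $x$.

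\textbf{Main obstacle.} The hardest step is this last transfer from $\mathbb P_{\mu}$-type $L^2$ estimates to q.e.\ starting points. The law of $X_s$ need not be dominated by $\mu$ uniformly, so this needs the finite-energy/capacity estimate of \cite[Lemma 5.1.2-type]{FOT2011} adapted to the generalized form. I would first try to bound $\mathbb E_{\nu_0}\int_0^t g(X_s)\,\mathrm ds$ by the resolvent $\int_0^t P_{s}$ acting in $L^1(\mu)$ through the sector condition, and then use the standard quasi-everywhere refinement via capacity.
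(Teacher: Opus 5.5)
Your proposal is correct and follows essentially the same route as the paper. The shared steps are: It\^o's formula for $e^{i\langle X_t,h\rangle}$ on the core $\mathscr E_A(H)$ via the weak formulation; the Fukushima decomposition with zero-energy part $\int_0^tK_{2,\mu}u(X_s)\,\mathrm ds$ for $u\in D(K_{2,\mu})$; the carr\'e du champ $|C^{1/2}Du|^2$ for the bracket; and approximation from the core. The quasi-everywhere transfer you single out as the main obstacle is exactly what the paper delegates to \cite[Chapter~VI]{MaRoeckner1992} and \cite[Theorem~4.5]{Trutnau2000}, and your capacity-based plan for it is the standard one.
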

		
		\begin{proof}
			For $u\in\mathscr E_A(H)$, the result follows from the
			finite-dimensional It\^o formula applied to the weak formulation of
			\eqref{semilinearSPDE}. Approximation in the closed form domain and the
			Fukushima decomposition extend the identity to $D(\mathcal E)$. For
			$u\in D(K_{2,\mu})$, the zero-energy part is
			\[
			N_t^{[u]}=\int_0^tK_{2,\mu}u(X_s)\,\mathrm ds.
			\]
			The carr\'e du champ and the same approximation identify the martingale
			part and its bracket. See \cite[Chapter~VI]{MaRoeckner1992} and
			\cite[Theorem~4.5]{Trutnau2000}.
		\end{proof}
		
		\begin{theorem}[Parabolic Dynkin--Fukushima formula]
			\label{thm:parabolic-ito}
			Under \textbf{(H-form)}, let $X$ start from $\nu_0=\rho_0\mu$, write $\nu_s=\rho_s\mu$, and assume
			$\rho\in L^\infty(0,T;L^2(H,\mu))$. Suppose that
			\[
			\Psi\in L^\infty(0,T;D(K_{2,\mu}))
			\cap W^{1,2}(0,T;L^2(H,\mu))
			\]
			and
			\[
			\operatorname*{ess\,sup}_{0<s<T}\int_H
			|C^{1/2}D_x\Psi(s,x)|^2\,\mathrm d\nu_s(x)<\infty.
			\]
			Then $\Psi$ has a space--time quasi-continuous version, still denoted by
			$\Psi$, such that
			\begin{equation}
				\Psi(t,X_t)=\Psi(0,X_0)
				+\int_0^t(\partial_s\Psi+K_{2,\mu}\Psi)(s,X_s)\,\mathrm ds
				+\int_0^t\langle B^*D_x\Psi(s,X_s),\mathrm dW_s\rangle_U
				\label{eq:parabolic-ito}
			\end{equation}
			holds $\mathbb P_{\nu_0}$-a.s.\ for all $t\in[0,T]$. The quadratic
			variation of the martingale term is
			\[
			\int_0^t|C^{1/2}D_x\Psi(s,X_s)|^2\,\mathrm ds.
			\]
		\end{theorem}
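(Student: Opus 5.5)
The plan is to deduce the parabolic formula from the time-homogeneous Dynkin--Fukushima formula of Theorem~\ref{thm:dynkin-ito}, applied on the space--time state space. Concretely, I would consider the space--time process $\hat X_s:=(s,X_s)$ on $[0,T]\times H$ with reference measure $\mathrm ds\otimes\mu$. Its Kolmogorov operator is $\partial_s+K_{2,\mu}$. The associated time-dependent generalized Dirichlet form is $\hat{\mathcal E}(\Psi,\Phi)=\int_0^T\mathcal E(\Psi_s,\Phi_s)\,\mathrm ds-\int_0^T\langle\partial_s\Psi_s,\Phi_s\rangle_{L^2(\mu)}\,\mathrm ds$, in the sense of Stannat and Trutnau. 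Under \textbf{(H-form)}, the closure of $\mathcal E$ is quasi-regular and properly associated with $X$. The standard lifting result \cite{Stannat1999,Trutnau2000} then shows that $\hat{\mathcal E}$ is quasi-regular and properly associated with $\hat X$. The space $L^\infty(0,T;D(K_{2,\mu}))\cap W^{1,2}(0,T;L^2(H,\mu))$ lies in the domain of its generator $\partial_s+K_{2,\mu}$. Fukushima's decomposition for $\hat{\mathcal E}$ \cite[Theorem~4.5]{Trutnau2000} then supplies a space--time quasi-continuous version of $\Psi$ and the decomposition $\Psi(t,X_t)-\Psi(0,X_0)=N_t+M_t$. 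Here $N_t=\int_0^t(\partial_s\Psi+K_{2,\mu}\Psi)(s,X_s)\,\mathrm ds$ is the zero-energy part and $M$ is a martingale additive functional.

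Before invoking this machinery, I would first verify the identity for cylindrical functions of the form $\Psi(s,x)=\sum_j a_j(s)\varphi_j(x)$ with $a_j\in C^1([0,T])$ and $\varphi_j\in\mathscr E_A(H)$. For these, the finite-dimensional It\^o formula applies to the weak formulation \eqref{weak solution} of the mild solution, since $\varphi_{h}$ with $h\in D(A^*)$ only tests $X$ against $A^*h$. This directly gives \eqref{eq:parabolic-ito} with martingale term $\int_0^t\langle B^*D\Psi,\mathrm dW\rangle_U$ and bracket $\int_0^t|C^{1/2}D\Psi|^2\,\mathrm ds$. Such functions are dense in the stated intersection space for the norm $\|\Psi\|_{L^\infty D(K)}+\|\partial_s\Psi\|_{L^2}$. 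Density follows from $\mathscr E_A(H)$ being a core for $K_{2,\mu}$ (Theorem~\ref{thm_generator}) together with time mollification.

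Next I would pass to the limit $\Psi^n\to\Psi$. The drift term is controlled under $\mathbb P_{\nu_0}$ by the transfer estimate $\mathbb E_{\nu_0}\int_0^t|g(s,X_s)|\,\mathrm ds=\int_0^t\!\int_H|g|\rho_s\,\mathrm d\mu\,\mathrm ds\le\|\rho\|_{L^\infty L^2}\|g\|_{L^2(\mathrm dt\otimes\mu)}$. This uses $\rho\in L^\infty(0,T;L^2(H,\mu))$. For the martingale term I would use It\^o's isometry. Its bracket is $\int_0^t\!\int_H|C^{1/2}D(\Psi^n-\Psi)|^2\,\mathrm d\nu_s\,\mathrm ds$, which is finite by the $\nu_s$-gradient hypothesis. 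Convergence of this bracket to zero needs the approximating gradients to converge in $L^2(\mathrm dt\otimes\nu_s)$. I would obtain this from the energy bound $\int|C^{1/2}Du|^2\,\mathrm d\mu\lesssim\|u\|_{L^2}\|K_{2,\mu}u\|_{L^2}+\|u\|_{L^2}^2$, which comes from \eqref{eq:dirichlet-form} with $F$ bounded and $J$ antisymmetric. Combined with the $L^2(\mu)$-bound on $\rho_s$ via Cauchy--Schwarz, this reduces the problem to $L^4$-type control of the gradients. If that control is unavailable, I would truncate instead, using the ess sup hypothesis and localizing by stopping times. The boundary terms $\Psi^n(t,X_t)$ converge in $L^1(\mathbb P_{\nu_0})$ for almost every $t$, by the same transfer estimate together with $\Psi^n\to\Psi$ in $C([0,T];L^2)$ (Sobolev embedding in time). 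Continuity of both sides then upgrades this to all $t$ and to a quasi-continuous version.

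I expect the main obstacle to be the identification of the martingale part beyond the cylindrical class. Specifically, one must show that $M^{[\Psi]}$ equals the It\^o integral against the driving noise $W$ of the mild solution. The reason is that the Fukushima decomposition only produces an abstract additive functional under $\mathbb P_x$ for quasi-every $x$. I would handle this through the closedness argument: cylindrical identities hold, both sides are continuous in the energy norm, and the initial law $\nu_0\ll\mu$ charges no $\mathcal E$-exceptional set. This last point is what lets the q.e.\ statement pass to $\mathbb P_{\nu_0}$-a.s.
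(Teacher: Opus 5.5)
Your plan is essentially the paper's proof. The paper applies the generalized Fukushima decomposition to the space--time process $(s,X_s)$ with generator $\partial_s+K_{2,\mu}$, using $\rho\in L^\infty(0,T;L^2(H,\mu))$ to make the finite-variation part integrable and the square-field bound to make the martingale part square integrable; your cylindrical Itô verification followed by a closedness identification of $M^{[\Psi]}$ is the argument the paper gives for Theorem~\ref{thm:dynkin-ito}.

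Two details need tightening. First, cylindrical functions are dense only for $L^2$-in-time norms, since time mollification does not converge in $L^\infty(0,T;D(K_{2,\mu}))$; that is all your limit argument actually uses. Second, the martingale term needs no $L^4$ control. It suffices that $\int_0^T|C^{1/2}D(\Psi^n-\Psi)(s,X_s)|^2\,\mathrm ds\to0$ in $\mathbb P_{\nu_0}$-probability. That follows from the $L^1(\mathrm dt\otimes\mu)$-convergence your energy bound provides, once you compare $\mathbb P_{\nu_0}$ with the reference Ornstein--Uhlenbeck law started from $\mu$: its marginals are all $\mu$, its Girsanov density is in $L^2$ because $B^{-1}F$ is bounded, and Cauchy--Schwarz with $\rho_0\in L^2(H,\mu)$ handles the change of initial law.
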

		
		\begin{proof}
			Apply the generalized Fukushima decomposition to the space--time
			process $(s,X_s)$, whose generator is $\partial_s+K_{2,\mu}$.
			The $L^2(\mu)$-density bound gives the required integrability of the
			finite-variation part, while the square-field bound gives the
			square-integrability of the martingale term; see
			\cite[Chapter~6]{Oshima2013} and \cite{Stannat1999,Trutnau2000}.
		\end{proof}
		
		Accordingly, the formal differential It\^o notation used in the main
		text is understood through the complete identity
		\[
		\mathrm d\Psi(t,X_t)
		=
		\bigl(\partial_t+K_{2,\mu}\bigr)\Psi(t,X_t)\,\mathrm dt
		+
		\mathrm dM_t^{[\Psi]}.
		\]
		The individual terms involving $AX_t$ and
		$\operatorname{Tr}(CD_x^2\Psi)$ are not asserted to be separately
		defined unless the stronger classical regularity is available.
		
		\paragraph{Application to the thermodynamic functionals}
		
		For the relative state entropy $s_t(x)=-\log\rho_t(x)$,
		Hypothesis~\ref{hypo_6} allows us to apply
		Theorem~\ref{thm:parabolic-ito} directly. Where classical regularity
		$\rho_t\in C_b^2(H)$ is additionally available for the density curve, the
		same identity also follows from the classical mild It\^o formula.

		Unlike the thermodynamic functionals considered in the main text, the Gaussian reference term $\Psi(x)=\tfrac12\langle x,Q_\infty^{-1}x\rangle$ is \emph{not} an admissible test function for either It\^o formula. Indeed,
		$Q_\infty^{-1}$ is an unbounded operator, and
		$\Psi(x)=\frac12|Q_\infty^{-1/2}x|^2$
		is finite only on the Cameron--Martin space
		$Q_\infty^{1/2}(H)=D(Q_\infty^{-1/2})$, which is a proper subspace of $H$ with $\mu$-measure zero. Consequently,
		$\Psi$ is not even a finite measurable functional on the support of the Gaussian reference measure $\mu$, and neither the classical mild It\^o formula nor the $L^2(\mu)$ Dynkin--Fukushima formula applies.

		\section{Kolmogorov equations and regularity}
		\label{Appendix C}
		
		Given any smooth enough initial condition $\varphi$, the following parabolic type equation in $H$ is called the \textit{Kolmogorov backward equation} relevant to SPDE (\ref{semilinearSPDE}): 
		\begin{equation}
			\label{Kolmogorov backward equation}
			\left\{\begin{aligned}
				\partial_t v(t,x)&=\frac{1}{2}\Tr[CD^{2}_{x}v(t,x)]\\
				&~~~+\left\langle A x+F(x), D_{x}v(t, x)\right\rangle,~ t>0,x\in D(A),\\
				v(0,x)&=\varphi(x),~ x\in H.
			\end{aligned}\right.
		\end{equation}
		Again, $D^2_x, D_x$ represent the Fréchet derivatives in variable $x$ given in (\ref{Frechet derivatives}).
		\begin{definition}
			We say that $v$ is a strict solution to the problem (\ref{Kolmogorov backward equation}) if 
			\begin{enumerate}[label=(\roman*)]
				\item $v \in C_b^{0,0}([0, T] \times H)$;
				\item $v(t, \cdot) \in C_b^2(H)$ for any $t\geqslant 0$ and $D_x v,D^2_x v$ are continuous on $[0, T] \times H$;
				\item equation (\ref{Kolmogorov backward equation}) is fulfilled for any $x \in D(A), t \geqslant 0$.
			\end{enumerate}
		\end{definition}
		\begin{theorem}
			Assuming Hypothesis \ref{hypo_1}, \ref{hypo_2}, \ref{hypo_3}\ref{hypo_3.a} and that $F\in C^2_b(H;H)$, for any $\varphi\in C_b^2(H)$, the problem (\ref{Kolmogorov backward equation}) has a unique strict solution $v$. Moreover, $v$ is given by 
			$$
			v(t, x)=\mathbb{E}[\varphi(X(t, x))]=P_t \varphi(x), \quad t \geqslant 0, x \in H.
			$$
			
		\end{theorem}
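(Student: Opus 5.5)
The plan is to follow the classical Da Prato--Zabczyk route for Kolmogorov equations with trace-class noise. The candidate is $v(t,x)=P_t\varphi(x)=\mathbb E[\varphi(X(t,x))]$. I will first show that this function has the regularity required in the definition of a strict solution. Then I will verify the equation on $D(A)$ through a generator computation, and finally prove uniqueness by a duality or It\^o argument.

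\emph{Step 1 (differentiability of the flow).} Since $F\in C_b^2(H;H)$, the mild equation \eqref{mild solution(semilinear)} can be differentiated in $\xi$ by the implicit function theorem in $L^p(\Omega;C([0,T];H))$. The first derivative $\eta^h(t)=D_\xi X(t,\xi)h$ solves
\[
\eta^h(t)=e^{tA}h+\int_0^te^{(t-s)A}DF(X(s))\eta^h(s)\,\mathrm ds .
\]
The second derivative $\zeta^{h,k}$ solves the analogous equation, forced by $D^2F(X)(\eta^h,\eta^k)$. Hypothesis~\ref{hypo_2}(ii) together with Gronwall then gives deterministic bounds $|\eta^h(t)|\leqslant e^{(\omega+\kappa)t}|h|$ and $|\zeta^{h,k}(t)|\leqslant c_T|h||k|$. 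The chain rule yields $D_xv(t,x)h=\mathbb E\langle D\varphi(X),\eta^h\rangle$ and a corresponding formula for $D_x^2v$. Hence $v(t,\cdot)\in C_b^2(H)$. Joint continuity of $v$, $D_xv$ and $D_x^2v$ on $[0,T]\times H$ follows from mean-square continuity of $X$, $\eta$ and $\zeta$ in $(t,x)$ and dominated convergence. This is where trace-class noise matters only mildly. Under \ref{hypo_3}\ref{hypo_3.a} we have no smoothing, so the derivatives come entirely from $\varphi\in C_b^2$ and are not regularised.

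\emph{Step 2 (the equation).} I expect this to be the main obstacle. $X(t,x)$ is not a strong solution, so It\^o's formula cannot be applied to $\varphi(X)$ directly. I would regularise with Yosida approximations $A_n=nA(n-A)^{-1}$. Let $X_n$ be the strong solution of the equation with $A_n$, and set $v_n=\mathbb E\varphi(X_n)$. Since $B\in L_2$, $\Tr[CD^2\varphi]$ is finite, and the finite-dimensional-type It\^o formula together with the Markov property gives the backward equation for $v_n$ with drift $A_nx+F(x)$. Using $v(t+h,x)=P_hv(t,\cdot)(x)$ reduces the time derivative at $t$ to the right derivative of $P_h\psi$ at $h=0$, with $\psi=v(t,\cdot)\in C_b^2$. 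For $x\in D(A)$ I then pass to the limit $n\to\infty$. The argument uses $A_nx\to Ax$, the convergence $X_n\to X$ in $L^2$ uniformly on $[0,T]$, and the uniform $C_b^2$ bounds from Step 1, which hold uniformly in $n$ since $\|e^{tA_n}\|\leqslant e^{\omega_n t}$ with $\omega_n\to\omega$. This yields $\partial_t^+v=\frac12\Tr[CD^2v]+\langle Ax+F,Dv\rangle$. Continuity of the right-hand side in $t$, which uses $D^2v$ continuous and $C$ trace class, upgrades the right derivative to a two-sided derivative.

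\emph{Step 3 (uniqueness).} Let $u$ be another strict solution and fix $t$. Consider $w(s)=u(t-s,X_n(s,x))$ along the Yosida-regularised strong solutions. It\^o's formula gives $\mathbb E w(t)-w(0)=\mathbb E\int_0^t\langle (A_n-A)X_n,D_xu\rangle\,\mathrm ds$, which involves evaluating $\langle Ax,D_xu\rangle$ only at points of $D(A)$. To make this legitimate I would start from $x\in D(A)$ and use that $X_n(s)\in D(A)$ for the strong approximations. The term tends to $0$ by dominated convergence, giving $u(t,x)=\mathbb E\varphi(X(t,x))$ on $D(A)$. Density of $D(A)$ and continuity of $u$ then extend this to all of $H$.
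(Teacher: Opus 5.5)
The paper gives no argument of its own here: its proof is the citation of \cite[Theorem~9.25]{Prato1992}. Your outline reconstructs that classical Da Prato--Zabczyk argument, and Steps~1--2 are sound in outline. Yosida approximation works in Step~2 because the backward equation for $v_n$ holds at \emph{every} point ($A_n$ is bounded), and you only pass to the limit at the fixed deterministic point $x\in D(A)$, where $A_nx\to Ax$. You do need convergence of the derivative flows $\eta_n\to\eta$, $\zeta_n\to\zeta$, not just uniform bounds, but this is routine.

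The genuine gap is Step~3. There you need the equation for $u$ at the \emph{random} points $X_n(s)$, and the claim that $X_n(s)\in D(A)$ is false in general, even for $x\in D(A)$. Write $R(n,A)=(n-A)^{-1}$. Then $A_n=n^2R(n,A)-n$, so $e^{tA_n}=e^{-nt}I+S_n(t)$ with $S_n(t)$ mapping $H$ into $D(A)$. Hence, up to $D(A)$-valued terms, $X_n(s)$ equals $\int_0^se^{-n(s-r)}F(X_n(r))\,\mathrm dr+\int_0^se^{-n(s-r)}B\,\mathrm dW(r)$.
\begin{itemize}
\item The Gaussian part has covariance $\frac{1-e^{-2ns}}{2n}C$. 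It lies in $D(A)$ only under a condition like $\Tr[ACA^*]<\infty$; for the model of Section~\ref{subsec:M1} this means $\sum_kk^4\alpha_k^2<\infty$. Hypothesis~\ref{hypo_3}\ref{hypo_3.a} does not give this.
\item Nothing puts $F(X_n(r))$ into $D(A)$ either.
\end{itemize}
A strict solution is only known to be differentiable in $t$ and to satisfy \eqref{Kolmogorov backward equation} on $D(A)$. So It\^o's formula for $u(t-s,X_n(s))$ cannot be combined with the equation, and $\langle(A_n-A)X_n,D_xu\rangle$ is not even defined. Even if $X_n$ were $D(A)$-valued, your dominated-convergence step would need a bound on $|AX_n(s)|$ uniform in $n$, which is not available.

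The missing idea is to regularise the mild solution instead of the generator. With $\Pi_n:=nR(n,A)$, put $X_n:=\Pi_nX(\cdot,x)$. Since $\Pi_n$ commutes with $e^{tA}$ and $B\in L_2(U;H)$, this process is $D(A)$-valued and continuous in the graph norm. It is a strong solution of an equation with the \emph{original} $A$:
\[
\mathrm dX_n=\big[AX_n+\Pi_nF(X)\big]\,\mathrm ds+\Pi_nB\,\mathrm dW,\qquad X_n(0)=\Pi_nx .
\]
Apply It\^o's formula to $u(t-s,X_n(s))$ and use \eqref{Kolmogorov backward equation} at $X_n(s)\in D(A)$. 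The unbounded terms $\langle AX_n,D_xu\rangle$ cancel identically, leaving
\[
\mathbb E\varphi(X_n(t))-u(t,\Pi_nx)=\mathbb E\int_0^t\Big(\big\langle\Pi_nF(X(s))-F(X_n(s)),D_xu(t-s,X_n(s))\big\rangle+\tfrac12\Tr\big[(\Pi_nC\Pi_n^*-C)D_x^2u(t-s,X_n(s))\big]\Big)\mathrm ds .
\]
The right-hand side tends to $0$ for two reasons:
\begin{itemize}
\item $\Pi_n\to I$ strongly with $\|\Pi_n\|\leqslant1$;
\item since $C$ is trace class, $\Pi_nC\Pi_n^*\to C$ in trace norm. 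This is where Hypothesis~\ref{hypo_3}\ref{hypo_3.a}(i) is genuinely used.
\end{itemize}
Now let $n\to\infty$, using $X_n(t)\to X(t,x)$, $\Pi_nx\to x$ and the continuity of $u$. This gives $u(t,x)=\mathbb E\varphi(X(t,x))$ for every $x\in H$ directly, with no density step.
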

		\begin{proof}
			See [\cite{Prato1992} Theorem 9.25].
		\end{proof}
		
		Correspondingly, if $F$ has finite divergence with respect to $\mu$ (see (\ref{finite divergence}) for definition), we could also consider the \textit{Kolmogorov forward equation} relevant to SPDE (\ref{semilinearSPDE}): 
		\begin{equation}
			\label{Kolmogorov forward equation}
			\left\{
			\begin{aligned}
				\partial_t u(t,x)&=\frac{1}{2}\Tr[CD^{2}_{x}u(t,x)]\\
				&~~~+\left\langle -A x-CQ_{\infty}^{-1}x-F(x), D_{x}u(t, x)\right\rangle\\
				&~~~-\dv_{\mu}F\cdot u(t,x),\quad t>0,x\in D(A)\cap D(CQ_{\infty}^{-1}),\\
				u(0,x)&=\varphi(x),~ x\in H.
			\end{aligned}
			\right.
		\end{equation}
		
		\begin{remark}[Finite-dimensional heuristic]
			In finite dimensions the form of \eqref{Kolmogorov forward equation} follows from the Fokker--Planck equation $\partial_t\nu_t=-\dv[(Ax+F)\nu_t]+\tfrac12\Tr[CD^2\nu_t]$ by writing $\rho_t=\nu_t/\mu$ with $\mu=N(0,Q_\infty)$, so that $D\log\mu=-Q_\infty^{-1}x$. Using the Lyapunov identity $AQ_\infty+Q_\infty A^*=-C$---whence $\Tr(2A+CQ_\infty^{-1})=0$ and $\langle Q_\infty^{-1}x,(2A+CQ_\infty^{-1})x\rangle=0$---a direct computation gives
			$$
			\partial_t\rho_t=\tfrac12\Tr[CD^2\rho_t]-\big\langle Ax+CQ_\infty^{-1}x+F,\,D\rho_t\big\rangle+\big(\langle Q_\infty^{-1}x,F\rangle-\dv F\big)\rho_t=K_{2,\mu}^*\rho_t .
			$$
			In the symmetric (reversible) case $2A=-CQ_\infty^{-1}$ this reduces to $\partial_t\rho_t=L_2\rho_t-[\langle F,D\rho_t\rangle+\dv_\mu F\,\rho_t]$ with $L_2$ self-adjoint. 
		\end{remark}
		
		The backward equation admits classical $C_b^2$ solutions. In contrast, for the forward equation the natural regularity is no longer pointwise but $L^2(H,\mu)$-Sobolev regularity \cite{Prato1992}.
		
		The $\mu$-adjoint semigroup $P_{t,\mu}^*$ is defined through the duality
		$$
		\int_{H} P_t\varphi\;\psi\,\mathrm{d}\mu=\int_{H} \varphi\;P_{t,\mu}^*\psi\,\mathrm{d}\mu,\quad \varphi,\psi\in B_b(H),
		$$
		and we say that $\rho$ is a \emph{strict solution} of \eqref{Kolmogorov forward equation} in $L^2(H,\mu)$ if $\rho\in C\big([0,\infty);D(K_{2,\mu}^*)\big)\cap C^1\big([0,\infty);L^2(H,\mu)\big)$ and the equation holds in $L^2(H,\mu)$ for every $t\geqslant0$, where $K_{2,\mu}^*$ is the $L^2(\mu)$-realization of the forward operator in \eqref{Kolmogorov forward equation}. The regularity available in the present setting is summarized below.

		\begin{theorem}[Forward regularity of the transient density]
			\label{thm:forward-W22}
			Assume Hypothesis~\ref{hypo_1}, Hypothesis~\ref{hypo_2}, either
			regime of Hypothesis~\ref{hypo_3}, that $F\in C_b^2(H;H)$ has finite
			divergence with respect to $\mu$, and $AQ_\infty\in L(H)$. Then:
			\begin{enumerate}[label=(\roman*)]
				\item The Ornstein--Uhlenbeck generator $L_2$ generates a contraction
				semigroup on $L^2(H,\mu)$, the semigroup being Markov with invariant
				measure $\mu$. The Kolmogorov operator $K_{2,\mu}=L_2+\langle F,D\cdot\rangle$
				is a bounded-form perturbation: when $C^{-1}\in L(H)$ it generates a $C_0$-semigroup
				on $L^2(H,\mu)$ with
				\[
				\|P^*_{t,\mu}\|_{L^2\to L^2}\leqslant e^{\omega t},
				\qquad
				\omega=\tfrac12\big\|C^{-1/2}F\big\|_\infty^2,
				\]
				because
				\[
				\Big|\int_H\langle F,D\varphi\rangle\,\varphi\,\mathrm d\mu\Big|
				\leqslant\tfrac12\int_H|C^{1/2}D\varphi|^2\,\mathrm d\mu
				+\omega\int_H\varphi^2\,\mathrm d\mu
				\]
				shifts the symmetric part of the Dirichlet form \eqref{eq:dirichlet-form}
				by the bounded constant $\omega$. For general trace-class $C$ we appeal to the
				$L^2(H,\mu)$ forward-equation theory of [\cite{Prato2004} Chapter~3] and
				make no contraction claim. When $L_2$ is symmetric, $L_2$ itself is
				self-adjoint and non-positive, and the semigroups of $L_2$ and $L_2^*$
				are analytic on $L^2(H,\mu)$. On $D(L_2)\cap D(L_2^*)$,
				\[
				K_{2,\mu}^*\psi
				=
				L^*_2\psi-\langle F,D\psi\rangle-\dv_\mu F\,\psi .
				\]
				\item For every initial density $\rho_0\in L^2(H,\mu)$ with
				$\rho_0>0$, the curve
				\[
				\rho_t:=P_{t,\mu}^*\rho_0
				\]
				is the unique mild solution of
				$\partial_t\rho_t=K_{2,\mu}^*\rho_t$ in $L^2(H,\mu)$, and
				$\rho_t>0$ $\mu$-a.e.; if $\rho_0\in D(K_{2,\mu}^*)$ the solution is
				strict, with
				$\rho\in C([0,\infty);D(K_{2,\mu}^*))\cap C^1([0,\infty);L^2(H,\mu))$.
				In the symmetric case, or when $C=\mathrm{Id}$, analytic smoothing
				additionally yields $\rho_t\in D(K_{2,\mu}^*)$ for every $t>0$.
			\end{enumerate}
			Concerning second-order regularity: $D(L_2)\subseteq W^{2,2}(H,\mu)$
			when $C=\mathrm{Id}$ [\cite{Prato2004} Remark~2.51];
			when $L_2$ is symmetric and $A$ and $C$ commute,
			$D(L_2)\subseteq W^{2,2}_C(H,\mu)$ [\cite{Prato2004Cambridge}
			Theorem~10.2.8]; for general $C$ no characterization is available
			[\cite{Prato2004Cambridge} Remark~10.2.9], and the regularity used
			in the main text is imposed directly through
			Hypothesis~\ref{hypo_6}.
			
			The measure curve $\nu_t=\rho_t\mu$ solves the associated weak
			Fokker--Planck equation in the sense of
			\cite{BogachevDaPratoRoeckner2004}, which provides existence in that
			class, while uniqueness holds in the $L^2(H,\mu)$ mild sense
			established in part~{\rm(ii)}.
		\end{theorem}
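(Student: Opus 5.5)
The plan is to treat the two parts of the theorem separately, with part (i) reduced to a form estimate and part (ii) to general semigroup theory for the $L^2(H,\mu)$-adjoint.

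For (i), start from the Ornstein--Uhlenbeck semigroup $R_t$. By Jensen's inequality, $|R_t\varphi|^2\leqslant R_t(\varphi^2)$. Integrating against the invariant measure $\mu$ gives $\|R_t\varphi\|_{L^2(\mu)}\leqslant\|\varphi\|_{L^2(\mu)}$. Together with positivity and $R_t\mathbf 1=\mathbf 1$, this shows that $L_2$ generates a Markov contraction semigroup with invariant measure $\mu$.

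Next, use the pre-form \eqref{eq:dirichlet-form} on the core $\mathscr E_A(H)$. Its symmetric part is $\tfrac12\int|C^{1/2}D\varphi|^2\,\mathrm d\mu$, because $\langle JD\varphi,D\varphi\rangle=0$ by $J^*=-J$. When $C^{-1}\in L(H)$, bound the first-order term by Cauchy--Schwarz and Young:
\[
\Big|\int_H\langle F,D\varphi\rangle\varphi\,\mathrm d\mu\Big|\leqslant\int_H|C^{-1/2}F|\,|C^{1/2}D\varphi|\,|\varphi|\,\mathrm d\mu\leqslant\tfrac12\int_H|C^{1/2}D\varphi|^2\,\mathrm d\mu+\tfrac12\|C^{-1/2}F\|_\infty^2\int_H\varphi^2\,\mathrm d\mu.
\]
Hence $K_{2,\mu}-\omega$ is dissipative on the core, and by duality so is its adjoint. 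Density of the range of $\lambda-K_{2,\mu}$ for large $\lambda$ follows from the fact that $\mathscr E_A(H)$ is a core (Theorem~\ref{thm_generator}). Lumer--Phillips then gives a $C_0$-semigroup with $\|P_t\|\leqslant e^{\omega t}$, and the same bound passes to $P^*_{t,\mu}$, the adjoint semigroup generated by $K_{2,\mu}^*$.

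In the symmetric case, $L_2$ is self-adjoint and nonpositive by the carr\'e-du-champ identity, so its semigroup is analytic by the spectral theorem. The formula for $K_{2,\mu}^*$ on $D(L_2)\cap D(L_2^*)$ is the integration by parts already carried out via \cite[Lemma~3.38]{Prato2004}, together with Theorem~\ref{adjoint_L_2}.

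For (ii), note first that $P^*_{t,\mu}$ is the adjoint of a $C_0$-semigroup on the Hilbert space $L^2(H,\mu)$, so it is itself $C_0$ with generator $K_{2,\mu}^*$. Standard semigroup theory then gives that $\rho_t=P^*_{t,\mu}\rho_0$ is the unique mild solution, and a strict solution when $\rho_0\in D(K_{2,\mu}^*)$.

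Positivity would come from duality with the positivity-preserving $P_t$. For $\varphi\geqslant0$ one has $\int\varphi\,\rho_t\,\mathrm d\mu=\int P_t\varphi\,\rho_0\,\mathrm d\mu\geqslant0$, so $\rho_t\geqslant0$. Strict positivity needs more, and this is the step I expect to be the main obstacle. Under Hypothesis~\ref{hypo_3}\ref{hypo_3.b} I would use the irreducibility of $P_t$: for any Borel set $\Gamma$ with $\mu(\Gamma)>0$ we have $P_t\mathbf 1_\Gamma>0$ everywhere, and since $\rho_0>0$ this gives $\int_\Gamma\rho_t\,\mathrm d\mu>0$. Under regime \ref{hypo_3.a} I would first prove positivity for the OU part using the equivalence $N_{e^{tA}x,Q_t}\sim\mu$ (Feldman--H\'ajek, Theorem~\ref{thm:feldman-hajek}). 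I would then transfer it to the semilinear process via the Girsanov equivalence of Lemma~\ref{lem:girsanov-growth}, since $B^{-1}F$ is bounded.

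The analytic smoothing claim in the symmetric case or when $C=\mathrm{Id}$ goes as follows. The sector bound from the Young inequality above shows that the form is a bounded perturbation of the sectorial OU form. Hence $K_{2,\mu}^*$ generates an analytic semigroup, and therefore $\rho_t\in D(K_{2,\mu}^*)$ for every $t>0$.

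The remaining second-order regularity statements and the weak Fokker--Planck identification are citations: \cite[Remark~2.51]{Prato2004}, \cite[Theorem~10.2.8]{Prato2004Cambridge}, and \cite{BogachevDaPratoRoeckner2004}. For the last, I would check that $\int\varphi\,\mathrm d\nu_t$ satisfies the weak equation for $\varphi\in\mathscr E_A(H)$ by differentiating the duality relation.
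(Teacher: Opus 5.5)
\textbf{The main gap is the strict-positivity step.} Your treatment of (i) and of the semigroup part of (ii) follows the paper's sketch: Markov property plus invariance of $\mu$, the Young form bound, the spectral theorem, Gaussian integration by parts, and adjoint-semigroup theory. You rightly flag strict positivity of $\rho_t$ as the obstacle (the paper covers it only by ``positivity preservation''), but your argument for it fails as proposed.

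\emph{Regime (a).} Hypothesis~\ref{hypo_3}\ref{hypo_3.a} contains no null-controllability assumption, and $N_{e^{tA}x,Q_t}\sim\mu$ is false in general. Take $A=-\mathrm{Id}$ and any injective trace-class $C$; this is admissible in that regime, with $L_2$ even symmetric. Then $Q_t=(1-e^{-2t})Q_\infty$, so $Q_\infty^{-1/2}Q_tQ_\infty^{-1/2}-\mathrm{Id}=-e^{-2t}\mathrm{Id}$ is not Hilbert--Schmidt. By Theorem~\ref{thm:feldman-hajek}, $N_{e^{tA}x,Q_t}\perp\mu$ for every $x$.

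\emph{Regime (b).} Irreducibility only gives $P_t\mathbf 1_\Gamma>0$ for open $\Gamma$. Strong Feller plus irreducibility yields $P_t(x,\cdot)\sim\nu$, and you would still need $\mu\ll\nu$. That is exactly the Feldman--H\'ajek equivalence, which does hold there because $e^{tA}(H)\subset Q_t^{1/2}(H)$. So your two arguments are effectively swapped, and regime (a) is not covered.

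\emph{Repair.} No pointwise comparison with $\mu$ is needed. Since $B^{-1}F$ is bounded in both regimes, Lemma~\ref{lem:girsanov-growth} gives $P_t(x,\cdot)\sim R_t(x,\cdot)$ for every $x$, whence $\{P_t\mathbf 1_\Gamma>0\}=\{R_t\mathbf 1_\Gamma>0\}$. If $\mu(\Gamma)>0$, invariance gives $\int_H R_t\mathbf 1_\Gamma\,\mathrm d\mu=\mu(\Gamma)>0$, so this set has positive $\mu$-measure. Since $\rho_0>0$, it follows that $\int_\Gamma\rho_t\,\mathrm d\mu=\int_H P_t\mathbf 1_\Gamma\,\rho_0\,\mathrm d\mu>0$. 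Taking $\Gamma=\{\rho_t=0\}$ gives $\rho_t>0$ $\mu$-a.e.\ in both regimes.

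\textbf{Two smaller points.}

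\emph{Range condition and identification.} The range condition you need for Lumer--Phillips is not supplied by Theorem~\ref{thm_generator}. Its core statement concerns the generator of $P_t$ in $L^p(H,\nu)$, not in $L^2(H,\mu)$, and the two norms are not equivalent unless $\rho^{ss}$ is bounded above and below. You must also identify the semigroup you construct with the transition semigroup $P_t$, for example via the backward Kolmogorov equation of Appendix~\ref{Appendix C}. This is needed because $P^*_{t,\mu}$ and your duality arguments are defined through $P_t$ itself. The paper instead obtains the operator from the closed perturbed form, which avoids the range condition; the identification with $P_t$ is needed either way.

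\emph{Young constant.} For analyticity, your displayed Young bound has relative bound exactly $1$ with respect to the symmetric part $\tfrac12\int_H|C^{1/2}D\varphi|^2\,\mathrm d\mu$. The standard perturbation theorem for sectorial forms requires relative bound $<1$. Use instead
$\varepsilon\cdot\tfrac12\int_H|C^{1/2}D\varphi|^2\,\mathrm d\mu+(2\varepsilon)^{-1}\|C^{-1/2}F\|_\infty^2\int_H\varphi^2\,\mathrm d\mu$
with $\varepsilon<1$.

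Note also that $C^{-1/2}F=B^{-1}F$ is bounded in regime (a) by Hypothesis~\ref{hypo_3}\ref{hypo_3.a}(ii), so the same form bound holds there. Your part (ii) implicitly needs this, since your generation argument was stated only for $C^{-1}\in L(H)$.
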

		
		\begin{proof}
			The contraction property of the Ornstein--Uhlenbeck semigroup on
			$L^2(H,\mu)$ is the Markov property together with the invariance of
			$\mu$. The quasi-contractive generation under $C^{-1}\in L(H)$
			follows from the displayed form estimate by the standard bounded
			perturbation of the symmetric Dirichlet form \eqref{eq:dirichlet-form};
			the analytic extension of symmetric $L_2$ follows from the spectral
			theorem. The adjoint formula follows by Gaussian integration by parts
			as in Lemma~\ref{lem:rho-ss-stationary}. Part~{\rm(ii)} follows from
			the generation and smoothing properties of $P_{t,\mu}^*$ and its
			positivity preservation; uniqueness holds in the $L^2(H,\mu)$ mild
			class, and \cite{BogachevDaPratoRoeckner2004} is invoked only for the
			existence of weak Fokker--Planck solutions.
		\end{proof}

	\end{appendices}

	%%===========================================================================================%%
	%% If you are submitting to one of the Nature Portfolio journals, using the eJP submission   %%
	%% system, please include the references within the manuscript file itself. You may do this  %%
	%% by copying the reference list from your .bbl file, paste it into the main manuscript .tex %%
	%% file, and delete the associated \verb+\bibliography+ commands.                            %%
	%%===========================================================================================%%
	
	\section*{Acknowledgment}
	H.G. was supported by National Natural Science Foundation of China (T2225001).

	\bibliography{sn-bibliography}

\end{document}